\documentclass{LMCS}

\def\dOi{10(1:17)2014}
\lmcsheading%
{\dOi}
{1--52}
{}
{}
{Aug.~27, 2012}
{Mar.~24, 2014}
{}

\keywords{Linear type theory, monads, computational effects, categorical semantics, enriched category theory, state passing translation}
\ACMCCS{[{\bf Theory of computation}]: Semantics and
  reasoning---Program semantics---Denotational
  semantics\,/\,Categorical semantics; Logic---Linear logic\,/\,Type theory}

\usepackage{amssymb}
\usepackage{amsmath}
\usepackage{stmaryrd}
\setcounter{tocdepth}{3}
\usepackage{graphicx}
\usepackage{xy}
\xyoption{all}
\newcommand{\hide}[1]{}
\usepackage{url,hyperref}
\urldef{\mailsa}\path|{alfred.hofmann, ursula.barth, ingrid.haas, frank.holzwarth,|
\urldef{\mailsb}\path|anna.kramer, leonie.kunz, christine.reiss, nicole.sator,|
\urldef{\mailsc}\path|erika.siebert-cole, peter.strasser, lncs}@springer.com|    
\usepackage[height=.8cm,silent]{diagrams}
\newarrow{Into}C--->


\message{<Paul Taylor's Proof Trees, 2 August 1996>}

\newdimen\proofrulebreadth \proofrulebreadth=.05em
\newdimen\proofdotseparation \proofdotseparation=1.25ex
\newdimen\proofrulebaseline \proofrulebaseline=2ex
\newcount\proofdotnumber \proofdotnumber=3
\let\then\relax
\def\hfi{\hskip0pt plus.0001fil}
\mathchardef\squigto="3A3B
%
\newif\ifinsideprooftree\insideprooftreefalse
\newif\ifonleftofproofrule\onleftofproofrulefalse
\newif\ifproofdots\proofdotsfalse
\newif\ifdoubleproof\doubleprooffalse
\let\wereinproofbit\relax
%
\newdimen\shortenproofleft
\newdimen\shortenproofright
\newdimen\proofbelowshift
\newbox\proofabove
\newbox\proofbelow
\newbox\proofrulename
%
\def\shiftproofbelow{\let\next\relax\afterassignment\setshiftproofbelow\dimen0 }
\def\shiftproofbelowneg{\def\next{\multiply\dimen0 by-1 }%
\afterassignment\setshiftproofbelow\dimen0 }
\def\setshiftproofbelow{\next\proofbelowshift=\dimen0 }
\def\setproofrulebreadth{\proofrulebreadth}

\def\prooftree{
%
\ifnum  \lastpenalty=1
\then   \unpenalty
\else   \onleftofproofrulefalse
\fi
%
\ifonleftofproofrule
\else   \ifinsideprooftree
        \then   \hskip.5em plus1fil
        \fi
\fi
%
\bgroup
\setbox\proofbelow=\hbox{}\setbox\proofrulename=\hbox{}%
\let\justifies\proofover\let\leadsto\proofoverdots\let\Justifies\proofoverdbl
\let\using\proofusing\let\[\prooftree
\ifinsideprooftree\let\]\endprooftree\fi
\proofdotsfalse\doubleprooffalse
\let\thickness\setproofrulebreadth
\let\shiftright\shiftproofbelow \let\shift\shiftproofbelow
\let\shiftleft\shiftproofbelowneg
\let\ifwasinsideprooftree\ifinsideprooftree
\insideprooftreetrue
%
\setbox\proofabove=\hbox\bgroup$\displaystyle 
\let\wereinproofbit\prooftree
%
\shortenproofleft=0pt \shortenproofright=0pt \proofbelowshift=0pt
%
\onleftofproofruletrue\penalty1
}

\def\eproofbit{
%
\ifx    \wereinproofbit\prooftree
\then   \ifcase \lastpenalty
        \then   \shortenproofright=0pt  
        \or     \unpenalty\hfil         
        \or     \unpenalty\unskip       
        \else   \shortenproofright=0pt  
        \fi
\fi
%
\global\dimen0=\shortenproofleft
\global\dimen1=\shortenproofright
\global\dimen2=\proofrulebreadth
\global\dimen3=\proofbelowshift
\global\dimen4=\proofdotseparation
\global\count255=\proofdotnumber
%
$\egroup  
%
\shortenproofleft=\dimen0
\shortenproofright=\dimen1
\proofrulebreadth=\dimen2
\proofbelowshift=\dimen3
\proofdotseparation=\dimen4
\proofdotnumber=\count255
}

\def\proofover{
\eproofbit 
\setbox\proofbelow=\hbox\bgroup 
\let\wereinproofbit\proofover
$\displaystyle
}%
%
\def\proofoverdbl{
\eproofbit 
\doubleprooftrue
\setbox\proofbelow=\hbox\bgroup 
\let\wereinproofbit\proofoverdbl
$\displaystyle
}%
%
\def\proofoverdots{
\eproofbit 
\proofdotstrue
\setbox\proofbelow=\hbox\bgroup 
\let\wereinproofbit\proofoverdots
$\displaystyle
}%
%
\def\proofusing{
\eproofbit 
\setbox\proofrulename=\hbox\bgroup 
\let\wereinproofbit\proofusing
\kern0.3em$
}

\def\endprooftree{
\eproofbit 
  \dimen5 =0pt
%
\dimen0=\wd\proofabove \advance\dimen0-\shortenproofleft
\advance\dimen0-\shortenproofright
%
\dimen1=.5\dimen0 \advance\dimen1-.5\wd\proofbelow
\dimen4=\dimen1
\advance\dimen1\proofbelowshift \advance\dimen4-\proofbelowshift
%
\ifdim  \dimen1<0pt
\then   \advance\shortenproofleft\dimen1
        \advance\dimen0-\dimen1
        \dimen1=0pt
        \ifdim  \shortenproofleft<0pt
        \then   \setbox\proofabove=\hbox{%
                        \kern-\shortenproofleft\unhbox\proofabove}%
                \shortenproofleft=0pt
        \fi
\fi
%
\ifdim  \dimen4<0pt
\then   \advance\shortenproofright\dimen4
        \advance\dimen0-\dimen4
        \dimen4=0pt
\fi
%
\ifdim  \shortenproofright<\wd\proofrulename
\then   \shortenproofright=\wd\proofrulename
\fi
%
\dimen2=\shortenproofleft \advance\dimen2 by\dimen1
\dimen3=\shortenproofright\advance\dimen3 by\dimen4
%
\ifproofdots
\then
        \dimen6=\shortenproofleft \advance\dimen6 .5\dimen0
        \setbox1=\vbox to\proofdotseparation{\vss\hbox{$\cdot$}\vss}%
        \setbox0=\hbox{%
                \advance\dimen6-.5\wd1
                \kern\dimen6
                $\vcenter to\proofdotnumber\proofdotseparation
                        {\leaders\box1\vfill}$%
                \unhbox\proofrulename}%
\else   \dimen6=\fontdimen22\the\textfont2 
        \dimen7=\dimen6
        \advance\dimen6by.5\proofrulebreadth
        \advance\dimen7by-.5\proofrulebreadth
        \setbox0=\hbox{%
                \kern\shortenproofleft
                \ifdoubleproof
                \then   \hbox to\dimen0{%
                        $\mathsurround0pt\mathord=\mkern-6mu%
                        \cleaders\hbox{$\mkern-2mu=\mkern-2mu$}\hfill
                        \mkern-6mu\mathord=$}%
                \else   \vrule height\dimen6 depth-\dimen7 width\dimen0
                \fi
                \unhbox\proofrulename}%
        \ht0=\dimen6 \dp0=-\dimen7
\fi
%
\let\doll\relax
\ifwasinsideprooftree
\then   \let\VBOX\vbox
\else   \ifmmode\else$\let\doll=$\fi
        \let\VBOX\vcenter
\fi
\VBOX   {\baselineskip\proofrulebaseline \lineskip.2ex
        \expandafter\lineskiplimit\ifproofdots0ex\else-0.6ex\fi
        \hbox   spread\dimen5   {\hfi\unhbox\proofabove\hfi}%
        \hbox{\box0}%
        \hbox   {\kern\dimen2 \box\proofbelow}}\doll%
%
\global\dimen2=\dimen2
\global\dimen3=\dimen3
\egroup 
\ifonleftofproofrule
\then   \shortenproofleft=\dimen2
\fi
\shortenproofright=\dimen3
%
\onleftofproofrulefalse
\ifinsideprooftree
\then   \hskip.5em plus 1fil \penalty2
\fi
}


\newcommand{\colim}{\operatorname{colim}}

\newtheorem*{terminology}{Terminology}

\newenvironment{proofof}[1]{\begin{proof}[Proof of {#1}.]}{\end{proof}}

%

%

%

\newenvironment{proofsketch}{\begin{proof}[Proof (sketch)]}{\end{proof}}

\newenvironment{proofnotes}{\begin{proof}[Proof notes]}{\end{proof}}




\newcommand{\vj}[3]{#1 \mathrel{\vdash^{v}} #2 \colon \! #3}
\newcommand{\pj}[3]{#1 \mathrel{\vdash^p} #2 \colon \! #3}
\newcommand{\vjname}{{\vdash^{v}}}
\newcommand{\pjname}{{\vdash^p}}

\newcommand{\veq}[4]{#1 \mathrel{\vdash^{v}} #2\equiv #3 \colon \! #4}
\newcommand{\peq}[4]{#1 \mathrel{\vdash^p} #2\equiv #3 \colon \! #4}

\newcommand{\STA}{\sigma}
\newcommand{\STB}{\tau}
\newcommand{\STC}{\upsilon}


\newcommand{\slet}[3]{#2 \, \codefont{to} \, #1 \ld #3}
\newcommand{\return}[1]{\codefont{return} \, #1}


\newcommand{\val}{\mathrm{Val}}

\newcommand{\nat}{\mathrm{nat}}
\newcommand{\EECstate}{\underline{\mathrm{S}}}
\newcommand{\EECarbstate}{\comptype{\mathsf{S}}}
\newcommand{\EECarbret}{\comptype{\mathsf{R}}}
\newcommand{\EECret}{\comptype{\mathrm{R}}}
\newcommand{\svar}s
\newcommand{\kvar}k



\newcommand{\geffrandom}{\codefont{random}}
\newcommand{\geffreadcell}[1]{\codefont{deref}}
\newcommand{\geffwritecell}[1]{\codefont{assign}}

\newcommand{\saccrandom}{\codefont{random}}
\newcommand{\saccreadcell}[1]{\codefont{read}_{#1}}
\newcommand{\saccwritecell}[1]{\codefont{write}_{#1}}




\newcommand{\ltj}[3]{#1 \vdash^v \, #2 \colon \! #3}

\newcommand{\pto}{\rightharpoonup}

\newcommand{\codefont}[1]{\mathtt{#1}}

\newcommand{\comptype}[1]{\underline{#1}}

\newcommand{\linlambda}{\comptype{\lambda}}

\newcommand{\ld}{\mathpunct{.}}
\newcommand{\bang}[1]{{!} \,#1}
\newcommand{\co}{\colon}

\newcommand{\TVX}{X}

\newcommand{\algX}{{\comptype{X}}}
\newcommand{\algY}{{\comptype{Y}}}

\newcommand{\VconstA}{\alpha}
\newcommand{\VconstB}{\beta}

\newcommand{\CconstA}{\comptype{\alpha}}
\newcommand{\CconstB}{\comptype{\beta}}
\newcommand{\CconstC}{\comptype{\gamma}}


\newcommand{\VA}{\mathsf{A}}
\newcommand{\VB}{\mathsf{B}}
\newcommand{\VC}{\mathsf{C}}
\newcommand{\VD}{\mathsf{D}}

\newcommand{\CA}{\comptype{\mathsf{A}}}
\newcommand{\CB}{\comptype{\mathsf{B}}}
\newcommand{\CC}{\comptype{\mathsf{C}}}


\newcommand{\lpop}{\multimap}
\newcommand{\tensor}{\otimes}
\newcommand{\ltensortype}[2]{{!} #1 \, {\tensor} \, #2}
\newcommand{\lpowertype}[2]{#1\to #2}
\newcommand{\prodtype}{\times}
\newcommand{\Ctimes}{\mathop{\underline{\times}}}
\newcommand{\Cone}{{\underline{1}}}

\newcommand{\lfun}{\multimap}

\newcommand{\algplus}{\oplus}

\newcommand{\algone}{{\underline 1}} 
\newcommand{\valone}{{1}} 
\newcommand{\algzero}{\comptype{0}}


\newcommand{\llambda}{\linlambda}
\newcommand{\lam}[3]{\lambda #1.\: #3}

\newcommand{\llam}[3]{\llambda #1.\: #3}

\newcommand{\vimage}[1]{\mathrm{?}(#1)}

\newcommand{\prj}[2]{\pi_{#1}(#2)}
\newcommand{\fst}[1]{\prj 1 {#1}}
\newcommand{\snd}[1]{\prj 2 {#1}}


\newcommand{\cbvimage}[1]{\codefont{?}(#1)}

\newcommand{\ltensorterm}[2]{{!  #1} \tensor #2}
\newcommand{\lpowerterm}[3]{\lambda #1.\, #3}

\newcommand{\algstar}{\star} 

\newcommand{\letdot}[4]{{#3}\:\mathrm{to}\:{(\ltensorterm{#1}{#2})}.\;#4}

\newcommand{\lappl}[2]{#1[ #2 ]}


\newcommand{\In}[2]{#1 \colon  \! #2}
\newcommand{\rIn}[2]{#1 \colon  #2}
\newcommand{\aj}[4]{#1 \mid  \! #2 \, \vdash \, \rIn{#3}{#4}}
\newcommand{\aeq}[5]{#1 \mid  \! #2 \, \vdash \, \rIn{#3\equiv #4}{#5}}

\newcommand{\tj}[3]{\aj{#1}{{-}}{#2}{#3}}
\newcommand{\teq}[4]{\aj{#1}{{-}}{#2\equiv #3}{#4}}







\newcommand{\CBV}[1]{\mathrm{FGCBV}\!_{#1}}
\newcommand{\EEC}{\mathrm{EEC}}

\newcommand{\FGCBV}{FGCBV}
\newcommand{\ECBV}{ECBV}
\newcommand{\ECBVS}[1]{\mathrm{ECBV}_{#1}^{\states}}




\newcommand{\CBVtoEEC}[1]{#1^{\circ}}
\newcommand{\CBVtoEECbase}[2]{#2^{#1}}



\renewcommand{\vec}[1]{\bar{#1}}


\DeclareMathOperator{\Prod}{\textstyle{\prod}}




\newcommand{\VCat}{\fixedcatfont{V}} 
\newcommand{\CCat}{\fixedcatfont{C}} 

\newcommand{\Kl}[1]{\mathbf{Kl}(#1)}

\newcommand{\CHom}[2]{\CCat(#1, #2)} 
\newcommand{\CHomp}[2]{\CCat'(#1, #2)} 

\newcommand{\FFunK}[1]{F^K} 
\newcommand{\UFunK}[1]{U^K} 




\newcommand{\ltensoriso}{\lambda}
\newcommand{\ltensorisoKl}{\lambda_{\mathbf{Kl}}}




\newcommand{\id}{\mathit{id}}
\newcommand{\ev}{\mathit{ev}}


\newcommand{\ltensor}[2]{#1 \cdot #2}
\newcommand{\ltensorname}{\ltensor {-_1}{-_2}}
\newcommand{\lpower}[2]{#2^{#1}}

\newcommand{\helpiso}[2]{\phi} 

\newcommand{\fixedcatfont}{\mathbf}

\newcommand{\denlb}{[\![}
\newcommand{\denrb}{]\!]}
\newcommand{\den}[1]{\denlb{#1}\denrb}

\newcommand{\pair}[2]{\langle #1 , #2 \rangle}
\newcommand{\tuple}[2]{\langle #1 , \dots , #2 \rangle}
\newcommand{\ituple}[2]{\langle #1 \rangle_{#2}} 

\newcommand{\iso}{\cong}

\newcommand{\inv}[1]{#1^{-1}}


\newcommand{\SA}{A}
\newcommand{\SB}{B}
\newcommand{\SC}{C}

\newcommand{\algA}{\underline{A}}
\newcommand{\algB}{\underline{B}}
\newcommand{\algC}{\underline{C}}
\newcommand{\algD}{\underline{D}}

\newcommand{\stateobj}{{\underline{S}}}
\newcommand{\retobj}{\underline{R}}


\newcommand{\Set}{\mathbf{Set}}

\newcommand{\CatA}{\mathcal{A}}
\newcommand{\CatB}{\mathcal{B}}

\newcommand{\states}{{\underline{\mathrm{S}}}}


%
%
%
%
%
%

\newcommand{\SynEnrichedModel}{(\VSynE,\CSynE, \states)}

\newcommand{\SynKlModel}{(\VSynKl,\CSynKl, J)}

\newcommand{\VSynKl}{\mathcal{V}_{\textsc{fgcbv}}}
\newcommand{\CSynKl}{\mathcal{C}_{\textsc{fgcbv}}}
\newcommand{\VSynE}{\mathcal{V}_{\textsc{ecbv}}}
\newcommand{\CSynE}{\mathcal{C}_{\textsc{ecbv}}}


%


\newcommand{\Klmodel}{enriched Kleisli model}
\newcommand{\dKlmodel}{distributive Kleisli model}
\newcommand{\enrmodel}{enriched call-by-value model}
\newcommand{\denrmodel}{distributive enriched model}


\newcommand{\ENR}{\TwoCatFont{Enr}}
\newcommand{\dENR}{\TwoCatFont{dEnr}}
\newcommand{\CATECBV}{\TwoCatFont{Ecbv}}
\newcommand{\dCATECBV}{\TwoCatFont{dEcbv}}
\newcommand{\Freyd}{\TwoCatFont{Kleisli}}
\newcommand{\dFreyd}{\TwoCatFont{dKleisli}}
\newcommand{\CATECBVtheory}[1]{\dCATECBV_{#1}}
\newcommand{\Freydtheory}[1]{\dFreyd_{#1}}


\newcommand{\FreydToECBV}{\mathbf{St}}
\newcommand{\ECBVToFreyd}{\mathbf{Kl}}


\newcommand{\KlCat}[3]{\mathbf{Kl}_{#3}} 
\newcommand{\Klltensor}[2]{#1 \cdot_{\mathbf{Kl}} #2}

\newcommand{\KlHom}[3]{\KlCat{}{}{#1}(#2,#3)} 


\newcommand{\stateiso}{\delta}
\newcommand{\VTwoCell}{\beta}
\newcommand{\CTwoCell}{\gamma}



\newcommand{\Cat}{\mathbf{Cat}}
\newcommand{\TwoCatFont}[1]{\mathfrak{#1}}
\newcommand{\CoprodCat}{\TwoCatFont{Coprod}}
\newcommand{\CAT}{\TwoCatFont{Cat}}
\newcommand{\CocompCat}{\TwoCatFont{Cocomp}}
\newcommand{\TwoMultiCat}{\TwoCatFont{K}}






\newcommand{\opcat}[1]{#1^{\mathrm{op}}}


\newcommand{\MonT}{T} 

\newcommand{\lradj}[2]{{#1} \dashv {#2}} 

\newcommand{\initobj}{0} 


\newcommand{\GAP}{\hspace*{.5cm}}

\newcommand{\gnl}{\\[2ex]} 


\newcommand{\eqdef}{\defeq}
\newcommand{\defeq}{\mathrel{\,\stackrel{\mbox{\tiny{$\mathrm{def}$}}}=\,}}

\newcommand{\vin}[2]{{\mathrm{in}_{#1}}(#2)}
\newcommand{\vinl}[1]{\vin{1}{#1}}
\newcommand{\vinr}[1]{\vin{2}{#1}}
\newcommand{\vcase}[5]{{\mathrm{case}} \, #1 \,\mathrm{of}\,( \vinl{#2}. \, #3 |  \vinr{#4}. \, #5)}

\newcommand{\algin}[2]{\comptype{\mathrm{in}}_{#1}(#2)}

\newcommand{\alginl}[1]{\algin{1}{#1}}
\newcommand{\alginr}[1]{\algin{2}{#1}}
\newcommand{\algcase}[5]{\comptype{\mathrm{case}} \, #1 \,\mathrm{of}\,( \alginl{#2}. \, #3 |  \alginr{#4}. \, #5)}

\newcommand{\eecproj}[2]{\pi_{#1}(#2)}
\newcommand{\eecprojb}[2]{(\pi_{#1}\,#2)}

\newcommand{\algimage}[1]{\comptype{\mathrm{?}}(#1)}

\newcommand{\sub}[3]{#1[^{#2}\!/\!_{#3}]}


\begin{document}

\title[Linear usage of state]{Linear usage of state\rsuper*}

\author[R.~E.~M{\o}gelberg]{Rasmus Ejlers M{\o}gelberg\rsuper a}
\address{{\lsuper a}IT University of Copenhagen, Denmark}
\thanks{{\lsuper a}Research supported by the Danish Agency for Science, Technology and Innovation.}
\email{mogel@itu.dk}

\author[S.~Staton]{Sam Staton\rsuper b}
\address{{\lsuper b}Radboud University Nijmegen, Netherlands}
\thanks{{\lsuper b}Research supported by EPSRC Fellowship EP/E042414/1,
ANR Projet CHOCO, the Isaac Newton Trust, and ERC Projects ECSYM and QCLS}
\email{s.staton@cs.ru.nl}

\titlecomment{{\lsuper*}This article expands on a paper presented at 
the Fourth International Conference on Algebra and Coalgebra in Computer Science (CALCO 2011).}

\begin{abstract}
  We investigate the phenomenon that \emph{every monad is a linear state
  monad}. We do this by studying a fully-complete state-passing
  translation from an impure call-by-value language to a new linear
  type theory: the enriched call-by-value calculus.  The results are not specific
  to store, but can be applied to any computational effect expressible
  using algebraic operations, even
  to effects that are not usually thought of as stateful.  There is a
  bijective correspondence between 
  generic effects in the source language
  and state access operations in the enriched call-by-value calculus.
\vspace{2mm}

  From the perspective of categorical models, the enriched call-by-value calculus 
  suggests a refinement of the traditional Kleisli models of effectful
  call-by-value 
  languages. The new models can be understood as enriched adjunctions.
\end{abstract}

\maketitle

\section{Introduction}

\subsection{Informal motivation}
The state-passing translation 
transforms a stateful program into a pure function. 
As an illustration, consider the following ML program
which uses a single fixed memory cell \verb|l : int ref|.
\begin{verbatim}
       - fun f x = let val y = !l in l := x ; y end ;
       val f = fn : int -> int
\end{verbatim}
The state-passing translation transforms that program into 
the following pure function which takes the state as an argument
and returns the updated state as a result.
\begin{verbatim}
       - fun f (x,s) = let val y = s val s' = x in (y,s') end ;      
       val f = fn : int * int -> int * int
\end{verbatim}
The state-passing translation is straightforward
if the program only uses a fixed, finite area of memory of type $\states$:
an impure program of type $\VA\pto\VB$ becomes a pure program
of type $\VA\times \states\to \VB\times \states$.

To what extent does the state-passing translation apply to programs with
other effects?
In this article we develop the idea
that, from a semantic perspective, 
all effects can be understood as state effects.
Central to our treatment of state is the idea of 
linear usage: in general, computations cannot copy the state and 
save it for later, nor
can they discard the state and insert a new one instead.
In 1972, Strachey wrote \cite{Strachey72}:
\begin{quote}
\emph{The state transformation produced by obeying a command is essentially irreversible and it is, by the nature of the computers we use, impossible to have
more than one version of $\sigma$ \emph{[the state]} at any one time.}
\end{quote}
Historically, the importance of the linearity of state
arose in the study of 
programs with private store. 
In this setting, the naive state-passing translation does not preserve 
contextual equivalence. 
For instance, the function `snapback' takes a
stateful computation $f \co \VA \pto \VB$ and returns the computation that
executes $f$ but then snaps back to the original state:
\hide{\[ 
\textrm{snapback}\defeq
\lam{f \co (\VA\times \states \to \VB\times \states)}{(\VA\pto \VB)}{\lam{x \co \VA \times \states}{}{\pair{\pi_1(f(x))}{\pi_2(x)}}} \co (\VA\pto \VB) \to (\VA\pto \VB)\]}
\[ 
\textrm{snapback}\defeq
\lam{f \co (\VA\times \states \to \VB\times \states)}{(\VA\pto \VB)}{\lam{(a,s)\co \VA \times \states}{}{\pair{\pi_1(f(a,s))}{s}}} \co (\VA\pto \VB) \to (\VA\pto \VB)\]
The snapback program does not arise as the state-passing translation
of an impure program. 
No impure program could tamper with the private store in the way 
that the snapback program does.
In other words, the state-passing translation
is not fully complete.
One can use this fact to show that 
contextual equivalence is not preserved by the 
state-passing translation. 
Sieber~\cite{DBLP:conf/mfcs/Sieber94} 
insisted that every function be wrapped in snapback to obtain 
full abstraction for his simple denotational model.

O'Hearn and Reynolds~\cite{OHearn:R:00}
resolved these difficulties with private store
by moving to a linear typing system.
Linear usage of state can be expressed syntactically 
by considering a stateful computation of
type $\VA \pto \VB$ as a linear map of type $\ltensortype \VA
\states \lfun \ltensortype \VB\states$.  The type of states~$\states$
must be used linearly, but the argument type $\VA$ and the return type $\VB$
can be used arbitrarily. 
The snapback program violates these linear typing constraints.

This notation is reminiscent of Girard's linear
logic~\cite{DBLP:journals/tcs/Girard87}.  Our starting point is
actually a more refined calculus, the enriched effect calculus, which
was developed by Egger, M\o gelberg and
Simpson~\cite{EEC:journal,Mogelberg:CSL:09,Mogelberg:fossacs:10,EEC:LCPS:journal} as a
way of investigating linear usage of resources such as state.

\subsubsection*{All effects are state effects}
In this paper we develop this linear typing discipline to show that
\emph{all} effects can be understood as state effects and that there
is \emph{always} a fully-complete linear-use state-passing
translation.  Our analysis applies even to effects that do not involve
the memory of a computer, like printing or coin-tossing.  (For this
reason we use the word `store' to refer to memory related effects and
`state' for the general notion.)  We now provide two informal
explanations of this general phenomenon.

A first informal explanation is that an inhabitant of the `state' type 
$\states$ 
is an entire history of the universe. The history of the universe 
certainly cannot be discarded or duplicated. 
To be slightly more precise, if the effect in question is 
printing, then a `state' is a string of everything that has been printed
so far. 

A second informal explanation involves Jeffrey's 
graphical notation~\cite{jeffrey-premonoidal-graphics}.
Jeffrey noticed that a naive graphical notation for 
composition of impure functions does not work,
for it describes how functions depend on their arguments 
but it does not describe the order of side effects:
\\
\begin{center}
\includegraphics{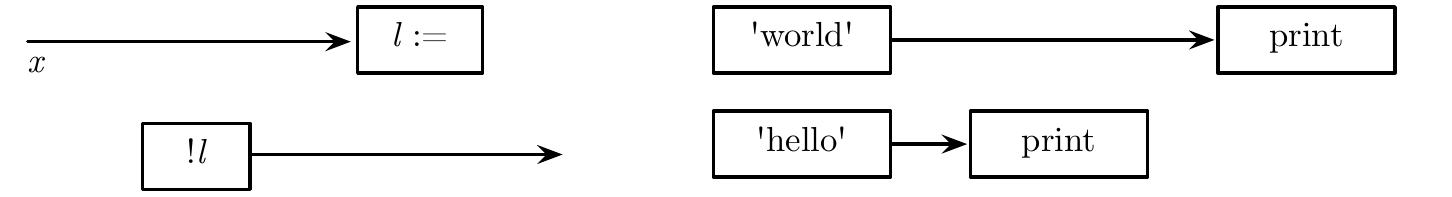}
\end{center}
To make the 
order of evaluation visible in the graphical notation, 
Jeffrey adds a special kind of edge which must be treated linearly.
This is what we call state.
\\
\begin{center}
\includegraphics{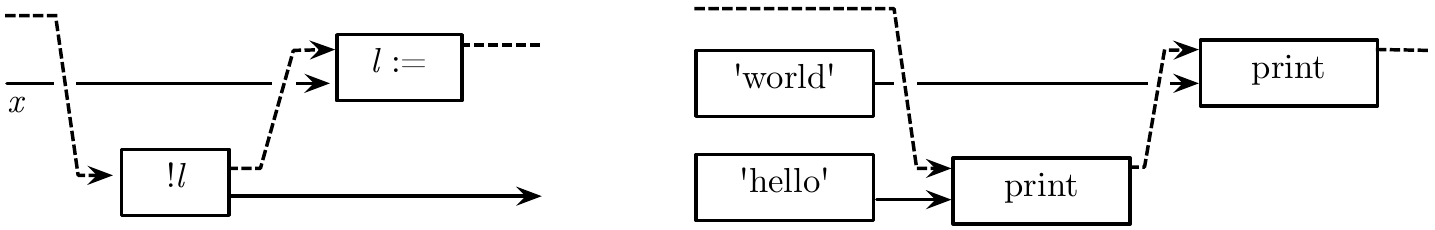}
\end{center}

Our contribution in this paper is foundational,
but let us speculate briefly on possible applications.
State plays an important role in many aspects of semantics,
including operational semantics and Hoare
logic.  
The type $\ltensortype A\states$ can be thought of as a type of 
configurations (program/state pairs) 
for a general abstract machine.
This might pave the way for a general framework for operational semantics, 
building 
on the ideas of Plotkin and Power~\cite{Plotkin:Power:08}.

We also note that variations on linear-use state-passing style
are already used to accommodate a broad class of effects 
within pure languages such as Clean~\cite{DBLP:journals/jfp/AchtenP95} and Mercury~\cite{DBLP:journals/jlp/SomogyiHC96}.

\subsection{The state-passing translation}
The source language of our translation is an impure functional
language with
product and function types:
\[
\sigma,\tau\, ::=\,
1~|~\sigma\times \tau~|~\sigma\pto \tau~|~\dots
\]
We also consider sum types and unspecified base types.
We adopt the 
call-by-value calling convention because it is 
most natural one for effectful programs.
For simplicity we choose a fine-grain language
in which the order of evaluation is totally explicit.
The full syntax, equational theory
and framework for denotational semantics is in Section~\ref{sec:fgcbv}.
Sum types are treated in Section~\ref{sec:sums}.

The types of the target language are essentially the minimal
fragment of the enriched effect calculus (EEC, \cite{Mogelberg:CSL:09,EEC:journal}) 
that is 
needed for the linear-use state-passing translation.
To enforce linear usage, EEC borrows some techniques and notation
from linear logic ($!$, $\tensor$, and $\lpop$)
but in a careful way, enforced by the following grammar:
\begin{align*}
\VA,\VB\,&::=\,1~|~\VA\times\VB~|~\CA\lpop \CB~|~\dots&&\textit{value types}
\\
\CA,\CB\,&::=\,\ltensortype \VA \CB~|~\states~|~\dots&&\textit{computation types}\text{ (underlined)}
\end{align*}
The full syntax, equational theory
and framework for denotational semantics is in 
Section~\ref{sec:ecbv}.

In Section~\ref{sec:translation}
we provide a linear-use state passing translation
from the source language to the target language
The translation on types 
takes a type $\tau$ 
of the source language to a value type~$\CBVtoEECbase \states\tau$ of the target language, by induction on the structure of types:
\[
\CBVtoEECbase \states 1\defeq 1
\qquad
\CBVtoEECbase \states {(\sigma\times \tau)} 
\defeq 
\CBVtoEECbase \states{\sigma}
\times \CBVtoEECbase \states\tau
\qquad
\CBVtoEECbase \states{(\sigma\pto\tau)}
\defeq 
\ltensortype{\CBVtoEECbase \states\sigma}\states\lpop
\ltensortype{\CBVtoEECbase \states\tau}\states\qquad\dots
\]
Theorem~\ref{thm:full:faithful} says that 
the linear-use state-passing translation is \emph{fully complete}:
it describes a bijective correspondence between
programs in the source language of type $\tau$ 
and programs in the target language of type $\CBVtoEECbase \states\tau$.
This means that contextual equivalence is preserved by 
linear-use state-passing translation.

\subsubsection*{Enriched category theory}
The constructions involved in the target language
have an elegant categorical formulation:
they can be modelled in any enriched 
category with copowers. For this reason we call 
the target language \emph{the enriched call-by-value calculus}.

Recall that in categorical semantics
a type $\VA$ is interpreted as an object $\den\VA$ of a category,
a context $\Gamma$ 
is interpreted as an object $\den\Gamma$ of a category,
and a term-in-context $\Gamma\vdash t : \VA$ is interpreted
as a morphism $\den t:\den\Gamma\to \den\VA$.
In the enriched call-by-value calculus,
there are two classes of type: computation types and value types.
A categorical analysis must thus involve two categories:
a category $\VCat$ whose objects interpret value types,
and a category $\CCat$ whose objects interpret computation types.
The structure of the types dictates the  
structure that the categories $\VCat$ and $\CCat$ must have:
\begin{itemize}
\item For the product types ($1$, $\times$), $\VCat$ must have
finite products in the categorical sense.
\item For the tensor type ($\ltensortype \VA\CB$),
$\VCat$ must act on $\CCat$. This means that there is a functor
$(\ltensorname):\VCat\times\CCat\to \CCat$ such that
$\ltensor 1\algX\cong \algX$ and 
$\ltensor{(A\times B)} \algX\cong \ltensor A {(\ltensor B \algX)}$.
\item For the linear function type $\lpop$,
the action $(\ltensorname)$ must have a right adjoint in
its first argument.
\end{itemize}
The linear function type $\lpop$ ensures that the 
space of 
morphisms $\algX \to \algY$ forms an object of $\VCat$.
So $\CCat$ is enriched in $\VCat$, and the action structure
provides copowers.
There are many examples of enriched models, including 
categories of algebras and Kleisli categories
(understood as `closed Freyd categories'~\cite{Levy:03}). 
Full details are in Section~\ref{sec:ecbv}. 

In Section~\ref{sec:relating:models} we explain that
the connection with Kleisli categories gives us a categorical 
explanation of the linear-use state-passing translation.
We use a semantic 
approach to prove the full completeness 
of the translation.
We show that the
traditional models of effectful call-by-value languages, using monads
and Kleisli constructions, form a coreflective subcategory 
of the enriched models, and that the state-passing translation is the
unit of the coreflection.  
\hide{Our main semantic result provides a
bijective correspondence between comodel structures on a state object
and model structures on the induced linear state monad. This extends
Plotkin and Power's correspondence between algebraic operations and
generic effects~\cite{Plotkin:Power:03} with a third component: state access operations.}

\subsection{Relationship with monads}
\label{sec:intro:monads}

In an enriched situation, because $\states\lpop{(-)}$ is right adjoint to 
$\ltensortype {(-)}\states$,
any computation type $\states$ induces a monad on value types:
\begin{equation}
\label{eqn:statemonad}
\states\lpop \ltensortype {(-)}\states
\end{equation}
We call this a \emph{linear-use state monad}.
In Section~\ref{sec:monads} we show that 
every monad arises as a linear-use state monad. 
In brief, the argument is as follows.
The right notion of monad for programming languages is
`strong monad $T$ on a category $\VCat$ with finite products and 
Kleisli exponentials';
for every such monad $T$,
the Kleisli category $\CCat$ is $\VCat$-enriched and has copowers
and is thus a model of the enriched call-by-value language.
The object $1$ in $\CCat$ 
induces a linear-use state monad
on $\VCat$  \eqref{eqn:statemonad} 
which is isomorphic to
the original monad $T$ on $\VCat$.

For the simplest case, consider the following monad on the category
of sets for storing a single bit. 
If we let $\VCat=\CCat=\Set$ and 
let $\states=2$ (there are two states), 
then the linear-use state monad is the usual state monad
$TX = 2 \to (X \times 2)$.

In general, the linear-use state monad \eqref{eqn:statemonad} arises
from an adjunction that is parameterized in $\states$.
Atkey~\cite{a-parammonad} has investigated monads that arise from
parameterized adjunctions: in fact Atkey's parameterized monads are
essentially the same as our enriched models.  We return to this in
Section~\ref{sec:parammonad}.

\subsection{Algebraic theories and state access operations}
In order to explain the general connection between
effects and state, we turn to the analysis of effects begun by Plotkin and Power~\cite{Plotkin:Power:03}.
Categorically, a finitary monad on the category of sets
is essentially the same thing as an algebraic  theory.
Plotkin and Power proposed to use this
connection 
to investigate computational effects from the perspective of universal algebra.

The analysis of Plotkin and Power centres around the 
correspondence between \emph{algebraic operations} and \emph{generic effects}.
More precisely, for a monad $T$ the following data are equivalent:
\begin{enumerate}
\item An \emph{algebraic operation}:
roughly, a functorial assignment of an $n$-ary function
${X^n\to X}$ to the carrier of each $T$-algebra ${T(X)\to X}$;
\item A \emph{generic effect}: a function $1\to T(n)$. 
\end{enumerate}
For instance, 
consider the store monad for a single bit of memory 
(using the isomorphic presentation 
$T=((-)\times 2\times(-)\times 2)$).
Each $T$-algebra 
$\xi:T(X)\to X$ supports 
an algebraic operation 
$\mathrel{?_X}:X\times X\to X$
in a functorial way: 
let ${x\mathrel{?_X} y=\xi(x,0,y,1)}$. 
If we understand elements of $T$-algebras
as computations, then ${x\mathrel{?_X}y}$ is the computation that  reads
the bit in memory 
and then branches to either~$x$ or~$y$ depending on what was read.

The corresponding generic effect $\mathtt{deref}:1\to T(2)$ 
is given by $\mathtt{deref}() = (0,0,1,1)$.
It can be thought of as the command that 
reads the bit and returns the contents of memory.

We have already explained (\S\ref{sec:intro:monads})
that all monads can be understood as linear-use state monads,
$T=(\states\lpop(\ltensortype{(-)}\states))$.
The data for algebraic operations and generic effects
can equivalently be given in terms of the following structure on 
the state object $\states$:
\begin{enumerate}
\item[(3)]
A \emph{state access operation}: 
a morphism $\states\to \ltensortype n \states$
in the computation category $\CCat$.
\end{enumerate}
For instance, let $\VCat=\CCat=\Set$ and $\states=2$.
This gives a state monad for a single bit of memory,
isomorphic to $((-)\times 2\times (-)\times 2)$.
The state access operation corresponding to 
reading the bit is simply the function 
$\mathrm{read}:2\to 2\times 2$ 
given by $\mathrm{read}(i)=(i,i)$, 
which reads 
the store and returns the result along with the store.
\hide{
For another example,
consider the monad $2^*\times (-)$ for printing bits.
We let $\VCat=\Set$. We let $\CCat$ be the category
of whose objects are sets $X$ equipped with 
a function $\mathrm{p}_X:2\times X\to X$;
morphisms in $\CCat$ are functions that preserve the structure.
The idea is that $X$ is a set of computations
and $\mathrm{p}_X(i,x)$ is a computation that first prints~$i$ 
and then continues as $x$.
The set of bit-strings $2^*$ is an object of $\CCat$,
with ${p_{2^*}(i,s)=is}$. 
The resulting linear-use state monad is the monad for printing bits.

We can investigate this theory in terms of algebraic operations,
generic effects and state access operations.
\begin{itemize}
\item A $T$-algebra is the same thing as an object of $\CCat$,
and so the algebraic operations are the unary
functions $\mathrm{p}_X(0,-):X\to X$, $\mathrm{p}_X(1,-):X\to X$.
\item The generic effects are 
functions 
$\mathtt{print\,0}:1\to T(1)$,
$\mathtt{print\,1}:1\to T(1)$
which describe commands that print 0 and 1 respectively.
\item The state access operations are
homomorphisms $\mathrm{print\,}i: 2^*\to 2^*$,
which take a bit string -- the history of what has been 
printed so far -- and append 0 or 1.
\end{itemize}
}

In Section~\ref{sec:modelsoftheories}
we provide more examples and investigate a general framework 
for algebraic presentations of theories of effects
using algebraic operations, generic effects and state access operations.
\hide{
Computational effects such as store effects, input/output and 
control effects are usually associated with the imperative style of programming, and functional programming languages exhibiting such behaviour are thought of as ``impure''. However, computational effects can be encapsulated within a purely functional language by the use of monads~\cite{Moggi:91}. The central idea behind this is to distinguish between a type of values such as $(\nat)$, and a type of computations $\MonT(\nat)$ that may return a value of type $\nat$ but can also do other things along the way. Imperative behaviour can then be encoded using \emph{generic effects} in the sense of Plotkin and Power~\cite{Plotkin:Power:03}. For example, one can add global store by adding a pair of terms,
$\geffwritecell{l} \co \val \to \MonT(1)$ and 
$\geffreadcell{l} \co \MonT(\val)$, for each cell $l$ in the store, or one can add nondeterminism by adding a constant $\geffrandom \co \MonT(1+1)$ computing a random boolean.
Computational effects that can be described using generic effects are called \emph{algebraic} and these account for a wide range of effects with the notable exception of control effects such as continuations.

In this paper we show how the theory of algebraic effects can be
formulated by taking the notion of state as primitive, rather than the
notion of monad.  On the syntactic side we introduce the
\emph{enriched call-by-value calculus}
(ECBV). 
We show that a special state type in ECBV gives rise to a language 
that is equivalent to a fine-grain monadic call-by-value calculus ({\FGCBV})~\cite{Levy:03}. On the semantic side we introduce a notion of enriched call-by-value model which generalises monad models.

The type constructors $\ltensortype {(-)}{(-)}$ and $\lpop$
form the basis of ECBV,
which can be considered as a kind of non-commutative linear logic that
is expressive enough to describe the linear usage of state. The equivalence
of ECBV with the fine-grain call-by-value calculus can be understood 
as stating that the linear usage of state in ECBV exactly captures computability. [[Sam: do you agree?? or is this too strong]]

Earlier metalanguages for effects, such as the monadic metalanguage
\cite{Moggi:91}, call-by-push-value \cite{Levy:book}, and the enriched
effect calculus~\cite{Mogelberg:CSL:09} have an explicit
monadic type constructor.  There is no monadic type constructor in
ECBV: there is a state type $\states$ instead. Still, in this
fragment one can express all algebraic notions of effects, even the
ones that we are not used to thinking of as ``state-like'', using what we
call \emph{state access operations}.  For example, the generic effects
$\geffwritecell{l}$, $\geffreadcell{l}$, and $\geffrandom$ correspond
to the following state access operations:
\begin{equation}
\label{eqn:saccs}
\saccwritecell{l} \co \ltensortype{\val}{\states} \lfun \states\text,
\quad\saccreadcell{l} \co  \states \lfun \ltensortype{\val}{\states}\text,
\quad
\saccrandom \co \states \lfun \ltensortype{(1+1)}{\states}\ \text.
\end{equation}
The equivalence of {\FGCBV} and {\ECBV} is proved for extensions of the two calculi along any algebraic effect theory. The generalisation is formulated using a notion of effect theory~\cite{Plotkin:Pretnar:08} which captures notions of algebraic effects.

The categorical models of {\ECBV} provide a new general notion of
model for call-by-value languages. In brief, an enriched model
consists of two categories~$\VCat$ and $\CCat$ such that $\VCat$ has
products and distributive coproducts, and $\CCat$ is enriched in~$\VCat$
with copowers and coproducts. The objects of $\VCat$ interpret ordinary
``value'' types, and the objects of $\CCat$ interpret ``computation''
types (such as the state type $\states$) which must be used linearly.
This class of models encompasses all Kleisli categories 
(which have been axiomatized as closed Freyd categories) and
many Eilenberg-Moore categories (which provide a natural notion of
model for call-by-push-value and the enriched effect calculus).
}

\subsection{The enriched effect calculus}
The work presented here grew out of work on the enriched effect calculus by 
Egger, M\o gelberg and Simpson~(EEC, \cite{EEC:journal,Mogelberg:CSL:09,Mogelberg:fossacs:10,EEC:LCPS:journal}). 
The enriched call-by-value calculus that we introduce in this paper is
a fragment of EEC 
and our categorical semantics is based on their work. 
Every model of EEC contains a monad, and one of the observations of~\cite{Mogelberg:fossacs:10} (see also~\cite[Example~4.2]{EEC:journal})
was that this monad can always be represented as a linear state monad.
A special case of the embedding theorem~\cite[Theorem~4]{Mogelberg:CSL:09} 
shows that given a strong monad $T$ on a category $\VCat$ with finite products and 
Kleisli exponentials we can embed $\VCat$ in a model of EEC preserving the monad 
and all other structure. This 
gives the motto: \emph{every monad embeds in a linear state monad}.

Since the enriched call-by-value calculus is a 
fragment of EEC (as opposed to full EEC), 
we allow ourselves a broader class of
models. In contrast to the earlier work on EEC, 
we do not include products among the computation types,
since they are not needed in the state-passing translation,
and so in our models the category $\CCat$ does not need to 
have products. This allows us to build models from Kleisli categories,
which typically do not have products, and this makes the relationship 
with monad models and closed Freyd categories much more straightforward.
In particular, in our setting every monad \emph{is} a linear state monad.

Nonetheless, in Section~\ref{sec:ecbvtoeec} we show that $\EEC$ is a
conservative extension of the enriched call-by-value calculus. This
shows that there is a fully-complete linear-use state translation into
$\EEC$.  This result is further evidence that $\EEC$ is a promising
calculus for reasoning about linear usage of effects. The related
papers~\cite{Mogelberg:fossacs:10,EEC:LCPS:journal} show how the linear-use
continuation passing translation arises from a natural dual model
construction on models of $\EEC$, and use this to prove a full completeness theorem 
similar to that proven here for the linear-use state-passing translation.
In fact, from the point of view of
$\EEC$ the two translations are surprisingly similar: the linear-use
state-passing translation is essentially dual to the linear-use
continuation-passing translation. This observation goes back to the work on
$\EEC$ and indeed duality plays a key role in~\cite{Mogelberg:fossacs:10} (although the relationship with state wasn't made explicit there). We draw it out explicitly in
Section~\ref{sec:cps}.

\subsubsection*{Acknowledgements.}
We thank Alex Simpson for help and encouragement. Also thanks to Lars Birkedal, Jeff Egger, Masahito Hasegawa, Shin-ya Katsumata and Paul Levy for helpful discussions. Diagrams are typeset using the \texttt{xymatrix} package
and Paul Taylor's \texttt{diagrams} package.


\newcommand{\ccat}{\mathfrak{K}}
\newcommand{\dcat}{\mathbb{D}}

\newcommand{\cbvinp}[2]{\codefont{in}_{#1}^p(#2)}
\newcommand{\cbvimagep}[1]{\codefont{image}^p(#1)}
\newcommand{\cbvcasep}[5]{\codefont{case}^p~#1~\codefont{of}~(\cbvinl{#2}.#3|%
\cbvinr{#4}.#5)}

\newcommand{\geff}{e} 
\newcommand{\geffj}[3]{\pj{#1}{#2}{#3}}
\newcommand{\geffin}[2]{\codefont{in}_{#1}(#2)} 
\newcommand{\geffinname}[1]{\codefont{in}_{#1}} 
\newcommand{\geffabs}[2]{#1 \text{ to } \{#2 \}}

\newcommand{\cbvin}[2]{{\codefont{in}}_{#1}(#2)}
\newcommand{\cbvinl}[1]{\cbvin{1}{#1}}
\newcommand{\cbvinr}[1]{\cbvin{2}{#1}}
\newcommand{\cbvcase}[5]{\codefont{case}~#1~\codefont{of}~(\cbvinl{#2}.#3|%
\cbvinr{#4}.#5)}
\newcommand{\cbvcasen}[6]{\codefont{case}~#1~\codefont{of}~%
(\codefont{in_1}({#2}).#3|%
\dots|\codefont{in_{#4}}({#5}).#6)}
\newcommand{\cbvcasenp}[6]{\codefont{case}^p~#1~\codefont{of}~%
(\codefont{in_1}({#2}).#3|%
\dots|\codefont{in_{#4}}({#5}).#6)}

\newcommand{\arit}[2]{#1;  #2}
\newcommand{\arityj}[3]{#1: \arit{#2}{#3}}

\newcommand{\Homset}[3]{\mathrm{Hom}_{#1}({#2},{#3})}

\section{Enriched call-by-value: a calculus for enriched categories with copowers}
\label{sec:ecbv}
The target language for the linear-use state translation
is a new calculus called the 
\emph{enriched call-by-value calculus} (ECBV), that we now introduce.
As we will explain, it is an internal language for 
enriched categories with copowers.

The enriched call-by-value calculus 
is a fragment of the enriched effect calculus (EEC),
which was introduced by Egger et al.~\cite{Mogelberg:CSL:09,EEC:journal}
as a calculus for reasoning about linear usage in computational effects.
The types of ECBV
can be understood as a fragment of linear
logic that is expressive enough to describe the 
linear state monad, $\EECstate\lpop {\ltensortype {(-)}\EECstate}$.
We will not dwell on the connection with linear logic here.

\subsection{Type theory and equational theory}
\label{sec:ecbvsyntax}
The enriched call-by-value calculus has two collections of types:
value types and computation types.  We use $\VconstA, \VconstB, \dots$
to range over a set of \emph{value type constants}, and
$\CconstA,\CconstB,\dots$ to range over a disjoint set of
\emph{computation type constants}.  We then use 
upright letters $\VA,\VB,\dots$ to
range over value types, and underlined letters 
$\CA,\CB,\dots$ to range over
computation types, which are specified by the grammar below:
\begin{align*}
\VA,\VB \, & ::= \, \VconstA \,\mid \,\valone \,\mid \,\VA \prodtype \VB \, \mid\,  \CA \lpop \CB  \\
\CA,\CB \, & ::= \, \CconstA \,\mid  \,   \ltensortype{\VA}{\CB} \enspace .
\end{align*}
\noindent
Note that the construction $\ltensortype{\VA}{\CB}$ is indivisible:
the strings ${!}\VconstA$ and ${\CconstA\otimes \CconstB}$ 
are not well-formed types.
The stratification of types means that one cannot chain function 
types: the string $\CconstA\lpop(\CconstB\lpop \CconstC)$ is not well-formed.

Readers familiar with Levy's
Call-by-Push-Value~\cite{Levy:book} or $\EEC$~\cite{Mogelberg:CSL:09}
should note that there are no type constructors $F$ and $U$ for
shifting between value and computation types and that 
computation types are not included in the value types.
The only way to shift between value types and computation types
is by using tensor and function types. As we will see, this is the essence
of the state-passing translation.
\begin{figure*}[tph]
\framebox{
\begin{minipage}{.96\linewidth}
\emph{Types.}
\begin{align*}
\VA,\VB \, & ::= \, \VconstA \,\mid \,\valone \,\mid \,\VA \prodtype \VB \, \mid\,  \CA \lpop \CB  \\
\CA,\CB \, & ::= \, \CconstA \,\mid  \,   \ltensortype{\VA}{\CB} \enspace .
\end{align*}
\begin{center}
\line(1,0){350}\gnl
\end{center}
\emph{Term formation.}
\begin{center}\begin{gather*}
\prooftree
\justifies
\tj{\Gamma,\, \In{x}{\VA},\,\Gamma'}{x}{\VA}
\endprooftree
\GAP
\GAP
\prooftree
\justifies
\aj{\Gamma}{\In{z}{\CA}}{z}{\CA}
\endprooftree
\GAP
\GAP
\prooftree
\justifies 
\tj{\Gamma}{\algstar}{\valone}
\endprooftree
\gnl
\prooftree
\tj{\Gamma}{t}{\VA}
  \GAP
\tj{\Gamma}{u}{\VB}
\justifies 
\tj{\Gamma}{\pair{t}{u}}{\VA \prodtype \VB}
\endprooftree
\GAP\GAP
\prooftree
\tj{\Gamma}{t}{\VA_1 \prodtype \VA_2}
\justifies 
\tj{\Gamma}{\eecproj{i}{t}}{\VA_i}
\endprooftree
\gnl
\prooftree
\aj{\Gamma}{\In{z}{\CA}}{t}{\CB}
\justifies
\tj{\Gamma}{\llam{z}{\CA}{t}}{\CA \lfun \CB}
\endprooftree
\GAP 
\prooftree
\tj{\Gamma}{s}{\CA \lfun \CB} 
  \GAP
\aj{\Gamma}{\Delta}{t}{\CA} 
\justifies
\aj{\Gamma}{\Delta}{s [ t ]}{\CB}
\endprooftree
\gnl
\prooftree
\tj{\Gamma}{t}{\VA}
\GAP
\aj{\Gamma}{\Delta}{u}{\CB}
\justifies
\aj{\Gamma}{\Delta}{\ltensorterm{t}{u}}{\ltensortype{\VA}{\CB}}
\endprooftree
\GAP
\prooftree
\aj{\Gamma}{\Delta}{t}{\ltensortype{\VA}{\CB}}
\GAP
\aj{\Gamma, \, \In{x}{\VA}}{\In{z}{\CB}}{u}{\CC}
\justifies
\aj{\Gamma}{\Delta}{\letdot{x}{z}{t}{u}}{\CC}
\endprooftree
\end{gather*}
\end{center}
\begin{center}\mbox{}\\
\line(1,0){350}\gnl
\end{center}
\emph{Equality.} (We elide $\alpha$-equivalence, 
reflexivity, symmetry, transitivity and 
congruence laws.)
\begin{center}
\begin{gather*}
\begin{prooftree}
\tj\Gamma t 1
\justifies
\teq \Gamma t \star 1
\end{prooftree}
\ \quad\ 
\begin{prooftree}
\tj \Gamma {t_1}{\VA_1}
\quad
\tj \Gamma {t_2}{\VA_2}
\justifies 
\teq\Gamma {\prj i{\pair{t_1}{t_2}}} {t_i}{\VA_i}
\end{prooftree}
\gnl
\begin{prooftree}
{\tj\Gamma t {\VA_1\times \VA_2}}
\justifies
{\teq \Gamma{\pair{\fst{t}}{\snd{t}}} t {\VA_1\times \VA_2}}
\end{prooftree}
\gnl
\begin{prooftree}
{\aj\Gamma {z:\CA}t \CB}
\quad
{\aj\Gamma \Delta u \CA}
\justifies
{\aeq \Gamma \Delta {(\llam z \CA t)[u]}{t[u/z]}\CB}
\end{prooftree}
\qquad
\begin{prooftree}
{\tj\Gamma t {\CA\lpop \CB}}
\justifies
{\teq \Gamma {t}{\llam z\CA (t[z])}{\CA\lpop\CB}}
\end{prooftree}
\gnl
\begin{prooftree}
{\tj \Gamma t \VA}
\quad
{\aj\Gamma \Delta u \CB}
\quad
{\aj{\Gamma,x\co\VA}{z\co \CB}  v \CC}
\justifies
{\aeq \Gamma \Delta {\letdot x z {(\ltensortype t u)} v} {v[t/x,u/z]}\CC}
\end{prooftree}
\gnl
\begin{prooftree}
{\aj \Gamma \Delta t {\ltensortype\VA\CB}}
\quad
{\aj{\Gamma} {y\co\ltensortype \VA\CB} u \CC}
\justifies
{\aeq \Gamma \Delta {\letdot x z t {u[\ltensorterm x z/y]}} {u[t/y]}\CC}
\end{prooftree}
\end{gather*}
\end{center}\end{minipage}}
\caption{The enriched call-by-value calculus}
\label{figure:effects:typing}
\end{figure*}

The enriched call-by-value calculus has two basic typing judgements, written
\begin{equation}
\label{eq:form:typing:rules}
\tj{\Gamma}{t}{\VB} \GAP \text{and} \GAP \aj{\Gamma}{z \co \CA}t{\CB}
\end{equation}
In the both judgements, $\Gamma$ is an assignment of value types to
variables.  In the first judgement,~%
$\VB$ is a value type, and in the second judgement, both $\CA$ and
$\CB$ need to be computation types.  The second judgement should be
thought of as a judgement of linearity in the variable $z \co
\CA$. These judgements are defined inductively by the typing rules in
Figure~\ref{figure:effects:typing}, which are a restriction of the
rules of EEC~\cite{EEC:journal} 
to this type structure. In the figure,~$\Delta$ is an
assignment of a computation type to a single variable, as
in~(\ref{eq:form:typing:rules}).  The ideas behind the term language
for the function space $\CA\lpop\CB$ and the tensor
$\ltensortype\VA\CB$ go back to the early work on linear lambda
calculus.  In particular, the introduction rule for
$\ltensortype\VA\CB$ uses pairing, and the elimination rule uses a
pattern matching syntax.


\subsection{Enriched call-by-value models}
\label{sec:adjmodels}
The categorical notion of model for {\ECBV}
involves basic concepts from enriched category theory~\cite{Kelly:Book}.
In summary, a model of the language comprises two categories,
$\VCat$ and $\CCat$, interpreting the value and computation types 
respectively; the function type $\CA\lpop\CB$ provides 
an enrichment of $\CCat$ in $\VCat$, and the universal property of the tensor
type $\ltensortype\VA\CB$ is the copower, sometimes called tensor.

We now make this precise. Let us recall some rudiments.
Following~\cite{JanelidzeKelly:actions,GordonPower:EnrichmentThroughVariation}, we begin with actions of
categories.  Let $\VCat$ be a category with finite products
(by which we mean that it has a chosen terminal object and chosen
binary products).
Recall that an \emph{action} of $\VCat$ on a category $\CCat$ is a functor
${\ltensor{}{}\colon\VCat\times\CCat\to\CCat}$ 
together with 
two natural isomorphisms, unit ${(\ltensor
  1\algA)\cong\algA}$ and associativity
${(\ltensor{(\SA\times\SB)}\algC)\cong(\ltensor\SA{(\ltensor{\SB}\algC)})}$,
that cohere with the isomorphisms arising from 
the product structure of~$\VCat$ in the following sense:
\[
\xymatrix{
\ltensor {(\SA\times 1)}\algD\ar[dr]^\cong\ar[d]_\cong 
&&
\ltensor {(1\times \SA)}\algD\ar[dr]^\cong\ar[d]_\cong 
\\
\ltensor{\SA}{(\ltensor 1\algD)}\ar[r]_\cong&\ltensor \SA\algD
&
\ltensor{1}{(\ltensor \SA\algD)}\ar[r]_\cong&\ltensor \SA\algD
}\]\[
\xymatrix{
\ltensor{((\SA\times \SB)\times\SC)}\algD
\ar[rr]^\cong\ar[d]_\cong
&&\ltensor{(\SA\times \SB)}{(\ltensor\SC\algD)}
\ar[d]^\cong
\\
\ltensor{(\SA\times(\SB\times \SC))}\algD
\ar[r]_\cong
&
\ltensor{\SA}{(\ltensor{(\SB\times \SC)}\algD)}
\ar[r]_\cong
&\ltensor{\SA}{(\ltensor{\SB}{(\ltensor \SC\algD)})}
}
\]
(We underline objects of $\CCat$ to distinguish them from objects of 
$\VCat$.)

An \emph{enrichment of a category $\CCat$ in $\VCat$ with copowers} is
determined by an action of $\VCat$ on $\CCat$ such that each functor
${(\ltensor-\algB)\colon\VCat\to\CCat}$ has a right adjoint,
${\CHom\algB-\colon\CCat\to\VCat}$.  Then $\ltensor\SA\algB$ is called
a copower, and $\CHom\algB\algC$ is called enrichment.  
We write $\Homset{\CCat}{\algB}{\algC}$ for the usual hom-set of $\CCat$
to distinguish it from the enrichment.

\begin{defi}
\label{def:enrichedmodel}
An \emph{enriched call-by-value model} (or simply \emph{enriched model}) is given by a category 
$\VCat$ with finite products
and a category $\CCat$ enriched in $\VCat$ with copowers.
\end{defi}
In Section~\ref{sec:examplesenriched}
we will illustrate the definition with some examples of enriched models.
First, let us clarify the 
semantics for {\ECBV} 
in an enriched model.
The interpretation is similar to the semantics of 
$\EEC$ proposed by Egger et al.~\cite{Mogelberg:CSL:09}. 
\begin{itemize}
\item 
A value type $\VA$ is interpreted as an object $\den\VA$ of $\VCat$,
and a computation type $\CA$ is interpreted 
as an object $\den\CA$ of $\CCat$, as follows.
The interpretation is defined by induction on the structure of types. 
First, for each value type constant~$\VconstA$, 
an object $\den\VconstA$ of $\VCat$ is given,
and for each computation type constant~$\CconstA$ 
an object $\den\CconstA$ of $\CCat$ is given.
The product types are interpreted as products in 
$\VCat$.
The remaining type constructions are interpreted using the 
enriched structure:
we let $\den{\ltensortype \VA\CB}\defeq(\ltensor {\den\VA}{\den\CB})$,
and $\den{\CA\lpop\CB}\defeq\CCat(\den\CA,\den\CB)$. 
\item A value context $\Gamma=(x_1\co\VA_1,\dots, x_n\co\VA_n)$
is interpreted as a product $\den{\VA_1}\times \dots\times \den{\VA_n}$.
in~$\VCat$.
A computation context $\Delta=(z\co\CA)$ is interpreted as the object
$\den\CA$ in~$\CCat$.
\item 
A judgement ${\tj\Gamma t\VA}$ is interpreted
as a morphism $\den\Gamma\to\den\VA$ in $\VCat$,
and a judgement ${\aj\Gamma\Delta t\CA}$ is interpreted
as a morphism $\ltensor{\den\Gamma}{\den\Delta}\to\den\CA$ in~$\CCat$.
This definition is made by induction on the structure of 
typing derivations, making use of the universal properties of the 
interpretations of the types.
For illustration, we consider the following two rules:
\[
\prooftree
\aj{\Gamma}{\In{z}{\CA}}{t}{\CB}
\justifies
\tj{\Gamma}{\llam{z}{\CA}{t}}{\CA \lfun \CB}
\endprooftree
\GAP \GAP 
\prooftree
\aj{\Gamma}{\Delta}{t}{\ltensortype{\VA}{\CB}}
\GAP
\aj{\Gamma, \, \In{x}{\VA}}{\In{z}{\CB}}{u}{\CC}
\justifies
\aj{\Gamma}{\Delta}{\letdot{x}{z}{t}{u}}{\CC}
\endprooftree
\]
In dealing with the linear lambda abstraction rule,
the induction principle gives us an interpretation
$\den {t}:\ltensor{\den\Gamma}{\den \CA}\to \den \CB$ in $\CCat$
which we use to form 
${\den{\llam z\CA t}: {\den\Gamma}\to \CCat(\den \CA,\den \CB)}$ 
in $\VCat$, using the natural bijection
that defines the relationship between the copower and the enrichment:
\[
\Homset{\CCat}{\ltensor{\SA}{\algB}}{\algC}\cong\Homset{\VCat}{\SA}{\CCat(\algB,\algC)}\text.
\]
For the pattern matching rule,
we assume morphisms 
\[\den t:\ltensor{\den\Gamma}{\den\Delta}\to \ltensor {\den \VA}{\den \CB}
\qquad\qquad
\den u:\ltensor{(\den\Gamma\times \den\VA)}{\den\CB}\to {\den \CC}\]
in $\CCat$ and use them to 
define $\den {\letdot x z t u}:\ltensor{\den\Gamma}{\den\Delta}\to\den{\CC}$
as the following composite:
\begin{align*}
\ltensor{\den\Gamma}{ \den\Delta}
\xrightarrow{\text{diag.}}
\ltensor{(\den\Gamma\times \den \Gamma)}{ \den\Delta}
\xrightarrow{\cong}
\ltensor{\den\Gamma}{(\ltensor{\den \Gamma} {\den\Delta})}
\xrightarrow {\ltensor {\den \Gamma}{ \den t}}
\ltensor{\den\Gamma}{(\ltensor{\den \VA} {\den\CB})}\quad
\\\xrightarrow{\cong}
\ltensor{(\den\Gamma\times \den \VA)}{ \den\CB}
\xrightarrow{\den u}
\den \CC
\end{align*}
\end{itemize}

\begin{prop}
The interpretation of the enriched call-by-value calculus 
in an enriched model 
$(\VCat,\CCat)$ is sound:
\begin{enumerate}
\item If $\teq \Gamma t u \VA$ then 
$\den t=\den u:\den \Gamma\to\den \VA$
in $\VCat$.
\item If $\aeq \Gamma \Delta t u \CA$ then 
$\den t=\den u:\ltensor{\den \Gamma}{\den\Delta}
\to\den \CA$
in $\CCat$.
\end{enumerate}
\end{prop}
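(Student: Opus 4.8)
The plan is to argue by induction on the derivation of the equality judgements, treating the two forms $\teq\Gamma t u \VA$ and $\aeq\Gamma\Delta t u \CA$ simultaneously. The elided structural rules (reflexivity, symmetry, transitivity and the congruence rules) are immediate: equality of morphisms in $\VCat$ and $\CCat$ is an equivalence relation, and since the interpretation of each term-forming rule is built by composing and pairing the interpretations of its premises, the congruence rules follow from the fact that these categorical operations respect equality of morphisms. So the real content lies in verifying the non-structural axioms of Figure~\ref{figure:effects:typing}.

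Before that induction can be run I would establish a substitution lemma, since several axioms (the $\beta$-rules for $\lpop$ and for $\ltensortype\VA\CB$) equate a redex with a substituted term. Two forms of substitution occur: substitution of a value term for a value variable in the cartesian zone $\Gamma$, and substitution of a computation term for the single computation variable in the linear zone $\Delta$. For value substitution, given $\tj\Gamma u \VA$, I would show by induction on the typing derivation of the term being substituted into that $\den{t[u/x]}$ is $\den t$ precomposed with $\ltensor{\pair{\id_{\den\Gamma}}{\den u}}{\den\Delta}$ (and the analogous statement without the action in the value-judgement case). For linear substitution, given $\aj\Gamma\Delta u \CB$ and $\aj\Gamma{z\co\CB} t \CC$, I would show that $\den{t[u/z]}$ is the composite that first duplicates $\den\Gamma$ by the diagonal, applies $\den u$ to obtain $\den\CB$, and then applies $\den t$ — precisely the composite already exhibited in the interpretation of the pattern-matching rule. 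Both statements are proved by induction on the term, the copower-elimination case requiring the associativity and naturality coherences of the action.

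With the substitution lemma in hand, each axiom becomes a routine diagram chase appealing to one universal property. The unit law $\teq\Gamma t \star 1$ uses that $\den 1$ is terminal in $\VCat$, so any two maps into it coincide. The product $\beta$- and $\eta$-laws use the universal property of binary products in $\VCat$. The $\beta$- and $\eta$-laws for $\lpop$ follow from the defining natural bijection $\Homset{\CCat}{\ltensor{\SA}{\algB}}{\algC}\cong\Homset{\VCat}{\SA}{\CCat(\algB,\algC)}$ between copower and enrichment: $\beta$ is one triangle of the adjunction combined with the linear-substitution lemma, and $\eta$ is the other. Finally the two laws for $\ltensortype\VA\CB$ are the $\beta$- and $\eta$-rules of the copower, obtained from the universal property that defines $\ltensor{\den\VA}{\den\CB}$, using the value- and linear-substitution lemmas to rewrite the substituted right-hand sides.

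I expect the main obstacle to be the substitution lemma rather than the individual axioms. The delicate point is the bookkeeping forced by the two context zones: the cartesian variables in $\Gamma$ may be duplicated and discarded, whereas the lone computation variable in $\Delta$ must be used exactly once, and the action $\ltensor{-}{-}$ mediates between these zones. Getting the coherence isomorphisms of the action to line up correctly — in particular in the copower-elimination case, where a value substitution and a linear substitution interact — is where the care is needed; once that is settled, the verifications of the axioms are short.
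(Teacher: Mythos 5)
Your proposal is correct and follows essentially the same route as the paper: the paper's proof notes consist precisely of an induction on the derivation of $(\equiv)$ together with the linear substitution lemma you state (that $\den{t[u/z]}$ is the composite of the diagonal on $\den\Gamma$, the coherence isomorphism, $\ltensor{\den\Gamma}{\den u}$, and $\den t$). Your additional spelling-out of the value-substitution lemma and the case analysis of the axioms is a faithful elaboration of what the paper leaves implicit.
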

\begin{proofnotes}
This is proved by induction on the structure of the 
derivation of~$(\equiv)$.
The following substitution lemma is helpful:
\hide{\item If $\tj \Gamma t \VA$ and $\aj {\Gamma,x\co\VA,\Gamma'}{\Delta} u \CB$ 
then $u[t/x]$ is the following composite  in $\VCat$:
\[
\ltensor{(\den\Gamma\times \den {\Gamma'})}{\den\Delta}
\xrightarrow{\text{diag.}}
\ltensor{(\den\Gamma\times\den\Gamma\times\den {\Gamma'})}{\den\Delta}
\xrightarrow{(\ltensor{\den\Gamma\times \den t\times \den {\Gamma'})}{\den\Delta}}
{\ltensor{(\den\Gamma\times \den\VA\times\den{\Gamma'})}{\den\Delta}}
\xrightarrow{\den u}
{\den \CB}
\]
}
\begin{quote}
\emph{If $\aj \Gamma \Delta t \CA$ and $\aj \Gamma{z\co \CA} u \CB$ 
then $u[t/z]$ is the following composite morphism in $\CCat$:}
\[
\ltensor{\den\Gamma}{\den\Delta}
\xrightarrow{\text{diag.}}
\ltensor{(\den\Gamma\times \den\Gamma)}{\den\Delta}
\xrightarrow{\cong}
{\ltensor{\den\Gamma}{(\ltensor{\den\Gamma}{\den\Delta})}}
\xrightarrow{\ltensor{\den\Gamma}{\den t}}
\ltensor{\den\Gamma}{\den \CA}
\xrightarrow{\den u}
{\den \CB}
\]
\end{quote}\vspace{-1.367\baselineskip}
\end{proofnotes}
\subsection{Examples of enriched models}
\label{sec:examplesenriched}
We now list some examples of enriched models (Definition~\ref{def:enrichedmodel}).
\begin{enumerate}
\item If $\VCat=\Set$ then a $\VCat$-enriched category 
is just a locally small category.
The copower $\ltensor A \algB$ is the $A$-fold 
iterated coproduct of $\algB$, if it exists.
The following three examples are instances of this example.
\item \label{ex:monoids}Let $\VCat=\Set$ and let $\CCat$ be 
the category of monoids and homomorphisms.
The copower $\ltensor A\algB$, where $A$ is a set and $\algB$ is a monoid,
can be described as a quotient of the free monoid on the product of sets,
$(A\times |\algB|)^*/_\sim$.
Here $(A\times |\algB|)^*$ is the 
set of strings built of pairs in $(A\times |\algB|)$,
which is a monoid under concatenation
with the empty string $\epsilon$ as unit.
The equivalence relation~$(\sim)$ is generated by
$(a,b).(a,b')\sim (a,b.b')$ and $\epsilon \sim (a,\epsilon)$.
There is a similar description of the copower for any algebraic theory.
\item
We can modify Example~(\ref{ex:monoids}) 
to make $\CCat$ the category of 
\emph{free monoids} and monoid homomorphisms.
That is, the objects are monoids of the form~$B^*$.
In this situation, the copower satisfies 
$\ltensor A{B^*}=(A\times B)^*$.
In this example $\CCat$ is the Kleisli category 
for the free monoid monad. We will revisit Kleisli categories in 
Section~\ref{sec:monads}.
\item Let $\VCat=\CCat=\Set$,
with $\CCat$ considered with the ordinary enrichment.
The copower $\ltensor A B$ is the cartesian product of sets.
This situation generalizes to the situation where $\VCat=\CCat$ is an arbitrary
cartesian closed category.
\item Let $\VCat$ be the category of $\omega$-cpo's and continuous functions, 
and let $\CCat$ be the category of pointed $\omega$-cpo's and strict functions.
The enrichment $\CCat(\algA,\algB)$ is the cpo of strict functions under
the pointwise order, and the copower $\ltensor A\algB$ is the smash product
$A_\bot\otimes \algB$.
\item In the next section we will investigate a model
built from the syntax of the enriched 
call-by-value calculus.
\end{enumerate}

\subsection{The syntactic enriched model}
\label{sec:syn:enr:model}

The types and terms of the enriched call-by-value 
calculus form an enriched model
which we call the syntactic enriched model. 

Let $\VCat$ be the category whose objects are value types 
and where a morphism $\VA\to\VB$ is a term in context
$\tj{x\co\VA}{t}{\VB}$ modulo the equational theory (Figure~\ref{figure:effects:typing}) and modulo renaming the free variable~$x$.
The identity morphism
is the term $\tj{x\co\VA}{x}{\VA}$,
and composition of morphisms
\[
\VA
\xrightarrow{\tj{x\co\VA}{t}{\VB}}
\VB
\xrightarrow{\tj{y\co\VB}{u}{\VC}}
\VC
\]
is given by substitution: $u\circ t\defeq (\tj{x\co\VA}{\sub u t y}{\VC})$.
Since morphisms are actually equivalence classes, the 
well-definedness of substitution depends on the following 
substitution lemma
\begin{quote}
If $\teq{x\co\VA}{t}{t'}\VB$ and
$\teq{y\co\VB}{u}{u'}\VC$ 
\\then $\teq {x\co \VA}{\sub u t y}
{\sub {u'} {t'} y}\VC$
\end{quote}
which is proved by induction on the derivation of 
$u\equiv u'$. 

The laws of associativity and identity for composition 
are immediate. For instance, associativity 
amounts to the following syntactic identity:
\begin{quote}
If $\tj{x\co\VA}{t}\VB$, \quad
$\tj{y\co\VB}{u}\VC$ 
and
$\tj{z\co \VC}v \VD$ 
\\then $\teq {x\co \VA}{\sub {(\sub v u z)} t y}
{\sub v {\sub u t y} z}\VD$\text.
\end{quote}

The category $\VCat$ has products, given by the product types.
The equations at product types
are exactly what is needed to guarantee the universal
properties.

Let $\CCat$ be the category whose objects are computation types
and where a morphism $\CA\to\CB$ is a term in context
$\aj{-}{z\co \CA}{t}{\CB}$ modulo the equational theory
and modulo renaming the free variable~$z$.
Identities and composition are defined in a similar way to $\VCat$.
The identity morphisms $\CA\to\CA$ are $\aj-{z\co\CA}{z}{\CA}$ 
and composition is by substitution.

The action of $\VCat$ on $\CCat$ is given on objects by the tensor type: 
let $\ltensor \VA\CB\defeq \ltensortype\VA\CB$.
Given morphisms 
\[\VA\xrightarrow{\tj {x\co\VA}t{\VA'}}\VA'\text{ in $\VCat$}
\qquad\text{and}\qquad
\CB\xrightarrow{\aj-{z\co\CB}u{\CB'}}\CB'\text{ in $\CCat$}\]
we define a morphism $\ltensor t u:(\ltensor \VA\CB)\to(\ltensor {\VA'}{\CB'})$
in $\CCat$ by
\[\ltensor t u\defeq \left(\aj-{z'\co{\ltensortype \VA\CB}}{\letdot x{z}{z'}{\ltensorterm t u}}{\ltensortype{\VA'}{\CB'}}\right)\text.\]
Functoriality follows from the equational theory of ECBV.
The unit and associativity isomorphisms are straightforward to define.
For example, associativity ($\ltensor {(A\times B)}\algC\cong \ltensor A{(\ltensor B \algC)}$) is given by exhibiting
 mutual inverses at all types:
\begin{align*}
&\aj-{z\co{\ltensortype{(\VA\times \VB)}\CC}}
{\letdot x{z'} z {\ltensorterm{\eecprojb1x}{(\ltensorterm{\eecprojb2x}{z'})}}}
{\ltensortype \VA{(\ltensortype \VB\CC)}}
\\
&\aj-{z\co{\ltensortype \VA{(\ltensortype \VB\CC)}}}{\letdot x{z'} z {\letdot y{z''} {z'}{\ltensorterm{(x,y)}{z''}}}}{\ltensortype{(\VA\times \VB)}\CC}
\end{align*}
It follows from the 
equational theory of ECBV that these are isomorphisms and are natural and 
coherent.

Finally, we discuss the enrichment of $\CCat$ in $\VCat$.
Given types $\VA$, $\CB$ and $\CC$ we describe a natural bijection
\[
\Homset{\CCat}{\ltensor\VA\CB}{\CC} \ \cong\ \Homset{\VCat}{\VA}{\CB\lpop \CC}
\]
From left to right the bijection takes a 
computation term ${\aj-{z\co{\ltensortype\VA\CB}}t\CC}$
to a value term 
${\tj{x\co\VA}{\llam b{\CB} {\sub t{(\ltensorterm xb)}z}}{\CB\lpop\CC}}$.
From right to left the bijection takes a 
value term
${\tj{x\co\VA}{u}{\CB\lpop\CC}}$
to a computation term
${\aj-{z\co(\ltensortype\VA\CB)}{\letdot xy z{u[y]}}\CC}$.

\subsection{Universal property of the syntactic enriched model}
\label{sec:enr:model:biinitial}

The syntactic model described in Section~\ref{sec:syn:enr:model} enjoys a universal property:
it is an initial object 
in a category of enriched models with structure preserving
functors as morphisms. Given any other enriched model, the unique morphism
from the syntactic model is given by interpretation of syntax in the model.

This semantic characterization 
of interpretation is standard in categorical semantics, and it
is useful for deriving syntactic results from semantics, as we shall see
in Section~\ref{sec:relating:models}. 
However, we shall also see (Section~\ref{sec:adj})
that we need to talk about morphisms of models preserving structure only 
up to isomorphism, and the syntactic model is not initial with respect 
to this collection of morphisms. Rather, interpretation defines a morphism
which is unique only up to unique isomorphism. In order to formulate
this kind of initiality, we need to be able to assess when 
two morphisms between enriched models are isomorphic.
Thus the enriched models
form a 
groupoid-enriched category,
i.e., a 2-category in which each 2-cell is invertible.
The idea of using groupoid-enriched 
categories of models has been around for a long time 
(e.g. \cite[\S 8]{dk-alggraph})
and has been used in previous work on the enriched effect 
calculus~\cite{Mogelberg:CSL:09,Mogelberg:fossacs:10}. 

A precise definition of the 2-category of enriched models $\ENR$ can be found in 
Appendix~\ref{app:cats:of:models}. Here we just state the initiality property of the syntactic model.
First recall the following definition from 2-category theory 
(see e.g. \cite[\S6]{k-twocatlimits}).

\begin{defi} \label{def:biinitial}
Let $\ccat$ be a 2-category. An object $\initobj$ of $\ccat$ is \emph{bi-initial} if
for any object $\SA$ the hom-category $\ccat(\initobj, \SA)$ is equivalent to 
the terminal category (i.e., the category with one object and one morphism).
\end{defi}
An equivalent way of stating bi-initiality is to say that 
for any other object $\SA$ 
there exists a 1-cell $\initobj \to \SA$
which is unique up to unique 2-isomorphism.

The syntactic model of Section~\ref{sec:syn:enr:model} is bi-initial in the category $\ENR$, but in this paper
we are much more interested in a category of enriched models  $(\VCat, \CCat)$ 
with a specified state object $\states$ in $\CCat$ (because of the relation to the state-passing
translation), and so we formulate bi-initiality with respect to a category 
$\CATECBV$ of these. Like all other structure, 1-cells of $\CATECBV$ are 
only required to preserve the state objects up to isomorphism. (See Appendix~\ref{app:CATECBV} 
for a full definition). We write $\SynEnrichedModel$ for the enriched model 
obtained as in Section~\ref{sec:syn:enr:model} from the syntax of the enriched call-by-value calculus
extended with a special computation type constant $\states$.

\begin{thm} \label{thm:ecbv:biinitial}
The model $\SynEnrichedModel$ is bi-initial in $\CATECBV$. The unique 
morphisms with domain $\SynEnrichedModel$ are given by interpretation of syntax into models.
\end{thm}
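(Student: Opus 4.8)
The plan is to prove the two assertions of Definition~\ref{def:biinitial} directly: first that for every object $\SA=(\VCat,\CCat,\states)$ of $\CATECBV$ there is at least one 1-cell $\SynEnrichedModel\to\SA$, namely the interpretation of syntax; and second that any 1-cell with domain $\SynEnrichedModel$ is related to this interpretation by a unique invertible 2-cell. Together these say that each hom-category $\CATECBV(\SynEnrichedModel,\SA)$ is a non-empty groupoid in which every hom-set is a singleton, which is exactly the condition that it be equivalent to the terminal category.

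For existence I would take the interpretation described in Section~\ref{sec:adjmodels}. On objects the only choice to be made is the image of the single type constant $\states$, and this is forced: the syntactic state constant is sent to the designated state object $\states$ of $\SA$. Every other type is then interpreted by induction using finite products in $\VCat$, the copower for $\ltensortype\VA\CB$, and the enrichment for $\CA\lpop\CB$; terms are interpreted as morphisms using the universal properties exactly as in Section~\ref{sec:adjmodels}. The soundness proposition guarantees that this assignment respects the equational theory, so it is well defined on the equivalence classes that constitute the morphisms of $\VSynE$ and $\CSynE$; functoriality then follows from the substitution lemmas of Section~\ref{sec:syn:enr:model}. By construction the resulting functors preserve finite products, the action, copowers, the enrichment and the state object strictly, and so assemble into a 1-cell $I$ of $\CATECBV$ in the sense of Appendix~\ref{app:CATECBV}.

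For uniqueness up to unique 2-iso, let $H\colon\SynEnrichedModel\to\SA$ be any 1-cell. I would build an invertible 2-cell $H\Rightarrow I$ by induction on the structure of types. The base case is the state constant: a 1-cell of $\CATECBV$ comes equipped with a structural isomorphism $H(\states)\cong\states$, and since $I(\states)=\states$ this provides the component at $\states$. For composite types one splices together the structural isomorphisms carried by $H$ (for products, the copower and the function type) with the isomorphisms supplied inductively; for instance at $\VA\prodtype\VB$ one composes $H(\VA\prodtype\VB)\cong H\VA\prodtype H\VB$ with the inductive isomorphisms on $\VA$ and $\VB$ (the comparison for $I$ being the identity, as $I$ is strict). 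Naturality of this family, i.e.\ that it commutes with the images of all terms, is checked by a second induction on the structure of typing derivations, using the coherence conditions imposed on the structural isomorphisms of a 1-cell. This produces the required 2-isomorphism.

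Uniqueness of the 2-cell is where the up-to-isomorphism formulation does real work. A 2-cell of $\CATECBV$ is required to be compatible with the structural isomorphisms of its source and target 1-cells (see Appendix~\ref{app:cats:of:models}). At the state constant this compatibility forces the component of any 2-cell $H\Rightarrow I$ to equal the composite of the two state-preservation isomorphisms, so it is uniquely determined; the inductive compatibility with the product, copower and function-type comparisons then pins the 2-cell down at every type. Hence between $H$ and $I$ there is exactly one 2-cell, and since all 2-cells are invertible, composing through $I$ shows that every hom-set of $\CATECBV(\SynEnrichedModel,\SA)$ is a singleton. The main obstacle is not conceptual, since this is the familiar initiality-of-syntax argument, but bookkeeping: one must keep careful track of the coherence of the structural isomorphisms attached to 1-cells, both when verifying naturality of the constructed 2-iso and when deducing its uniqueness, and this is precisely where the definitions of Appendix~\ref{app:cats:of:models} and Appendix~\ref{app:CATECBV} are needed.
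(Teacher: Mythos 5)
Your proposal is correct and follows essentially the same route as the paper: the paper's proof (via Lemma~\ref{lem:iso:interp:ecbv}) likewise takes interpretation of syntax as the canonical 1-cell and then constructs, by induction on types starting from the state-preservation isomorphism $\stateiso$ and the structural comparisons $\ltensoriso$, a unique type-indexed family of isomorphisms, proving its naturality by a second induction on typing derivations and deducing uniqueness of the 2-cell from the coherence conditions~(\ref{eq:coherence:cond}) and~(\ref{eq:stateobj:coherence:cond}). The only cosmetic difference is that the paper states the comparison lemma for a 1-cell between two arbitrary models and then specializes to the syntactic domain, whereas you work directly with 1-cells out of $\SynEnrichedModel$.
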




\section{Fine-grain call-by-value, a calculus for Kleisli models}
\label{sec:fgcbv}

The source language for our state-passing translation is a call-by-value language equipped with an equational theory to be thought of as generated by some operational semantics, as in~\cite{Plotkin:65}. 
We use a `fine grain' call-by-value language, 
following Levy et al.~\cite{Levy:03,Levy:book}.
We use $\alpha$ to range over type constants. The types are given by the grammar
\[\sigma,\tau ::= 
\alpha \mid 1 \mid \sigma \times \tau \mid
\sigma \pto \tau \, .
\]
The function space $\pto$ is a call-by-value
one, which takes a value and produces a computation. 

The fine-grain call-by-value calculus ({\FGCBV}) has two typing
judgements, one for values and one for producers. These are written
$\vj{\Gamma}{V}{\STA}$ and $\pj{\Gamma}{M}{\STA}$.  The latter should
be thought of as typing computations which produce values in the type
judged but may also perform side-effects along the way. In both
judgements the variables of the contexts are to be considered as
placeholders for values. 
Typing rules along with equality rules are given in
Figure~\ref{fig:lambdac:typing}.
\begin{figure*}[tb]
\framebox{
\begin{minipage}{.96\linewidth}
\emph{Types.}
\[\sigma,\tau ::= 
\alpha \mid 1 \mid \sigma \times \tau \mid
\sigma \pto \tau \, .
\]
\begin{center}
\line(1,0){350}\gnl
\end{center}
\emph{Term formation.}
\begin{center}
\begin{gather*}
\begin{prooftree}
\justifies
\vj{\Gamma, x \co \STA, \Gamma'}{x}{\STA}
\end{prooftree}
\GAP
\begin{prooftree}
\justifies
\vj{\Gamma}{\star}{1} 
\end{prooftree}
\GAP
\begin{prooftree}
\vj{\Gamma}{V}{\STA_1 \times \STA_2}
\justifies
\vj{\Gamma}{\prj i{V}}{\STA_i}
\end{prooftree}
\GAP
\begin{prooftree}
\vj{\Gamma}{V_1}{\STA_1}
\GAP
\vj{\Gamma}{V_2}{\STA_2}
\justifies
\vj{\Gamma}{\pair{V_1}{V_2}}{\STA_1 \times \STA_2}
\end{prooftree}
\gnl
\begin{prooftree}
\vj{\Gamma}{V}{\STA}
\justifies
\pj{\Gamma}{\return{V}}{\STA}
\end{prooftree}
\GAP 
\begin{prooftree}
\pj{\Gamma}{M}{\STA}
\GAP
\pj{\Gamma, x \co \STA}{N}{\STB} 
\justifies
\pj{\Gamma}{\slet{x}{M}{N}}{\STB}
\end{prooftree}
\gnl
\begin{prooftree}
\pj{\Gamma, x \co \STA}{N}{\STB} 
\justifies
\vj{\Gamma}{\lam{x}{\STA}{N}}{\STA \pto \STB} 
\end{prooftree}
\GAP
\begin{prooftree}
\vj{\Gamma}{V}{\STA \pto \STB}
\GAP
\vj{\Gamma}{W}{\STA}
\justifies
\pj{\Gamma}{V \, W}{\STB}
\end{prooftree}
\end{gather*}
\end{center}
\begin{center}\mbox{}\gnl
\line(1,0){350}\gnl
\end{center}
\emph{Equality.} (We elide $\alpha$-equivalence,
reflexivity, symmetry, transitivity and 
congruence laws.)
\begin{center}
\begin{gather*}
\begin{prooftree}
\vj\Gamma M 1
\justifies
\veq \Gamma M \star 1
\end{prooftree}
\ \quad\ 
\begin{prooftree}
\vj \Gamma {V_1}{\sigma_2}
\quad
\vj \Gamma {V_2}{\sigma_2}
\justifies 
\veq\Gamma {\prj i{\pair{V_1}{V_2}}} {V_i}{\sigma_i}
\end{prooftree}
\ \quad\ 
\begin{prooftree}
{\vj\Gamma V {\sigma_1\times \sigma_2}}
\justifies
{\veq \Gamma{\pair{\fst{V}}{\snd{V}}} V {\sigma_1\times \sigma_2}}
\end{prooftree}
\gnl
\begin{prooftree}
\pj{\Gamma,x\co\STA}{M}{\tau}
\quad
\vj{\Gamma}{V}{\sigma}
\justifies
\veq \Gamma {(\lam{x}{\STA}{M})\,V} {M[V/x] } {\tau}
\end{prooftree}
\qquad
\begin{prooftree}
{\vj{\Gamma}{V}{\STA\pto\STB}}
\justifies
\veq\Gamma
{\lam{x}{\STA}{(V \, x)}}V {\STA\pto\STB}
\end{prooftree}
\gnl
\begin{prooftree}
\pj{\Gamma}M\STA
\justifies 
\peq\Gamma{\slet{x}{M}{\return{x}}} M \STA
\end{prooftree}
\qquad
\begin{prooftree}
\vj\Gamma V \STA
\quad\pj{\Gamma,x\co \STA}N\STB
\justifies
\peq \Gamma {\slet{x}{\return{V}}{N}}
{N[V/x]} \STB
\end{prooftree}
\gnl
\begin{prooftree}
{\pj\Gamma M \STA}
\quad
{\pj{\Gamma,x\co\STA} N \STB}
\quad
{\pj{\Gamma,y\co\STB} P \STC}
\justifies
\peq\Gamma{\slet{y}{(\slet{x}{M}{N})}{P}}{\slet{x}{M}{(\slet{y}{N}{P})}}\STC
\end{prooftree}
\end{gather*}%
\end{center}
\end{minipage}
}
\caption{Fine-grain call-by-value.
}
\label{fig:lambdac:typing}
\end{figure*}

The call-by-value language is called `fine grain' because the order of 
evaluation is explicit. Notice that the string
$(f(x),g(y))$ is not well-formed syntax:
one must specify the order of evaluation,
for instance, like this:
\[
\slet{x'} {f(x)}{\slet {y'}{g(y)}{(x',y')}}\text.
\]
Translations from a `coarser grain', more natural programming language
are given by Levy et al.~(\cite[\S 3]{Levy:03}, \cite[\S A.3.3]{Levy:book}).

\subsection{Interpretation in Kleisli models}
\label{sec:interpkleilsi}
The most natural way to interpret fine-grain call-by-value is 
two have two categories $\VCat$ and $\CCat$ to interpret
the judgements $\vjname$ and $\pjname$,
but to insist that the two categories have exactly 
the same objects, since in this language there is only one 
class of types.
\begin{defi}
\label{def:monadmodel}
An \emph{enriched Kleisli model}
is an enriched call-by-value model $(\VCat,\CCat)$ (Def.~\ref{def:enrichedmodel})
together with an 
identity-on-objects functor $J:\VCat\to\CCat$ that strictly
preserves copowers, which means that
$J(A\times B)=\ltensor A J(B)$ (naturally in $A$ and $B$)
and 
that the canonical isomorphisms induced by the product structure
are the coherent unit and associativity isomorphisms of the copowers:
\[
\ltensor1 JA =J(1\times A)\cong JA
\qquad
\ltensor{(A\times B)} JC = J((A\times B)\times C)\cong J(A\times (B\times C))
=
\ltensor A {(\ltensor B {JC})}\text.\]
\end{defi}
We will sometimes say `Kleisli model'
for `enriched Kleisli model'.
We use the name `Kleisli' because this definition
captures the situation where
$\CCat$ is the Kleisli category for a strong monad on $\VCat$.
The correspondence is explained in
Section~\ref{sec:monads}.

Kleisli models have earlier been called
`closed Freyd categories' by Levy et al.~\cite{Levy:03}.
Their original definition of closed Freyd category is based
on premonoidal categories;
the relationship with actions of categories 
and Kleisli models is observed by Levy~\cite[B.10]{Levy:book}.

A semantics for {\FGCBV} is given in a Kleisli model in a standard way.
\begin{itemize}
\item Each base type is given an interpretation 
$\den\alpha$ as an object of $\VCat$.
This interpretation is extended to all types:
$\den 1$ is the terminal object of $\VCat$;
$\den{\sigma\times \tau}$ is the product of 
$\den\sigma$ and $\den\tau$;
and 
$\den{\sigma\pto\tau}$ is defined using the enriched
structure of $\CCat$:
$\den{\sigma\pto\tau}\defeq \CCat(\den{\sigma}, \den{\tau})$.
\item 
A context $\Gamma=(x_1\co\sigma_1,\dots,x_n\co\sigma_n)$ 
is interpreted in $\VCat$ as a product 
$\den{\sigma_1}\times \dots \times \den{\sigma_n}$. 
\item 
A value type judgement 
${\vj{\Gamma}{V}{\STA}}$  
is interpreted as a 
morphism $\den{\Gamma}\to\den{\STA}$ in $\VCat$
and a producer type judgement $\pj{\Gamma}{M}{\STA}$ 
is interpreted as a morphism 
$\den{\Gamma}\to \den{\STA}$
in~$\CCat$.
This is defined by induction on the structure of derivations,
using the universal properties of Kleisli models.
For illustration we consider the following rule.
\[\begin{prooftree}
\pj{\Gamma}{M}{\STA}
\GAP
\pj{\Gamma, x \co \STA}{N}{\STB} 
\justifies
\pj{\Gamma}{\slet{x}{M}{N}}{\STB}
\end{prooftree}
\]
The induction hypothesis gives us two morphisms in $\CCat$
\[
\den\Gamma\xrightarrow{\den M}\den \sigma
\qquad\qquad
\den {\Gamma}\times \den\sigma\xrightarrow{\den N} \den \tau
\]
and we use these to define a morphism in $\CCat$ that interprets
$(\slet x M N)$:
\[
\den \Gamma\xrightarrow {J(\text{diag})}
\den\Gamma\times\den\Gamma
\xrightarrow = 
\ltensor{\den\Gamma}{\den \Gamma}
\xrightarrow{\ltensor{\den\Gamma}{\den M}}
\ltensor{\den \Gamma}{\den \sigma}
\xrightarrow{=}
{\den \Gamma\times\den \sigma}
\xrightarrow{\den N}
\den \tau \, .
\]
As another example $\den{\return V} = J(\den{V})$.

\end{itemize}
This defines a sound and complete notion of model for 
{\FGCBV}.
\begin{prop}[\cite{Levy:03}, Prop.~5.1]
The interpretation of the 
fine-grain call-by-value calculus in a Kleisli model is sound:
\begin{enumerate}
\item If $\veq \Gamma V W \sigma$ then 
$\den V=\den W:\den \Gamma\to\den \sigma$
in $\VCat$.
\item If $\peq \Gamma M N \sigma$ then 
$\den M=\den N:\den \Gamma\to\den \sigma$
in $\CCat$.
\end{enumerate}
\end{prop}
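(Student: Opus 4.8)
The plan is to proceed by induction on the derivation of the equality judgement, after first isolating the single non-trivial ingredient: a substitution lemma relating the interpretation of a substituted term to a reindexing morphism. Concretely, I would prove, by simultaneous induction on the structure of $W$ and $N$:
\begin{quote}
\emph{If $\vj\Gamma V\sigma$, then for every $\vj{\Gamma,x\co\sigma}{W}{\tau}$ the morphism $\den{W[V/x]}$ equals $\den W\circ\langle\id,\den V\rangle$ in $\VCat$, and for every $\pj{\Gamma,x\co\sigma}{N}{\tau}$ the morphism $\den{N[V/x]}$ equals $\den N\circ J(\langle\id,\den V\rangle)$ in $\CCat$.}
\end{quote}
Here $\langle\id,\den V\rangle\co\den\Gamma\to\den\Gamma\times\den\sigma$ is the pairing in $\VCat$, and in the producer clause it is transported into $\CCat$ by $J$. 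The cases for variables, pairing, projection, $\lambda$-abstraction and application are handled using functoriality of the product and of $J$ and naturality of the enrichment bijection, while the clauses for $\return{}$ and $\slet{}{}{}$ rely on $J$ being identity-on-objects and strictly preserving copowers, so that the diagonals and action isomorphisms appearing in those interpretation clauses are natural in the context $\Gamma$.

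With the substitution lemma in hand, the main induction is routine in its structural part: the reflexivity, symmetry, transitivity and congruence rules follow immediately from the induction hypothesis and functoriality of $\den{-}$. It remains to check the defining axioms. The unit rule holds by terminality of $\den 1$, and the two product rules hold by the projection equations and surjective-pairing property of the finite products of $\VCat$. The $\beta$- and $\eta$-rules for $\pto$ follow from the natural bijection $\Homset{\CCat}{\ltensor{\den\Gamma}{\den\sigma}}{\den\tau}\cong\Homset{\VCat}{\den\Gamma}{\CCat(\den\sigma,\den\tau)}$ that defines the interpretations of abstraction and application, together with the substitution lemma to identify $\den{M[V/x]}$ with the transpose of the relevant composite. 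For the monadic rules, the left-unit rule $\slet{x}{\return V}{N}\equiv N[V/x]$ is a direct instance of the producer substitution lemma applied to $\den{\return V}=J(\den V)$; the right-unit rule $\slet{x}{M}{\return x}\equiv M$ and the associativity rule are obtained by unfolding the interpretation clause for $\slet{}{}{}$ (a diagonal, the action on $\den M$, and $\den N$) and simplifying.

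The step I expect to be the main obstacle is the associativity rule for $\slet{}{}{}$. Unwinding both sides produces two composites assembled from $J$-images of diagonals, the action isomorphisms $\ltensor{(A\times B)}\algC\cong\ltensor A{(\ltensor B\algC)}$, and the $\CCat$-morphisms $\den M,\den N,\den P$; proving that they coincide is a diagram chase that invokes the coherence of the action (the associativity pentagon and its compatibility with the product structure of $\VCat$ displayed in Section~\ref{sec:adjmodels}) together with the strict preservation of copowers by $J$, which is exactly what allows the diagonals to be pushed through the action. The same coherence, in milder form, already drives the producer case of the substitution lemma, so I would prove that lemma carefully enough that the associativity rule reduces to it plus one application of the pentagon.
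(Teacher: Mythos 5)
Your proposal is correct and follows the standard route: the paper itself gives no proof of this proposition (it simply cites Levy et al., Prop.~5.1), and your decomposition into a substitution lemma proved by induction on term structure followed by an induction on the derivation of $\equiv$ is exactly the approach the paper sketches in its proof notes for the analogous soundness proposition for the enriched call-by-value calculus. Your identification of the associativity rule for $\slet{x}{M}{N}$ as the one case requiring the coherence of the action and the strict preservation of copowers by $J$ is also accurate.
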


\subsection{Relationship with monads}
\label{sec:monads}
We now explain the connection between enriched Kleisli models 
and Kleisli categories for a monad.
For more detail, see the paper by Levy et al.~\cite{Levy:03}.

From the syntactic point of view, 
the fine-grain call-by-value language 
can be thought of as a variant of Moggi's $\lambda_C$~\cite{Moggi:89}:
the type construction $(1\pto(-))$ is a
monad.  

From the semantic point of view, 
recall that a $\lambda_C$-model~\cite{Moggi:89}
is a category with finite products and a strong monad 
with Kleisli exponentials. We now explain these conditions.

Let $\VCat$ be a category with finite products,
and let $(T,\eta,\mu)$ be a monad on $\VCat$.
Let $\CCat$ be the Kleisli category for $T$: the objects of $\CCat$ are 
the objects of $\VCat$ and a morphism $A\to B$ in $\CCat$ is a morphism
$A\to T(B)$ in $\VCat$. 
There is an identity-on-objects functor
$J\colon \VCat\to \CCat$ 
which takes a morphism $f\colon A\to B$ in $\VCat$
to the morphism $(\eta_B\cdot f):A\to B$ in $\CCat$.

A strength for a monad $T$ is usually expressed 
as a family of morphisms $A\times T(B)\to T(A\times B)$
that respect the structure of the monad.
In fact, a monad is strong if and only if 
there is an action of $\VCat$ on $\CCat$ and the
identity-on-objects functor $J\colon \VCat\to\CCat$ 
preserves it.
The strength is needed to 
take a morphism $f\colon B\to B'$ in $\CCat$
to  a morphism $(\ltensor A f):\ltensor A B\to \ltensor A{B'}$ 
in $\CCat$.

The requirement of Kleisli exponentials is normally 
expressed as the requirement that for all $A$ and $B$,
the hom-functor 
$\Homset{\VCat}{(-)\times A}{TB}:\opcat{\VCat}\to\Set$ is representable.
To endow~$\VCat$ with Kleisli exponentials is to 
give a right adjoint for the action,
i.e. an enrichment of $\CCat$ in $\VCat$.

Conversely, every enriched Kleisli model $(\VCat,\CCat,J)$ 
induces a strong monad on $\VCat$ with Kleisli exponentials.
The monad is defined using the closed structure:
$T(A)\defeq \CCat(1,A)$.
The Kleisli category for this monad is isomorphic to $\CCat$.
On the other hand, if we begin with a monad, build the Kleisli category
and then take the monad $\CCat(1,A)$, we recover a monad that is isomorphic
to the one that we started with. 
In this sense, enriched Kleisli models and $\lambda_C$-models are equivalent.
Note that they are not exactly the same,
for although the hom-set $\Homset{\VCat}{A}{\CCat(1,B)}$ is in bijection with
$\Homset{\CCat}{A}{B}$, the sets are typically not identical.

\subsection{The syntactic Kleisli model}
\label{sec:syn:Kl:model}

The types and terms of the fine-grain call-by-value calculus form 
a syntactic model.
We employ the same kinds of technique as 
for enriched call-by-value in Section~\ref{sec:syn:enr:model}.
\begin{itemize}
\item The objects of both $\VCat$ and $\CCat$ are the types of FGCBV.
\item A morphism $\STA\to \STB$ in $\VCat$ is a value judgement
$\vj{x\colon \STA}{V}\STB$ modulo the equational theory $\equiv$
(Figure~\ref{fig:lambdac:typing}) and modulo renaming the free variable~$x$.
Identities are variables and composition is by substitution. 
\item 
A morphism $\STA\to \STB$ in $\CCat$ 
is a computation judgement $\pj{x\colon \STA}M \STB$ modulo the equational theory
$\equiv$ and renaming the free variable~$x$.
The identity $\STA\to \STA$ is $\pj{x\colon \STA} {\return x}\STA$.
Composition is \emph{not} by substitution,
since one cannot substitute a producer term for a variable.
Rather, the composite of 
\[
\STA\xrightarrow {\pj{x\colon \STA}{M}\STB} \STB
\xrightarrow {\pj{y\colon \STB} N \STC} \STC
\]
in $\CCat$ is 
$\pj{x\colon \STA}{\slet {y} M N} \STC$.
\item The product structure in $\VCat$ is given by the product
types, projections and pairing.
\item The action of $\VCat$ on $\CCat$ is given on objects 
by the binary product types: let $\ltensor \STA \STB\defeq \STA \times \STB$.
On morphisms, given $A\xrightarrow {\vj{x\colon \STA}V {\STA'}} {\STA'}$ 
in $\VCat$ 
and 
$\STB\xrightarrow {\pj{y\colon \STB}M{\STB'}}{\STB'}$
in $\CCat$,
we define 
\[(\ltensor \STA \STB\xrightarrow{\ltensor V M} \ltensor {\STA'} {\STB'} )
\ \defeq \ 
\pj {z\colon \STA\times \STB}
{\slet {y'} {\sub M {\snd z}y} {\return {\pair {\sub V{\fst z}x} {y'} }}}
{\STA'\times \STB'} \]
\item The enrichment is given by
$\CCat(\STA,\STB)\defeq (\STA\pto \STB)$.
\item The identity-on-objects functor
$J\colon \VCat\to \CCat$ takes a morphism
${\STA\xrightarrow {\vj{x\colon \STA} V \STB} \STB}$ 
in $\VCat$
to the morphism
${\STA\xrightarrow {\pj{x\colon \STA} {\return V} \STB} \STB}$ 
in $\CCat$.
\end{itemize}

\noindent We write $\SynKlModel$ for the syntactic Kleisli model.

\subsection{Universal property of the syntactic Kleisli model}
\label{sec:syn:Kl:biinitial}

Appendix~\ref{app:Kleisli:cat} defines the 2-category $\Freyd$ of Kleisli models. As was the case for $\CATECBV$, 
1-cells are only required to preserve structure up to isomorphism. 

\begin{thm} \label{thm:fgcbv:biinitial}
The syntactic Kleisli model $\SynKlModel$ is bi-initial in $\Freyd$. The unique 
morphisms with domain $\SynKlModel$ are given by interpretation of syntax into models.
\end{thm}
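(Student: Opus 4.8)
The plan is to prove Theorem~\ref{thm:fgcbv:biinitial} by the same route as Theorem~\ref{thm:ecbv:biinitial}, adapted to the fine-grain call-by-value syntax and to the extra datum of the identity-on-objects functor $J$. Unwinding Definition~\ref{def:biinitial}, and using that $\Freyd$ is groupoid-enriched (so that each hom-category is a groupoid), bi-initiality of $\SynKlModel$ amounts to two facts about each Kleisli model $\modelM$: first, that the hom-category $\Freyd(\SynKlModel,\modelM)$ is inhabited; second, that between any two of its objects there is exactly one $2$-cell. The first clause supplies the existence of an interpretation $1$-cell, and the second simultaneously yields uniqueness up to unique $2$-isomorphism and identifies every $1$-cell out of $\SynKlModel$ with the interpretation.

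For existence I would take the interpretation $\den{-}$ of {\FGCBV} in $\modelM$ described in Section~\ref{sec:interpkleilsi}. Since the morphisms of $\VSynKl$ and $\CSynKl$ are equivalence classes of terms modulo the equational theory, soundness of the interpretation (\cite[Prop.~5.1]{Levy:03}) guarantees that $\den{-}$ is well defined on these classes, so it determines functors $\VSynKl\to\VCat$ and $\CSynKl\to\CCat$. I would then check that this pair is a morphism of Kleisli models (Definition~\ref{def:monadmodel}): preservation of finite products is immediate because product types are interpreted as products, with the comparison maps being the canonical isomorphisms; preservation of the action and of the enrichment follows from the clauses $\den{\sigma\prodtype\tau}=\den\sigma\times\den\tau$ and $\den{\sigma\pto\tau}=\CCat(\den\sigma,\den\tau)$ together with the natural bijection defining copowers; and compatibility with $J$ follows from $J$ being identity-on-objects and from the clause $\den{\return V}=J\den V$. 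This exhibits the interpretation as a $1$-cell $\SynKlModel\to\modelM$.

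For the uniqueness clause I would show that an arbitrary $1$-cell $H=(H_v,H_c)\colon\SynKlModel\to\modelM$ is connected to the interpretation by a unique invertible $2$-cell. Because $H$ preserves the terminal object, binary products, the action, the enrichment and $J$, each only up to coherent isomorphism, its value on a compound type is forced up to canonical isomorphism by its values on the generating data; I would build a family of comparison isomorphisms $\theta_\sigma\colon\den\sigma\xrightarrow{\cong}H_v\sigma$, together with the corresponding component on $\CCat$, by induction on the structure of types, using the structural isomorphisms carried by $H$ at each type former. One then verifies, by induction on typing derivations and using the definition of composition in $\CSynKl$ by sequencing (the $\codefont{to}$ construct) rather than by substitution, that $\theta$ is natural, i.e.\ a $2$-cell $\den{-}\Rightarrow H$; and that it is the \emph{only} such $2$-cell, since any $2$-cell must commute with every structural isomorphism, and the objects of $\SynKlModel$ are generated by the type formers $1$, $\prodtype$ and $\pto$ (together with any type constants, whose images are part of the preserved structure), so these commutations pin the components down everywhere. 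As in Theorem~\ref{thm:ecbv:biinitial}, this coherence argument is the main obstacle: the bookkeeping is heavier than in the enriched case because composition in $\CSynKl$ is by sequencing rather than substitution, and because the comparison isomorphisms must in addition be shown compatible with $J$ and with the strength hidden in the action (cf.\ Section~\ref{sec:monads}). Once this is done, inhabitation together with unique connectedness give the equivalence $\Freyd(\SynKlModel,\modelM)\simeq\mathbf 1$, which is exactly bi-initiality, and the displayed unique $1$-cell is visibly the interpretation.
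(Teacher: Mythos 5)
Your proposal is correct and follows essentially the same route as the paper: the paper proves this theorem only by remarking, in the appendix on bi-initiality of syntactic models, that it goes ``in a similar manner'' to Theorem~\ref{thm:ecbv:biinitial}, whose sketched proof is exactly your strategy --- existence via the interpretation of Section~\ref{sec:interpkleilsi}, and uniqueness via a type-indexed family of comparison isomorphisms built by induction on types and shown compatible with the interpretation of terms by induction on typing derivations (Lemma~\ref{lem:iso:interp:ecbv}). The additional bookkeeping you flag for $J$ and for composition-by-sequencing in $\CSynKl$ is precisely what the paper leaves implicit in that remark.
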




\section{The linear-use state-passing translation}
\label{sec:translation}

This section defines the linear-use state-passing translation
from the fine-grain call-by-value calculus to 
the enriched call-by-value calculus, 
and states the main syntactic results of this paper:
fullness on types and full completeness. Together these assert that 
the linear-use state-passing translation is an equivalence of languages.

We now fix a computation type $\EECarbstate$ of \ECBV.
For now, it can be an arbitrary computation type; later
we will make it a distinguished basic type to achieve a 
full completeness result.
We will describe a translation 
from {\FGCBV} to {\ECBV}. 
When $\EECarbstate$ is thought of as a type of states,
then this translation reads as a state-passing translation.

\renewcommand{\CBVtoEEC}{\CBVtoEECbase\EECarbstate}
\newcommand{\CBVtoEECV}{\CBVtoEECbase \EECarbstate}
\newcommand{\CBVtoEECP}[1]{\CBVtoEECbase \EECarbstate{#1}_\svar}
We translate \FGCBV\ types $\sigma$ to \ECBV\ value types
$\CBVtoEEC\sigma$:
\begin{align*}
\CBVtoEEC{\alpha} & \, \defeq \VconstA &
\CBVtoEEC{(\STA \times \STB)} & \, \defeq\, 
\CBVtoEEC{\STA} \prodtype \CBVtoEEC{\STB} &
\CBVtoEEC{1} & \, \defeq \algone&
\CBVtoEEC{(\STA \pto \STB)} & \, \defeq \, \ltensortype{(\CBVtoEEC{\STA})}{\EECarbstate} \lfun \, \ltensortype{(\CBVtoEEC{\STB})}{\EECarbstate}
\end{align*}
We extend this translation to type contexts,
taking an FGCBV type context $\Gamma$ to an
ECBV type context $\CBVtoEEC\Gamma$.

The translation on terms is syntax-directed.
We pick a variable $s$, completely fresh.
The translation takes an FGCBV value type judgement
${\vj{\Gamma}{V}{\STA}}$  
to an \ECBV\
judgement ${\tj{\CBVtoEEC{\Gamma}}{\CBVtoEECV{V}}{\CBVtoEEC{\STA}}}$,
and an FGCBV producer judgement $\pj{\Gamma}{M}{\STA}$ 
to an \ECBV\ judgement ${\aj{\CBVtoEEC{\Gamma}}{s \co \EECarbstate}{\CBVtoEECP{M}}{\ltensortype{(\CBVtoEEC{\STA})}{\EECarbstate}}}$. The translation is defined as follows.
\begin{gather*}
\CBVtoEECV{x} \defeq x 
\ \quad\ \CBVtoEECV{\star} \defeq \star 
\ \quad \ 
\CBVtoEECV{\pair{V}{W}}\defeq \pair{\CBVtoEECV{V}\!}{\CBVtoEECV{W}} 
\ \quad\ 
\CBVtoEECV{(\prj 1{V})}\defeq\prj 1{\CBVtoEECV{V}} 
\ \quad\ 
\CBVtoEECV{(\prj 2{V})}\defeq\prj 2{\CBVtoEECV{V}} 
\\
\begin{aligned}
\CBVtoEECP{(\return{V})} & \, \defeq \ltensorterm{(\CBVtoEECV{V})}{s} 
&
\CBVtoEECP{(\slet{x}{M}{N})} & \, \defeq \letdot{x}{s}{\CBVtoEECP{M}}{\CBVtoEECP{N}} 
\\
\CBVtoEECV{(\lam{x}{\STA}{N})} & \, \defeq \llam{z}{\ltensortype{\CBVtoEEC{\STA}}{\EECarbstate}}{\letdot{x}{s}{z}{\CBVtoEECP{N}}} &
\CBVtoEECP{(V \, W)} & \, \defeq \lappl{\CBVtoEECV{V}\,}{\ltensorterm{(\CBVtoEECV{W})}{s}} 
\end{aligned}
\end{gather*}
In the case for $\lambda$-abstraction, the $z$ is chosen to be completely fresh.

The translation respects types. For instance
\[\begin{prooftree}
{\pj{\Gamma,x\co\STA}{N}{\STB}}
\justifies
{\vj{\Gamma}{\lam{x}{\STA}N}{\STA\pto \STB}}
\end{prooftree}
\quad\text{becomes}\quad
\hide{
\begin{prooftree}
{\aj{\CBVtoEEC\Gamma,x\co \CBVtoEEC\STA}{s\co\EECarbstate}{\CBVtoEECP N}{\CBVtoEEC\STB}}
\justifies
{\tj{\CBVtoEEC\Gamma}{\llam {z}{\ltensortype{\CBVtoEEC{\STA}}{\EECarbstate}}{\letdot{x}{s}{z}{\CBVtoEECP{N}}}}
{{\ltensortype{\CBVtoEEC\STA}\EECarbstate}\lfun{\ltensortype{\CBVtoEEC\STB}\EECarbstate}}}
\end{prooftree}}
\begin{prooftree}
\[
\[
\justifies
{\aj{\CBVtoEEC\Gamma}{z\co \ltensortype {\CBVtoEEC\STA}\EECarbstate}
{z}{\ltensortype {\CBVtoEEC\STA}\EECarbstate}}\]
\quad
{\aj{\CBVtoEEC\Gamma,x\co \CBVtoEEC \STA}{s\co \EECarbstate}
{\CBVtoEECP N}{\ltensortype {\CBVtoEEC\STB}\EECarbstate}}
\justifies
{\aj{\CBVtoEEC\Gamma}{z\co \ltensortype {\CBVtoEEC\STA}\EECarbstate}
{\letdot x s z{\CBVtoEECP N}}{\ltensortype {\CBVtoEEC\STB}\EECarbstate}}
\]
\justifies
{\tj{\CBVtoEEC\Gamma}{\llam {z}{\ltensortype{\CBVtoEEC{\STA}}{\EECarbstate}}{\letdot{x}{s}{z}{\CBVtoEECP{N}}}}
{{\ltensortype{\CBVtoEEC\STA}\EECarbstate}\lfun{\ltensortype{\CBVtoEEC\STB}\EECarbstate}}}
\end{prooftree}
\]

\begin{thm}The linear-use state-passing translation is sound: \label{thm:soundness}
\begin{enumerate}
\item
If $\veq\Gamma VW\sigma$ then 
$\teq{\CBVtoEEC \Gamma}{\CBVtoEECV{V}}{\CBVtoEECV{W}}{\CBVtoEEC \sigma}$.
\item 
If $\peq\Gamma MN\sigma$ then 
$\aeq{\CBVtoEEC \Gamma}{s\co \EECarbstate}{\CBVtoEECP{M}}{\CBVtoEECP{N}}{\CBVtoEEC \sigma}$.
\end{enumerate}
\end{thm}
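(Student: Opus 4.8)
The plan is to prove the two statements simultaneously by induction on the derivation of the source equality. The reflexivity, symmetry, transitivity and congruence rules (elided in Figure~\ref{fig:lambdac:typing}) are immediate, since equality in {\ECBV} is itself an equivalence relation closed under the term-forming operations; so all the content lies in checking that the translation sends each equational \emph{axiom} of {\FGCBV} to a provable equation of {\ECBV}.

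Before the induction I would establish a substitution lemma stating that the translation commutes with value substitution: $\CBVtoEECV{(V[W/x])}=\CBVtoEECV{V}[\CBVtoEECV{W}/x]$ for values and $\CBVtoEECP{(M[V/x])}=\CBVtoEECP{M}[\CBVtoEECV{V}/x]$ for producers. This is a routine structural induction. The only point requiring care is the distinguished state variable $\svar$ and the fresh abstraction variable: a translated \emph{value} never has $\svar$ free (in the abstraction clause $\svar$ is bound by the pattern $\letdot{x}{s}{z}{\CBVtoEECP{N}}$), so substituting $\CBVtoEECV{W}$ can neither capture nor be captured by the threaded state, and the identities hold up to $\alpha$-equivalence.

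With the lemma available, each axiom is discharged by the corresponding {\ECBV} rule of Figure~\ref{figure:effects:typing}. The unit axiom and the two product axioms translate directly and use the {\ECBV} unit, product $\beta$ and product $\eta$ rules. For the function $\beta$-law $(\lam{x}{\STA}{M})\,V\equiv M[V/x]$, the left-hand translation unfolds to $(\llam{z}{\ltensortype{\CBVtoEEC{\STA}}{\EECarbstate}}{\letdot{x}{s}{z}{\CBVtoEECP{M}}})[\ltensorterm{(\CBVtoEECV{V})}{s}]$; applying the {\ECBV} function $\beta$-rule and then the tensor $\beta$-rule reduces it to $\CBVtoEECP{M}[\CBVtoEECV{V}/x]$, which equals $\CBVtoEECP{(M[V/x])}$ by the substitution lemma. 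The axiom $\slet{x}{\return{V}}{N}\equiv N[V/x]$ is handled identically, using tensor $\beta$ and the substitution lemma, while $\slet{x}{M}{\return{x}}\equiv M$ is exactly an instance of the tensor $\eta$-rule with the body chosen to be the bound tensor variable. The function $\eta$-law follows by combining the {\ECBV} function $\eta$-rule with a use of tensor $\eta$ to rewrite $\letdot{x}{s}{z}{(\lappl{\CBVtoEECV{V}}{\ltensorterm{x}{s}})}$ to $\lappl{\CBVtoEECV{V}}{z}$.

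The main obstacle is the associativity (sequencing) axiom, whose translation amounts to the commuting conversion for the tensor eliminator:
\[
\letdot{y}{s}{(\letdot{x}{s}{\CBVtoEECP{M}}{\CBVtoEECP{N}})}{\CBVtoEECP{P}}
\;\equiv\;
\letdot{x}{s}{\CBVtoEECP{M}}{(\letdot{y}{s}{\CBVtoEECP{N}}{\CBVtoEECP{P}})}.
\]
This is not an axiom of {\ECBV}, but it is derivable from the tensor $\beta$- and $\eta$-rules: one applies tensor $\eta$ to $\CBVtoEECP{M}$ with the continuation $\letdot{y}{s}{(\letdot{x}{s}{k}{\CBVtoEECP{N}})}{\CBVtoEECP{P}}$, where $k$ is the fresh tensor variable, and then simplifies the resulting inner redex $\letdot{x}{s}{(\ltensorterm{x}{s})}{\CBVtoEECP{N}}$ by tensor $\beta$; after renaming the rebound variables this yields the right-hand side. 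Carrying out this derivation while keeping the repeated rebindings of the threaded state variable $\svar$ straight through the nested eliminators is the crux of the proof; every other case is bookkeeping.
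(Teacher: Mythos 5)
Your proof is correct. It is the direct syntactic argument that the paper mentions as possible immediately after the theorem statement but does not carry out: the paper's own proof is semantic, deducing soundness from Lemma~\ref{lem:sp:transl:sem}, which identifies the state-passing translation (up to packaging the context as a product) with the interpretation of FGCBV in the Kleisli model $\ECBVToFreyd\SynEnrichedModel$ obtained from the syntactic enriched model; soundness is then inherited from soundness of interpretation in Kleisli models, because equality of morphisms in the syntactic enriched model is exactly provable equality in ECBV. Your route is more elementary and self-contained --- it needs none of the categorical apparatus of models, bi-initiality and the coreflection --- and it isolates which ECBV rule discharges each FGCBV axiom. You also correctly identify the only two points with real content: the substitution lemma, where the observation that translated values contain no free occurrence of the state variable $s$ is precisely what makes the statement capture-free, and the associativity-of-sequencing axiom, whose translation is the commuting conversion for the tensor eliminator and is derivable from the tensor $\beta$- and $\eta$-rules exactly as you describe (instantiate $\eta$ at the translation of $M$ with the nested eliminators as the one-holed continuation, then collapse the resulting inner redex by $\beta$ and $\alpha$-rename). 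What the paper's approach buys is reuse: the same semantic lemma that gives soundness also drives the full-completeness proof of Theorems~\ref{thm:full-on-types} and~\ref{thm:full:faithful}, so the term-level verification is done once rather than axiom by axiom.
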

This result can be proved by induction on the structure of equality
($\equiv$) derivations, but it can also be derived semantically as 
we shall see in Section~\ref{sec:sp:transl:semantics}.

\subsection{Full completeness}
We now state our main theorems: fullness on types and full completeness
on terms.
To state fullness on types we need to talk about isomorphism of types in {\ECBV}. 
This can be defined in the usual way: for
value types, an isomorphism $\VA \iso \VB$ is given by two judgements,
$\tj{x \co \VA}{t}{\VB}$ and $\tj{y \co \VB}{u}{\VA}$, such that
$u[t/y] \equiv x$ and $t[u/x] \equiv y$. For computation types, $\CA \iso \CB$ is
witnessed by closed terms of type $\CA \lfun \CB$, $\CB \lfun \CA$
composing in both directions to identities. 
We note the following type isomorphisms, inherited from 
the enriched effect calculus~\cite[\S 3]{Mogelberg:CSL:09}:
\begin{align}
\CA & \, \iso\, \ltensortype{\algone}{\CA}  \label{eq:tensor:one}
&
\ltensortype{\VA}{(\ltensortype{\VB}{\CC})} & \, \iso\, \ltensortype{(\VA \times \VB)}{\CC} 
\end{align}

\begin{thm}[Fullness on types]
\label{thm:full-on-types}
Let $\VA$ be a value type of {\ECBV} formed using no other computation type constants than $\EECstate$. Then there exists an {\FGCBV} type $\STA$ such that $\CBVtoEEC{\STA} \iso \VA$.
\end{thm}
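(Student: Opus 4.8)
The plan is to prove the statement by structural induction, but to strengthen it with a mutually-recursive claim about computation types so that the induction closes at the linear-function type. Concretely, I would establish simultaneously: \textbf{(i)} for every value type $\VA$ whose only computation constant is $\EECstate$ there is an FGCBV type $\sigma$ with $\CBVtoEEC{\sigma} \iso \VA$; and \textbf{(ii)} for every such computation type $\CA$ there is an FGCBV type $\sigma$ with $\ltensortype{(\CBVtoEEC{\sigma})}{\EECstate} \iso \CA$. Claim (ii) says that every computation type is, up to isomorphism, the translation of a producer judgement --- which is precisely the shape $\ltensortype{(\CBVtoEEC{\sigma})}{\EECstate}$ appearing in the translation of $\pto$ --- and this is exactly what is needed to handle the $\lpop$-case of (i).

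The key step is a normal-form observation for computation types, and this is where I would use the two inherited isomorphisms $\CA \iso \ltensortype{1}{\CA}$ and $\ltensortype{\VA}{(\ltensortype{\VB}{\CC})} \iso \ltensortype{(\VA \times \VB)}{\CC}$. By the grammar, a computation type over the single constant $\EECstate$ is either $\EECstate$ or an iterated tensor $\ltensortype{\VA_1}{(\cdots \ltensortype{\VA_n}{\EECstate})}$; repeatedly applying the associativity isomorphism collapses this to $\ltensortype{(\VA_1 \times \cdots \times \VA_n)}{\EECstate}$, and the unit isomorphism gives $\EECstate \iso \ltensortype{1}{\EECstate}$ in the base case. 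Hence every computation type is isomorphic to $\ltensortype{\VB}{\EECstate}$ for a single value type $\VB$. Applying the induction hypothesis (i) to this $\VB$ yields an FGCBV type $\sigma$ with $\CBVtoEEC{\sigma} \iso \VB$, and then $\ltensortype{-}{-}$ preserves isomorphisms in its first argument, giving $\ltensortype{(\CBVtoEEC{\sigma})}{\EECstate} \iso \ltensortype{\VB}{\EECstate} \iso \CA$, which is (ii).

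With (ii) available, the induction for (i) is routine. For a value type constant I take the FGCBV base type of which it is the image; for $1$ and $\VA_1 \times \VA_2$ I take $1$ and $\sigma_1 \times \sigma_2$, using that $\times$ preserves isomorphisms together with the hypotheses for $\VA_1, \VA_2$. For $\VA = \CA \lpop \CB$ I apply (ii) to $\CA$ and $\CB$ to get FGCBV types $\tau_1, \tau_2$ with $\CA \iso \ltensortype{(\CBVtoEEC{\tau_1})}{\EECstate}$ and $\CB \iso \ltensortype{(\CBVtoEEC{\tau_2})}{\EECstate}$; since $\lpop$ is the enrichment $\CCat(-,-)$ it is functorial and hence transports these isomorphisms to $\VA \iso \ltensortype{(\CBVtoEEC{\tau_1})}{\EECstate} \lpop \ltensortype{(\CBVtoEEC{\tau_2})}{\EECstate}$, and the right-hand side is by definition $\CBVtoEEC{(\tau_1 \pto \tau_2)}$. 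So $\sigma = \tau_1 \pto \tau_2$ works.

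I expect the principal work to be bookkeeping rather than conceptual. The one lemma that must be recorded first is that each type constructor --- $\times$, $\ltensortype{-}{-}$ and especially $\lpop$ --- transports syntactic isomorphisms of its arguments to a syntactic isomorphism of the result; equivalently, that these constructors act functorially on the groupoid of types and type-isomorphisms. Given this, both the normal-form step and the induction proceed mechanically. A secondary point to check is well-foundedness of the mutual induction, which holds because every appeal to a hypothesis is on a proper subtype of the type under analysis.
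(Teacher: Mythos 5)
Your proposal is correct and follows essentially the same route as the paper: an induction on types whose crux is the observation that every computation type built only from $\EECstate$ is isomorphic to one of the form $\ltensortype{\VC}{\EECstate}$, obtained by repeatedly applying the unit and associativity isomorphisms~(\ref{eq:tensor:one}). Your version merely makes explicit the mutual induction and the functoriality-of-constructors bookkeeping that the paper leaves implicit.
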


\begin{proof}
By induction on the structure of types. The interesting case $\CA \lfun \CB$ uses the fact that any computation type not using any $\CconstA$ other than $\EECstate$ is isomorphic to one of the form $\ltensortype{\VC}{\EECstate}$, which follows from the isomorphisms~(\ref{eq:tensor:one}). 
\end{proof}
\noindent
We now state our main syntactic result.
\renewcommand{\CBVtoEEC}{\CBVtoEECbase\EECstate}
\renewcommand{\CBVtoEECV}{\CBVtoEECbase \EECstate}
\renewcommand{\CBVtoEECP}[1]{\CBVtoEECbase \EECstate{#1}_\svar}
\begin{thm}[Full completeness] \label{thm:full:faithful}\mbox{}
\begin{enumerate}
\item \label{item:faithful:values} 
If $\vj{\Gamma}{V,W}{\STA}$ 
and $\teq{\CBVtoEEC\Gamma}{\CBVtoEECV{V}}{\CBVtoEECV{W}}{\CBVtoEEC\STA}$ 
then $\veq \Gamma V W\STA$.
\item
If  $\pj{\Gamma}{M,N}{\STA}$ and 
$\aeq{\CBVtoEEC\Gamma}{\svar\co \EECstate}{\CBVtoEECP{M}}{\CBVtoEECP{N}}{\ltensortype{\CBVtoEEC\STA}\EECstate}$ then 
$\peq{\Gamma}MN{\STA}$.
\item For any $\tj{\CBVtoEEC{\Gamma}}{t}{\CBVtoEEC{\STA}}$ there is a term $\vj{\Gamma}{V}{\STA}$ such that $\teq{\CBVtoEEC\Gamma}t{ \CBVtoEEC{V}}{\CBVtoEEC\STA}$. 
\item For any $\aj{\CBVtoEEC{\Gamma}}{s \co \EECstate}{t}{\ltensortype{(\CBVtoEEC{\STA})}{\EECstate}}$ there is a producer term $\pj{\Gamma}{M}{\STA}$ such that 
$\aeq{\CBVtoEEC\Gamma}{s\co\EECstate}t {\CBVtoEEC{M}}{\ltensortype{(\CBVtoEEC{\STA})}{\EECstate}}$.
\end{enumerate}
\end{thm}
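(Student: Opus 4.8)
The plan is to prove the four statements \emph{semantically} and in one stroke, rather than by four separate inductions on derivations. The idea is to recognise the linear-use state-passing translation as the canonical comparison between two bi-initial models and to show this comparison is an equivalence. Concretely, the translation of Section~\ref{sec:translation} on types and terms is exactly the interpretation of the bi-initial Kleisli model $\SynKlModel$ (Theorem~\ref{thm:fgcbv:biinitial}) into the Kleisli model $\ECBVToFreyd(\SynEnrichedModel)$ extracted from the bi-initial enriched model $\SynEnrichedModel$ (Theorem~\ref{thm:ecbv:biinitial}); here $\ECBVToFreyd$ sends an enriched model with state object $\EECstate$ to the Kleisli model of the induced linear-use state monad $T = \EECstate\lpop\ltensortype{(-)}{\EECstate}$, as set up in Section~\ref{sec:monads}. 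A short direct check that the interpretation clauses coincide with the translation clauses identifies the two. Once this is done, the four statements are precisely the assertions that the value-category functor $\VSynKl\to\VSynE$ and the computation-category functor $\CSynKl\to\Kl{T}$ underlying this comparison are each fully faithful.

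First I would set up the $2$-functors $\FreydToECBV\colon\Freyd\to\CATECBV$ and $\ECBVToFreyd\colon\CATECBV\to\Freyd$ and establish a coreflection $\lradj{\FreydToECBV}{\ECBVToFreyd}$ exhibiting $\Freyd$ as a coreflective sub-$2$-category of $\CATECBV$. The embedding $\FreydToECBV$ views a Kleisli model $(\VCat,\CCat,J)$ as the enriched model $(\VCat,\CCat,J(1))$ with distinguished state object $J(1)$, and the crucial point is that the unit $K\to\ECBVToFreyd\FreydToECBV(K)$ is a biequivalence. This is exactly the content of Section~\ref{sec:monads}: the linear-use state monad induced by the state object $J(1)$ is isomorphic to the monad $A\mapsto\CCat(1,A)$, and the Kleisli category of the latter is equivalent to $\CCat$. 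Since $\FreydToECBV$ is then a left biadjoint, it preserves the bi-initial object, so $\FreydToECBV(\SynKlModel)$ is bi-initial in $\CATECBV$ and hence biequivalent to $\SynEnrichedModel$. Applying the $2$-functor $\ECBVToFreyd$ and composing with the (invertible) unit yields a biequivalence $\SynKlModel\simeq\ECBVToFreyd(\SynEnrichedModel)$ in $\Freyd$, whose underlying comparison morphism is the state-passing translation (this is the sense in which the translation is the unit of the coreflection).

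It then remains to read off the theorem. Full faithfulness of the value-category functor gives a bijection $\Homset{\VSynKl}{\STA}{\STB}\cong\Homset{\VSynE}{\CBVtoEEC\STA}{\CBVtoEEC\STB}$ realised by $V\mapsto\CBVtoEEC{V}$; injectivity is statement~(1) and surjectivity is statement~(3), with a general context $\Gamma$ absorbed by the product structure. Full faithfulness of the computation-category functor gives $\Homset{\CSynKl}{\STA}{\STB}\cong\Homset{\Kl{T}}{\CBVtoEEC\STA}{\CBVtoEEC\STB}$, and rewriting the right-hand side through the copower–enrichment adjunction as $\Homset{\CSynE}{\ltensortype{\CBVtoEEC\STA}{\EECstate}}{\ltensortype{\CBVtoEEC\STB}{\EECstate}}$ identifies it with the set of producer translations, so injectivity and surjectivity give statements~(2) and~(4). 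I expect the genuine obstacle to be the coreflection itself: verifying that $\FreydToECBV$ and $\ECBVToFreyd$ are well-defined $2$-functors preserving all structure (and the state object) up to coherent isomorphism, and above all that the unit is a biequivalence. This is where the monad/Kleisli correspondence of Section~\ref{sec:monads} must be packaged into the $2$-categorical machinery of Appendices~\ref{app:cats:of:models} and~\ref{app:Kleisli:cat}; by comparison, the reduction of the four statements to full faithfulness and the identification of the comparison with the translation are routine. A purely syntactic alternative — fullness by induction on $\ECBV$ derivations at translated types, faithfulness by exhibiting an inverse translation — is possible, but it would duplicate this structural work term by term, so I would prefer the semantic route.
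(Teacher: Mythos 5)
Your proposal is correct and follows essentially the same route as the paper: identify the translation with the interpretation of the bi-initial Kleisli model $\SynKlModel$ into $\ECBVToFreyd(\SynEnrichedModel)$ (the paper's Lemma~\ref{lem:sp:transl:sem}), establish the coreflection $\lradj{\FreydToECBV}{\ECBVToFreyd}$ with invertible unit (Theorem~\ref{thm:adj}), use preservation of bi-initial objects by left biadjoints to get $\SynEnrichedModel\simeq\FreydToECBV\SynKlModel$ (Corollary~\ref{cor:eq:syn:models}), and conclude that the translation is an equivalence in $\Freyd$, whence full and faithful. You have also correctly located the real work in the $2$-categorical verification of the coreflection, which the paper relegates to the appendices.
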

%
%
\noindent In Section~\ref{sec:sp:transl:semantics} we sketch a semantic proof of Theorems~\ref{thm:full-on-types} and~\ref{thm:full:faithful}.



\section{A semantic proof of full completeness}
\label{sec:relating:models}

In this section we present two constructions on models. The first 
(\S\ref{sec:enrtokleisli})
constructs a Kleisli model (Def.~\ref{def:monadmodel})
from an enriched model (Def.~\ref{def:enrichedmodel})
with a specified
state object.  The second (\S\ref{sec:kleisli-to-enriched-models})
constructs an enriched model with a state
object from a given Kleisli model. The state-passing translation
arises from the first of these constructions. These two constructions
form a bi-adjunction exhibiting the category of Kleisli models as a
coreflective subcategory of the category of enriched models with chosen
state objects (\S\ref{sec:adj}).  
In Section~\ref{sec:sp:transl:semantics} we shall see
how to use these facts to explain full completeness of the 
linear-use state-passing
translation
(Theorem~\ref{thm:full:faithful}).

\subsection{From enriched models with state to Kleisli models}\label{sec:enrtokleisli}

Given an {\enrmodel} $(\VCat, \CCat)$ with a state object $\stateobj$
in $\CCat$, we can form an {\Klmodel} $\ECBVToFreyd(\VCat, \CCat,
\stateobj) \eqdef (\VCat, \KlCat{\VCat}{\CCat}{\stateobj},
J_{\stateobj})$, where the category $\KlCat{\VCat}{\CCat}{\stateobj}$
has the same objects as $\VCat$ and hom-sets
\[
\Homset{\KlCat{\VCat}{\CCat}{\stateobj}}{\SA}{\SB} \defeq 
\Homset{\CCat}{\ltensor{\SA}{\stateobj}}{\ltensor{\SB}{\stateobj}}
\]
Composition in $\KlCat{\VCat}{\CCat}{\stateobj}$ is just composition as in $\CCat$. (This is an isomorphic presentation of the Kleisli category for the monad $\CHom{\stateobj}{\ltensor{-}{\stateobj}}$ on $\VCat$.) The functor $J_{\stateobj}$ is the identity on objects and maps $f\co \SA \to \SB$ to  $\ltensor{f}{\stateobj}$.

\begin{lem} \label{lem:Kl:well-def}
For any {\enrmodel} with state $(\VCat, \CCat, \stateobj)$ the data $(\VCat, \KlCat{\VCat}{\CCat}{\stateobj}, J_{\stateobj})$ defines an {\Klmodel}.
\end{lem}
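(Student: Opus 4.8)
The plan is to verify, in turn, the three requirements of Definition~\ref{def:monadmodel}: that $\KlCat{\VCat}{\CCat}{\stateobj}$ is a category, that $(\VCat,\KlCat{\VCat}{\CCat}{\stateobj})$ is an enriched model, and that $J_{\stateobj}$ is an identity-on-objects functor that strictly preserves copowers. The category axioms are the easiest step: a morphism $A\to B$ of $\KlCat{\VCat}{\CCat}{\stateobj}$ is by definition a $\CCat$-morphism $\ltensor A\stateobj\to\ltensor B\stateobj$, so domains and codomains match and both composition and identities can be inherited verbatim from $\CCat$; associativity and unit laws then follow from those of $\CCat$. (As the construction notes, this is just the Kleisli category of the monad $\CHom\stateobj{(\ltensor{-}\stateobj)}$, which is a category for free.)

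Next I would equip $(\VCat,\KlCat{\VCat}{\CCat}{\stateobj})$ with enriched-model structure. The category $\VCat$ has finite products by hypothesis. I set the action on objects to be $\ltensor A B\defeq A\times B$, and on morphisms I conjugate the copower functor of $\CCat$ by the associativity isomorphism of $\CCat$: for $f\co A\to A'$ in $\VCat$ and $g\co B\to B'$ in $\KlCat{\VCat}{\CCat}{\stateobj}$ (that is, $g\co\ltensor B\stateobj\to\ltensor{B'}\stateobj$ in $\CCat$), I let $\ltensor f g$ be the $\CCat$-composite
\[
\ltensor{(A\times B)}\stateobj \xrightarrow{\;\iso\;} \ltensor A{(\ltensor B\stateobj)} \xrightarrow{\;\ltensor f g\;} \ltensor{A'}{(\ltensor{B'}\stateobj)} \xrightarrow{\;\iso\;} \ltensor{(A'\times B')}\stateobj .
\]
The unit and associativity isomorphisms of this action I take to be the $J_{\stateobj}$-images of the product isomorphisms of $\VCat$. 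The enrichment is $\KlCat{\VCat}{\CCat}{\stateobj}(B,C)\defeq\CHom{(\ltensor B\stateobj)}{(\ltensor C\stateobj)}$, an object of $\VCat$, and the copower adjunction is witnessed by the chain of natural bijections
\begin{align*}
\Homset{\KlCat{\VCat}{\CCat}{\stateobj}}{\ltensor A B}{C}
&= \Homset{\CCat}{\ltensor{(A\times B)}\stateobj}{\ltensor C\stateobj} \\
&\iso \Homset{\CCat}{\ltensor A{(\ltensor B\stateobj)}}{\ltensor C\stateobj}
\iso \Homset{\VCat}{A}{\CHom{(\ltensor B\stateobj)}{(\ltensor C\stateobj)}},
\end{align*}
obtained from the associativity isomorphism of $\CCat$ followed by the enrichment adjunction of $\CCat$.

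It then remains to treat $J_{\stateobj}$. It is identity-on-objects by definition and functorial because $\ltensor{-}\stateobj$ is a functor, so $J_{\stateobj}(\id)=\id$ and $J_{\stateobj}(g\circ f)=J_{\stateobj}(g)\circ J_{\stateobj}(f)$. Strictness on objects is immediate, since $J_{\stateobj}(A\times B)=A\times B=\ltensor A{J_{\stateobj}(B)}$. The naturality half of strict preservation, $J_{\stateobj}(f\times g)=\ltensor f{J_{\stateobj}(g)}$, unfolds (via the displayed definition of $\ltensor f g$, taking $g$ itself of the form $\ltensor{g_0}\stateobj$) to exactly the naturality square of the copower associativity isomorphism of $\CCat$ for $f$, $g_0$ and $\id_{\stateobj}$; and the identification of the product isomorphisms with the copower unit and associativity isomorphisms holds by construction, since the latter were chosen to be the $J_{\stateobj}$-images of the former.

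The step I expect to require the most care is checking that the structural isomorphisms chosen for the action on $\KlCat{\VCat}{\CCat}{\stateobj}$ genuinely constitute an action: that they are natural in the arguments valued in $\KlCat{\VCat}{\CCat}{\stateobj}$ (not merely for $J_{\stateobj}$-images of $\VCat$-morphisms), and that they satisfy the two coherence diagrams of Definition~\ref{def:enrichedmodel}. The clean way to discharge this is to observe that $K\co\KlCat{\VCat}{\CCat}{\stateobj}\to\CCat$, sending $A\mapsto\ltensor A\stateobj$ and acting as the identity on hom-sets, is a full and faithful functor that carries the action to that of $\CCat$ up to the associativity isomorphism $K(\ltensor A B)=\ltensor{(A\times B)}\stateobj\iso\ltensor A{K(B)}$. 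Since $K$ is faithful, every such coherence equation in $\KlCat{\VCat}{\CCat}{\stateobj}$ holds iff its $K$-image does, and the $K$-images are precisely the coherence diagrams and the naturality of the associativity isomorphism already assumed for the action of $\VCat$ on $\CCat$. This diagram chase is the only point that is not purely formal.
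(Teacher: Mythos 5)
Your proposal is correct and follows essentially the same route as the paper's proof: the action is defined on objects by the product and on morphisms by conjugating $\ltensor{f}{g}$ with the associativity isomorphism of the given action, and the right adjoint to $\Klltensor{(-)}{\SA}$ is $\CHom{\ltensor{\SA}{\stateobj}}{\ltensor{(-)}{\stateobj}}$. You simply spell out in more detail the functoriality, coherence, and strict-preservation checks that the paper dismisses as routine, and your device of transporting the verifications along the full and faithful functor $K$ is a clean way to discharge them.
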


\begin{proof}
The action $\Klltensor{(-_1)}{(-_2)} \co \VCat \times \KlCat{\VCat}{\CCat}{\stateobj} \to \KlCat{\VCat}{\CCat}{\stateobj}$ is defined on objects as $\Klltensor{\SA}{\SB} = \SA \times \SB$. On morphisms it maps $f\co \SA \to \SA', g\co \ltensor{\SB}{\stateobj} \to \ltensor{\SB'}{\stateobj}$ to the composite
\[
\ltensor{(\SA \times \SB)}{\stateobj} \xrightarrow\iso \ltensor{\SA}{(\ltensor{\SB}{\stateobj})} \xrightarrow{\ltensor{f}g} 
\ltensor{\SA'}{(\ltensor{\SB'}{\stateobj})} \xrightarrow\iso \ltensor{(\SA' \times \SB')}{\stateobj}
\]
which is easily seen to be functorial. 

The right adjoint to $\Klltensor{(-)}{\SA}$ is $\KlHom{\stateobj}{\SA}{-} \eqdef \CHom{\ltensor{\SA}{\stateobj}}{\ltensor{(-)}{\stateobj}}$. \end{proof}

The construction $\ECBVToFreyd$ described above extends to a 2-functor $\ECBVToFreyd \co \CATECBV \to \Freyd$
from the 2-category of enriched models to the 2-category of 
Kleisli models.
See Appendix~\ref{app:ECBVToFreyd} for details.

\subsection{From Kleisli models to enriched models with state}
\label{sec:kleisli-to-enriched-models}
Any Kleisli model is trivially an enriched model, so for the opposite
construction we just need to pick a state object in a Kleisli
model. We define $\FreydToECBV(\VCat, \CCat, J) \eqdef (\VCat, \CCat,
1)$, where $1$ is the terminal object of $\VCat$ considered 
as an object of $\CCat$.
This definition
extends to a 2-functor $\FreydToECBV \co \Freyd \to \CATECBV$,
as shown in Appendix~\ref{app:adjunction}.

The motivation for this definition is that, as we now show, the 2-category $\CATECBV$ can be seen as a 
2-category of enriched adjunctions, and the 2-functor $\FreydToECBV$ can be seen as an inclusion of
Kleisli adjunctions into $\CATECBV$. 

Let $(\VCat, \CCat)$ be an enriched model. By an \emph{enriched adjunction} we mean 
an adjunction $\lradj{F}{U} \co \CCat \to \VCat$ equipped with a natural 
coherent isomorphism
$F(\SA\times\SB)\iso\ltensor{\SA}{F(\SB)}$.
When $\VCat$ is cartesian closed, this is equivalent to
the usual definition, i.e. a natural isomorphism
${\CCat(F(-_1),-_2)\iso \VCat(-_1,U(-_2))}$ in $\VCat$ (see e.g.~\cite{kelly-enrichedadj}). 

Any choice of state object gives an enriched adjunction,
since
${(\ltensor -\stateobj)}$ 
is left adjoint to  ${\CHom\stateobj-\colon \CCat\to\VCat}$.
The following proposition (first noted for $\EEC$, \cite[Proof of Thm.~4]{Mogelberg:CSL:09}, \cite{EEC:journal}) shows that every enriched adjunction arises in this way:
\begin{prop}[\cite{EEC:journal}]
\label{prop:adjmodels}
Let $(\VCat,\CCat)$ be an enriched model.
If ${F\dashv U\colon \CCat\to\VCat}$ is an enriched adjunction
then it is naturally isomorphic to
the enriched adjunction induced by $F(1)$.
\end{prop}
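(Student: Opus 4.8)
The plan is to take the state object to be $\stateobj \defeq F(1)$ and to produce a natural isomorphism between the left adjoint $F$ and the copower functor $\ltensor{-}{\stateobj}$, then to carry it across to the right adjoints. First I would instantiate the coherent isomorphism of the enriched adjunction, written $\phi_{\SA,\SB}\colon F(\SA\times\SB)\xrightarrow{\cong}\ltensor{\SA}{F(\SB)}$, at $\SB=1$, and compose with $F$ applied to the inverse of the product unit isomorphism $r_\SA\colon \SA\times 1\iso\SA$ to set
\[
\theta_\SA \;\defeq\; \bigl(F(\SA)\xrightarrow{F(r_\SA^{-1})}F(\SA\times 1)\xrightarrow{\phi_{\SA,1}}\ltensor{\SA}{F(1)}\bigr)\text.
\]
Naturality of $\theta$ in $\SA$ follows from naturality of $\phi$ in its first argument together with naturality of $r$, and $\theta_\SA$ is invertible since both of its constituents are. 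Thus $\theta\colon F\iso(\ltensor{-}{\stateobj})$ is a natural isomorphism of functors $\VCat\to\CCat$.

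Next I would observe that, by the definition of enrichment with copowers, $\ltensor{-}{\stateobj}$ is left adjoint to $\CHom{\stateobj}{-}$, while $F$ is left adjoint to $U$ by hypothesis. Since a left adjoint determines its right adjoint up to unique natural isomorphism, the isomorphism $\theta$ has a unique conjugate (mate) natural isomorphism $\psi\colon U\iso\CHom{\stateobj}{-}$, compatible with the units and counits of the two adjunctions. Hence $(\theta,\psi)$ is already an isomorphism of the underlying ordinary adjunctions $F\dashv U$ and $(\ltensor{-}{\stateobj})\dashv\CHom{\stateobj}{-}$.

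The hard part will be verifying that $\theta$ is in addition an isomorphism of \emph{enriched} adjunctions, i.e. that it intertwines the coherent isomorphism $\phi$ carried by $F$ with the coherent isomorphism carried by $\ltensor{-}{\stateobj}$, which is precisely the associativity isomorphism $\ltensor{(\SA\times\SB)}{\stateobj}\iso\ltensor{\SA}{(\ltensor{\SB}{\stateobj})}$ of the action. Concretely, I must show that the square
\[
\xymatrix{
F(\SA\times\SB) \ar[r]^-{\phi_{\SA,\SB}} \ar[d]_{\theta_{\SA\times\SB}} & \ltensor{\SA}{F(\SB)} \ar[d]^{\ltensor{\SA}{\theta_\SB}} \\
\ltensor{(\SA\times\SB)}{\stateobj} \ar[r]^-{\cong} & \ltensor{\SA}{(\ltensor{\SB}{\stateobj})}
}
\]
commutes. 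Expanding the definition of $\theta$, this reduces to an instance of the associativity coherence axiom for $\phi$ evaluated at $(\SA,\SB,1)$, together with the unit coherence axiom and naturality; the remaining bookkeeping is a routine diagram chase using the triangle and pentagon laws for the action recorded in Section~\ref{sec:adjmodels}. Once this square is known to commute, $(\theta,\psi)$ is the desired isomorphism of enriched adjunctions, and since $\stateobj=F(1)$ this is exactly the enriched adjunction induced by $F(1)$, which completes the proof.
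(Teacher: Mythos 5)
Your proposal is correct; the paper itself gives no proof of Proposition~\ref{prop:adjmodels}, deferring to the cited works, and your argument (define $\theta_\SA$ by instantiating the coherence isomorphism of the enriched adjunction at $\SB=1$ and precomposing with $F$ of the unit isomorphism, pass to mates for the right adjoints, and verify the compatibility square via the associativity coherence axiom at $(\SA,\SB,1)$ together with naturality) is exactly the standard argument those sources use. The only point worth flagging is that the paper leaves the word ``coherent'' in its definition of enriched adjunction unspecified, so your reduction of the final square to the associativity axiom implicitly fixes the (standard) choice of coherence conditions; with that understood, the proof is complete.
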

So enriched adjunctions correspond essentially bijectively 
to state objects.
In particular the state object corresponding to a Kleisli model is $1$. 
Monads induced by state objects can be described in {\ECBV} as $\stateobj \lpop \ltensortype{(-)}{\stateobj}$.
By the correspondence between Kleisli models and strong monads we arrive at the slogan: 
\emph{Every strong monad is a linear-use state monad.} 
More directly, the slogan can be derived from the 
isomorphism $\KlCat{}{} T(1,\SA\times 1)\cong T(\SA)$, which holds for 
the Kleisli category $\KlCat{}{}T$ of any strong monad~$T$. 

(\emph{Remark:} if $T$ is a strong monad on $\VCat$
then, for any object~$\stateobj$ of $\KlCat{}{}T$, 
the linear-use state monad 
$\KlCat{}{}T(\stateobj,\ltensor{(-)}{\stateobj})$ on $\VCat$ is also known as the
application of the state monad transformer to~$T$, as in~\cite{lhj-monad-transformers}.)


\subsection{A coreflection}
\label{sec:adj}

The constructions $\ECBVToFreyd$ and $\FreydToECBV$ form a bi-adjunction
between the 2-categories of Kleisli models and of enriched models
with state. One intuition for this is that it expresses the minimality
property of Kleisli resolutions of monads.

\begin{thm} \label{thm:adj} The 2-functor $\FreydToECBV \co \Freyd
  \to \CATECBV$ is left biadjoint to $\ECBVToFreyd$, i.e., for any
  pair of objects $(\VCat, \CCat, J)$ and $(\VCat', \CCat',
  \stateobj)$ of $\Freyd$ and $\CATECBV$ respectively, there is an
  equivalence of categories
  \[\CATECBV(\FreydToECBV(\VCat, \CCat, J),(\VCat', \CCat',
  \stateobj))\ \simeq\ \Freyd((\VCat, \CCat, J),\ECBVToFreyd(\VCat',
  \CCat', \stateobj))
\]
natural in $(\VCat, \CCat, J)$  and $(\VCat', \CCat', \stateobj)$. 
Moreover, the unit of the adjunction $\eta \co \id_{\CATECBV} \to \ECBVToFreyd\circ\FreydToECBV$ is an isomorphism.
\end{thm}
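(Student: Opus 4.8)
The plan is to establish the biadjunction by giving the comparison functor on hom-categories, producing an explicit pseudo-inverse, and computing the unit directly; the coherence bookkeeping forced by the bicategorical (``preserve up to isomorphism'') nature of the $1$-cells will be the only real source of work. Write $X=(\VCat,\CCat,J)$ and $Y=(\VCat',\CCat',\stateobj)$, so that $\FreydToECBV X=(\VCat,\CCat,1)$ and $\ECBVToFreyd Y=(\VCat',\KlCat{\VCat'}{\CCat'}{\stateobj},J_\stateobj)$. The comparison functor
\[
\Phi\co\CATECBV(\FreydToECBV X,Y)\ \longrightarrow\ \Freyd(X,\ECBVToFreyd Y)
\]
is the usual one: it sends a $1$-cell $H$ to $\ECBVToFreyd(H)$ precomposed with the unit component $\eta_X$, and acts on $2$-cells by applying $\ECBVToFreyd$ and whiskering. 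Once $\FreydToECBV$ and $\ECBVToFreyd$ are known to be $2$-functors (Appendices~\ref{app:ECBVToFreyd} and~\ref{app:adjunction}), the substance of the theorem is that each $\Phi$ is an equivalence of categories, pseudonatural in $X$ and $Y$.

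I would verify this by describing $\Phi$ concretely and building a pseudo-inverse $\Psi$. A $1$-cell $\FreydToECBV X\to Y$ is a morphism of enriched models $(K,L)\co(\VCat,\CCat)\to(\VCat',\CCat')$ whose computation part preserves the state object up to isomorphism, i.e.\ $L(1)\iso\stateobj$, where $1=J(1)$ is the state object of $\FreydToECBV X$. The key point is that in a Kleisli model every object of $\CCat$ satisfies $A=J(A)\iso\ltensor A{J(1)}$, so that
\[
L(A)\ \iso\ L(\ltensor A{J(1)})\ \iso\ \ltensor{K(A)}{L(1)}\ \iso\ \ltensor{K(A)}{\stateobj}
\]
using, in turn, the Kleisli coherence $A\iso\ltensor A{J(1)}$, preservation of the action by $(K,L)$, and $L(1)\iso\stateobj$. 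Conjugating $L(f)$ by these isomorphisms turns each $f\co A\to B$ of $\CCat$ into a morphism $\ltensor{K(A)}{\stateobj}\to\ltensor{K(B)}{\stateobj}$ of $\CCat'$, that is, a morphism $K(A)\to K(B)$ of $\KlCat{\VCat'}{\CCat'}{\stateobj}$; this is exactly the computation part of $\Phi(K,L)$. Conversely, $\Psi$ sends a morphism of Kleisli models $(K,\bar L)\co X\to\ECBVToFreyd Y$ to the enriched-model morphism with value part $K$ and computation part $A\mapsto\ltensor{K(A)}{\stateobj}$ on objects and $f\mapsto\bar L(f)$, read as a $\CCat'$-morphism between copowers, on arrows; the state object is preserved since $\ltensor{K(1)}{\stateobj}\iso\ltensor 1\stateobj\iso\stateobj$. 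The displayed chain of isomorphisms provides an invertible $2$-cell $\Psi\Phi\iso\id$, and a symmetric computation gives $\Phi\Psi\iso\id$, which is precisely why the two hom-categories are equivalent rather than isomorphic.

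For the last clause I would argue by direct calculation. For a Kleisli model $X$ we have $\ECBVToFreyd\FreydToECBV X=(\VCat,\KlCat{\VCat}{\CCat}{1},J_1)$, and the copower-unit coherence of a Kleisli model gives $\ltensor A1=\ltensor A{J(1)}=J(A\times 1)\iso J(A)=A$ in $\CCat$, whence
\[
\Homset{\KlCat{\VCat}{\CCat}{1}}{A}{B}\ =\ \Homset{\CCat}{\ltensor A1}{\ltensor B1}\ \iso\ \Homset{\CCat}{A}{B}.
\]
These isomorphisms are natural and compatible with composition, the action and $J$, so the component $\eta_X$ is an isomorphism of Kleisli models. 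This both exhibits $\Freyd$ as a coreflective sub-$2$-category of $\CATECBV$ and yields the final statement of the theorem.

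The main obstacle is coherence rather than any single difficult idea. Because $1$-cells of $\CATECBV$ and $\Freyd$ preserve the products, the action, the enrichment and the (state) object only up to specified coherent isomorphism (Appendices~\ref{app:CATECBV} and~\ref{app:Kleisli:cat}), I must check that $\Phi$ and $\Psi$ are well defined on $2$-cells, that the structural isomorphisms transported along the bijection above remain coherent so that $\Psi(K,\bar L)$ is genuinely a morphism of enriched models, and that the family $\Phi$ is pseudonatural in both variables, i.e.\ compatible with horizontal composition of $1$-cells up to an invertible modification. Each of these is a routine diagram-chase with the associativity and unit $2$-cells of the copower, but the verification is lengthy and I would organise it around the explicit descriptions of the $2$-categories and $2$-functors in the appendices.
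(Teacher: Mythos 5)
Your proposal is correct, and the two hom-functors you construct are exactly the ones the paper uses: your $\Phi$ is $(\eta_X)^*\circ\ECBVToFreyd$, and your $\Psi$ is, in the paper's notation, $(\epsilon_Y)_*\circ\FreydToECBV$ for the counit $\epsilon_Y=(\id,\ltensor{(-)}{\stateobj},\delta)$ of Appendix~\ref{app:adjunction}, even though you never name the counit. Where you genuinely diverge is in how the pseudo-inverse property is verified. The paper factors the argument through the counit, a pseudonaturality lemma for it (Lemma~\ref{lem:eps:biadj}), and two triangle identities (Propositions~\ref{prop:triangle:eq:1} and~\ref{prop:triangle:eq:2}, one holding strictly and one only up to isomorphism), and then assembles the equivalence by a purely formal calculation. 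You instead exhibit the invertible $2$-cell $\Psi\Phi\iso\id$ directly from the chain $L(A)\iso L(\ltensor{A}{J(1)})\iso\ltensor{K(A)}{L(1)}\iso\ltensor{K(A)}{\stateobj}$, i.e.\ from the observation that a $1$-cell out of $\FreydToECBV(\VCat,\CCat,J)$ is determined up to coherent isomorphism by its value part and the image of the state object $1$; this is the paper's Proposition~\ref{prop:adjmodels} in disguise, which appears in the main text but is not invoked in the appendix proof. Both routes carry essentially the same coherence burden (your deferred diagram chases correspond to the paper's omitted verifications in Lemmas~\ref{lem:eta:biadj} and~\ref{lem:eps:biadj}); the paper's decomposition buys reusable infrastructure -- the explicit unit and counit drive the description of $\eta$ in Section~\ref{sec:adj}, Corollary~\ref{cor:eq:syn:models}, and the extension to effect theories in Theorem~\ref{thm:adj:eff} -- while yours is more economical if the only goal is the hom-equivalence itself. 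Your argument that $\eta_X$ is an isomorphism via $\Homset{\KlCat{\VCat}{\CCat}{1}}{A}{B}=\Homset{\CCat}{\ltensor{A}{1}}{\ltensor{B}{1}}\iso\Homset{\CCat}{A}{B}$ is the content of Lemma~\ref{lem:eta:iso}.
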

See Appendix~\ref{app:adjunction} for a proof of Theorem~\ref{thm:adj}. Since left bi-adjoints preserve 
bi-initial objects we get the following
connection between the syntactic enriched model
$\SynEnrichedModel$ and the syntactic Kleisli model $\SynKlModel$.
\begin{cor} \label{cor:eq:syn:models}
$\SynEnrichedModel$ and $\FreydToECBV\SynKlModel$ are equivalent as objects of $\CATECBV$. 
\end{cor}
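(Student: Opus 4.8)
The plan is to derive this entirely formally from results already in place, exactly as the preceding remark suggests: left bi-adjoints preserve bi-initial objects. Concretely, I would show that $\FreydToECBV\SynKlModel$ is bi-initial in $\CATECBV$, and then conclude by the uniqueness of bi-initial objects, since $\SynEnrichedModel$ is already known to be bi-initial in $\CATECBV$ by Theorem~\ref{thm:ecbv:biinitial}.

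First I would verify the preservation claim, mirroring the classical fact that a left adjoint preserves initial objects. By Theorem~\ref{thm:fgcbv:biinitial} the model $\SynKlModel$ is bi-initial in $\Freyd$, so for every object $X$ of $\Freyd$ the hom-category $\Freyd(\SynKlModel, X)$ is equivalent to the terminal category $\mathbf{1}$ of Definition~\ref{def:biinitial}. For an arbitrary object $(\VCat', \CCat', \stateobj)$ of $\CATECBV$ I would then compute
\[
\CATECBV(\FreydToECBV\SynKlModel,\, (\VCat', \CCat', \stateobj))\ \simeq\ \Freyd(\SynKlModel,\, \ECBVToFreyd(\VCat', \CCat', \stateobj))\ \simeq\ \mathbf{1},
\]
where the first equivalence is the bi-adjunction of Theorem~\ref{thm:adj} instantiated with domain $\SynKlModel$, and the second is bi-initiality of $\SynKlModel$ applied to the object $\ECBVToFreyd(\VCat', \CCat', \stateobj)$. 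Thus $\CATECBV(\FreydToECBV\SynKlModel, -)$ is pointwise equivalent to the terminal category, i.e.\ $\FreydToECBV\SynKlModel$ is bi-initial in $\CATECBV$.

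Finally I would invoke uniqueness: any two bi-initial objects $X$ and $Y$ of a 2-category are equivalent, since bi-initiality supplies 1-cells $X \to Y$ and $Y \to X$ whose composites are forced to be 2-isomorphic to the respective identities (each endo-hom-category being equivalent to $\mathbf{1}$). Applying this to the two bi-initial objects $\SynEnrichedModel$ and $\FreydToECBV\SynKlModel$ yields the asserted equivalence in $\CATECBV$. The argument is purely formal and I foresee no real obstacle; the one point needing care is that the bi-adjunction equivalence of Theorem~\ref{thm:adj} is natural in the second variable, so that the two equivalences in the display above may be composed uniformly in $(\VCat', \CCat', \stateobj)$ — but this naturality is part of the statement of that theorem, and so is available for free.
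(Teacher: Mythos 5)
Your proposal is correct and follows exactly the paper's argument: the corollary is stated there as an immediate consequence of the fact that left bi-adjoints preserve bi-initial objects, combined with Theorems~\ref{thm:fgcbv:biinitial} and~\ref{thm:ecbv:biinitial} and the uniqueness of bi-initial objects up to equivalence. The only difference is that you helpfully spell out the preservation argument via the hom-category equivalences of Theorem~\ref{thm:adj}, which the paper leaves implicit.
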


Since 
$\ECBVToFreyd(\FreydToECBV(\VCat, \CCat, J)) = (\VCat, \KlCat{\VCat}{\CCat}{1}, J_1)$ the unit of the adjunction 
can be described as the pair $(\id_{\VCat}, G \co \CCat \to \KlCat{\VCat}{\CCat}{1})$ where $G(f \co \SA \to \SB)$ is the composite
\[
\ltensor{\SA}{1} \xrightarrow\iso \SA \xrightarrow f \SB  \xrightarrow\iso \ltensor{\SB}{1} 
\]
using the isomorphism $J(\pi) \co \ltensor{\SA}{1} = \SA \times 1 \to \SA$.
The pair $(\id_{\VCat}, G)$ preserves the enrichment only up to isomorphism, 
and  
this is our main motivation for using 2-categories of models
(see also the discussion in Section~\ref{sec:enr:model:biinitial}).

\subsection{A semantic explanation of the state-passing translation}
\label{sec:sp:transl:semantics}

The linear-use state-passing translation is essentially the
interpretation of fine-grain call-by-value into the model obtained
by applying the construction $\ECBVToFreyd$ of
Section~\ref{sec:relating:models} to the syntactic {\enrmodel} of
Section~\ref{sec:syn:enr:model}. In this model judgements
$\vj{\Gamma}{V}{\STA}$ and $\pj{\Gamma}{M}{\STA}$ are interpreted as
judgements of the form
\begin{align*}
\tj{x \co \Prod\CBVtoEEC{\Gamma}}{\den V}{\CBVtoEEC{\STA}} && \aj{-}{x \co \ltensortype{(\Prod\CBVtoEEC{\Gamma})}{\states}}{\den M}{\CBVtoEEC{\STA}}
\end{align*}
respectively, where $\Prod\CBVtoEEC{\Gamma}$ is the product of all the types appearing in $\CBVtoEEC{\Gamma}$.

\begin{lem} \label{lem:sp:transl:sem}
Let $\SynEnrichedModel$ be the syntactic {\Klmodel} of Section~\ref{sec:syn:enr:model}. 
The interpretation of FGCBV into $\ECBVToFreyd\SynEnrichedModel$ models a type $\STA$ as $\CBVtoEEC\STA$. Let $\Gamma = x_1 \co \STA_1 \dots x_n \co \STA_n$ be a context of FGCBV and let $\vj{\Gamma}{V}{\STB}$ and $\pj{\Gamma}{M}{\STB}$ be judgements of FGCBV. Then $V$ and $M$ are interpreted as the equivalence classes of the terms
\begin{align*}
&\tj{x \co (\Prod\CBVtoEEC{\Gamma})}{\sub{\CBVtoEEC V}{\pi_1x\dots \pi_nx}{x_1\dots x_n}}{\CBVtoEEC{\STB}} \\ 
&\aj{-}{x \co \ltensortype{(\Prod\CBVtoEEC{\Gamma})}{\states}}{\letdot{z}{s}{x}{(\sub{\CBVtoEEC M}{\pi_1z\dots \pi_nz}{x_1\dots x_n})}}{\CBVtoEEC{\STB}}
\end{align*}
\end{lem}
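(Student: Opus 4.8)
The plan is to prove both assertions by induction: first the statement about types by induction on the structure of {\FGCBV} types, and then the statement about terms by induction on the structure of typing derivations (simultaneously for the value and producer judgement forms), in each case unfolding the generic Kleisli-model semantics of Section~\ref{sec:interpkleilsi} inside the specific model $\ECBVToFreyd\SynEnrichedModel$ and comparing the result with the state-passing translation. For the type claim I would fix the base clause $\den\alpha\defeq\VconstA$, so that $\den\alpha=\CBVtoEEC\alpha$ by definition of the translation on type constants. The remaining clauses then match the translation automatically: $\den 1$ is the terminal object of $\VSynE$, which is $\CBVtoEEC 1$; $\den{\sigma\times\tau}=\den\sigma\times\den\tau=\CBVtoEEC{(\sigma\times\tau)}$; and for the function type, the enrichment of $\KlCat{\VSynE}{\CSynE}{\states}$ in $\VSynE$ supplied by Lemma~\ref{lem:Kl:well-def} is $\CSynE(\ltensor{(-)}\states,\ltensor{(-)}\states)$, which in the syntactic enriched model (Section~\ref{sec:syn:enr:model}) is exactly $\ltensortype{(-)}\states\lpop\ltensortype{(-)}\states$, so $\den{\sigma\pto\tau}=\ltensortype{\den\sigma}\states\lpop\ltensortype{\den\tau}\states=\CBVtoEEC{(\sigma\pto\tau)}$.

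Before the term induction I would record a concrete dictionary for the structure of $\ECBVToFreyd\SynEnrichedModel$. A value morphism $\Prod\CBVtoEEC\Gamma\to\CBVtoEEC\STB$ is an equivalence class of value terms $\tj{x\co\Prod\CBVtoEEC\Gamma}{t}{\CBVtoEEC\STB}$, with composition by substitution; a producer morphism is, by definition of $\KlCat{\VSynE}{\CSynE}{\states}$, a $\CSynE$-morphism $\ltensor{(\Prod\CBVtoEEC\Gamma)}\states\to\ltensor{\CBVtoEEC\STB}\states$, that is, an equivalence class of computation terms in a single linear variable $x\co\ltensortype{(\Prod\CBVtoEEC\Gamma)}\states$, with composition given by $\CSynE$-composition. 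The functor $J_\states$ sends a value term $t$ to $\ltensor t\states$, and the action of $\VSynE$ on $\KlCat{\VSynE}{\CSynE}{\states}$ together with the diagonal are those spelled out in the proof of Lemma~\ref{lem:Kl:well-def}. The point of this step is that every piece of structure used by the semantics unfolds to an explicit {\ECBV} term built from the tensor pattern-matching construct, so that the inductively-defined interpretation can be compared with the translation term by term, using throughout the standard reindexing $\sub{(-)}{\pi_1 x\dots\pi_n x}{x_1\dots x_n}$ that presents the $n$-ary context $\CBVtoEEC\Gamma$ as a single product-typed variable.

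The term induction then proceeds rule by rule. The cases of variables, $\star$, pairing and projections reduce immediately to the finite-product structure of $\VSynE$ and to the reindexing just mentioned. The case $\return V$ is direct: $\den{\return V}=J_\states(\den V)=\ltensor{\den V}\states$, and unfolding the action $\ltensor{(-)}\states$ on morphisms as in Section~\ref{sec:syn:enr:model} yields $\letdot{z}{s}{x}{\ltensorterm{(\sub{\CBVtoEEC V}{\pi_1 z\dots\pi_n z}{x_1\dots x_n})}{s}}$, which is precisely the producer formula of the statement applied to $\return V$, since $\CBVtoEEC{(\return V)}=\ltensorterm{\CBVtoEEC V}{s}$. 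The cases of $\lambda$-abstraction and application are obtained by transporting $\den N$ (respectively $\den V$ and $\den W$) across the copower/enrichment adjunction of Lemma~\ref{lem:Kl:well-def}: after inserting the copower coherence isomorphism $\ltensor{(\Prod\CBVtoEEC\Gamma\times\CBVtoEEC\STA)}\states\iso\ltensor{\Prod\CBVtoEEC\Gamma}{(\ltensor{\CBVtoEEC\STA}\states)}$ and applying the enrichment transpose of Section~\ref{sec:syn:enr:model}, the $\beta\eta$-laws for $\lpop$ in Figure~\ref{figure:effects:typing} identify the result with the translation clauses $\CBVtoEEC{(\lam x\STA N)}=\llam{z}{\ltensortype{\CBVtoEEC\STA}\states}{\letdot{x}{s}{z}{\CBVtoEEC N}}$ and $\CBVtoEEC{(V\,W)}=\lappl{\CBVtoEEC V}{\ltensorterm{\CBVtoEEC W}{s}}$.

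I expect the main obstacle to be the sequencing rule for $\slet{x}{M}{N}$. Its interpretation is the Kleisli composite of $\den N$ after the action $\ltensor{\den\Gamma}{\den M}$ and $J_\states$ of the diagonal, as displayed in Section~\ref{sec:interpkleilsi}; unfolding the action of $\VSynE$ on $\KlCat{\VSynE}{\CSynE}{\states}$ from the proof of Lemma~\ref{lem:Kl:well-def}, together with the associativity and diagonal isomorphisms, produces a composite computation term carrying several nested occurrences of the tensor pattern-match. The heart of the argument is to show, using the $\beta$- and commuting-conversion rules for the tensor in Figure~\ref{figure:effects:typing}, that this composite is provably equal to the producer formula of the statement applied to $P\defeq\slet{x}{M}{N}$, whose translation is $\CBVtoEEC P=\letdot{x}{s}{\CBVtoEEC M}{\CBVtoEEC N}$. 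Concretely, the shared value context and the single state are threaded through the two premises by the diagonal on $\den\Gamma$ and the copower action, and one must check that this threading coincides exactly with the linear usage of the state variable $s$ enforced by the nested pattern-matches of the translation. This is a routine but lengthy calculation in the equational theory of Figure~\ref{figure:effects:typing}, requiring no ideas beyond the $\beta\eta$-laws for the tensor and for $\lpop$.
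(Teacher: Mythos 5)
The paper states this lemma without any proof at all---it is treated as a routine unfolding of the definitions of $\ECBVToFreyd$ and of the syntactic enriched model, and is immediately put to use---so there is no official argument to compare against. Your proposal supplies exactly the argument the paper leaves implicit: induction on types (with $\den\alpha\defeq\VconstA$ fixed so that the base case is definitional, and the $\pto$ case handled by the enrichment $\CHom{\ltensor{(-)}{\states}}{\ltensor{(-)}{\states}}$ of Lemma~\ref{lem:Kl:well-def}), followed by induction on typing derivations in which each piece of categorical structure is unfolded to its defining ECBV term and compared with the translation modulo the equational theory of Figure~\ref{figure:effects:typing}. The details check out (e.g.\ the $\return V$ case really does match on the nose up to $\alpha$-renaming, and the $\slet{x}{M}{N}$ case is indeed the only place where the tensor $\beta$-rule and the commuting conversion are genuinely needed to collapse the nested pattern-matches coming from the diagonal, the action, and the associativity isomorphisms), so the proposal is correct and consistent with what the paper intends.
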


Soundness of the state-passing translation
(Theorem~\ref{thm:soundness}) follows immediately 
from Lemma~\ref{lem:sp:transl:sem}.
Fullness on types and 
full completeness (Theorems~\ref{thm:full-on-types} and \ref{thm:full:faithful})
are also  consequences.

\begin{proofof}{Theorems~\ref{thm:full-on-types} and \ref{thm:full:faithful}}
By Theorem~\ref{thm:fgcbv:biinitial} and Lemma~\ref{lem:sp:transl:sem} the state-passing translation is the unique (up to isomorphism) 1-cell
\[(F,G) \co \SynKlModel \to \ECBVToFreyd{\SynEnrichedModel}\]
in $\Freyd$. It suffices to show that this is an equivalence in $\Freyd$, because this implies that $F$ and $G$ are both equivalences of categories, in particular they are essentially full on objects (proving Theorem~\ref{thm:full-on-types}) and full and faithful (proving Theorem~\ref{thm:full:faithful}). 

By initiality $(F,G)$ must be isomorphic to the composite
\[\SynKlModel \xrightarrow{\eta} \ECBVToFreyd(\FreydToECBV\SynKlModel) \xrightarrow{\simeq} \ECBVToFreyd\SynEnrichedModel\]
of the unit of the adjunction (which is an isomorphism by Theorem~\ref{thm:adj}) and $\ECBVToFreyd$ applied to the equivalence of Corollary~\ref{cor:eq:syn:models}. Since this composite is an equivalence, so is $(F,G)$. 
\end{proofof}

\section{Sums}\label{sec:sums}
It is routine to add sum types to the languages considered in 
Section~\ref{sec:ecbv} and \ref{sec:fgcbv},
and the state-passing translation extends straightforwardly.
We now summarize the details.
\subsection{Sums in the enriched call-by-value calculus}
\label{sec:sums:enriched}
We add sums to the enriched call-by-value calculus,
for both value and computation types.
The required modifications to the calculus are given in 
Figure~\ref{figure:effects:sumtyping}. The resulting calculus
is still a fragment of the enriched effect calculus~\cite{EEC:journal}.
\begin{figure*}[tb]
\framebox{
\begin{minipage}{.96\linewidth}
\emph{Types.}
\begin{align*}
\VA,\VB \, & ::= \, \VconstA \,\mid \,\algone \,\mid \,\VA \prodtype \VB \, \,\mid \, \CA \lpop \CB  \mid\, 0\,\mid\, \VA + \VB\\
\CA,\CB \, & ::= \, \CconstA  \,\mid\,\ltensortype{\VA}{\CB} \,\mid  \,   \algzero\,\mid\,\CA \algplus \CB\enspace .
\end{align*}
\begin{center}
\line(1,0){350}\gnl
\end{center}
\emph{Term formation.}
The following rules are in addition 
to the rules in Figure~\ref{figure:effects:typing}.
\begin{center}
\begin{gather*}
\prooftree
\tj{\Gamma}{t}{0}
\justifies
\tj{\Gamma}{\vimage{t}}{\VA}
\endprooftree
\GAP\GAP
\prooftree
\aj{\Gamma}{\Delta}{t}{\algzero}
\justifies
\aj{\Gamma}{\Delta}{\algimage{t}}{\CA}
\endprooftree
\gnl
\prooftree
\tj{\Gamma}{t}{\VA_i}
\justifies
\tj{\Gamma}{\vin{i}{t}}{\VA_1 + \VA_2}
\using{(i=1,2)}
\endprooftree
\GAP
\prooftree
\aj{\Gamma}{\Delta}{t}{\CA_i}
\justifies
\aj{\Gamma}{\Delta}{\algin{i}{t}}{\CA_1 \algplus \CA_2}
\using{(i=1,2)}
\endprooftree
\gnl
\prooftree
\tj{\Gamma}{t}{\VA_1 + \VA_2}
\GAP
\tj{
\Gamma, \, \In{x_1}{\VA_1}}{u_1}{\VC}
\GAP
\tj{\Gamma, \, \In{x_2}{\VA_2}}{u_2}{\VC}
\justifies
\tj{\Gamma}{\vcase{t}{x_1}{u_1}{x_2}{u_2}}{\VC}
\endprooftree
\gnl 
\prooftree
\aj{\Gamma}{\Delta}{s}{\CA_1 \algplus \CA_2}
\GAP
\aj{\Gamma}{\In{x_1}{\CA_1}}{t_1}{\CC}
\GAP
\aj{\Gamma}{\In{x_2}{\CA_2}}{t_2}{\CC}
\justifies
\aj{\Gamma}{\Delta}{\algcase{s}{x_1}{t_1}{x_2}{t_2}}{\CC}
\endprooftree
\end{gather*}
\end{center}
\begin{center}\mbox{}\\
\line(1,0){350}\gnl
\end{center}
\emph{Equality.} 
The following rules are in addition 
to the rules in Figure~\ref{figure:effects:typing}.
\begin{center}
\begin{gather*}
\prooftree
\tj{\Gamma}{t}{0}
\GAP
\tj{\Gamma, x\co 0}{u}{\VA}
\justifies
\teq{\Gamma}{\vimage t}{\sub utx}{\VA}
\endprooftree
\GAP\GAP
\prooftree
\aj{\Gamma}\Delta{t}{\algzero}
\GAP
\aj{\Gamma}{x\co \algzero}{u}{\CA}
\justifies
\aeq{\Gamma}\Delta{\algimage t}{\sub utx}{\CA}
\endprooftree
\gnl
\prooftree
\tj{\Gamma}{t}{\VA_i}
\GAP
\tj{\Gamma, \, \In{x_1}{\VA_1}}{u_1}{\VB}
\GAP
\tj{\Gamma, \, \In{x_2}{\VA_2}}{u_2}{\VB}
\justifies
\teq{\Gamma}{\vcase{(\vin i t)}{x_1}{u_1}{x_2}{u_2}}{u_i[t/{x_i}]}{\VB}
\using
{(i=1,2)}
\endprooftree
\gnl
\prooftree
\aj{\Gamma}\Delta{t}{\CA_i}
\GAP
\aj{\Gamma}{\In{x_1}{\CA_1}}{u_1}{\CB}
\GAP
\aj{\Gamma}{\In{x_2}{\CA_2}}{u_2}{\CB}
\justifies
\aeq{\Gamma}\Delta{\algcase{(\algin i t)}{x_1}{u_1}{x_2}{u_2}}{u_i[t/{x_i}]}{\CB}
\using
{(i=1,2)}
\endprooftree
\gnl
\prooftree
\tj{\Gamma}{t}{\VA_1 + \VA_2}
\GAP
\tj{\Gamma, z \co \VA_1 + \VA_2}{u}{\VB}
\justifies
\teq{\Gamma}{\sub utz}{\vcase{t}{x_1}{\sub u{\vin1{x_1}}{z}}{x_2}{\sub u{\vin2{x_2}}{z}}}{\VB}
\endprooftree
\gnl
\prooftree
\aj{\Gamma}{\Delta}{t}{\CA_1 \algplus \CA_2}
\GAP
\aj{\Gamma}{z\co\CA_1 \algplus \CA_2}{u}{\CB}
\justifies
\aeq{\Gamma}\Delta{\sub utz}{\algcase{t}{x_1}{\sub u{\algin1{x_1}}{z}}{x_2}{\sub u{\algin2{x_2}}{z}}}{\CB}
\endprooftree
\end{gather*}
\mbox{}\\\end{center}
\end{minipage}
}
\caption{Additional rules for sum types in Enriched Call-by-Value
}
\label{figure:effects:sumtyping}
\end{figure*}
We now extend the notion of model to accommodate the sums.
\begin{terminology}
Recall that a \emph{distributive category} 
is a category with finite products and 
coproducts such that the canonical morphisms
$0\to A \times 0$ 
and $((A\times B)+(A\times C)) \to (A\times (B+ C))$
are isomorphisms.

If a distributive category $\VCat$ 
has an action $({\ltensor{}{}})$ on a category $\CCat$ with finite coproducts
$(\algzero,\algplus)$,
then we say that the situation is \emph{distributive}
if the four canonical morphisms 
${\algzero\to \ltensor A \algzero}$,
${(\ltensor A \algB)\algplus (\ltensor A \algC)\to \ltensor A {(\algB\algplus\algC)}}$,
$\algzero\to \ltensor 0\algA$ and 
$(\ltensor A \algC)\algplus (\ltensor B \algC)\to \ltensor {(A+B)} {\algC}$
are isomorphisms.
If the action is enriched, i.e.~%
each functor $(\ltensor -\algA):\VCat\to\CCat$ has a right adjoint,
then this definition of distributive coproducts 
amounts to the usual notion of 
enriched coproducts.
(Note that when the action is enriched then the third
and fourth canonical morphisms
are necessarily isomorphisms since left adjoints preserve 
colimits.)
\end{terminology}
\begin{defi}\label{def:distrenrichedmodel}
A \emph{distributive enriched model}
is given by a distributive category $\VCat$ 
and a category $\CCat$ enriched in $\VCat$ with copowers and 
enriched coproducts.
\end{defi}
It is straightforward to extend the sound interpretation
of the enriched call-by-value calculus in enriched models
(\S\ref{sec:adjmodels})
to a sound interpretation of 
enriched call-by-value calculus with sums in distributive enriched models.

Of the examples in Section~\ref{sec:examplesenriched},
(1)--(5) are distributive.
The syntactic model of the version of the calculus with sums is 
a distributive enriched model, and it is bi-initial for the suitable 
generalization of morphism.
\subsection{Sums in the fine-grain call-by-value calculus.}
\label{sec:sums:fgcbv}
It is equally straightforward to add sums to the fine-grain call-by-value
language. This is summarized in Figure~\ref{fig:lambdac:sumtyping}.
\begin{figure*}[tb]
\framebox{
\begin{minipage}{.96\linewidth}
\emph{Types.}
\[\sigma ::= 
\alpha \mid 1 \mid \sigma \times \sigma \mid
\sigma \pto \sigma \mid 0\mid\sigma+\sigma
\]
\begin{center}
\line(1,0){350}\gnl
\end{center}
\emph{Term formation.}
The following rules are in addition 
to the rules in Figure~\ref{fig:lambdac:typing}.
\begin{center}
\begin{gather*}
\begin{prooftree}
\vj{\Gamma}{V}{0}
\justifies
\vj{\Gamma}{\cbvimage{V}}{\STA}
\end{prooftree}
\qquad\qquad
\begin{prooftree}
\vj{\Gamma}{V}{\STA_1}
\justifies
\vj{\Gamma}{\cbvin{1}{V}}{\STA_1+\STA_2}
\end{prooftree}
\qquad\qquad
\begin{prooftree}
\vj{\Gamma}{V}{\STA_2}
\justifies
\vj{\Gamma}{\cbvin{2}{V}}{\STA_1+\STA_2}
\end{prooftree}
\gnl
\begin{prooftree}
\vj{\Gamma}{V}{\STA_1+\STA_2}
\GAP
\vj{\Gamma,x_1\co\STA_1}{W_1}{\STB}
\GAP
\vj{\Gamma,x_1\co\STA_2}{W_2}{\STB}
\justifies
\vj{\Gamma}{\cbvcase V {x_1} {W_1} {x_2} {W_2}}{\STB}
\end{prooftree}
\end{gather*}
\end{center}
\begin{center}\mbox{}\\
\line(1,0){350}\gnl
\end{center}
\emph{Equality.} 
The following rules are in addition 
to the rules in Figure~\ref{fig:lambdac:typing}.
\begin{center}
\begin{gather*}
\begin{prooftree}
\vj{\Gamma}{V}{0}
\quad
\vj{\Gamma, x \co 0}{W}\sigma
\justifies
\veq{\Gamma}{\cbvimage{V}}{\sub{W}{V}{x}}\sigma
\end{prooftree}
\gnl
\begin{prooftree}
\vj{\Gamma}{V}{\STA_i}
\GAP
\vj{\Gamma,x_1\co\STA_1}{W_1}{\STB}
\GAP
\vj{\Gamma,x_1\co\STA_2}{W_2}{\STB}
\justifies
\veq{\Gamma}{\cbvcase {\cbvin i V} {x_1} {W_1} {x_2} {W_2}}
{W_i[V/x_i]}{\STB}
\using
{(i=1,2)}
\end{prooftree}
\gnl
\begin{prooftree}
\vj{\Gamma}{V}{\STA_1+\STA_2}
\GAP
\vj{\Gamma, z \co \STA_1+\STA_2}{W}{\STB}
\justifies
\veq{\Gamma}{\sub WVz}{\cbvcase V {x_1} {\sub W{\cbvin 1 {x_1}}z} {x_2} {\sub W{\cbvin 2 {x_2}}z}}
{\STB}
\end{prooftree}
\end{gather*}
\mbox{}\\
\end{center}
\end{minipage}
}
\caption{Additional rules for sum types in Fine-Grain Call-by-Value.
}
\label{fig:lambdac:sumtyping}
\end{figure*}

We only include constructors/destructors as 
value terms, but from these we can derive constructors/destructors 
for producer terms, as follows.
\begin{align*}
&
\cbvimagep M \ \defeq\ \slet x M {\return {\cbvimage x}}
\qquad\qquad
\cbvinp i M \ \defeq \ \slet x M {\return {\cbvin i x}}
\\[3pt]
&\cbvcasep M {x_1} {N_1} {x_2} {N_2} 
\\&
\hspace{4.5cm}
 \defeq\ 
\slet{z}{M}{(\cbvcase z {x_1} {\lam{w}{1}{N_1}} {x_2} {\lam{w}{1}{N_2}}) (\star)} 
\end{align*}
where $w,z$ are fresh. 

These constructions have derived typing rules
\begin{equation}
\label{eq:derived:rules:sums:p}
\begin{array}{c}
\begin{prooftree}
\pj{\Gamma}{M}0
\justifies
\pj{\Gamma}{\cbvimagep{M}}\STA
\end{prooftree}
\GAP\GAP
\begin{prooftree}
\pj{\Gamma}{M_i}{\STA_i}
\justifies
\pj{\Gamma}{\cbvinp i{M}}{\STA_1+\STA_2}
\using
{(i=1,2)}
\end{prooftree}
\\[9mm]
\begin{prooftree}
\pj{\Gamma}{M}{\STA_1+\STA_2}
\GAP
\pj{\Gamma,x_i\co\STA_i}{N_i}{\STB}
\,\,
(i = 1,2)
\justifies
\pj{\Gamma}{\cbvcasep M {x_1} {N_1} {x_2} {N_2}}{\STB}
\end{prooftree}
\end{array}
\end{equation}
\\

For example:
\[
\begin{prooftree}
    \pj{\Gamma} M{\sigma_1+\sigma_2}
    \begin{prooftree}
      \begin{prooftree}
        \begin{prooftree}
          \pj{\Gamma,x_1\colon\sigma_1} {N_1}{\tau}
          \justifies 
          \vj{\Gamma,x_1\colon\sigma_1}{\lambda w.\,N_1}{1\pto \tau}
        \end{prooftree}
        \ \ 
        \begin{prooftree}
          \pj{\Gamma,x_2\colon\sigma_2} {N_2}{\tau}
          \justifies 
          \vj{\Gamma,x_2\colon\sigma_2}{\lambda w.\,N_2}{1\pto \tau}
        \end{prooftree}
        \justifies 
        \vj{\Gamma,z}
        {\cbvcase z {x_1} {\lambda w.\,N_1} {x_2} {\lambda w.\,N_2}}
        {1\pto \tau}
      \end{prooftree}
      \ \ 
      \begin{prooftree}
        \justifies
        \vj{\Gamma,z}\star 1
      \end{prooftree}
      \justifies 
      \pj{\Gamma,z\colon \sigma_1+\sigma_2}
      {(\cbvcase z {x_1} {\lambda w.\,N_1} {x_2} {\lambda w.\,N_2})\,(\star)}
      \tau
    \end{prooftree}
    \justifies 
    \pj{\Gamma}{\cbvcasep M {x_1} {N_1} {x_2} {N_2}}{\STB}
\end{prooftree}
\]
\hide{\[
\begin{prooftree}
\pj{\Gamma}M 0
\qquad
\begin{prooftree}
\begin{prooftree}
\begin{prooftree}
\justifies
\vj{\Gamma,x\co 0}x 0
\end{prooftree}
\justifies
\vj{\Gamma,x\co 0}{\cbvimage x}{1\pto \STA}
\end{prooftree}
\qquad
\begin{prooftree}
\justifies
\vj{\Gamma,x\co 0}\star\STA
\end{prooftree}
\justifies
\pj{\Gamma,x\co 0}{(\cbvimage x)(\star)}{\STA}
\end{prooftree}
\justifies
\pj{\Gamma}{\cbvimagep{M}}\STA
\end{prooftree}
\]}

We now refine the notion of Kleisli model (Def.~\ref{def:monadmodel})
to accommodate the sums.
\begin{defi}
\label{def:closeddistrFreyd}
A \emph{distributive enriched Kleisli model} 
(distributive Kleisli model for short)
is 
a distributive enriched model (Def~\ref{def:distrenrichedmodel}) together with
an identity-on-objects functor $J:\VCat\to\CCat$ that strictly
preserves copowers and coproducts.
\end{defi}
Note that in any Kleisli model, $J$ will preserve
coproducts because it has a right adjoint,
$\CCat(1,-)$.
We insist moreover that it preserves coproducts strictly.

Note that a distributive Kleisli model is what 
Power~\cite[Def.~36]{Power:GenericModels:06}
calls a distributive closed Freyd category.

It is straightforward to extend our interpretation
of the fine-grain call-by-value calculus in Kleisli models
(\S\ref{sec:interpkleilsi})
 to an interpretation of 
the calculus with sums in distributive Kleisli models.
The interpretation is sound and there is a syntactic model.

In Section~\ref{sec:monads} we discussed the equivalence 
between enriched Kleisli models and strong monads with Kleisli exponentials.
This equivalence extends to an equivalence between 
distributive enriched Kleisli models, and strong monads on distributive 
categories with Kleisli exponentials. 
Likewise, the constructions $\FreydToECBV$ and 
$\ECBVToFreyd$ of Section~\ref{sec:relating:models} 
extend to constructions deriving {\denrmodel}s from 
{\dKlmodel}s and vice versa, and an extension of Theorem~\ref{thm:adj}
states that $\FreydToECBV$ and 
$\ECBVToFreyd$ exhibit a 2-category of distributive Kleisli models 
as a coreflective subcategory of a 2-category of distributive enriched models. 

\subsection{Sums and the state-passing translation}
\label{sec:sums-sps}
It is straightforward to adapt the state-passing translation to 
accommodate sums.
\renewcommand{\CBVtoEEC}{\CBVtoEECbase\EECarbstate}%
\renewcommand{\CBVtoEECV}{\CBVtoEECbase \EECarbstate}%
\renewcommand{\CBVtoEECP}[1]{\CBVtoEECbase \EECarbstate{#1}_\svar}%
Recall that each type~$\sigma$ of FGCBV is translated
to a value type $\CBVtoEEC\sigma$ of ECBV. We set
\[\CBVtoEEC{0}\,\defeq\,0\qquad\qquad
\CBVtoEEC{(\STA+\STB)} \, \defeq \,
\CBVtoEEC \STA+\CBVtoEEC \STB\text.\]
We extend this translation to type contexts,
taking an FGCBV type context $\Gamma$ to an
ECBV type context $\CBVtoEEC\Gamma$.

Recall that 
the translation on terms takes an FGCBV value type judgement
${\vj{\Gamma}{V}{\STA}}$  
to an \ECBV\ 
judgement ${\tj{\CBVtoEEC{\Gamma}}{\CBVtoEECV{V}}{\CBVtoEEC{\STA}}}$,
and takes an FGCBV producer judgement $\pj{\Gamma}{M}{\STA}$ 
to an \ECBV\ judgement ${\aj{\CBVtoEEC{\Gamma}}{s \co \EECarbstate}{\CBVtoEECP{M}}{\ltensortype{(\CBVtoEEC{\STA})}{\EECarbstate}}}$.
We extend the translation in Section \ref{sec:translation}
straightforwardly, as follows:
\begin{gather*}
\CBVtoEECV{\cbvimage V} \defeq \vimage{\CBVtoEECV V}
\ \quad\ 
\CBVtoEECV{\cbvin 1 V} \defeq \vin 1{\CBVtoEECV V}
\ \quad\ 
\CBVtoEECV{\cbvin 2 V} \defeq \vin 2{\CBVtoEECV V}
\\
\CBVtoEECV{\cbvcase V{x_1}{W_1}{x_2}{W_2}}
\ \defeq \ 
\vcase{\CBVtoEECV{V}}{x_1}{\CBVtoEECV{W_1}}{x_2}{\CBVtoEECV{W_2}}
\end{gather*}
The translation remains sound.
The full definability results (Theorems~\ref{thm:full-on-types}
and~\ref{thm:full:faithful}) continue to hold in the presence of sums.


\section{Remarks on the linear-use continuation-passing translation}
\label{sec:cps}

We now briefly emphasise that the linear-use continuation-passing
translation arises as a formal
dual of the linear-use state-passing translation.  
This is not a new observation: Hasegawa noticed it 
in the context of classical linear logic 
(\cite[\S 8]{Hasegawa:Flops:02},\cite{pmp-mec})
and indeed it informed the earlier work on the enriched effect calculus.


The linear-use continuation-passing translation 
was first elaborated by Berdine, O'Hearn, Reddy 
and Thielecke~\cite{Berdine:02}, 
but our main reference is 
the more recent work by Egger, M\o gelberg and
Simpson~\cite{Mogelberg:fossacs:10,EEC:LCPS:journal}. They showed that the linear-use
continuation-passing translation can be extended to an involutive
translation of the enriched effect calculus to itself, and derived a
full completeness result from this.  That work, in turn, stems from
Hasegawa's full completeness result~\cite{Hasegawa:Flops:02} for a
linear-use continuation-passing translation into dual intuitionistic /
linear logic.

Following~\cite{Mogelberg:fossacs:10},
our development is fuelled by the following categorical observation.
If a category $\CCat$ is enriched in $\VCat$ with copowers, then we
can form the dual category $\opcat\CCat$ which is also enriched in
$\VCat$, but now with powers instead of copowers.  (Recall that an
enriched category $\CCat$ has powers $\algY^\TVX$ if the functor
$\Homset{\VCat}{\TVX}{\CCat(-,\algY)}:\opcat\CCat\to \Set$ is
representable.)
When viewed under this duality, the state-passing translation
becomes the continuation-passing translation, as we now explain.

In Figure~\ref{figure:ecbv:powers}, we provide
an internal language for enriched categories with powers.
We call this the `CPS variant of ECBV', because
it is the target of the continuation passing translation 
(see below).
The key difference with Figure~\ref{figure:effects:typing} is that we 
have replaced the tensor type ($\ltensortype\VA\CB$) by
a power type ($\lpowertype \VA\CB$).
It is another fragment of the enriched effect calculus.
If $\VCat$ is a category with products
and $\CCat$ is a category enriched in $\VCat$ with powers,
then we can interpret this CPS variant of ECBV in $(\VCat,\CCat)$
through a variation of the interpretation in Section \ref{sec:adjmodels}.
The power type is interpreted using the categorical powers:
${\den {\lpowertype\VA\CB}\defeq
\den \CB^{\den\VA}}$.
A computation judgement $\aj\Gamma\Delta t \CA$ 
is interpreted as a morphism
$\den t:\den\Delta\to\den\CA^{\den\Gamma}$ in~$\CCat$.

\begin{figure*}[t]
\framebox{
\begin{minipage}{.96\linewidth}
\emph{Types.}
\begin{align*}
\VA,\VB \, & ::= \, \VconstA \,\mid \,\valone \,\mid \,\VA \prodtype \VB \, \mid\,  \CA \lpop \CB  \\
\CA,\CB \, & ::= \, \CconstA \,\mid  \,   \lpowertype{\VA}{\CB} \enspace .
\end{align*}
\begin{center}
\line(1,0){350}\gnl
\end{center}
\emph{Term formation.}
\begin{center}
\begin{gather*}
\prooftree
\justifies
\tj{\Gamma,\, \In{x}{\VA},\Gamma'}{x}{\VA}
\endprooftree
\GAP
\GAP
\prooftree
\justifies
\aj{\Gamma}{\In{z}{\CA}}{z}{\CA}
\endprooftree
\GAP
\GAP
\prooftree
\justifies 
\tj{\Gamma}{\algstar}{\valone}
\endprooftree
\gnl
\prooftree
\tj{\Gamma}{t}{\VA}
  \GAP
\tj{\Gamma}{u}{\VB}
\justifies 
\tj{\Gamma}{\pair{t}{u}}{\VA \prodtype \VB}
\endprooftree
\GAP\GAP
\prooftree
\tj{\Gamma}{t}{\VA_1 \prodtype \VA_2}
\justifies 
\tj{\Gamma}{\eecproj{i}{t}}{\VA_i}
\endprooftree
\gnl
\prooftree
\aj{\Gamma}{\In{z}{\CA}}{t}{\CB}
\justifies
\tj{\Gamma}{\llam{z}{\CA}{t}}{\CA \lfun \CB}
\endprooftree
\GAP 
\prooftree
\tj{\Gamma}{s}{\CA \lfun \CB} 
  \GAP
\aj{\Gamma}{\Delta}{t}{\CA} 
\justifies
\aj{\Gamma}{\Delta}{s [ t ]}{\CB}
\endprooftree
\gnl
\prooftree
\aj{\Gamma,x\co \VA}\Delta{t}{\CB}
\justifies
\aj{\Gamma}{\Delta}{\lpowerterm{x}{\VA}{t}}{\lpowertype{\VA}{\CB}}
\endprooftree
\GAP
\prooftree
\aj{\Gamma}{\Delta}{t}{\lpowertype{\VA}{\CB}}
\GAP
\tj{\Gamma, \, \In{x}{\VA}}{u}{\VA}
\justifies
\aj{\Gamma}{\Delta}{t\,u}{\CB}
\endprooftree
\end{gather*}
\end{center}
\begin{center}
\mbox{}\\
\line(1,0){350}\gnl
\end{center}
\emph{Equations: } Equations of Figure~\ref{figure:effects:typing} 
but with equations
for tensor types replaced by:
\begin{center}
\begin{gather*}
\begin{prooftree}
{\aj{\Gamma,x\co \VA}\Delta t \CB}
\quad
{\tj\Gamma u \VA}
\justifies
{\aeq \Gamma \Delta {(\lpowerterm x \CA t)\,u}{t[u/x]}\CB}
\end{prooftree}
\qquad
\begin{prooftree}
{\aj\Gamma\Delta t {\lpowertype \VA \CB}}
\justifies
{\aeq {\Gamma,x\co\VA}\Delta {t}{\lpowerterm x\VA (t\,x)}{\lpowertype\VA\CB}}
\end{prooftree}
\end{gather*}
\end{center}
\end{minipage}}
\caption{A CPS variant of 
  the enriched call-by-value calculus.}
\label{figure:ecbv:powers}
\end{figure*}

Following the categorical analysis above, 
we define a bijection $(-)^\circ$ between the types of
ECBV with this CPS variant:
\begin{equation}\label{eqn:SPStoCPS}\begin{gathered}
\alpha^\circ\defeq \alpha\qquad \qquad 
1^\circ \defeq 1\qquad\qquad 
(\VA\times\VB)^\circ \defeq (\VA^\circ\times \VB^\circ)
\\
(\CA\lpop\CB)^\circ \defeq (\CB^\circ\lpop \CA^\circ)
\qquad\qquad 
(\ltensortype \VA\CB)^\circ \defeq \lpowertype{\VA^\circ}{\CB^\circ}
\end{gathered}\end{equation}
This bijection extends to terms-in-context straightforwardly, and the resulting 
translation is a restriction of the linear-use cps involution of 
the enriched effect calculus studied in~\cite{Mogelberg:fossacs:10,EEC:LCPS:journal}.

We achieve a linear-use continuation-passing translation
by composing the state-passing translation of Section~\ref{sec:translation}
with this bijection (\ref{eqn:SPStoCPS}).
For clarity, we now write this out explicitly.
We fix a computation type $\EECarbret$ of \ECBV,
thought of as a return type.
%
\newcommand{\lneg}[1]{\lpowertype {#1}\EECarbret}%
\renewcommand{\CBVtoEEC}{\CBVtoEECbase\EECarbret}%
\renewcommand{\CBVtoEECV}{\CBVtoEECbase \EECarbret}%
\renewcommand{\CBVtoEECP}[1]{\CBVtoEECbase \EECarbret{#1}_\kvar}%
We translate \FGCBV\ types $\sigma$ to \ECBV\ value types
$\CBVtoEEC\sigma$:
\begin{align*}
\CBVtoEEC{\alpha} & \, \defeq \VconstA &
\CBVtoEEC{(\STA \times \STB)} & \, \defeq\, 
\CBVtoEEC{\STA} \prodtype \CBVtoEEC{\STB} &
\CBVtoEEC{1} & \, \defeq \valone&
\CBVtoEEC{(\STA \pto \STB)} & \, \defeq \, 
(\lneg{(\CBVtoEEC{\STB})}) \lfun \, (\lneg{(\CBVtoEEC{\STA})})
\end{align*}
We extend this translation to type contexts,
taking an FGCBV type context $\Gamma$ to an
ECBV type context $\CBVtoEEC\Gamma$.

The translation on terms is syntax-directed.
We pick a variable $\kvar$, completely fresh.
The translation takes an FGCBV value type judgement
${\vj{\Gamma}{V}{\STA}}$  
to an \ECBV\
judgement ${\tj{\CBVtoEEC{\Gamma}}{\CBVtoEECV{V}}{\CBVtoEEC{\STA}}}$,
and it take an FGCBV producer judgement $\pj{\Gamma}{M}{\STA}$ 
to an \ECBV\ judgement ${\aj{\CBVtoEEC{\Gamma}}{\kvar \co \lneg {\CBVtoEEC \STA}}
{\CBVtoEECP{M}}{\EECarbret}}$.
\begin{gather*}
\CBVtoEECV{x} \defeq x 
\ \!\quad\!\ \CBVtoEECV{\star} \defeq \star 
\ \!\quad\! \ 
\CBVtoEECV{\pair{V}{W}}\defeq \pair{\CBVtoEECV{V}\!}{\CBVtoEECV{W}} 
\ \!\quad\!\ 
\CBVtoEECV{(\prj 1{V})}\defeq\prj 1{\CBVtoEECV{V}} 
\ \!\quad\!\ 
\CBVtoEECV{(\prj 2{V})}\defeq\prj 2{\CBVtoEECV{V}} 
\\
\begin{aligned}
\CBVtoEECP{(\return{V})} & \, \defeq \kvar \,(\CBVtoEECV V)
&
\CBVtoEECP{(\slet{x}{M}{N})} & \, \defeq 
(\llam \kvar {\lneg{\CBVtoEEC \sigma}}{\CBVtoEECP M})\,
[\lpowerterm x {\CBVtoEEC \sigma}{\CBVtoEECP N}]
\\
\CBVtoEECV{(\lam{x}{\STA}{N})} & \, \defeq 
\llam{k}{\lneg{\CBVtoEEC\STB}}{\lpowerterm x{\STA}{\CBVtoEECP N}} &
\CBVtoEECP{(V \, W)} & \, \defeq 
(\CBVtoEECV V\,[k])\,\CBVtoEECV W
\end{aligned}
\end{gather*}

The continuation-passing translation inherits the following results from the 
soundness and full completeness of the state-passing translation 
(Theorems~\ref{thm:soundness} and~\ref{thm:full:faithful}). 
\begin{prop}
For any computation type $\EECarbret$,
the continuation-passing translation is sound: 
\begin{enumerate}
\item
If $\veq\Gamma VW\sigma$ then 
$\teq{\CBVtoEEC \Gamma}{\CBVtoEECV{V}}{\CBVtoEECV{W}}{\CBVtoEEC \sigma}$.
\item 
If $\peq\Gamma MN\sigma$ then 
$\aeq{\CBVtoEEC \Gamma}{\kvar\co \lpowertype{\CBVtoEEC\sigma}\EECarbret}
{\CBVtoEECP{M}}{\CBVtoEECP{N}}{\EECarbret}$.
\end{enumerate}
\end{prop}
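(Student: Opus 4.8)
The plan is to prove soundness not by redoing the induction of Theorem~\ref{thm:soundness} from scratch, but by exhibiting the continuation-passing translation as a composite of two translations, each of which already preserves provable equality. As noted just before the statement, the continuation-passing translation is obtained by composing the state-passing translation of Section~\ref{sec:translation} --- instantiated with the state object $\EECarbstate$ taken to be the return type $\EECarbret$ --- with the bijective translation $(-)^\circ$ of \eqref{eqn:SPStoCPS}, which carries $\mathrm{ECBV}$ with tensor types to the CPS variant with power types. So first I would record soundness of each of the two factors separately, and then conclude by composing them.

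For the first factor, soundness is exactly Theorem~\ref{thm:soundness}. Since that theorem is proved for an \emph{arbitrary} computation type $\EECarbstate$, I may specialise $\EECarbstate := \EECarbret$, so that $\veq\Gamma VW\sigma$ forces the equality of the state-passing translations of $V$ and $W$ in $\mathrm{ECBV}$, and similarly at producers. For the second factor, I would check that $(-)^\circ$ preserves provable equality. This holds because $(-)^\circ$ is a restriction of the linear-use CPS involution on the enriched effect calculus of \cite{Mogelberg:fossacs:10,EEC:LCPS:journal}, which is known to be sound; but it may equally be verified directly by a short induction on derivations of $(\equiv)$, since $(-)^\circ$ sends each equality axiom of Figure~\ref{figure:effects:typing} to an instance of the corresponding axiom of Figure~\ref{figure:ecbv:powers}. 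In particular the $\beta$- and $\eta$-laws for $\ltensortype\VA\CB$ map to the $\beta$- and $\eta$-laws for $\lpowertype\VA\CB$, while the laws for products and for $\lpop$ are preserved verbatim (note that $(\CA\lpop\CB)^\circ = \CB^\circ\lpop\CA^\circ$ only reverses the argument order, which does not affect the equations). Composing the two facts, $\veq\Gamma VW\sigma$ yields equality of the state-passing translations, which $(-)^\circ$ transports to the required equality $\teq{\CBVtoEECbase\EECarbret\Gamma}{\CBVtoEECbase\EECarbret V}{\CBVtoEECbase\EECarbret W}{\CBVtoEECbase\EECarbret\sigma}$; part~(2) for producers is identical.

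The step I expect to be the main obstacle is verifying that the explicit term-level clauses given for the continuation-passing translation above the statement genuinely coincide, at least up to provable equality, with the composite of $(-)^\circ$ and the state-passing clauses. This is routine but fiddly bookkeeping: under $(-)^\circ$ the tensor introduction $\ltensorterm t u$ appearing in the state-passing clauses becomes a power application, the fresh state variable $\svar$ is renamed to the fresh continuation variable $\kvar$, and the direction reversal in $(\CA\lpop\CB)^\circ=\CB^\circ\lpop\CA^\circ$ accounts for the swap of $\CBVtoEECbase\EECarbret\STA$ and $\CBVtoEECbase\EECarbret\STB$ in the clause for $\sigma\pto\tau$. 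Once this identification is in place --- essentially just unfolding both definitions on each syntactic constructor and matching up the fresh variables --- soundness is immediate from the two preservation facts, and for this soundness statement no appeal to full completeness (Theorem~\ref{thm:full:faithful}) is required.
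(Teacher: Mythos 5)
Your proposal is correct and matches the paper's intended argument: the paper defines the continuation-passing translation as the composite of the state-passing translation with the bijection~\eqref{eqn:SPStoCPS} and simply states that soundness is `inherited' from Theorem~\ref{thm:soundness}, which is exactly the factorisation you spell out. The only content the paper leaves implicit --- that $(-)^\circ$ preserves provable equality and that the explicit term-level clauses agree (after the renaming of $\svar$ to $\kvar$) with the composite --- is precisely what you identify and discharge.
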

\renewcommand{\lneg}[1]{\lpowertype {#1}\EECret}%
\renewcommand{\CBVtoEEC}{\CBVtoEECbase\EECret}%
\renewcommand{\CBVtoEECV}{\CBVtoEECbase \EECret}%
\renewcommand{\CBVtoEECP}[1]{\CBVtoEECbase \EECret{#1}_\kvar}%
\begin{prop}[\cite{Mogelberg:fossacs:10}, Corollary~1]
Let $\EECret$ be a computation type constant.
The continuation-passing translation is fully complete, in the following 
sense.
\begin{enumerate}
\item \label{item:cps:faithful:values} 
If 
$\vj{\Gamma}{V,W}{\STA}$ 
and 
$\teq{\CBVtoEEC\Gamma}{\CBVtoEECV{V}}{\CBVtoEECV{W}}{\CBVtoEEC\STA}$ 
then $\veq \Gamma V W\STA$.
\item 
If  
$\pj{\Gamma}{M,N}{\STA}$
and 
$\aeq{\CBVtoEEC\Gamma}{\kvar\co \lneg {\CBVtoEEC\STA}}{\CBVtoEECP{M}}{\CBVtoEECP{N}}{\EECret}$ then 
$\peq{\Gamma}MN{\STA}$.
\item For any $\tj{\CBVtoEEC{\Gamma}}{t}{\CBVtoEEC{\STA}}$ there is  $\vj{\Gamma}{V}{\STA}$ such that $\teq{\CBVtoEEC\Gamma}t{ \CBVtoEEC{V}}{\CBVtoEEC\STA}$. 
\item For any $\aj{\CBVtoEEC{\Gamma}}{\kvar \co \lneg{\CBVtoEEC\STA}}{t}{\EECret}$ there is a producer term $\pj{\Gamma}{M}{\STA}$ such that 
$\aeq{\CBVtoEEC\Gamma}{\kvar \co \lneg{\CBVtoEEC\STA}}{t} {\CBVtoEEC{M}}{\EECret}$.
\end{enumerate}
\end{prop}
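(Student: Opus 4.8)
The plan is to exhibit the continuation-passing translation as the composite of the state-passing translation of Section~\ref{sec:translation} with the type-and-term bijection $(-)^\circ$ of~\eqref{eqn:SPStoCPS}, and then to deduce each of the four clauses from the corresponding clause of Theorem~\ref{thm:full:faithful} by transporting along $(-)^\circ$. Throughout I take the state type of the state-passing translation to be the computation type constant $\EECret$; this is legitimate, since Theorem~\ref{thm:full:faithful} holds for any distinguished computation type constant. I write $V^{\mathsf s}$ and $M^{\mathsf s}_s$ for the state-passing translation (with linear state variable $s\co\EECret$), and $\CBVtoEECV V$, $\CBVtoEECP M$ for the continuation-passing translation defined above.

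First I would check the factorization on types: by induction on $\sigma$, using $\VconstA^\circ=\VconstA$, $\EECret^\circ=\EECret$, and the clauses of~\eqref{eqn:SPStoCPS}, one has $(\sigma^{\mathsf s})^\circ=\CBVtoEEC\sigma$. The only interesting case is the function type, where
\[
\big(\ltensortype{(\sigma^{\mathsf s})}{\EECret}\lpop\ltensortype{(\tau^{\mathsf s})}{\EECret}\big)^\circ
=\lpowertype{(\tau^{\mathsf s})^\circ}{\EECret}\lpop\lpowertype{(\sigma^{\mathsf s})^\circ}{\EECret}
=\lneg{\CBVtoEEC\tau}\lpop\lneg{\CBVtoEEC\sigma}\text,
\]
which is exactly $\CBVtoEEC{(\sigma\pto\tau)}$; here the contravariance of $\lpop$ under $(-)^\circ$ is what turns a state-passing arrow into a continuation-passing one. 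I would then extend this to contexts and, by induction on FGCBV typing derivations, verify the factorization on terms: $(V^{\mathsf s})^\circ\equiv\CBVtoEECV V$ and $(M^{\mathsf s}_s)^\circ\equiv\CBVtoEECP M$. On producers the two sides inhabit the expected judgement, since applying $(-)^\circ$ to the state-passing judgement $\aj{\CBVtoEEC\Gamma}{s\co\EECret}{M^{\mathsf s}_s}{\ltensortype{(\CBVtoEEC\STA)}{\EECret}}$ swaps the linear hypothesis with the conclusion — the syntactic shadow of the $\CCat\rightsquigarrow\opcat\CCat$ duality — producing precisely $\aj{\CBVtoEEC\Gamma}{\kvar\co\lneg{\CBVtoEEC\STA}}{\CBVtoEECP M}{\EECret}$, with $\kvar$ playing the role of the dualized state variable $s$.

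The second ingredient is that $(-)^\circ$ is a sound involution: it is self-inverse up to the coherent type isomorphisms, and $p\equiv q$ iff $p^\circ\equiv q^\circ$. This is established for the enriched effect calculus in~\cite{Mogelberg:fossacs:10,EEC:LCPS:journal} and restricts to the fragments at hand. The four clauses then follow mechanically. For faithfulness (clauses~1--2): if $\teq{\CBVtoEEC\Gamma}{\CBVtoEECV V}{\CBVtoEECV W}{\CBVtoEEC\STA}$, then by the term factorization $(V^{\mathsf s})^\circ\equiv(W^{\mathsf s})^\circ$, so $V^{\mathsf s}\equiv W^{\mathsf s}$ since $(-)^\circ$ reflects $\equiv$, so $\veq\Gamma VW\STA$ by Theorem~\ref{thm:full:faithful}(\ref{item:faithful:values}); the producer case is identical, using the linear-variable transpose and Theorem~\ref{thm:full:faithful}(2). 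For fullness (clauses~3--4): given $\tj{\CBVtoEEC\Gamma}{t}{\CBVtoEEC\STA}$, put $t'\defeq t^\circ$, which has type $(\CBVtoEEC\STA)^\circ=\STA^{\mathsf s}$ by involutivity; Theorem~\ref{thm:full:faithful}(3) supplies $\vj\Gamma V\STA$ with $V^{\mathsf s}\equiv t'$, whence $\CBVtoEECV V=(V^{\mathsf s})^\circ\equiv(t')^\circ=(t^\circ)^\circ\equiv t$. The producer clause is analogous, with $t^\circ$ now of the state-passing shape $\aj{\CBVtoEEC\Gamma}{s\co\EECret}{t^\circ}{\ltensortype{\STA^{\mathsf s}}{\EECret}}$, matching Theorem~\ref{thm:full:faithful}(4).

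The main obstacle is the term-level factorization of the second paragraph: one must check, clause by clause of the term translation, that the defined action of $(-)^\circ$ on state-passing terms reproduces the continuation-passing clauses, and in particular that the duality interchanges the linear state hypothesis $s\co\EECret$ with the continuation hypothesis $\kvar\co\lneg{\CBVtoEEC\STA}$ in the sequencing rule for $\slet x M N$, trading the state-passing use of $\tensor$ for the continuation-passing use of the power type. Conceptually this factorization is the syntactic shadow of the categorical duality $\CCat\rightsquigarrow\opcat\CCat$ that carries copowers to powers and the chosen state object to the return object; since $(-)^\circ$ is already known to be a sound involution on all of EEC~\cite{Mogelberg:fossacs:10,EEC:LCPS:journal}, the verification is an unwinding of definitions rather than new mathematics, and all the real content is imported through Theorem~\ref{thm:full:faithful}.
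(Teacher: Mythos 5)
Your proposal is correct and is essentially the paper's own argument: the paper explicitly constructs the continuation-passing translation as the composite of the state-passing translation with the bijection $(-)^\circ$ of~\eqref{eqn:SPStoCPS} and then states that the full completeness clauses are ``inherited'' from Theorems~\ref{thm:soundness} and~\ref{thm:full:faithful}; you have simply unwound that inheritance (factorization on types and terms, reflection and preservation of $\equiv$ by the involution) in the detail the paper leaves implicit.
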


\noindent The full completeness result is the same as 
the one obtained by Egger 
et al.~\cite[Corollary~1]{Mogelberg:fossacs:10} 
except that loc.~cit.~describes a translation on the full enriched
effect calculus rather than this fragment of it.

\subsection{Sums and products}
The CPS variant of the enriched call-by-value calculus can be 
extended so that value types are closed under sums and 
computation types are closed under products. 
Thus the types are as follows:
\begin{align*}
\VA,\VB \, & ::= \, \VconstA \,\mid \,1 \,\mid \,\VA \prodtype \VB \, \mid\,  \CA \lpop \CB \,\mid\,0\,\mid\,\VA+\VB \\
\CA,\CB \, & ::= \, \CconstA \,\mid  \,   \lpowertype{\VA}{\CB} 
\,\mid\, \underline 1\,\mid\,\CA\mathop{\underline\times}\CB\enspace .
\end{align*}
(For brevity, we omit the term language, which is a fragment of 
the enriched effect calculus.)
The type system is designed to be dual to the enriched call-by-value language 
with sums. 
The translation from that language to this one~\eqref{eqn:SPStoCPS} 
is extended as follows:
\[(0)^\circ\defeq 0
\qquad
(\VA+\VB)^\circ \defeq \VA^\circ + \VB^\circ
\qquad
(\algzero)^\circ \defeq \Cone
\qquad
(\CA\algplus \CB)^\circ \defeq 
\CA^\circ\Ctimes\CB^\circ
\]
and the analysis
of Section \ref{sec:sums-sps} can be converted to
a fully-complete linear-use continuation-passing style translation
from fine-grain call-by-value with sums.


\section{Effect theories}
\label{sec:modelsoftheories}
To illustrate the nature of the state-passing translation
we endow our calculi with effects.
We do this is in a general way, by following the programme of 
Plotkin, Power and others~\cite{Plotkin:Power:03}
whereby a theory of effects is presented as an algebraic theory.

We discuss how to add effects to the source and target languages of
the state-passing translation, FGCBV and ECBV. Our central observation
is that to accommodate an algebraic theory of effects in the enriched
call-by-value calculus it is necessary and sufficient to supply the
chosen state type $\EECstate$ with the structure of a comodel.  The
idea of state being a comodel, particularly for the theory of store,
arose in the work of Plotkin, Power and
Shkaravska~\cite{Power:Shkaravska:04,Plotkin:Power:08}.

The section is structured as follows. We begin with an overview
in which we study the situation for a particular algebraic effect.
We then proceed to look at a general notion of effect theory~(\S\ref{sec:eff:theories}),
its relationship with the state-passing 
translation~(\S\ref{sec:effsps}) and notions of 
model and comodel~(\S\ref{sec:effmodels},\ref{sec:effmodels-example}).

\subsection{Overview}
\label{sec:overview}
\newcommand{\ifp}[3]{\codefont{if}^p~#1~\codefont{then}~#2~\codefont{else}~#3}
\newcommand{\cbvif}[3]{\codefont{if}~#1~\codefont{then}~#2~\codefont{else}~#3}
\newcommand{\sif}[3]{\mathrm{if}~#1~\mathrm{then}~#2~\mathrm{else}~#3}
\newcommand{\oread}[2]{#1\mathrel{?}#2}
\newcommand{\oprint}[1]{\mathrm{p}_{#1}}
\newcommand{\eprint}{\mathtt{print}}
\newcommand{\sprint}{\mathrm{print}}
\newcommand{\etoss}{\mathtt{toss}}
\newcommand{\stoss}{\mathrm{toss}}
\newcommand{\owrite}[2]{\mathrm{update}_{#1}(#2)}
\newcommand{\oflip}[1]{\mathrm{f}(#1)}
\newcommand{\cpsflip}[1]{\mathrm{f}[#1]}
\newcommand{\opread}[3]{#2\mathrel{?_{#1}}#3}
\newcommand{\opwrite}[3]{\mathrm{w}_{#2,#1}(#3)}
\newcommand{\opflip}[2]{\mathrm{f}_{#1}(#2)}
\newcommand{\oreadname}{(?)}
\newcommand{\oflipname}{(\mathrm{f})}
\newcommand{\owritename}[1]{\mathrm{w}_{#1}}
\newcommand{\eread}{\mathtt{deref}}
\newcommand{\ewrite}{\mathtt{assign}}
\newcommand{\eflip}{\mathtt{flip}}
\newcommand{\sread}{\mathrm{read}}
\newcommand{\sflip}{\mathrm{flip}}
\newcommand{\sassign}{\mathrm{write}}
\newcommand{\sderef}{\mathrm{read}}
We give an overview of the situation
in terms of a particular algebraic theory:
an algebraic theory for accessing a single bit of memory.
This is an algebraic theory in the classical sense 
(like monoids, groups, rings, etc.).
It has one binary operation $\oreadname$, a unary operation
$\oflipname$
and the following four equations:
\hide{\begin{equation}
\label{eqn:squareband}
x \equiv \oread {\owrite 0{x_0}} {\owrite 1{x_1}}
\qquad
\owrite i{\oread {x_0}{x_1}}
\equiv
\owrite i{x_i}
\qquad
\owrite i{\owrite j{x}}
\equiv
\owrite j{x}
\end{equation}}
\begin{equation}
\label{eqn:squareband}
\oread {(\oread v x)}{(\oread y z)}\equiv
\oread v z
\qquad
x\equiv \oread x x
\qquad
\oflip {\oflip x}\equiv x
\qquad
\oflip {\oread x y}
\equiv
\oread {\oflip y}{\oflip x}
\end{equation}
Here is some intuition. If $x$ and $y$ are computations,
then $\oread x y$ is the computation that first reads the bit 
in memory
and then branches to $x$ or to $y$ depending on whether the bit was set.
If~$x$ is a computation then $\oflip x$ is the computation that first
flips the bit in memory ($0$ to $1$, $1$ to~$0$) and
then continues as~$x$.
There are derived operations
$\owrite 0x \defeq \oread x{\oflip x}$ and 
$\owrite 1x \defeq \oread {\oflip x}x$,
which first write $0$ (resp.~$1$) and then continue as $x$.

We now explain how to accommodate this algebraic theory in
the fine-grain and enriched call-by-value calculi.

\subsubsection{Fine-grain call-by-value: algebraic operations and generic effects.}
\label{sec:fgcbv-geff}
In the fine-grain call-by-value calculus~(\S\ref{sec:fgcbv}), 
the algebraic theory~(\ref{eqn:squareband})
can be accommodated in two 
equivalent ways: by adding algebraic operations and by adding generic effects.
In adding the operations, we add the following term formation rules for each 
type $\sigma$:
\begin{equation} \label{eq:syntax:flip:alg:op}
\begin{prooftree}
\pj\Gamma M\sigma \quad\pj\Gamma N\sigma
\justifies
\pj\Gamma {\opread \sigma M N}\sigma
\end{prooftree}
\qquad\qquad
\begin{prooftree}
\pj\Gamma M\sigma
\justifies
\pj\Gamma {\opflip \sigma M}\sigma
\end{prooftree}
\end{equation}
We also add the equations in (\ref{eqn:squareband})
at each type, and an algebraicity equation
for each operation~(e.g.~\cite[Def~3.14]{Pretnar-Thesis}):
\[
\begin{prooftree}
\pj{\Gamma}{M_1}{\sigma}
\quad
\pj{\Gamma}{M_2}{\sigma}
\quad
\pj{\Gamma,x\colon \sigma}{N}{\tau}
\justifies
\peq\Gamma
{\slet x{(\opread \sigma {M_1}{M_2})}N}
{\opread \tau {(\slet x{M_1}N)}{(\slet x{M_2}N)}}
\tau
\end{prooftree}
\]\[
\begin{prooftree}
\pj{\Gamma}{M}{\sigma}
\quad
\pj{\Gamma,x\colon \sigma}{N}{\tau}
\justifies
\peq\Gamma{\slet x{\opflip \sigma M}N}
{\opflip\tau {\slet x M N}}\tau
\end{prooftree}\]
The result is a programming language with higher types and a single bit of 
storage.

The second way to accommodate the algebraic 
theory into the fine-grain call-by-value calculus
is by adding generic effects.
For this we need sum types (\S\ref{sec:sums:fgcbv}).
The idea is that an expression in $n$ variables in the algebraic 
theory corresponds to a ground producer term of type~$n$ ($=1+\dots+1$).
Thus we add the following axioms for term formation:
\begin{equation}\label{eq:syntax:flip:gen:eff}
\begin{prooftree}
\justifies
\pj{\Gamma}{\eread(\star)}{2}
\end{prooftree}
\qquad\qquad
\begin{prooftree}
\justifies
\pj{\Gamma}{\eflip(\star)}{1}
\end{prooftree}
\end{equation}
Informally, $\eread(\star)$ is a computation that returns
the boolean value of the memory cell,
and $\eflip(\star)$ is a computation that flips the value in the memory cell.
An important observation of Plotkin and Power~\cite{Plotkin:Power:03} 
is that the algebraic operations can be recovered at all
types from the generic effects,
as follows:
\[
\opread \sigma M N \defeq \ifp {\eread(\star)} M N
\qquad
\opflip \sigma M\defeq \eflip(\star);M
\]
where we use some shorthand notation:
\begin{align*}
\ifp {\eread(\star)} M N&\defeq \cbvcasep {\eread(\star)} {x_1}M{x_2}N
\\
\eflip(\star);M&\defeq
\slet x{\eflip(\star)}M
\end{align*}
Conversely the generic effects can be derived from the 
algebraic operations:
\[
\eread(\star) \defeq \opread {2} {\cbvinp 1 \star}{\cbvinp 2 \star}
\qquad 
\eflip(\star) \defeq \opflip {1}{\star}
\]
(The subscript $1$ on $\opflip 1\star$ is the unit type.)
We can thus write the four equations~(\ref{eqn:squareband}) 
directly in terms of generic effects:
\begin{equation}
\label{eqn:effstore}
\begin{gathered}
\peq{}{\slet x{\eread(\star)} {\slet y {\eread(\star)}}{\return \pair xy}}{\slet x{\eread(\star)} {\return {\pair xx}}}
{2\times 2}
\\[6pt]
\peq{}{\return{(\star)}}{\eread(\star);\return(\star)}1
\qquad\qquad
\peq{}{\eflip(\star);\eflip(\star)}{\return(\star)} 1
\\[6pt]
\peq{}{\eflip(\star);\eread(\star)}{\slet x {\eread(\star)}{\eflip(\star);\return{(\neg x)}}} {2}
\end{gathered}
\end{equation}
writing $\neg x$ for $\cbvif x {\cbvinr \star}{\cbvinl \star}$.

The two derived operations for writing a bit can 
be combined into a single command~$\ewrite$:
\[
\begin{prooftree}
\vj{\Gamma}{V}{2}
\justifies
\pj{\Gamma}{\ewrite(V)\defeq
\ifp {(\slet x {\eread(\star)}{\return{(V~\codefont{xor}~x)}})}{\eflip(\star)}
{\return (\star)}}
{1}
\end{prooftree}
\]
where $\codefont{xor}$ is the evident binary operation on values of type~$2$.
Using this derived command,
the four equations for accessing the bit of memory can be equivalently
written as the
three program equations of 
Plotkin and Power \cite{PlotkinPower:fossacs02}:
\begin{align}
\peq{-}
{&\,\return{(\star)}} 
{\slet{x}{\geffreadcell{l}(\star)}{\geffwritecell{l}(x)}}
1
\label{eq:GS1} 
\\
\peq {x\co2}
{& \,\geffwritecell{l}(x);\geffreadcell{l}(\star)}
{\geffwritecell{l}(x);\return{(x)}} 2
\label{eq:GS2} 
\\
\peq {x,y\co2}
{& \,\geffwritecell{l}(x);\geffwritecell{l}(y)}
{\geffwritecell{l}(y)} 1
\label{eq:GS3} 
\end{align}
which state that reading a cell and then writing the same value is the same as doing nothing, 
that writing and the reading yields the value just written, 
and that the effect of two writes equals that of the second.
\hide{
This simple example of an algebraic theory has been 
highlighted by Melli\'es, although our presentation (\ref{eqn:squareband})
appears to be novel.}

\subsubsection{Enriched call-by-value and state access operations}
How can we accommodate the algebraic theory for a bit of memory~(\ref{eqn:squareband}) in the
enriched call-by-value calculus?
In this section we develop the following observation.
Whereas in FGCBV each type should be a \emph{model}
of the theory,
in that (\ref{eq:syntax:flip:alg:op}) gives 
terms $\oreadname$ 
and $\oflipname$ at each type $\sigma$,
in ECBV the distinguished state type $\EECstate$ 
should be a \emph{comodel} of the theory, 
which means that there are ground value terms
\begin{equation}
\label{eqn:sreadflip}
\sread:\EECstate\lpop \EECstate\oplus\EECstate
\qquad\text{and}\qquad
\sflip:\EECstate\lpop \EECstate\end{equation}
which we call \emph{state access operations}.
(It is called a \emph{co}model 
because the arrows have been reversed and ($\times$)
has become ($\oplus$).)
Using the isomorphism $(\EECstate\oplus\EECstate)\cong
(\ltensortype 2\EECstate)$, we can 
understand the type of 
$(\sread)$ as ${\EECstate\lpop \ltensortype 2\EECstate}$.
The idea is that the 
read operation takes a state and returns the value stored 
in that state. 
It also returns a state: this is necessary because state is linear and
cannot be discarded or duplicated. 
Notice that, under the state-passing translation,
the two generic effects (\ref{eq:syntax:flip:gen:eff}) become 
the two state access operations.

The four equations~(\ref{eqn:squareband}) 
are also appealing when written in terms
of state access operations.
\begin{equation}\label{eqn:statestore}
\begin{aligned}
&\aeq{-}{\svar\co \EECstate}
{\letdot{b}{s'}{\sread[\svar]}{\letdot{b'}{s''}{\sread[s']}
{\ltensorterm{\pair b{b'}}{s''}}}\\&\hspace{5.2cm}}
{\letdot{b}{s'}{\sread[\svar]}{\ltensorterm{\pair bb}{s'}}}
{\ltensortype{(2\times 2)}\EECstate}
\gnl
&\aeq{-}{\svar\co\EECstate}
\svar
{\letdot b {s'}{\sread[s]}{s'}}
\EECstate
\gnl
&\aeq{-}{\svar\co\EECstate}
{\sflip[\sflip[\svar]]}
\svar
\EECstate
\gnl
&
\aeq{-}{\svar\co\EECstate}
{\sread[\sflip[\svar]]}
{\letdot b {s'}{\sread[s]}
\ltensorterm {\neg b}{\sflip[\svar']}}
{\ltensortype 2\EECstate}
\end{aligned}\end{equation}
Notice that the second equation says that the read operation does not change the state.

The two derived operations for writing a bit 
can be combined into a single state access operation:
\begin{multline*}
\mathrm{write}\defeq 
\llam x {}
{\letdot b s x 
{\letdot {b'}{s'}{\sread[s]}
{\big((\sif {(b\ \text{xor}\ b')}
{(\llam {s}{}{\sflip[s]})}
{(\llam {s}{S}{s})})
[s']\big)}}}
\\ :\ 
\ltensortype{2}\EECstate
\lpop \EECstate
\end{multline*}
Intuitively, $\mathrm{write}[\ltensorterm b s]$
writes the value $b$ to the state $s$,
 returning the updated state.

In Section~\ref{sec:translation}
we have seen a fully-complete state-passing translation 
from FGCBV to ECBV. 
This translation extends to a fully-complete translation
from FGCBV with generic effects to ECBV with state access operations.

\subsubsection{Continuation passing style and algebraic operations}
Finally, we turn to the linear-use continuation-passing style. 
In this setting, 
it is natural to require that 
the distinguished return type $\EECret$ be a model of the theory.
This is dual to the situation with state-passing style,
where the distinguished state type $\EECstate$ 
is a comodel of the theory.

More precisely, we extend the CPS variant of ECBV (\S~\ref{sec:cps}) with 
the theory of memory access
by adding ground value terms
\[
\oreadname:\EECret\mathop{\underline \times}\EECret \lpop \EECret
\qquad\text{and}\qquad
\oflipname:\EECret \lpop \EECret
\]
satisfying the following equations:
\begin{align*}
&
\aeq {-}{k\colon (\EECret\mathop{\underline\times} \EECret) \mathop{\underline\times}(\EECret\mathop{\underline\times}\EECret)}
{\oread 
{(\oread {(\pi_1(\pi_1\,k))}{(\pi_1(\pi_2\,k))})}
{(\oread {(\pi_2(\pi_1\,k))}{(\pi_2(\pi_2\,k))})}\\
&\hspace{6cm}}
{\oread 
{(\pi_1(\pi_1\,k))}{(\pi_2(\pi_2\,k))}}
{\EECret}
\\
&
\aeq {-}{k\colon \EECret}
k{\oread kk}\EECret
\\
&
\aeq {-}{k\colon \EECret}
{\cpsflip{\cpsflip k}}k\EECret
\\
&
\aeq {-}{k\colon \EECret\times \EECret}
{\cpsflip{\oread {(\pi_1\,k)}{(\pi_2\,k)}}}
{\oread {(\cpsflip{\pi_2\,k})}{(\cpsflip{\pi_1\,k})}}
\EECret\end{align*}
Thus
the generic effects in the source language 
endow the return type~$\EECret$ 
of the linear-use continuation-passing translation 
with the structure of a model for the algebraic 
theory.

\subsubsection{Further simple examples of algebraic theories for computational effects.}
\label{sec:furtherexamplesalgtheories}

The theory of accessing a bit of memory is perhaps the simplest example 
of a stateful effect.
The connections between algebraic operations, generic effects and state access operations
also work for less state-like effects.

\subsubsection*{Printing}
The algebraic theory of printing a single bit has 
two unary function symbols, $\oprint 0$ and~$\oprint 1$. 
For instance, the term $\oprint 0(\oprint 1(x))$ 
should be understood as the computation that first prints~$0$, then prints~$1$, then continues as $x$.
There are no equations in this theory.

The generic effects for printing can be grouped together into one
producer term
\[{\pj{x\colon 2}{\eprint{(x)}}1}\]
thought of as a command that prints its argument.

As a state access operation, 
we have a function
$\sprint : \ltensortype{2}\EECstate\lpop\EECstate$
which, given a bit and a state, returns a new state.
Intuitively, $\EECstate$ is a list of everything printed so far,
and $\sprint$ appends its first argument to its second argument.

\subsubsection*{Probability}
There are different algebraic theories of probabilistic choice.
The simplest one is the theory of `mean-values'
considered by Heckmann~\cite{h-probdom}
(but perhaps first introduced by Acz\'el~\cite{on-mean-values}):
it has one binary function 
symbol~$\odot$ and its axioms are
the medial law, idempotence and commutativity:
\[
(u\odot x)\odot(y\odot z)\equiv 
(u\odot y)\odot(x\odot z)
\qquad
x\equiv x\odot x
\qquad
x\odot y\equiv y\odot x
\]
The idea is that a computation $x\odot y$ tosses a coin, and proceeds as $x$ 
if heads and $y$ if tails.

The generic effect for $\odot$ is $\pj{\Gamma}{\etoss(\star)}{2}$ which,
intuitively, tosses the coin and returns the result.
In this style, the first equation is written
\[\peq{-}
{\slet{x}{\etoss(\star)}
\slet{y}{\etoss(\star)}
{(x,y)}}
{\slet y{\etoss(\star)}
\slet x{\etoss(\star)}
{(x,y)}}
{2\times 2}
\]
It says that it doesn't matter which order you toss coins.

The state access operation 
$\stoss\colon \EECstate\lpop\ltensortype{2}\EECstate$ 
can be thought of as making the coin an explicit parameter:
we can think of $\EECstate$ as a type of coins.
In this style, the second equation  
\[
\aeq{-}{s\colon\EECstate}
{s}
{\letdot{b}{s'}{\stoss[s]}{s'}}
{\EECstate}
\]
says that when you toss a coin you get the same coin back.
The third equation
\[
\aeq{-}{s\colon\EECstate}
{\stoss[s]}
{\letdot{b}{s'}{\stoss[s]}{(\ltensorterm{(\neg b)}s')}}
{\ltensortype 2\EECstate}
\]
says that if you toss a coin it is the same as tossing a coin 
and turning it once more without looking.
This illustrates that probability is not really stateful and so 
for this effect the approach based on algebraic operations
is perhaps the most profitable perspective. 
The point is that different computational effects
are better suited to different approaches (algebraic operations,
generic effects, and state access operations)
even though all three approaches are always available.

\subsection{State access operations, algebraic operations, and generic effects}
\label{sec:sao-so-geff}
We now make precise the informal connections made between 
state access operations, generic effects and algebraic operations
in the previous section.
We do this by revisiting the results
of Plotkin and Power~\cite{Plotkin:Power:03} 
in the context of an 
enriched model $(\VCat,\CCat)$.

In the previous section we focused on the classical situation
where arities are natural numbers.
However, from the perspective of 
state access operations, generic effects and algebraic operations
have little to do with
natural numbers per se.
It is just as easy to allow arities to be arbitrary 
objects of the base category $\VCat$.
In the following result we do not make an artificial 
restriction to natural numbers. Instead we consider an operation 
with arity $[A_1,\dots,A_n]$ and a parameter from $B$,
where $A_1,\dots,A_n,B$ are objects of $\VCat$.
The classical case of an $n$-ary operation is recovered 
by setting the objects $A_1,\dots,A_n,B$ to all be the terminal object $1$.
\begin{thm}
\label{thm:stateaccess-geneff-algop}
Let $(\VCat,\CCat)$ be an enriched model with sums. Let $\stateobj$ be an object
of $\CCat$. Let $A_1\dots A_n$ and $B$ be objects of $\VCat$.
The following data are equivalent:
\begin{enumerate}
\item A state access operation:
a morphism $\ltensor B \stateobj\to 
\ltensor{A_1}\stateobj
\algplus\dots\algplus\ltensor{A_n}\stateobj$ 
in $\CCat$.
\item A generic effect:
a morphism $B\to T_\stateobj(A_1+\dots+A_n)$ in $\VCat$,
where ${T_\stateobj}$ is the monad
${\CCat(\stateobj,\ltensor {(-)}\stateobj)}$.
\item An algebraic operation:
a $\VCat$-natural family of morphisms in $\VCat$
\[\left\{\Prod_{i=1}^n {(U_\stateobj\,\algX)^{A_i}}\to (U_\stateobj\,\algX)^{B}\right\}_{\algX\in\CCat}\]
where $U_\stateobj(\algX)\defeq \CCat(\stateobj,\algX)$.
\end{enumerate}
\end{thm}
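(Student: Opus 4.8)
The plan is to exhibit a chain of natural bijections linking the three kinds of data, with the state access operation of item~(1) serving as the pivot. Throughout I would write $S \defeq A_1 + \dots + A_n$ and use that the model is distributive (an enriched model with sums, in the sense of Definition~\ref{def:distrenrichedmodel}), so that the canonical comparison is an isomorphism $\Algplus_{i=1}^n (\ltensor{A_i}{\stateobj}) \iso \ltensor{S}{\stateobj}$ in $\CCat$.

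First I would treat the equivalence of (1) and (2), which falls straight out of the enrichment. Composing the distributivity isomorphism above with the defining adjunction of the copower,
\[ \Homset{\CCat}{\ltensor B \stateobj}{\algX} \iso \Homset{\VCat}{B}{\CCat(\stateobj, \algX)}, \]
instantiated at $\algX = \ltensor S \stateobj$, identifies a state access operation $\ltensor B \stateobj \to \Algplus_{i=1}^n (\ltensor{A_i}{\stateobj})$ with a morphism $B \to \CCat(\stateobj, \ltensor S \stateobj) = T_\stateobj(S)$ in $\VCat$, i.e.\ precisely a generic effect. This bijection is manifestly natural and has an evident inverse.

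The equivalence of (1) and (3) is the enriched form of the Plotkin--Power correspondence, and is where the real work lies. The key observation is that the powers in item~(3) are computed by the enrichment: for each object $A$ of $\VCat$ there is an isomorphism, $\VCat$-natural in $\algX$,
\[ (U_\stateobj\algX)^{A} \iso \CCat(\ltensor{A}{\stateobj}, \algX), \]
which I would establish via the Yoneda lemma in $\VCat$, testing against an arbitrary object $C$: using the copower adjunction twice together with the associativity isomorphism of the action,
\[ \Homset{\VCat}{C}{\CCat(\ltensor A \stateobj, \algX)} \iso \Homset{\CCat}{\ltensor{(C \times A)}{\stateobj}}{\algX} \iso \Homset{\VCat}{C \times A}{U_\stateobj\algX}, \]
which is exactly the universal property of the power $(U_\stateobj\algX)^{A}$ relative to the cartesian structure of $\VCat$. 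In particular no cartesian closure of $\VCat$ is needed: the required power exists and is given by the displayed enrichment. Applying this to each $A_i$ and to $B$, and using that the enrichment $\CCat(-,\algX)$ carries the coproduct $\Algplus_{i=1}^n(\ltensor{A_i}{\stateobj})$ in its (contravariant) argument to the product $\Prod_{i=1}^n$ in $\VCat$, I would rewrite an algebraic operation as a $\VCat$-natural transformation between two representable $\VCat$-functors $\CCat \to \VCat$:
\[ \left\{\, \CCat\!\left(\Algplus_{i=1}^n (\ltensor{A_i}{\stateobj}),\, \algX\right) \to \CCat(\ltensor B \stateobj,\, \algX) \,\right\}_{\algX \in \CCat}. \]

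Finally I would invoke the enriched Yoneda lemma in the form: for objects $\algC,\algD$ of $\CCat$, the $\VCat$-natural transformations $\CCat(\algC, -) \Rightarrow \CCat(\algD, -)$ correspond bijectively to morphisms $\algD \to \algC$ in $\CCat$. With $\algC = \Algplus_{i=1}^n(\ltensor{A_i}{\stateobj})$ and $\algD = \ltensor B \stateobj$ this yields exactly a state access operation, closing the loop with item~(1). I expect the main obstacle to be bookkeeping rather than conceptual: one must verify that the power identification is genuinely $\VCat$-natural (not merely natural for ordinary morphisms) so that the enriched Yoneda lemma applies, and that the three bijections compose to mutually inverse assignments. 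As a sanity check, the classical $n$-ary case is recovered by taking all $A_i = B = 1$, whereupon the powers become trivial and item~(3) reduces to the familiar family $(U_\stateobj\algX)^{n} \to U_\stateobj\algX$.
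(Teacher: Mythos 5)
Your proposal is correct and follows essentially the same route as the paper's proof: the same pivot at item~(1), the same copower-adjunction-plus-distributivity chain for (1)$\leftrightarrow$(2), the same identifications $(U_\stateobj\algX)^{A}\iso\CCat(\ltensor A\stateobj,\algX)$ and $\Prod_i(U_\stateobj\algX)^{A_i}\iso\CCat(\Algplus_i(\ltensor{A_i}\stateobj),\algX)$, and the same appeal to the enriched Yoneda lemma for (1)$\leftrightarrow$(3). The only difference is that you spell out the Yoneda-style justification for the existence of the powers, which the paper relegates to the explanatory remarks preceding its proof.
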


\noindent The last point requires some explanation.
First, even though $\VCat$ is not cartesian closed,
exponentials with base $U_\stateobj(\algX)$ exist:
$(U_\stateobj\,\algX)^A\cong \CCat(\ltensor A\stateobj,\algX)$.
Second, the constructions
\[F\defeq \Prod_{i=1}^n {(U_\stateobj\,(-))^{A_i}}
\qquad G\defeq (U_\stateobj\,(-))^{B}
\]
can be understood as 
$\VCat$-functors $F,G:\CCat\to\VCat$,
since there are families of morphisms
\[
\{F_{\algX,\algY}:\CCat(\algX,\algY)\times F(\algX)\to F(\algY)\}_{\algX,\algY\in\CCat}
\qquad\quad
\{G_{\algX,\algY}:\CCat(\algX,\algY)\times G(\algX)\to G(\algY)\}_{\algX,\algY\in\CCat}
\]
in $\VCat$ that satisfy the laws for functors (respecting identities and 
composition).
Thirdly, 
a family of morphisms ${\{\phi_\algX:F(\algX)\to G(\algX)\}_{\algX\in\CCat}}$ 
is called $\VCat$-natural if the following diagram commutes in $\VCat$
for all $\algX$ and $\algY$:
\[
\xymatrix@C+20pt{
\CCat(\algX,\algY)\times F(\algX)\ar[r]^{\CCat(\algX,\algY)\times \phi_\algX}\ar[d]_{F_{\algX,\algY}}
&
\CCat(\algX,\algY)\times G(\algX)\ar[d]^{G_{\algX,\algY}}
\\
F(\algY)\ar[r]_{\phi_\algY} &G(\algY)
}
\]
(It is perhaps more compelling if
algebraic operations are defined as structure
on a $\VCat$-category of $T_\stateobj$-algebras,
but this $\VCat$-category cannot be constructed without further 
assumptions on $\VCat$ --- see \cite[\S7]{Plotkin:Power:03}.)
\begin{proof}[Proof of Theorem~\ref{thm:stateaccess-geneff-algop}]
To see the connection between~(1) and~(2),
consider the following bijections:
\begin{align*}
\Homset{\CCat}{\ltensor B\stateobj}{\ltensor{A_1}\stateobj
\algplus\dots\algplus\ltensor{A_n}\stateobj}
&\ \cong\ 
\Homset{\VCat}{B}{\CCat(\stateobj,\ltensor{A_1}\stateobj
\algplus\dots\algplus\ltensor{A_n}\stateobj)}
\\&\ \cong\ 
\Homset{\VCat}{B}{\CCat(\stateobj,
\ltensor{{(A_1+\dots+A_n)}}\stateobj)}
\\&\ =\ 
\Homset{\VCat}{B}{T_\stateobj(A_1+\dots+A_n)}\text.
\end{align*}
To see the connection between~(1) and~(3),
we note that 
\[
\Prod_{i=1}^n(U_\stateobj\,\algX)^{A_i}
\ \cong\ 
\CCat(\ltensor{A_1}\stateobj\oplus\dots\oplus \ltensor{A_n}\stateobj,
\algX)
\qquad\text{and  }\qquad
(U_\stateobj\,\algX)^{B}
\ \cong\ 
\CCat(\ltensor{B}\stateobj,\algX)
\]
and the enriched Yoneda lemma gives
\[
\mathbf V\text{-Nat}(\CCat(\ltensor{A_1}\stateobj\oplus\dots\oplus \ltensor{A_n}\stateobj,-),
\CCat(\ltensor{B}\stateobj,-))
\ \cong\ 
\Homset{\CCat}{\ltensor{B}\stateobj}{\ltensor{A_1}\stateobj\oplus\dots\oplus \ltensor{A_n}\stateobj}\text.
\]
We remark that the equivalence of~(2) and~(3) is essentially Theorem~2 of \cite{Plotkin:Power:03}.\end{proof}

\subsection{Effect theories} 
\label{sec:eff:theories}
In the previous section we described connection between 
state access operations, generic effects and algebraic operations.
As we explained, the natural level of generality for this is more 
sophisticated than the classical setting:
the arity
of an operation is a list
and we allow the operation to take a parameter.
This suggests a generalization of algebraic theories
that we call `effect theories', since 
they are useful from the computational perspective.

The illustration in Section~\ref{sec:overview} 
involves storage of a single bit.
A motivating example of effect theory arises from modifying 
that theory above to allow storage of a more interesting datatype.
In FGCBV, we would like to have 
an (abstract) type $\val$ of 
storable values, and 
generic effects 
$\geffreadcell{l}$ and $\geffwritecell{l}$ with typing judgements
\begin{equation}
\label{eqn:gs-typing}
\begin{prooftree}
\phantom{\vj{\Gamma}{V}{\val}}
\justifies
\pj{\Gamma}{\geffreadcell{l}(\star)}{\val}
\end{prooftree}
\qquad\qquad
\begin{prooftree}
\vj{\Gamma}{V}{\val}
\justifies
\pj{\Gamma}{\geffwritecell{l}(V)}{1}
\end{prooftree} \end{equation}
We add to the theory of equality for 
FGCBV (Fig.~\ref{fig:lambdac:typing})
the three equations for global store proposed by 
Plotkin and Power~\cite{PlotkinPower:fossacs02}
(\ref{eq:GS1}--(\ref{eq:GS3})):
\begin{align*}
\peq{}
{&\,\return{(\star)}} 
{\slet{x}{\geffreadcell{l}(\star)}{\geffwritecell{l}(x)}}
1
\\
\peq {x\co\val}
{& \,\geffwritecell{l}(x);\geffreadcell{l}(\star)}
{\geffwritecell{l}(x);\return{(x)}} \val
\\
\peq {x,y\co\val}
{& \,\geffwritecell{l}(x);\geffwritecell{l}(y)}
{\geffwritecell{l}(y)} 1
\end{align*}

Our notion of effect theory accommodates the classical kinds of theory 
in the overview (\S~\ref{sec:overview})
and also the more general kind of theory of memory access illustrated
above. It is roughly the same as that used by Plotkin 
and Pretnar~\cite[\S 3]{Plotkin:Pretnar:09}. 
The main difference is in the presentation:
we use generic effects rather than algebraic operations.
Rather than introducing a new calculus for expressing the allowable
equations of an effect theory, we use the first-order fragment of FGCBV. 

%

\subsubsection*{Value theories}
Before we introduce effect theories we briefly discuss
\emph{value theories}, which are simple extensions 
of the value judgements of FGCBV.
By a value signature we shall simply mean a signature for a
many-sorted algebraic theory in the usual sense. This means a set of
type constants ranged over by $\alpha, \beta$, and a set of term
constants $f$ with a given arity $f \co (\alpha_1, \ldots, \alpha_n)
\to \beta$, where the $\alpha_i, \beta$ range over type constants. We
can extend {\FGCBV} along a value signature by adding the type
constants and the typing rule
\begin{equation}
\label{eq:value:constant}
\begin{prooftree}
\ltj{\Gamma}{t_i}{\alpha_i}  \, (i = 1,\ldots ,n)
\justifies 
\ltj{\Gamma}{f(t_1, \ldots, t_n)}{\beta}
\end{prooftree}
\end{equation}
for every term constant $f \co (\alpha_1, \ldots, \alpha_n) \to \beta$
in the signature. 
A value theory is a value
signature 
with a set of equations, i.e. pairs of terms
typable in the same context ${\ltj{\Gamma}{V=W}{\beta}}$, where $V,W$
are formed only using variable introduction and the rule
(\ref{eq:value:constant}).

\subsubsection*{Effect theories}
An \emph{effect signature} consists of a value theory and a set of
effect constants each with an assigned arity 
$\arityj{\geff}{\vec\beta}{\vec \alpha_1 + \ldots + \vec \alpha_n}$ consisting of a list
$\vec \beta$ of type constants and a formal sum of lists of type constants,
${\vec \alpha_1 + \ldots + \vec \alpha_n}$.
(Here we are abbreviating a list $(\beta_1\ldots\beta_m)$ using
the notation $\bar \beta$, etc.)
{\FGCBV} can be extended along an effect signature by
adding, for every effect constant 
$\arityj{\geff}{\vec \beta}{\vec \alpha_1 + \ldots +
  \vec \alpha_n}$, a typing judgement
\begin{equation} \label{eq:geff:intro}
 \begin{prooftree}
\vj{\Gamma}{V_1}{\beta_1} \quad\dots\quad
\vj{\Gamma}{V_m}{\beta_m}
\justifies
\pj{\Gamma}{\geff(V_1,\dots,V_m)}{\vec \alpha_1 + \ldots + \vec \alpha_n}
\end{prooftree}
\end{equation}
where
$\vec\beta=(\beta_1,\dots,\beta_m)$. 
In the conclusion, the vectors $\vec \alpha_i$ should be understood as 
the product of the types in the vector. 

Here are some examples of effect signatures:
\begin{itemize}
\item The theory of reading/flipping a bit of memory (\S~\ref{sec:overview})
has no value type constants.
It has two effect constants, $\eread$ and $\eflip$.
The effect constant $\eread$ has arity $\arit1{1+1}$
and the effect constant $\eflip$ has arity $\arit11$,
where $1$ is the empty string.
\item The theory for  storing an abstract datatype
(\ref{eqn:gs-typing})
has one value type constant $\val$
and a pair of effect constants
$(\geffreadcell{l} \co 1; \val)$ 
and
$(\geffwritecell{l} \co \val ; 1)$.
In this case term constants in the value theory can be used to add basic operations manipulating values in~$\val$:
we could ask that the storable values form a ring.
(In future, it would be interesting to allow $\val$ to
have more structure, for example as an inductive type such as the natural numbers,
but it is not clear how to extend the proof of 
Theorem~\ref{thm:embedding:ECBV:EEC} to inductive types.)
\end{itemize}

\noindent An \emph{effect theory} comprises an effect signature and a set of
equations. 
The equations are pairs 
of producer terms-in-context
${\peq{\Gamma}{M}{N}{\tau}}$
of a restricted kind: 
they must be built from the first-order fragment of fine-grain call-by-value in 
Figure~\ref{figure:effecttheories:typing}.
\begin{figure*}[tp]
\framebox{
\begin{minipage}{.96\linewidth}
\emph{Types.}
\begin{align*}
\STA,\STB \, & ::= \, \VconstA \,\mid \,1\,\mid\,\STA\times \STB \,\mid\,0
\,\mid \,\STA+\STB
\end{align*}
\begin{center}
\line(1,0){350}\gnl
\end{center}
\emph{Terms.}
The grammar for terms is as follows.
Typing judgements are in Figure~\ref{fig:lambdac:typing},
Figure~\ref{fig:lambdac:sumtyping},
and Equations~(\ref{eq:value:constant})
and~\eqref{eq:geff:intro}.
\begin{center}
\begin{align*}
V\,::=\ \,&
x
\,\mid\,
f(V_1,\dots,V_n)
\,\mid\,
\star
\,\mid\,
\fst V
\,\mid\,
\snd V
\,\mid\,
\pair {V_1}{V_2}
\\\mid\ &
\cbvimage {V}
\,\mid\,
\cbvin 1 V
\,\mid\,
\cbvin 2 V
\,\mid\,
\cbvcase V {x_1}{W_1}{x_2}{W_2}
\\[5pt]
M\,::=\ \,&
\return V
\,\mid\,
\slet x M N
\,\mid\,
\cbvcasep M {x_1}{N_1}{x_2}{N_2}
\end{align*}
\end{center}
\hide{\begin{center}\mbox{}\\
\line(1,0){350}\gnl
\end{center}
\emph{Equality.} We assume $\alpha$-equivalence, 
reflexivity, symmetry, transitivity and 
congruence laws.}
\end{minipage}}
\caption{The sub-calculus of fine-grain call-by-value 
that is used for describing effect theories}
\label{figure:effecttheories:typing}
\end{figure*}
This notion of `effect theory' is connected with
the classical notion of algebraic theory in Section~\ref{sec:overview}
as follows.
If the value theory is empty, with no type constants ($\alpha$) 
and no function symbols,
then the lists of type constants in (\ref{eq:geff:intro})
must all empty,
and each generic effect is an operation of arity $n$. 
This is the generic effect presentation
of an algebraic theory as described in Section~\ref{sec:fgcbv-geff}.

\hide{We write $\CBV{E}$ for the fine-grain call-by-value language augmented with 
an effect theory~$E$.}

\subsection{Effect theories and the state-passing translation}
\label{sec:effsps}
\renewcommand{\CBVtoEEC}{\CBVtoEECbase\EECstate}
\renewcommand{\CBVtoEECV}{\CBVtoEECbase \EECstate}
\renewcommand{\CBVtoEECP}[1]{\CBVtoEECbase \EECstate{#1}_\svar}
An effect theory is an extension to the fine-grain call-by-value language.
In Section~\ref{sec:translation} we 
explained how the linear-use state-passing translation
goes from FGCBV to ECBV. 
We now explain how ECBV needs to be extended to
support this state-passing translation.
The restricted nature of the effect theories makes this particularly
straightforward and illuminating.
\\
\emph{Value theory:}
\begin{itemize}
\item For each type constant in the value theory, we 
assume a value type constant.
\item For each term constant $f:\vec \alpha\to\beta$ in the value theory
we add the following term formation rule:
\[
\begin{prooftree}
\tj\Gamma {t_1}{\alpha_1}\qquad\dots\qquad
\tj\Gamma{t_n}{\alpha_n}
\justifies
\tj\Gamma{f(t_1,\dots, t_n)}{\beta}
\end{prooftree}
\]
\item 
An equation in the value theory must only be formed from variable introduction
and the rule~(\ref{eq:value:constant}).
Thus each equation $\veq \Gamma V W\beta$ in the value theory 
can be understood as an equation $\teq\Gamma VW \beta$ between 
value judgements.
\end{itemize}
\emph{Effect signature:}
\begin{itemize}
\item 
We assume a chosen computation type constant $\EECstate$.
\item For each effect constant
$\arityj e{\vec\beta}{\vec \alpha_1 + \ldots + \vec \alpha_n}$
we add a constant at value type,
\begin{equation} \label{eq:state:acc:type}
\tj {-}{e}{\ltensortype{\vec\beta}\EECstate
\lpop\ltensortype{(\vec \alpha_1 + \ldots + \vec \alpha_n)}\EECstate}
\end{equation}
For the theory of reading/flipping a bit, 
this yields the constants $\sread$ and $\sflip$ in (\ref{eqn:sreadflip}).
For the theory of storing an abstract datatype,
this yields two state access operations:
\[
\sderef : \EECstate\lpop \ltensortype \val\EECstate
\qquad\qquad
\sassign : \ltensortype\val\EECstate\lpop\EECstate
\]
\end{itemize}
\emph{State-passing translation:}
\begin{itemize}
\item 
Recall that the state-passing translation~(\S\ref{sec:translation})
takes a producer judgement of FGCBV
$\pj\Gamma M \sigma$ 
to a computation judgement of ECBV
$\aj{\CBVtoEEC\Gamma}{\svar\co\EECstate} {\CBVtoEECP M} {\ltensortype{\CBVtoEEC\sigma}{\EECstate}}$.
We extend the state-passing translation to operate on effects:
\[
\CBVtoEECP{(\geff(V_1\dots V_m))}
\defeq 
\geff(\ltensorterm{(\CBVtoEECV{V_1},\dots,\CBVtoEECV{V_m})}\svar)
\]
\item 
We use this extended state-passing translation to translate
the equations in the effect theory into ECBV,
in such a way that the extended translation is sound by construction.
Each equation in the effect theory
$\peq \Gamma M N \tau$ 
becomes an equation in ECBV:
\[
\aeq \Gamma {s\colon \EECstate}
{\CBVtoEECP{M}}
{\CBVtoEECP{N}}
{\ltensortype\tau\EECstate}\text.
\]
Notice that we do not need to translate the types in $\Gamma$ because
equations in an effect theory must be from the first-order fragment of
fine-grain call-by-value (Fig.~\ref{figure:effecttheories:typing})
which is shared with enriched call-by-value.  For instance, the
equations in the effect theory for reading/flipping a
bit~(\ref{eqn:effstore}) give rise to the equations on the state
object~(\ref{eqn:statestore}).  The three equations for storing an 
abstract datatype (\ref{eq:GS1}--\ref{eq:GS3}) become the following
equations for a state object $\EECstate$:
\[
\begin{aligned}
&\aeq{-}{\svar\co\EECstate}{\svar}{\sassign[\sderef[\svar]]}\EECstate
\\
&\aeq{x\co\val}{\svar\co\EECstate}{\sderef[\sassign[\ltensorterm x\svar]]}
{\ltensorterm x {(\sassign[\ltensorterm x \svar])}}{\ltensortype\val\EECstate}
\\
&\aeq{x,y\co\val}{\svar\co\EECstate}
{\sassign[\ltensorterm y{\sassign [\ltensorterm x  \svar]}]}
{\sassign[\ltensorterm y \svar]}
\EECstate
\end{aligned}
\]
\end{itemize}


\subsection{Models and comodels of effect theories}
\label{sec:effmodels}
Our analysis of effect theories in Sections~\ref{sec:eff:theories}
and~\ref{sec:effsps} has been syntactic.
We now provide a model-theoretic treatment.
We define the interpretation of effect theories in Kleisli models
(\S\ref{sec:eff-kleislimodels})
and enriched models~(\S\ref{sec:eff-ecbvmodels}).
We then 
define what it means to be a model of an effect theory
in general terms~(\S\ref{sec:eff-cpsmodels}).

\subsubsection{Models of value theories.}
\label{sec:modelvaluetheories}
Let $\VCat$ be a distributive category.
An interpretation of a value signature in~$\VCat$ is given by 
interpretations of
the type constants
$\alpha$
as objects $\den\alpha$ of~$\VCat$,
and interpretations of term constants ${f \co
\vec\alpha \to \beta}$ as morphisms ${\den f \co
  \den{\vec\alpha} \to
  \den{\beta}}$.
(Here, if $\vec \alpha=(\alpha_1,\dots,\alpha_n)$ then
$\den{\vec\alpha}\defeq \den{\alpha_1}\times\dots\times\den{\alpha_n}$.)
This interpretation is extended to interpret a term in context
${\ltj{\Gamma}{V}{\beta}}$
as a morphism ${\den V\colon\den\Gamma\to\den\beta}$.
An interpretation of a value theory is an interpretation of the signature 
such that ${\den V = \den W}$ for each equation
${\veq{\Gamma}{V}{W}{\beta}}$ in the value theory.

\subsubsection{Interpreting effect theories in Kleisli models.}
\label{sec:eff-kleislimodels}
Let $(\VCat,\CCat,J)$ be a distributive Kleisli model
(Def.~\ref{def:closeddistrFreyd}) 
and suppose an interpretation of the type constants 
$\alpha,\beta$ is given.
An interpretation of an effect theory~$E$
in~$(\VCat,\CCat,J)$ 
is given by an interpretation of the value theory in $\VCat$
and 
an interpretation of each effect constant
$\geff\colon \vec \beta;\vec\alpha_1+\dots+\vec\alpha_n$ in~$E$
as a morphism 
$\den\geff\colon \den{\vec\beta}\to (\den{\vec\alpha_1}+\dots+\den{\vec\alpha_n})$ in $\CCat$,
satisfying the equations of the theory.

\subsubsection{Comodels of effect theories in enriched models.}
\label{sec:eff-ecbvmodels}
Let $(\VCat,\CCat)$ be a distributive enriched model 
in the sense of Definition \ref{def:distrenrichedmodel}.
Thus $\VCat$ is a distributive category and $\CCat$ is a category enriched
in $\VCat$ with copowers and coproducts.
A \emph{comodel} of the effect theory in $\CCat$ 
is an object $\stateobj$ of $\CCat$ 
together with a morphism
$\den\geff\colon \ltensor{\den{\vec\beta}}{\stateobj} \to \ltensor{(\den{\vec\alpha_1}+\dots+\den{\vec\alpha_n})}{\stateobj}$
in $\CCat$
for every effect constant 
$\arityj{\geff}{\vec\beta}{\vec \alpha_1 + \ldots + \vec \alpha_n}$
such that 
for each equation $\peq\Gamma M N \tau$ in the effect theory,
the interpretations of $M$ and $N$ under the state
passing style yield equal morphisms:
\[
\den{\CBVtoEECP{M}}=
\den{\CBVtoEECP{N}}:
\ltensor {\den{\Gamma}}\stateobj\to\ltensor{\den \tau}\stateobj\text.\]

\hide{\begin{prop}
Let  $(\VCat,\CCat, \stateobj)$ be an {\enrmodel} with a chosen state object, and let $E$ be an effect theory. The interpretation of {\ECBV} in $(\VCat,\CCat)$ extends to a sound interpretation of $\ECBVS{E}$ with $\den\states = \stateobj$ iff $\stateobj$ carries a comodel structure for $E$.
\end{prop}}

\subsubsection{Models of effect theories in dual enriched models.}
\label{sec:eff-cpsmodels}

We now justify our use of the term `comodel' in Section~\ref{sec:eff-ecbvmodels}
by showing that it is a generalization of the standard usage, i.e., dual to the concept of 
model of an algebraic theory known from classical algebra. 
Further investigations of the notion of effect theory used in this 
paper along with relations to existing notions of enriched algebraic 
theories \cite{kp-enrichedalgtheories,p-varieties} could be an interesting topic of further research. 

To dualize the notion of comodel, 
let $(\VCat,\opcat\CCat)$ be a distributive enriched model, i.e.,
let~$\VCat$ be a distributive category and $\CCat$ a category enriched
in~$\VCat$ with powers and products, as in Section~\ref{sec:cps}.
A \emph{model} of the effect theory in $\CCat$ 
is a comodel in $\opcat\CCat$.
Explicitly this amounts to an object $\retobj$ of $\CCat$ 
together with a morphism
$\den\geff\colon \lpower{(\den{\vec\alpha_1}+\dots+\den{\vec\alpha_n})}{\retobj}\to \lpower{\den{\vec\beta}}{\retobj}$
between powers in $\CCat$
for every effect constant 
$\arityj{\geff}{\vec\beta}{\vec \alpha_1 + \ldots + \vec \alpha_n}$
such that 
for each equation $\peq\Gamma M N \tau$ in the effect theory,
the interpretations of $M$ and $N$ in continuation
passing style yield equivalent morphisms:
\renewcommand{\CBVtoEEC}{\CBVtoEECbase\EECarbret}%
\renewcommand{\CBVtoEECV}{\CBVtoEECbase \EECarbret}%
\renewcommand{\CBVtoEECP}[1]{\CBVtoEECbase \EECarbret{#1}_\kvar}%
\[
\den{\CBVtoEECP{M}},
\den{\CBVtoEECP{N}}:
\lpower{\den \tau}\retobj\to \lpower {\den{\Gamma}}\retobj\text.\]
\newcommand{\ACat}{\mathbf{A}} \newcommand{\denot}[1]{\llbracket
  #1\rrbracket} \newcommand{\Amodel}{\retobj} %
\newcommand{\pjc}[3]{#1 \mathrel{\vdash^p_\mathrm{n}} #2 \colon \! #3}%
Because the terms of the effect theory are of a restricted kind,
it is straightforward to directly describe the interpretations of 
effect terms as morphisms between powers of $\retobj$,
by
induction on the structure of typing derivations.
For instance, consider the $\mathtt{case}^p$ rule in%
~(\ref{eq:derived:rules:sums:p}).  Given interpretations
${\denot{\CBVtoEECP{M}}\colon \Amodel^{\denot{\sigma_1+\sigma_2}} \to
\Amodel^{\denot{\Gamma}}}$ and $\denot{\CBVtoEECP{(N_i)}}\colon \Amodel^{\denot\tau}
\to \Amodel^{\denot{\Gamma, \sigma_i}}$ (${i=1,2}$), 
the interpretation
$\denot{\CBVtoEECP{(\cbvcasep M {x_1} {N_1} {x_2} {N_2})}}$ is the composite
\[
\Amodel^{\denot\tau}
\xrightarrow{(\denot{\CBVtoEECP{(N_1)}},\denot{\CBVtoEECP{(N_2)}})}
\Amodel^{\denot{\Gamma,\sigma_1}}
\times 
\Amodel^{\denot{\Gamma,\sigma_2}}
\cong
\Amodel^{(\denot{\sigma_1+\sigma_2})\times \denot\Gamma}
\xrightarrow{\denot{\CBVtoEECP{M}}^{\denot{\Gamma}}}
\Amodel^{\denot\Gamma\times\denot\Gamma}
\xrightarrow{\Amodel^\Delta}
\Amodel^{\denot\Gamma}\text.
\]
As another example, 
$\denot{\CBVtoEECP{(\return V)}} \defeq \Amodel^{\denot{V}}:
\Amodel^{\denot\tau}\to\Amodel^{\denot\Gamma}$
if $\vj \Gamma V \tau$.
\hide{A model of an effect theory in $\ACat$ is a model 
of the effect signature such that 
every effect equation 
${\Gamma\vdash M\equiv N\colon \tau}$
in the theory is satisfied, i.e.
$\den M=\den N$.}

We now return to  the setting of classical algebra, 
when the value theory has no type constants or value constants.
We will show that models in the above sense
are models of algebraic theories in the classical sense.
If there are no type constants then every type in the effect language
(Figure~\ref{figure:effecttheories:typing})
is isomorphic to one of the form
${1+1+\dots+1}$.
The arity of an effect constant
${\arityj{\geff}{\vec\beta}{\vec \alpha_1 + \ldots + \vec \alpha_n}}$
must comprise $\vec\beta$ as the empty list (since there are no
type constants) and
$\vec \alpha_1 + \ldots + \vec \alpha_n$
must be a sequence of $n$ empty lists.
Thus the interpretation of $\geff$ in the model is a morphism
$\den\geff\colon \lpower{n}{\retobj}\to {\retobj}$.

We now explain the nature of equations in this restricted setting.
In what follows, we will 
write a natural number $n$ for the type 
that is the $n$-fold sum of~$1$.
We will write $\cbvin i\star$ for the $i$th
injection of type $n$, where $1\leq i\leq n$,
and we will make use of $n$-ary case constructions,
$\cbvcasen{V}\star{W_1}{n}\star{W_n}$
to destruct terms~$V$ of type~$n$.
These are just syntactic shorthand for terms that can be defined in 
the language in Figure~\ref{figure:effecttheories:typing}.

We can focus on equations where $\Gamma$ is empty.
This is because every context has a finite number of ground valuations
---
if a variable $x$ in a context has type~$n$,
then it could be valued with $\cbvin 1\star\dots \cbvin n\star$
--- and, moreover,
an effect equation 
$\peq\Gamma M N n$ is satisfied 
if and only if it is satisfied at each ground instantiation.

The next step is to note that every effect term
$\pj{-}M{1+1\dots+1}$
is equal to one built from the following rules:
\begin{gather*}
\begin{prooftree}
\pjc{-}{M_1}{n}
\quad\dots
\quad
\pjc{-}{M_m}{n}
\justifies
\pjc-{\cbvcasenp{\geff(\star)}{\star}{M_1}{m}{\star}{M_m}}
n
\using{(\geff:-;m)}
\end{prooftree}
\\[6pt]
\begin{prooftree}
\justifies
\pjc{-}{\cbvinp i{\return{\star}}}
{n}
\using
{1\leq i\leq n}
\end{prooftree}
\end{gather*}
It is informative to look at the interpretation of these normalized terms.
Given interpretations $\den {M_1},\dots,\den{M_m}\colon \retobj^n\to\retobj$,
we have 
\begin{align*}
\den{\cbvcasenp{\geff(\star)}\star{M_1}{m}\star{M_m}}
&=
\retobj^n\xrightarrow {(\den{M_1},\dots,\den{M_m})}\retobj^m
\xrightarrow {\den \geff}
\retobj\text,
\\
\den {\cbvinp i{\return\star}}
&=
\retobj^n\xrightarrow{\pi_i}\retobj\text.
\end{align*}
Thus we see that, in 
the situation where there are no type constants 
or value constants, 
the new general notion of model is 
the classical notion of model for an algebraic theory.

\hide{In an enriched model $(\VCat,\CCat)$, 
we have a category $\CCat$ 
enriched in $\VCat$ with \emph{co}powers.
This means that $\opcat\CCat$ is enriched in $\VCat$ with powers. A \emph{comodel} in $\CCat$ is a model in $\opcat\CCat$. }

\subsection{Examples of set-theoretic models and comodels}
\label{sec:effmodels-example}
We revisit the simple example effect theories from
Sections~\ref{sec:overview} and~\ref{sec:eff:theories}
from the model-theoretic perspective.
In each case, we find that there are comodels that are state-like.
\subsubsection{Storage}
The category $\Set$ is enriched in itself with
copowers given by products and the enrichment given by 
the function space.
The set $2=\{0,1\}$ is a comodel for the theory of 
accessing a bit of memory~(\S\ref{sec:overview}),
with $\sread(x)=(x,x)$ and $\sflip(x)= \neg x$.
This is a comodel for the theory 
in the enriched model
$(\Set,\Set)$.  
Power and Shkaravska~\cite{Power:Shkaravska:04} showed that~$2$ is 
the final comodel in $\Set$ for the theory of accessing a bit of memory.

As an aside, we note that $\Set$ is actually equivalent to 
the category of models for the theory of accessing a bit of memory.
The theory of \emph{reading} a bit of memory is sometimes called the theory
of `rectangular bands' because 
every model is isomorphic to one of the form $X\times Y$,
with $\oread{(x,y)}{(x',y')}\defeq (x,y')$. 
The anti-involution operation $\oflipname$ enforces that the model is
isomorphic to 
one of the form $X\times X$, and thus determined by a single set.
This phenomenon has been investigated in a more general setting by 
M\'etayer~\cite{metayer-state} and Mesablishvili~\cite{mesablishvili-state}.

We can consider set theoretic models of 
the theory of storing an abstract datatype~(\ref{eqn:gs-typing}).
What is needed is an interpretation
$\val$ of the value sort,
which also plays the role of the state object.
We let 
$\sread(x)=(x,x)$ and $\sassign(v,s)=v$.
This is a comodel for the theory for
store in the enriched model
$(\Set,\Set)$.  

In both cases, 
the induced monad on $\Set$ is 
the store monad $((-)\times\stateobj)^{\stateobj}$.

\subsubsection{Printing}
\newcommand{\TwoStarAct}{\textbf{$2^*$-Act}}
Let $\TwoStarAct$ be the category
of algebras for 
the theory of printing a bit.
The objects are triples $(X,\oprint 0,\oprint 1)$ 
where $\oprint 0,\oprint 1: X\to X$,
and the morphisms $(X,\oprint {X,0},\oprint {X,1})\to (Y,\oprint {Y,0},\oprint {Y,1})$
are functions $X\to Y$ that commute with the operations.
As any ordinary category, this category is enriched in $\Set$.
It has copowers given by 
\[
\ltensor A(X,\oprint {X,0},\oprint{X,1})
\defeq
(A \times X,\oprint {(\ltensor AX),0},\oprint{(\ltensor A X),1})
\qquad\text{where\ }
\oprint{(\ltensor AX),i}(a,x)
\defeq (a,\oprint {X,i}(x))\text.\]
Thus $(\Set,\TwoStarAct)$ is an enriched model.

The algebra structure of each algebra 
equips it with the structure of a comodel
in the category of algebras.
The leading example
is the set $2^*$ of strings over $\{0,1\}$, 
with $\oprint{2^*,i}(s)=si$.
The induced state monad $\TwoStarAct(2^*,\ltensor{(-)}2^*)$ 
is isomorphic to the monad
$2^*\times(-)$ on $\Set$.
We can understand a string in $2^*$ as a state: it is the 
list of things output so far.

\subsubsection{Probability}
\newcommand{\MeanValueAlgebras}{\textbf{MVAlg}}
Let $\MeanValueAlgebras$ be the category
of mean-value algebras.
The objects are pairs $(X,\odot)$ 
of a set $X$ and a binary operation~$\odot$ 
satisfying the laws of mean-value algebras~(\S\ref{sec:furtherexamplesalgtheories}).
The pair $(\Set,\MeanValueAlgebras)$ is an enriched model.

The one-element set is trivially a mean-value algebra,
and it can be given the structure of a comodel in the category of 
mean-value algebras.
We can understand the one-element set as a set of states:
this captures the idea that probability is a stateless notion of computation.
Nonetheless, this `state object' induces a `state monad' on $\Set$.
This can be understood as a monad $D$ of 
finite dyadic probability 
distributions.
By a finite dyadic probability distribution 
on a set $X$, 
we mean a function $p:X\to[0,1]$ 
such that 
$\mathsf{supp}(p)=
\{x\in X~|~p(x)\neq 0\}$ is finite,
$\sum_{x\in\mathsf{supp}(p)}p(x)=1$,
and for all $x$, $p(x)$ has a finite binary representation.
The monad $D$ has 
$D(X)$ as the set of all finite dyadic probability distributions;
the unit picks out the Kronecker distributions,
and multiplication 
$\mu_X:D(D(X))\to D(X)$ 
takes a distribution $p:D(X)\to [0,1]$ on $D(X)$ 
to a distribution $\mu_X(p):X\to[0,1]$ on $X$,
given by $\mu_X(p)(x)\defeq\sum_{q\in \mathsf{supp}(p)}( p(q)\times q(x))$.

In general, when the state object is a terminal object
then the induced monad preserves terminal objects.
A terminal-object-preserving monad is sometimes called 
affine \cite[Thm.~2.1]{bilinearity-and-cc-monads}
and the corresponding effects are said to be 
discardable (e.g. \cite{DBLP:conf/fossacs/Fuhrmann02}, \cite[Def.~4.2.4]{hayo-phd}) since the following rule is admissible
in the fine-grain call-by-value language.
\[
\begin{prooftree}
\pj \Gamma t A
\quad\pj \Gamma u B
\justifies
\peq \Gamma {\slet x t u} u B
\using{\ (x\text{ not free in }u)}
\end{prooftree}
\]

\subsection{Relating notions of (co)model for effect theories}

We now extend Theorem~\ref{thm:stateaccess-geneff-algop} to show, for each effect theory $E$, a bijective correspondence between 
the following: comodel structures on $\stateobj$, interpretations of $E$ in the Kleisli model 
$(\VCat, \KlCat{\VCat}{\CCat}{\stateobj}, J_{\stateobj})$, algebraic operations equipping each 
$U_\stateobj\,\algX$ with a model structure for $E$. The latter notion requires some explanation, because the definition of model given in 
Section~\ref{sec:eff-cpsmodels} defines only what it means for $\retobj$ in a category $\CCat$ 
to be a model of $E$ if $(\VCat, \opcat{\CCat})$ 
is an enriched model, and in the setting of Theorem~\ref{thm:stateaccess-geneff-algop} $\opcat{\VCat}$ is generally not 
$\VCat$-enriched.

\newcommand{\retVobj}{R}%
To generalize the notion of model, let $\VCat$ be a distributive category, let $\retVobj$ be a fixed object of $\VCat$ such that
all exponents of the form $\retVobj^{A}$ exist
(i.e.~$\VCat(-\times A,\retVobj)$ is representable for all $A$)
and let an interpretation of 
the value theory of $E$ be given. We define a model structure for $E$ on $\retVobj$ to be an
interpretation of $E$ 
in the Kleisli model $(\VCat, \KlCat{\VCat}{\VCat}{\retVobj^{\retVobj^{(-)}}}, J)$
where $\KlCat{\VCat}{\VCat}{\retVobj^{\retVobj^{(-)}}}$ has the same objects as
 $\VCat$, but where a morphism $A\to B$ in $\KlCat{\VCat}{\VCat}{\retVobj^{\retVobj^{(-)}}}$ is a morphism $\retVobj^{B} \to \retVobj^{A}$ in $\VCat$.
 This is isomorphic to the Kleisli category for the strong monad 
$\retVobj^{\retVobj^{(-)}}$ on $\VCat$. 
By construction, the model structure interprets each effect constant 
$\arityj{\geff}{\vec\beta}{\vec \alpha_1 + \ldots + \vec \alpha_n}$ as a morphism
\[
\den\geff\colon \lpower{(\den{\vec\alpha_1}+\dots+\den{\vec\alpha_n})}{\retVobj}\to \lpower{\den{\vec\beta}}{\retVobj} \,.
\]
If $\VCat$ is cartesian closed then $(\VCat, \opcat\VCat)$ is an enriched model and the above definition of a model structure for~$E$ on~$R$ is equivalent to 
the one given in Section~\ref{sec:eff-cpsmodels}.

\begin{thm}
\label{thm:correspondence:thm}
Let $(\VCat,\CCat)$ be an enriched model with sums, let $\stateobj$ be an object
of $\CCat$, let $E$ be an effect theory and let an interpretation of the value theory of $E$ in $\VCat$ 
be given. 
The following data are equivalent:
\begin{enumerate}
\item \label{item:comodel} A comodel structure for $E$ on $\stateobj$
\item \label{item:Kleisli:model} An interpretation of $E$ in the Kleisli model $(\VCat, \KlCat{\VCat}{\CCat}{\stateobj}, J_{\stateobj})$ 
\item \label{item:alg:op:model} For each effect constant $\arityj{\geff}{\vec\beta}{\vec \alpha_1 + \ldots + \vec \alpha_n}$ in $E$ an algebraic operation:
a $\VCat$-natural family of morphisms in $\VCat$
\[\left\{\Prod_{i=1}^n {(U_\stateobj\,\algX)^{\den{\vec\alpha_i}}}\to (U_\stateobj\,\algX)^{\den{\vec\beta}}\right\}_{\algX\in\CCat}\]
equipping each $U_\stateobj(\algX)\defeq \CCat(\stateobj,\algX)$ with a model structure for $E$.
\end{enumerate}
\end{thm}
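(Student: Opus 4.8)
The plan is to reduce everything to Theorem~\ref{thm:stateaccess-geneff-algop}, which already supplies the required bijections at the level of individual effect constants, and then to account for the equations of $E$. Concretely I would prove the two equivalences (\ref{item:comodel})$\Leftrightarrow$(\ref{item:Kleisli:model}) and (\ref{item:comodel})$\Leftrightarrow$(\ref{item:alg:op:model}) separately.

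First I would dispatch (\ref{item:comodel})$\Leftrightarrow$(\ref{item:Kleisli:model}), where the operation data coincide on the nose: by definition of $\KlCat{\VCat}{\CCat}{\stateobj}$, a morphism $\den{\vec\beta}\to\den{\vec\alpha_1}+\dots+\den{\vec\alpha_n}$ in the Kleisli model is exactly a morphism $\ltensor{\den{\vec\beta}}{\stateobj}\to\ltensor{(\den{\vec\alpha_1}+\dots+\den{\vec\alpha_n})}{\stateobj}$ in $\CCat$, i.e.\ a state access operation. It therefore only remains to match the two equational conditions, and for this I would observe that the interpretation of an arbitrary effect term $\pj\Gamma M\tau$ in the Kleisli model $(\VCat,\KlCat{\VCat}{\CCat}{\stateobj},J_\stateobj)$ computed by the standard semantics of \S\ref{sec:interpkleilsi} agrees with its state-passing interpretation in $\CCat$. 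This is Lemma~\ref{lem:sp:transl:sem}, extended to effect constants by the clause $\geff(V_1,\dots,V_m)\mapsto\geff(\ltensorterm{(\dots)}\svar)$ of \S\ref{sec:effsps} and verified by a routine induction over the first-order grammar of Figure~\ref{figure:effecttheories:typing}. Hence an equation $\peq\Gamma MN\tau$ holds in the Kleisli model iff the two state-passing morphisms into $\ltensor{\den\tau}\stateobj$ agree, which is precisely the comodel condition.

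For (\ref{item:comodel})$\Leftrightarrow$(\ref{item:alg:op:model}) I would again take the operation-level bijection from Theorem~\ref{thm:stateaccess-geneff-algop}: for each effect constant a state access operation $\ltensor{\den{\vec\beta}}\stateobj\to\bigoplus_i\ltensor{\den{\vec\alpha_i}}\stateobj$ corresponds, through the copower adjunctions and the enriched Yoneda lemma used in the proof of that theorem, to a $\VCat$-natural family $\{\Prod_{i=1}^{n}(U_\stateobj\algX)^{\den{\vec\alpha_i}}\to(U_\stateobj\algX)^{\den{\vec\beta}}\}_\algX$. The genuinely new content is that the comodel equations on $\stateobj$ are equivalent to the requirement that every $U_\stateobj\algX$ carries a model structure, i.e.\ satisfies the equations of $E$ under the continuation-passing interpretation of \S\ref{sec:eff-cpsmodels} with $R=U_\stateobj\algX$.

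The crux, and the step I expect to be the main obstacle, is a transfer lemma relating the two interpretations of effect terms: for every effect term $\pj\Gamma M\tau$ and every $\algX\in\CCat$, the continuation-passing interpretation $\den{M}^{\mathrm{cps}}\colon(U_\stateobj\algX)^{\den\tau}\to(U_\stateobj\algX)^{\den\Gamma}$ built from the $U_\stateobj\algX$-model structure equals the image of the state-passing morphism $\den{M}^{\mathrm{sps}}\colon\ltensor{\den\Gamma}\stateobj\to\ltensor{\den\tau}\stateobj$ under $\CHom{-}\algX$, transported along the isomorphisms $\CHom{\ltensor A\stateobj}\algX\cong(U_\stateobj\algX)^A$ recorded after Theorem~\ref{thm:stateaccess-geneff-algop}. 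I would prove this by induction on the derivation; the base case is essentially the operation-level correspondence of Theorem~\ref{thm:stateaccess-geneff-algop} (using that value terms receive the same interpretation under both translations), and the inductive steps for $\mathtt{return}$, $\mathtt{let}$ and $\mathtt{case}^p$ amount to checking that $\CHom{-}\algX$ carries each state-passing clause to the corresponding continuation-passing clause, the $\mathtt{case}^p$ step being the fiddliest because powers and the distributivity isomorphisms interact there. Granting the lemma, if $\stateobj$ is a comodel then $\den{M}^{\mathrm{sps}}=\den{N}^{\mathrm{sps}}$ in $\CCat$, so applying $\CHom{-}\algX$ gives $\den{M}^{\mathrm{cps}}=\den{N}^{\mathrm{cps}}$ for every $\algX$ and each $U_\stateobj\algX$ is a model; conversely, if every $U_\stateobj\algX$ is a model then the $\VCat$-natural families $\CHom{\den{M}^{\mathrm{sps}}}{-}$ and $\CHom{\den{N}^{\mathrm{sps}}}{-}$ coincide, and full faithfulness of the enriched Yoneda embedding forces $\den{M}^{\mathrm{sps}}=\den{N}^{\mathrm{sps}}$. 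This yields (\ref{item:comodel})$\Leftrightarrow$(\ref{item:alg:op:model}) and completes the chain of equivalences.
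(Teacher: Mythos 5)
Your proposal is correct and follows essentially the same route as the paper: the operation-level data are matched via Theorem~\ref{thm:stateaccess-geneff-algop} (or on the nose for the Kleisli case), the equations are transferred by an induction over the first-order fragment showing that the Kleisli (resp.\ continuation-passing) interpretation of each effect term agrees with the state-passing interpretation (resp.\ its image under $\CCat(-,\algX)$), and the converse direction for (3)$\Rightarrow$(1) uses full faithfulness of the enriched Yoneda embedding exactly as in the paper. The transfer lemma you flag as the crux is precisely the paper's equation relating $\CCat(\den{\CBVtoEECP{M}},\algX)$ to the interpretation of $M$ in the continuation Kleisli model, likewise proved by induction on typing derivations.
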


\renewcommand{\CBVtoEEC}{\CBVtoEECbase\EECstate}
\renewcommand{\CBVtoEECV}{\CBVtoEECbase \EECstate}
\renewcommand{\CBVtoEECP}[1]{\CBVtoEECbase \EECstate{#1}_\svar}
\begin{proof}
We first prove equivalence of (\ref{item:comodel}) and (\ref{item:Kleisli:model}). First note that in both 
cases, an effect constant 
$\arityj{\geff}{\vec\beta}{\vec \alpha_1 + \ldots + \vec \alpha_n}$ is modelled as a morphism
\[
\den\geff\colon \ltensor{\den{\vec\beta}}{\stateobj} \to \ltensor{(\den{\vec\alpha_1}+\dots+\den{\vec\alpha_n})}{\stateobj} \,.
\]
It thus suffices to show that for any term 
$\pj{\Gamma}{M}{\STA}$ of the fragment
of Figure~\ref{figure:effecttheories:typing} 
the morphisms $\den{\CBVtoEECP{M}}, \den{M}:\ltensor{\den\Gamma}\stateobj
\to \ltensor{\den \STA}\stateobj$ are equal, where $\den{\CBVtoEECP{M}}$
is the \ECBV\ term $\CBVtoEECP{M}$ interpreted 
in the enriched model $(\VCat, \CCat, \stateobj)$ and $\den{M}$ is
the fine-grain call-by-value term $M$ interpreted in the Kleisli model $(\VCat, \KlCat{\VCat}{\CCat}{\stateobj}, J_{\stateobj})$. This can be
proved by induction on the structure of $M$.

To prove equivalence of (\ref{item:comodel}) and (\ref{item:alg:op:model}) we build on the equivalence of 
state access operations and algebraic operations of Theorem~\ref{thm:stateaccess-geneff-algop}. First we show that
for any term $\pj{\Gamma}{M}{\STA}$ of the fragment
of Figure~\ref{figure:effecttheories:typing} the equation 
\begin{equation} \label{eq:yon:pres:fgcbv}
\CCat(\den{\CBVtoEECP{M}},\algX) = \den{M} \co \CCat(\ltensor{\den\STA}{\stateobj},\algX) \to \CCat(\ltensor{\den\Gamma}{\stateobj},\algX)
\end{equation}
holds. This time the denotation brackets on the left hand side refers to the interpretation of ECBV in 
$(\VCat, \CCat, \stateobj)$, and 
the denotation brackets on the right hand side refer to the interpretation of fine-grain call-by-value in the Kleisli model 
$(\VCat, \KlCat{\VCat}{\CCat}{\retVobj^{\retVobj^{(-)}}}, J)$, where ${\retVobj=U_\stateobj(\algX)}$. As above, $\den{\CBVtoEECP{M}}$ is simply the interpretation  
of $M$ in the Kleisli model $(\VCat, \KlCat{\VCat}{\CCat}{\stateobj}, J_{\stateobj})$. 
Equation (\ref{eq:yon:pres:fgcbv}) can be proved by induction on the structure of $M$. 
(There is also a categorical perspective on this, based around the construction
$\CCat(-,\algX)$ which gives rise to
an identity-on-objects functor
$\KlCat{\VCat}{\CCat}{\stateobj}\to\KlCat{\VCat}{\CCat}{\retVobj^{\retVobj^{(-)}}}$
that preserves sums and the action of $\VCat$, although it does not preserve 
the enrichment.)


From (\ref{eq:yon:pres:fgcbv}) we deduce the equivalence of (1) and (3). 
In fact (1)$\implies$(3) is immediate:
suppose $\stateobj$ is a comodel, and consider an 
equation ${\Gamma\vdash M\equiv N\colon \tau}$ in the theory.
Since $\stateobj$ is a comodel, we have 
${\den{\CBVtoEECP{M}} = \den{\CBVtoEECP{N}}}$
and so 
$\den M = \den N$ as interpreted in $(\VCat, \KlCat{\VCat}{\CCat}{\retVobj^{\retVobj^{(-)}}}, J)$. 

For (3)$\implies$(1), suppose that 
$\retVobj = \CCat(\stateobj,\algX)$ is a model for every $\algX$, 
naturally in $\algX$. Then by (\ref{eq:yon:pres:fgcbv})  
\[ 
\CCat(\den{\CBVtoEECP{M}},\algX) = \CCat(\den{\CBVtoEECP{N}},\algX)
\ :
\ 
\CCat(\ltensor{\den\tau}\stateobj,\algX)\to \CCat(\ltensor{\den\Gamma}\stateobj,\algX)
\] 
holds for all equations ${\Gamma\vdash M\equiv N\colon \tau}$ and all $\algX$.
The enriched Yoneda embedding
is full and faithful and so
$\den{\CBVtoEECP{M}} = \den{\CBVtoEECP{N}}:\ltensor{\den\Gamma}\stateobj\to\ltensor{\den\tau}\stateobj$, proving that $\stateobj$ is a comodel.

We remark that the equivalence of (2) and (3) is in the spirit of
\cite[\S 6]{Plotkin:Power:03}.
\end{proof}

\subsection{Generalizing full completeness to the case of effects}

\label{sec:fullcomplete-effects}

The full completeness result of Theorem~\ref{thm:full:faithful}
extends verbatim to the case of the calculi augmented with an effect
theory $E$. The proof is based on an extension of the coreflection
theorem (Theorem~\ref{thm:adj}) which we state below.

First 2-categories $\Freydtheory{E}$ and $\CATECBVtheory{E}$ of distributive Kleisli models of $E$ and 
distributive enriched models of $E$ are defined. These are defined similarly to $\Freyd$
and $\CATECBV$ except morphisms are required to preserve coproducts (up to isomorphism) and 
the interpretation of $E$ (strictly). Details can be found in Appendix~\ref{app:def:cats:eff}. 

\begin{lem} \label{lem:functors:eff}
The assignments $\FreydToECBV(\VCat, \CCat, J) \defeq (\VCat, \CCat, 1)$ and 
$\ECBVToFreyd(\VCat, \CCat, \stateobj) \eqdef (\VCat, \KlCat{\VCat}{\CCat}{\stateobj}, J_{\stateobj})$ 
extend to 2-functors 
\begin{align*}
\FreydToECBV & \co \Freydtheory{E} \to \CATECBVtheory{E} \\
\ECBVToFreyd & \co \CATECBVtheory{E} \to \Freydtheory{E} 
\end{align*}
\end{lem}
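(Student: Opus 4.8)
The plan is to build on the fact, recorded at the end of Section~\ref{sec:sums:fgcbv}, that the assignments $\FreydToECBV$ and $\ECBVToFreyd$ are already 2-functors between the 2-category of distributive Kleisli models and the 2-category of distributive enriched models with state. Relative to this, the only genuinely new content of the lemma is that the two assignments respect the interpretation of the effect theory $E$---on objects, 1-cells and 2-cells---and that they continue to preserve composition and identities. First I would isolate the key observation, which is essentially the first paragraph of the proof of Theorem~\ref{thm:correspondence:thm}: a comodel structure for $E$ on a state object $\stateobj$ and an interpretation of $E$ in the Kleisli model $(\VCat, \KlCat{\VCat}{\CCat}{\stateobj}, J_{\stateobj})$ are not merely equivalent but are carried by exactly the same data, since in both readings an effect constant $\arityj{\geff}{\vec\beta}{\vec \alpha_1 + \ldots + \vec \alpha_n}$ is interpreted by a single morphism
\[
\den\geff\colon \ltensor{\den{\vec\beta}}{\stateobj} \to \ltensor{(\den{\vec\alpha_1}+\dots+\den{\vec\alpha_n})}{\stateobj}
\]
of $\CCat$, and the equations of $E$ impose the same constraint in both.

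For the action on objects this observation does all the work. By Theorem~\ref{thm:correspondence:thm} (the equivalence of (\ref{item:comodel}) and (\ref{item:Kleisli:model})), $\ECBVToFreyd$ sends a distributive enriched model equipped with a comodel for $E$ on $\stateobj$ to the Kleisli model $(\VCat, \KlCat{\VCat}{\CCat}{\stateobj}, J_{\stateobj})$ equipped with the corresponding interpretation of $E$. In the other direction, applying the same equivalence with $\stateobj = 1$---using the coherent isomorphism $\ltensor{A}{1}\cong A$ together with the fact (Theorem~\ref{thm:adj}) that the unit identifies $(\VCat, \CCat, J)$ with $(\VCat, \KlCat{\VCat}{\CCat}{1}, J_1)$---shows that an interpretation of $E$ in a distributive Kleisli model yields a comodel for $E$ on the terminal object $1 \in \CCat$, which is precisely the state object chosen by $\FreydToECBV$. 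Thus each assignment sends objects of its source to objects of its target.

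Next I would treat 1-cells and 2-cells. The 1-cells of $\Freydtheory{E}$ and $\CATECBVtheory{E}$ are the 1-cells of the underlying distributive 2-categories subject to the further requirement that they preserve the interpretation of $E$ strictly; coproduct preservation is already imposed on those underlying 1-cells, so the distributive 2-functors of Section~\ref{sec:sums:fgcbv} send them to coproduct-preserving 1-cells. It therefore remains only to check preservation of the $E$-structure. Here I would again invoke the identification above: since the comodel datum on $\stateobj$ and the Kleisli interpretation are the very same family of morphisms $\den\geff$ of $\CCat$, the condition that a 1-cell preserve the comodel structure is, after transport along the structural coherence isomorphisms that the underlying 2-functor already manages, literally the condition that it preserve the Kleisli interpretation. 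Hence each 1-cell is sent to a 1-cell; 2-cells carry no additional conditions and so are handled verbatim by the underlying 2-functor, and preservation of composition and identities for both 1-cells and 2-cells is inherited from the effect-free case.

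The main obstacle is the bookkeeping in the 1-cell step: one must unwind the definitions of 1-cell in $\CATECBVtheory{E}$ and $\Freydtheory{E}$ (see Appendix~\ref{app:def:cats:eff}) and verify that the coherence isomorphisms witnessing preservation of the action, of coproducts, and of the chosen state object combine so that ``preserves the comodel on $\stateobj$'' and ``preserves the interpretation in $\KlCat{\VCat}{\CCat}{\stateobj}$'' become the same equation. Given the identification of the underlying morphism data this is a diagram chase rather than a conceptual difficulty, and it mirrors the compatibility already established for the effect-free 2-functors in Appendix~\ref{app:adjunction}.
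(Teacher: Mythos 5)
Your proposal is correct, and for the $\ECBVToFreyd$ half it coincides exactly with the paper's argument: well-definedness on objects is precisely the implication from (\ref{item:comodel}) to (\ref{item:Kleisli:model}) of Theorem~\ref{thm:correspondence:thm}. For the $\FreydToECBV$ half you take a genuinely different decomposition. The paper constructs the comodel structure on $1$ directly, interpreting each effect constant $\geff$ as the composite $J\pair{\id}{\bang{}}\circ\den\geff\circ J(\pi_1)$, and verifies the equations by a bespoke induction showing that the state-passing translate of every term $M$ of the effect fragment is interpreted in $(\VCat,\CCat,1)$ as $J(\pair{\id}{\bang{}})\circ\den M\circ J(\pi_1)$. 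You instead transport the $E$-interpretation along the unit isomorphism $(\VCat,\CCat,J)\cong(\VCat,\KlCat{\VCat}{\CCat}{1},J_1)$ of Lemma~\ref{lem:eta:iso} and then invoke Theorem~\ref{thm:correspondence:thm} at $\stateobj=1$; unwinding the unit, the transported effect constant is literally the paper's composite, so the two routes produce the same data. What your route buys is conceptual economy --- no new induction is stated, everything is reduced to the already-proved correspondence theorem --- at the cost of having to justify that isomorphisms of Kleisli models transport $E$-interpretations, which itself hides an induction over the effect fragment of the same shape as the paper's, so the total work is comparable. You also sketch the 1-cell and 2-cell verifications, which the paper's proof sketch omits entirely (``we just show that these are well-defined on objects''); your reduction of ``preserves the comodel on $\stateobj$'' to ``preserves the Kleisli interpretation'' via the structural isomorphisms is the right way to match conditions (\ref{eq:pres:eff:const:ecbv}) and (\ref{eq:pres:eff:const:freyd}). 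One small correction there: 2-cells in $\Freydtheory{E}$ and $\CATECBVtheory{E}$ do carry one extra condition, namely that the value component is the identity at interpretations of type constants; it is trivially preserved since both 2-functors leave the value component of a 2-cell unchanged, but it is not literally ``no additional condition''.
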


\begin{proofsketch}
We just show that these are well-defined on objects. The case of $\ECBVToFreyd$ is simply the implication
from (\ref{item:comodel}) to (\ref{item:Kleisli:model}) of Theorem~\ref{thm:correspondence:thm}.

In the case of $\FreydToECBV$ we must show that $1$ carries a comodel structure 
for $E$ whenever $(\VCat, \CCat, J)$ models~$E$. An effect constant 
${\arityj{\geff}{\vec\beta}{\vec \alpha_1 + \ldots + \vec \alpha_n}}$  can be modelled 
as the composite
\[\ltensor{\den{\vec\beta}}{1}
\xrightarrow{J(\pi_1)}
\den{\vec\beta}
\xrightarrow{\den e} 
\den{\vec \alpha_1 + \ldots + \vec \alpha_n} 
\xrightarrow{J\pair{\id}{\bang{}}}
\ltensor{\den{\vec \alpha_1 + \ldots + \vec \alpha_n}}{1}
\]
where $\den{e}$ refers to the interpretation of $e$ in the given
$E$-model structure of $(\VCat, \CCat, J)$.  We must show that this
defines a comodel, i.e., that the equations are satisfied.  To this
end one can prove that for any term $\pj{\Gamma}{M}{\STA}$ of the
fragment of fine-grain call-by-value used for effect theories
(Figure~\ref{figure:effecttheories:typing}) the equation
$\den{\CBVtoEECP{M}} = J(\pair{\id}{\bang{}}) \circ \den M \circ
J(\pi_1)$ holds. Here, on the left hand side the double brackets refer to the
interpretation of {\ECBV} in $(\VCat, \CCat, 1)$ and the brackets on
the right hand side refer to the interpretation of fine-grain
call-by-value in $(\VCat, \CCat, J)$. This equation is proved by
induction on typing derivations. Thus, for any equation $\peq \Gamma M
N \tau$ in $E$, since $(\VCat, \CCat, J)$ models $E$ we have $\den M =
\den N$, and thus also $\den{\CBVtoEECP{M}} = \den{\CBVtoEECP{N}}$,
proving that $1$ is indeed a comodel.
%
\end{proofsketch}
We end this section by stating the coreflection theorem for models of effect theories.

\begin{thm} \label{thm:adj:eff}
The 2-functor $\FreydToECBV \co \Freydtheory{E} \to \CATECBVtheory{E}$ is left biadjoint to $\ECBVToFreyd$, i.e., 
for any pair of objects $(\VCat, \CCat, J)$  and $(\VCat', \CCat', \stateobj)$ of $\Freydtheory{E}$ and $\CATECBVtheory{E}$ respectively, there is an equivalence of categories
\[\CATECBVtheory{E}(\FreydToECBV(\VCat, \CCat, J),(\VCat', \CCat',
\stateobj)) 
\ \simeq \ 
\Freydtheory{E}((\VCat, \CCat,J),\ECBVToFreyd(\VCat', \CCat', \stateobj))
\]
natural in $(\VCat, \CCat, J)$  and $(\VCat', \CCat', \stateobj)$. 
Moreover, the unit of the adjunction $\eta \co \id_{\CATECBVtheory{E}} \to \ECBVToFreyd\circ\FreydToECBV$ is an isomorphism.
\end{thm}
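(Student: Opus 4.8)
The plan is to bootstrap from the coreflection of Theorem~\ref{thm:adj} rather than rebuild the biadjunction from scratch. By Lemma~\ref{lem:functors:eff} the assignments $\FreydToECBV$ and $\ECBVToFreyd$ already extend to 2-functors between $\Freydtheory{E}$ and $\CATECBVtheory{E}$, and on underlying objects, 1-cells and 2-cells they agree with the constructions of Theorem~\ref{thm:adj}. The 2-categories $\Freydtheory{E}$ and $\CATECBVtheory{E}$ arise from $\Freyd$ and $\CATECBV$ by equipping each object with an interpretation of $E$ (as a Kleisli model, respectively as a comodel on the state object) and by restricting the 1-cells to those preserving this interpretation strictly, with no extra condition on 2-cells. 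Hence there are forgetful 2-functors $\Freydtheory{E}\to\Freyd$ and $\CATECBVtheory{E}\to\CATECBV$ that are locally fully faithful: each hom-category of the source is the full subcategory of $E$-preserving 1-cells inside the corresponding hom-category of the target.

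First I would take the equivalence of hom-categories supplied by Theorem~\ref{thm:adj},
\[
\CATECBV(\FreydToECBV(\VCat, \CCat, J),(\VCat', \CCat', \stateobj))\ \simeq\ \Freyd((\VCat, \CCat, J),\ECBVToFreyd(\VCat', \CCat', \stateobj)),
\]
and show that it restricts to an equivalence between the full subcategories of $E$-preserving 1-cells on either side. Since the forgetful 2-functors are locally fully faithful, it is enough to match objects of these hom-categories, i.e.\ to verify that a 1-cell on the left preserves the $E$-interpretation exactly when its transpose on the right does. This is the heart of the argument, and it is where Theorem~\ref{thm:correspondence:thm} is used: that theorem identifies a comodel structure on $\stateobj$ with an interpretation of $E$ in $(\VCat',\KlCat{\VCat'}{\CCat'}{\stateobj},J_\stateobj)=\ECBVToFreyd(\VCat',\CCat',\stateobj)$ through the state-passing reading $\den{\CBVtoEECP{M}}$ of effect terms. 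I would trace the explicit transpose of the adjunction through these identifications and confirm, by induction on the structure of the effect terms occurring in the equations of $E$ (as in the proofs of Lemma~\ref{lem:functors:eff} and Theorem~\ref{thm:correspondence:thm}), that the morphism components of a transposed 1-cell intertwine the chosen interpretations of the effect constants on the two sides.

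Naturality of the restricted equivalence in both arguments is inherited directly from Theorem~\ref{thm:adj}: the naturality squares are statements about composites of 1-cells, and the forgetful 2-functors preserve composition, so no $E$-specific verification is needed beyond the previous paragraph. For the unit I would observe that $\eta\co\id\to\ECBVToFreyd\circ\FreydToECBV$ has the same underlying pair $(\id_\VCat,G)$ as in Theorem~\ref{thm:adj}, where $\ECBVToFreyd(\FreydToECBV(\VCat,\CCat,J))$ carries the comodel structure on $1$ described in the proof of Lemma~\ref{lem:functors:eff}. It remains only to check that $(\id_\VCat,G)$ lies in $\CATECBVtheory{E}$, i.e.\ that $G$ sends the given $E$-interpretation of $(\VCat,\CCat,J)$ to that induced comodel; this is precisely the equation $\den{\CBVtoEECP{M}}=J(\pair{\id}{\bang{}})\circ\den M\circ J(\pi_1)$ from that proof, specialised to the effect constants. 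Since the underlying $\eta$ is invertible in $\CATECBV$ with $E$-preserving components, and strict $E$-preservation is closed under the inverse of an invertible 1-cell, $\eta$ is also an isomorphism in $\CATECBVtheory{E}$.

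I expect the middle step---matching $E$-preservation across the transpose---to be the main obstacle. The subtlety is that on the Kleisli side an effect constant is interpreted as a map $\den{\vec\beta}\to(\den{\vec\alpha_1}+\dots+\den{\vec\alpha_n})$ in $\CCat$, whereas on the enriched side it appears as a comodel map $\ltensor{\den{\vec\beta}}{\stateobj}\to\ltensor{(\den{\vec\alpha_1}+\dots+\den{\vec\alpha_n})}{\stateobj}$, and one must check that the state-passing translation defining $\den{\CBVtoEECP{M}}$ intertwines the two functor components of the transposed 1-cell. Once the inductive bookkeeping of Theorem~\ref{thm:correspondence:thm} is in place this is routine, but it is the step that genuinely exploits the restriction of effect equations to the first-order fragment of Figure~\ref{figure:effecttheories:typing}.
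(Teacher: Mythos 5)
Your overall strategy -- restrict the biadjunction of Theorem~\ref{thm:adj} rather than rebuild it -- is essentially the intended one: the paper in fact states Theorem~\ref{thm:adj:eff} without proof, the only preparatory work being Lemma~\ref{lem:functors:eff} and the unit/counit construction of Appendix~C, so your reconstruction of the missing argument (in particular, identifying the transfer of $E$-preservation across the transpose via Theorem~\ref{thm:correspondence:thm} and induction over the first-order fragment of Figure~\ref{figure:effecttheories:typing} as the crux) is on target.

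There is, however, one genuine error in a premise you lean on. You assert that $\Freydtheory{E}$ and $\CATECBVtheory{E}$ impose ``no extra condition on 2-cells'', so that the forgetful 2-functors are locally fully faithful and it suffices to match 1-cells. This is not what Appendix~\ref{app:def:cats:eff} says: a 2-cell $(\VTwoCell,\CTwoCell)$ in either 2-category is required to have $\VTwoCell_{\den\alpha}=\id$ for every type constant $\alpha$ of $E$ (and, separately, these 2-categories sit over the distributive variants $\dFreyd$ and $\dCATECBV$, so 1-cells must also preserve coproducts -- Theorem~\ref{thm:adj} as stated concerns $\Freyd$ and $\CATECBV$ without sums, so you are implicitly using its extension to the distributive case mentioned in Section~\ref{sec:sums}). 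The hom-categories are therefore not full subcategories, and ``matching objects'' is not automatically enough: you must additionally check that the transposition functors and the natural isomorphisms witnessing the equivalence (those of Lemma~\ref{lem:eps:biadj} and Propositions~\ref{prop:triangle:eq:1}--\ref{prop:triangle:eq:2}) have 2-cell components satisfying the identity-at-type-constants condition. This does hold -- the value-category components of $\eta$, $\epsilon$ and of $\ECBVToFreyd$, $\FreydToECBV$ on 2-cells are identities or leave the $\VCat$-component untouched, and 1-cells in the effect 2-categories fix $\den\alpha$ on the nose, so whiskering preserves the condition -- but it is a verification your argument as written skips. A smaller slip: the unit $\eta_{(\VCat,\CCat,J)}$ is a 1-cell between Kleisli models, so what you must check is that it lies in $\Freydtheory{E}$ (diagram~(\ref{eq:pres:eff:const:freyd})), not $\CATECBVtheory{E}$; your closing observation that strict $E$-preservation is inherited by the inverse of an invertible 1-cell is correct and is exactly what upgrades invertibility from Lemma~\ref{lem:eta:iso} to the effect setting.
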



\section{Relationship with Atkey's parameterized monads}

Atkey's work on parameterized monads~\cite{a-parammonad},
has proven relevant to 
functional programming~(e.g.~\cite[\S5.2]{kps-funwithtypefunctions}).
In this section we show that parameterized monads are essentially the same as 
enriched models.

\label{sec:parammonad}
\newcommand{\SCat}{\mathcal S}
\renewcommand{\ACat}{\mathcal A}
\newcommand{\BCat}{\mathcal B}

Recall that, in general category theory, if a functor 
$F: \ACat\times \SCat\to\BCat$ 
is such that $F(-,S):\ACat\to\BCat$ has a 
right adjoint $G(S,-):\BCat\to\ACat$ for each $S$,
then these right adjoints together form a 
functor $G:\opcat\SCat\times \BCat\to\ACat$
called the parameterized right adjoint.
Atkey has carried out a study of a generalized form of 
monad 
that arises from parameterized adjunctions:
the functor 
$G(-_1,F(-_2,-_3)):\opcat\SCat\times \ACat\times\SCat\to\ACat$
is called a parameterized monad.
Thus a parameterized monad is a functor 
$T:\opcat\SCat\times\ACat\times \SCat\to \ACat$ 
together with extranatural families of morphisms
\begin{align*}
\eta_{S,A}:&A\to T(S,A,S)
\\
\mu_{S_1,S_2,S_3,A}:&
T(S_1,T(S_2,A,S_3),S_2)
\to
T(S_1,A,S_3)
\end{align*}
satisfying monad laws.
A first example of a parameterized monad is the parameterized state
monad on the category of sets:
$T(S_1,A,S_2)\defeq [S_1\Rightarrow A\times S_2]$.

Every enriched model $(\VCat,\CCat)$ 
contains a parameterized adjunction, 
since $\CCat(-_1,-_2):\opcat\CCat\times \CCat\to\VCat$
is by definition a parameterized right adjoint 
for $\ltensor{(-_1)}{(-_2)} :\VCat\times\CCat\to\CCat$.

Conversely, in the theory of parameterized monads,
the following Kleisli construction \cite[Prop.~1]{a-parammonad}
plays a key role.
Given a parameterized monad $T:\opcat \SCat\times \ACat\times \SCat\to \ACat$, 
the objects of the Kleisli category are pairs 
$(A,S)$ of an object of $\ACat$ and an object of $\SCat$,
and a morphism
$(A,S)\to (A',S')$ is a morphism
$A\to T(S,A',S')$ in~$\VCat$.
This is a first step towards building an enriched model from a parameterized
monad.

Plain parameterized monads are not especially relevant to 
the theory of programming languages,
just as plain monads are not very relevant.
In his study of parameterized monads, 
Atkey focuses on
\emph{strong parameterized monads with Kleisli exponentials}~\cite[\S 2.4.1]{a-parammonad}.
He uses these to provide
semantics for a `command calculus', 
which is a term language closely related to our
basic enriched call-by-value calculus~(Figure~\ref{figure:effects:typing}).
\begin{prop} Let $\VCat$ be a category with finite products.
The following data are equivalent.
\begin{enumerate}
\item A strong parameterized monad on $\VCat$ with Kleisli exponentials,
taking parameters in a category $\SCat$ \cite[\S 2.4.1]{a-parammonad}.
\item A category $\CCat$ enriched in $\VCat$ with copowers 
(i.e., an enriched model -- \S\ref{sec:adjmodels}) 
with a chosen subcategory $\SCat$ of $\CCat$
such that every object $\algX$ of $\CCat$ is of the 
form $\algX= \ltensor A \stateobj$ for $A$ in $\VCat$ and $\stateobj$ in $\SCat$.
\end{enumerate}
\end{prop}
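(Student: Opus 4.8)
The plan is to recognise the two passages already discussed in this section as mutually inverse: Atkey's Kleisli construction for a parameterized monad, and the parameterized adjunction $\ltensor{}{}\dashv\CCat(-,-)$ carried by every enriched model. The surjectivity condition that each object of $\CCat$ has the form $\ltensor A\stateobj$ is exactly what closes the loop, so it will enter only at the very end, when checking that one of the two round-trips is the identity.

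For the passage from (2) to (1), I would restrict the copower action to parameters in $\SCat$, obtaining $F=(\ltensorname)\colon\VCat\times\SCat\to\CCat$. For each $\stateobj$ in $\SCat$ the functor $F(-,\stateobj)=\ltensor{-}{\stateobj}$ has right adjoint $\CCat(\stateobj,-)$ by the definition of enrichment with copowers, so these assemble into a parameterized right adjoint and hence, by Atkey's construction, into the parameterized monad $T(S_1,A,S_2)\defeq\CCat(S_1,\ltensor A{S_2})$ on $\VCat$ with parameters in $\SCat$. Its strength is induced by the $\VCat$-action on $\CCat$, exactly as in the correspondence between strong monads and preserved actions recalled in Section~\ref{sec:monads}; and Kleisli exponentials are supplied by the enrichment, since composing the copower bijection with the associativity isomorphism represents the functor $\Homset{\VCat}{(-)\times A}{T(S_1,A',S_2)}$ by the object $\CCat(\ltensor A{S_1},\ltensor{A'}{S_2})$ of $\VCat$.

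For the passage from (1) to (2), I would take $\CCat$ to be the Kleisli category of \cite[Prop.~1]{a-parammonad}: objects are pairs $(A,S)$ with $A$ in $\VCat$ and $S$ in $\SCat$, and a morphism $(A,S)\to(A',S')$ is a $\VCat$-morphism $A\to T(S,A',S')$. I would define the action on objects by $\ltensor B(A,S)\defeq(B\times A,S)$, with the coherence isomorphisms inherited from the products of $\VCat$, and use the strength of $T$ to render it functorial on Kleisli morphisms — forming $\ltensor B g$ as $B\times A\to B\times T(S,A',S')\to T(S,B\times A',S')$; this is the step where strength is indispensable, mirroring its role in Section~\ref{sec:monads}. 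The copower adjunction is then immediate, because $\Homset{\CCat}{\ltensor B(A,S)}{(A',S')}=\Homset{\VCat}{B\times A}{T(S,A',S')}$ is represented in $B$ by the Kleisli exponential, and this representing object serves as the enrichment $\CCat((A,S),(A',S'))$. Taking $\SCat$ to sit inside $\CCat$ via $S\mapsto(1,S)$, with the action on morphisms of $\SCat$ supplied by the unit of $T$, and noting $\ltensor A(1,S)=(A\times1,S)\cong(A,S)$, shows that the surjectivity condition holds automatically for models produced this way.

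Finally I would verify that the two passages are mutually inverse. In one direction, the parameterized monad reconstructed from the Kleisli enriched model is $\CCat((1,S_1),(A,S_2))$, which by Kleisli-exponential representability is isomorphic to $T(S_1,A,S_2)$, recovering $T$ together with its unit, multiplication and strength. In the other direction, the comparison functor $(A,S)\mapsto\ltensor A{S}$ from the Kleisli category of $T(S_1,A,S_2)=\CCat(S_1,\ltensor A{S_2})$ back to $\CCat$ is full and faithful by the copower adjunction and essentially surjective precisely by the surjectivity hypothesis, hence an equivalence. The main obstacle is bookkeeping rather than conceptual: one must match Atkey's coherence axioms for a strong parameterized monad with Kleisli exponentials — extranaturality of $\eta$ and $\mu$, the strength laws, and their compatibility with the Kleisli exponentials — term by term against the unit, associativity, action and enrichment coherences of an enriched model, and one must read ``equivalent data'' on the $\CCat$-side up to the evident equivalence of categories rather than as an equality.
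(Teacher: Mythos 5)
Your proposal is correct and follows essentially the same route as the paper's own proof notes: the Kleisli category of pairs $(A,S)$ with action $\ltensor B{(A,S)}\defeq(B\times A,S)$ in one direction, the formula $T(S_1,A,S_2)\defeq\CCat(S_1,\ltensor A{S_2})$ in the other, $\SCat$ embedded via $S\mapsto(1,S)$, and the round trips checked up to equivalence. The paper leaves the verification as routine, whereas you correctly pin down the two points it glosses over — that strength is what makes the action functorial on Kleisli morphisms, and that the surjectivity hypothesis is exactly essential surjectivity of the comparison functor $(A,S)\mapsto\ltensor A S$.
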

\begin{proofnotes}
Given a strong parameterized monad 
$T:\opcat\SCat\times\VCat\times\SCat\to\VCat$, we 
let $\CCat$ be the Kleisli category for~$T$, as above.
The pair $(\VCat,\CCat)$ forms an enriched model
with $\ltensor A{(B,\stateobj)}\defeq (A\times B,\stateobj)$:
this is a rephrasing of
what it means for $T$ to be strong and 
have Kleisli exponentials. 
Moreover $\SCat$ can be identified with a subcategory of $\CCat$ 
whose objects are of the form $(1,\stateobj)$
and whose morphisms are induced by the morphisms in $\SCat$.

Conversely, suppose we are given an enriched model $(\VCat,\CCat)$
and a chosen subcategory~$\SCat$ of $\CCat$. 
We define a parameterized monad
$T:\opcat\SCat\times\VCat\times \SCat\to \VCat$ 
by 
\[
T(\stateobj,A,\stateobj') \defeq \CCat(\stateobj,\ltensor A \stateobj')\quad\text.\]
It is routine to check that the two constructions are mutually inverse,
up-to equivalence of categories. 
\end{proofnotes}

\section{Relationship with the enriched effect calculus}
\label{sec:ecbvtoeec}

Our enriched call-by-value calculus (ECBV) is a fragment of the enriched
effect calculus~(EEC, \cite{Mogelberg:CSL:09,EEC:journal}) which was
designed to analyze linear usage in effectful computation.  
We now show that ECBV is not only a syntactic fragment of EEC: every model of
the enriched call-by-value calculus embeds in a model of the enriched
effect calculus.

The enriched effect calculus extends the enriched
call-by-value calculus that we introduced in Section~\ref{sec:ecbv}
with some type constructions:
\[\begin{array}{r@{}lclcl}
\VA,\VB~ & ::=\, \alpha \,\mid\,1\,\mid\,\VA\times \VB\,\mid\,\CA\lpop\CB\,
&\mid&
0\,\mid\,\VA+\VB
&\mid&\VA\to\VB \,\mid\, \CconstA \,\mid  \,   \algzero\,\mid\,\CA \algplus \CB \,\mid\,\ltensortype{\VA}{\CB}\,\mid\,{!}\VA \\[5pt]
\CA,\CB~  & ::=\,
\underbrace{\CconstA \,\mid\,\ltensortype \VA\CB\hspace{2.1cm}}_%
{\text{\qquad \emph{Figure~\ref{figure:effects:typing}}}}
&
\mid&\underbrace{\algzero\,\mid\,\CA\oplus\CB}_%
{\text{\quad\emph{Figure~\ref{figure:effects:sumtyping}}}}
&\mid&
\underbrace{1 \,\mid\,\CA \times \CB \,\mid\,\VA\to\CB\,\mid\,{!}\VA
\enspace .\hspace{2cm}}%
_{\text{\qquad \emph{Full EEC~\cite{Mogelberg:CSL:09}}}} 
\end{array}\]
The additional types are: products $(\CA\times \CB)$ 
and powers $(\VA\to\CB)$ of computation types;
an operation to coerce a value type $\VA$ into a computation type ${!}\VA$;
a space of pure functions $(\VA\to\VB)$ between value types;
and an implicit inclusion of computation types as value types.

These additional types have been used to
describe other aspects of effectful computation.
We briefly considered a linear-use CPS translation
in Section~\ref{sec:cps}, based on \cite{Mogelberg:fossacs:10,EEC:LCPS:journal},
for which we needed products and powers of computation types.
Egger et al.~\cite{Mogelberg:CSL:09} also describe
monadic call-by-name and call-by-value 
interpretations, for which they use the coercion of 
value types into computation types and the implicit
inclusion of computation types in value types.

The additional types of EEC do not affect the full completeness
of the linear state-passing translation 
(Thm.~\ref{thm:full:faithful}), for the following reason.
In Theorem~\ref{thm:embedding:ECBV:EEC}
we show that every model of ECBV embeds in a model of EEC;
conservativity of $\EEC$ over {\ECBV} 
then 
follows from
a strong normalisation result for $\EEC$~\cite{EEC:journal}.
Thus the linear-use state-passing translation
of Section~\ref{sec:translation}
can be understood as a fully complete translation into $\EEC$.

\begin{defi}
\label{def:eecmodel}
  A \emph{closed enriched model}
  is a pair of categories 
  $(\VCat,\CCat)$ 
  such that $\VCat$ is cartesian closed 
  with coproducts
  and $\CCat$ is $\VCat$-enriched with 
  powers and copowers and products and coproducts.

  \emph{A model of EEC} $(\VCat,\CCat,F,U)$ \cite{Mogelberg:CSL:09}
  is 
  a closed enriched model
  $(\VCat,\CCat)$ 
  together with a
  $\VCat$-enriched adjunction $F\dashv U\colon \CCat\to\VCat$.
\end{defi}
We refer to~\cite{Mogelberg:CSL:09} 
for the term calculus and interpretation of EEC in EEC models. 
Here, we will 
analyze how the notion of EEC model compares to
the other notions of model that we have discussed so far.
One simple observation is that 
every closed enriched model 
is a distributive enriched model
in the sense of Definition~\ref{def:distrenrichedmodel}.
Another observation is
that
the adjunction part of an EEC model can be 
equivalently given by a `state' object of $\CCat$
(see \cite[Proof of Thm.~4]{Mogelberg:CSL:09} and 
Section~\ref{sec:kleisli-to-enriched-models}).
\hide{the enriched adjunction $F\vdash U$ is uniquely 
determined by the state object $F(1)$ in $\CCat$ 
(Proposition~\ref{prop:adjmodels}). }
\begin{thm} 
\label{thm:embedding:ECBV:EEC}
Every enriched model 
embeds in a closed enriched model.
\end{thm}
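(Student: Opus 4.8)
The plan is to enlarge $\VCat$ to a presheaf category, and then freely complete $\CCat$ by enriched presheaves, taking care to preserve the copowers on which the interpretation of ECBV depends.

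First I would enlarge the base. Take the new value category to be the presheaf category $\Psh\VCat = [\opcat\VCat,\Set]$. This is cartesian closed and both complete and cocomplete, so in particular it is a cartesian closed category with coproducts, as required of the value part of a closed enriched model. The Yoneda embedding $\Yon\colon\VCat\to\Psh\VCat$ is fully faithful and, preserving all limits, preserves the chosen finite products of $\VCat$ up to coherent isomorphism. (If $\VCat$ is not small one passes to a larger universe, which is harmless for the conservativity application.) Since $\Yon$ is strong monoidal for the two cartesian structures, it induces a change-of-base $2$-functor from $\VCat$-enriched categories to $\Psh\VCat$-enriched categories. Applying it to $\CCat$ yields a $\Psh\VCat$-enriched category $\CCat_0$ with the same objects and hom-objects $\Yon(\CHom{\algB}{\algC})$, and the action of $\VCat$ transports so that every copower $\ltensor A \algB$ of $\CCat$ persists as a copower in $\CCat_0$ by the object $\Yon A$.

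Next I would complete $\CCat_0$ while preserving those copowers. Form the $\Psh\VCat$-enriched presheaf category $[\opcat{\CCat_0},\Psh\VCat]$, which is complete and cocomplete as a $\Psh\VCat$-category because $\Psh\VCat$ is, hence has powers, copowers, products and coproducts. Inside it take the full sub-$\Psh\VCat$-category $\DCat$ of those presheaves that send each copower of $\CCat_0$ to the corresponding power in $\Psh\VCat$ (i.e.\ the presheaves orthogonal to the copower-defining maps). This $\DCat$ is reflective, hence again complete and cocomplete, so $(\Psh\VCat,\DCat)$ is a closed enriched model in the sense of Definition~\ref{def:eecmodel}. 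Every representable lies in $\DCat$, since carrying copowers to powers is precisely the universal property of a copower, so the enriched Yoneda embedding corestricts to a fully faithful $\Psh\VCat$-functor $\CCat_0\to\DCat$. Being a Yoneda embedding it preserves all powers and products that exist, and by the choice of $\DCat$ it additionally preserves the copowers of $\CCat_0$. Composing with the change of base gives a functor $\CCat\to\DCat$, fully faithful on the underlying categories, that preserves the enrichment and the copowers, compatibly with $\Yon\colon\VCat\to\Psh\VCat$; this is the required embedding of $(\VCat,\CCat)$ into the closed enriched model $(\Psh\VCat,\DCat)$.

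The main obstacle is exactly the copower-preserving completion: an unrestricted enriched Yoneda embedding preserves limits but destroys the colimits — here the copowers — that the interpretation of the tensor type relies on. The work therefore concentrates in the third step, verifying that the orthogonality subcategory $\DCat$ is reflective (so that it inherits all weighted limits and colimits), and that the corestricted Yoneda embedding genuinely preserves the original copowers and is compatible with the base change, so that the copower $\ltensor{\Yon A}{\algB}$ computed in $\DCat$ agrees with the image of $\ltensor A \algB$ and the type $\ltensortype\VA\CB$ is interpreted consistently on both sides.
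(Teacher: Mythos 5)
Your construction is correct, but it takes a genuinely different route from the paper's. The paper applies the \emph{ordinary} free cocompletion $\Psh{(-)}$ to both categories at once: organizing $\CAT$ and $\CocompCat$ as 2-multicategories, it sends the weak monoid $\VCat$ to the Day convolution structure on $\Psh\VCat$ and the action of $\VCat$ on $\CCat$ to an action of $\Psh\VCat$ on $\Psh\CCat$ that preserves colimits in each variable (hence has right adjoints in each variable by the adjoint functor theorem, giving powers and copowers). On that route the ordinary Yoneda embedding $\CCat\to\Psh\CCat$ preserves copowers \emph{for free}, because it is the unit of the cocompletion and the extended action restricts along it to the original one; the paper explicitly remarks that the enriched Yoneda embedding is avoided precisely because it destroys copowers. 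You instead change base first and then use the \emph{enriched} presheaf category $[\opcat{\CCat_0},\Psh\VCat]$, recovering copower-preservation by corestricting to the orthogonality subcategory $\DCat$ of presheaves sending copowers to powers. Your key claims check out: the representables are orthogonal to the comparison maps $\ltensor{\Yon A}{\Yon\algB}\to\Yon(\ltensor A\algB)$ exactly by the defining adjunction of the copower (evaluating both hom-objects at $W\in\VCat$ gives $\Homset{\VCat}{W\times A}{\CHom{\algB}{\algC}}$), and orthogonality of the target representable forces the reflection of $\ltensor{\Yon A}{\Yon\algB}$ to be $\Yon(\ltensor A\algB)$, so copowers by representables are preserved. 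What each approach buys: the paper's argument localizes all the work in the universal property of the free cocompletion and needs no reflectivity argument, whereas yours leans on the standard but nontrivial fact that an enriched orthogonality subcategory of a locally presentable $\Psh\VCat$-category with respect to a small set of maps is enriched-reflective (so small $\VCat$, $\CCat$, or a universe enlargement, is needed at that step); in exchange your target $\DCat$ is a more economical completion, tailored to respect exactly the colimits the interpretation of $\ltensortype\VA\CB$ uses, and the argument stays entirely within familiar enriched category theory rather than the 2-multicategorical formalism.
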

\newcommand{\Pshf}[1]{\widehat{#1}}
\newcommand{\yoneda}{\mathbf{y}}
\newcommand{\WeakMonoid}{\mathcal M}
\begin{proof}
The difference between
enriched models
and closed enriched models is that 
in an enriched model $(\VCat,\CCat)$ the value category 
$\VCat$ need not be cartesian closed nor have coproducts,
and the computation category $\CCat$ need not have coproducts,
products and powers.

We use the Yoneda embedding to embed an 
enriched model in a closed enriched model
For any small category $\ACat$
we consider the category $\Pshf\ACat$ of contravariant functors,
$\opcat\ACat\to \Set$,
and natural transformations between them.
The Yoneda embedding $A\mapsto\ACat(-,A)$ 
is a functor $\yoneda_\ACat:\ACat\to \Pshf\ACat$
that exhibits $\Pshf\ACat$ as a cocompletion of $\ACat$.
That is: $\Pshf\ACat$ is cocomplete,
with colimits computed pointwise,
and for any other cocomplete category 
$\BCat$ and any functor $F\colon \ACat\to\BCat$
there is a colimit-preserving functor
$F_!:\Pshf\ACat\to\BCat$ given by 
${F_!(P)\defeq \colim((\yoneda_\ACat\downarrow P)\xrightarrow \pi\ACat
\xrightarrow F \BCat)}$, where
$(\yoneda_\ACat\downarrow P)$ is the category of elements of $P$;
this colimit-preserving functor is essentially unique
such that $F\cong F_!\cdot\yoneda_\ACat$.

Let $(\VCat,\CCat)$ 
be an enriched model. 
We will show that $(\Pshf\VCat,\Pshf\CCat)$ is a closed enriched model,
and that $(\VCat,\CCat)$ embeds in it as an enriched model.

We proceed by considering the following 2-categorical situation.
Because the construction $\Pshf{(-)}$ is a free cocompletion,
it can be understood as a weak 2-functor
from the 2-category $\CAT$ of small categories, functors and
 natural transformations 
to the 2-category $\CocompCat$ of categories with all colimits, 
colimit-preserving functors 
and natural transformations.
\hide{This is because $\Pshf{(-)}$ is a free cocompletion.
The construction takes a functor ${F\colon \ACat\to\BCat}$ to 
its left Kan extension along the Yoneda embedding,
${\Pshf F:\Pshf\ACat\to\Pshf\BCat}$,
which can be understood via a coend formula
or as a colimit:
\[
(\Pshf F(P))(B)\defeq \int^{A}P(A)\times \BCat(B,A)
\ \cong \ 
\colim((\yoneda_{\ACat}\downarrow P)\xrightarrow \pi
\ACat\xrightarrow F\BCat\xrightarrow {\BCat(B,-)} \Set)\text.
\]}

In fact it is necessary to be slightly more general than this:
we will understand $\CAT$ and $\CocompCat$ as 2-multicategories.
Recall that a 2-multicategory is a $\Cat$-enriched multicategory.
So it is like a 2-category except that 
the domains of the 1-cells are sequences of objects.
\begin{itemize}
\item The 2-multicategory $\CAT$ is defined as follows.
The 0-cells are small categories 
with finite coproducts.
The 1-cells 
${F\colon (\CatA_1,\dots,\CatA_n)\to \CatB}$ in $\CAT$ are
functors in $n$ arguments, i.e. functors 
${F\colon\CatA_1\times \dots\times \CatA_n\to \CatB}$.
The 2-cells are natural transformations.
\item 
The objects of the 2-multicategory $\CocompCat$ are 
categories with all colimits. 
The 1-cells 
$F:(\CatA_1,\dots,\CatA_n)\to \CatB$ in $\CocompCat$ are
functors $F:\CatA_1\times \dots\times \CatA_n\to \CatB$ 
that preserve colimits in each argument,
i.e.~that for fixed $A_1\in\CatA_1$, \dots, $A_n\in\CatA_n$
and for ${1\leq i\leq n}$,
the functor 
${F(A_1,\dots,-_i,\dots A_n):\CatA_i\to \CatB}$ 
preserves colimits.
The 2-cells are natural transformations.
\item
The construction
$\Pshf{(-)}$ extends to a weak morphism of 2-multicategories
from ${\CAT}$ to ${\CocompCat}$.
A 1-cell
$F:(\CatA_1,\dots,\CatA_n)\to \CatB$ in $\CAT$
is extended to 
a 1-cell in $\CocompCat$, i.e.~a functor 
${F_!:\Pshf{\CatA_1}\times\dots\times \Pshf{\CatA_n}\to \Pshf{\CatB}}$
which preserves colimits in each argument.
This construction is done by iteratively applying the 
following idea.
If an $n$-ary functor 
${G\colon (\ACat_1,\dots,\ACat_k,\dots,\ACat_n)\to \BCat}$ is 
 such that~$\BCat$ 
is cocomplete, $\ACat_1\dots\ACat_{k-1}$ are cocomplete, 
$G$ preserves colimits in each of the first ${(k-1)}$ arguments,
and $\ACat_k$ is small, 
then there is an $n$-ary functor 
$G_{!k}\colon (\ACat_1,\dots,\Pshf{\ACat_k},\dots,\ACat_n)\to \BCat$ 
that preserves colimits in each of the first $k$ arguments
such that 
$G\cong G_{!k}\cdot (\ACat_1,\dots,\yoneda_\ACat,\dots,\ACat_n)$.
This is because the $n$-ary functor $G:(\ACat_1,\dots,\ACat_n)\to \BCat$ 
can be curried to a functor
\[\ACat_k\to \CocompCat(\ACat_1,\dots,\ACat_{k-1};\CAT(\ACat_{k+1},\dots,\ACat_n;\BCat))\]
whose codomain is cocomplete,
and which can thus be extended to a colimit-preserving
functor 
using the universal property of $\hat{\ACat_k}$:
\[\hat{\ACat_k}\to \CocompCat(\ACat_1,\dots,\ACat_{k-1};\CAT(\ACat_{k+1},\dots,\ACat_n;\BCat))\text;\]
this can be uncurried to give
$G_{!k}\colon (\ACat_1,\dots,\Pshf{\ACat_k},\dots,\ACat_n)\to \BCat$.
\hide{The functor $G_{!k}$ is given by
\[G_{!k}(A_1,\dots,A_{k-1},P,A_{k+1},\dots, A_n)=
\colim(\yoneda_{\ACat_k}\downarrow P\xrightarrow \pi 
\ACat_k\xrightarrow {G(A_1,\dots,A_{k-1},-,A_{k+1},\dots,A_n)}\BCat)\text.\]}
Ultimately, the extension 
$F_!:\Pshf{\CatA_1}\times\dots\times \Pshf{\CatA_n}\to \Pshf{\CatB}$
satisfies the following coend formula:
\[
F_!(P_1,\dots,P_n)(B)\cong\int^{A_1,\dots,A_n}P_1(A_1)\times\dots\times P_n(A_n)\times
\CatB(B,F(A_1,\dots,A_n))
\]
\item
Recall the following consequence of the special adjoint functor theorem:
a morphism
$F:(\Pshf{\CatA_1},\dots,\Pshf{\CatA_n})\to \CatB$
in $\CocompCat$ 
can be equivalently described as a functor 
that has a right adjoint 
in each argument,
i.e.~a right adjoint for each functor 
$F(P_1,\dots,-_i,\dots,P_n):\Pshf{\CatA_i}\to \CatB$. 
\item
Aside from size issues, there is a forgetful morphism of 2-multicategories
$\CAT\to\CocompCat$
and $\Pshf{(-)}$ is left biadjoint to it.
The Yoneda embedding is the unit for this adjunction.
\end{itemize}
With the general situation explained, the proof of 
Theorem~\ref{thm:embedding:ECBV:EEC} 
is straightforward.
We begin by considering 
the evident notion of `weak monoid' in a 2-multicategory%
~$\TwoMultiCat$.
This comprises an object $\WeakMonoid$ of $\TwoMultiCat$ and
two 1-cells: $m:(\WeakMonoid,\WeakMonoid)\to\WeakMonoid$ 
and $e:()\to\WeakMonoid$,
with three coherence 2-isomorphisms.
A morphism of 2-multicategories 
$\TwoMultiCat\to\TwoMultiCat'$ takes weak monoids in~$\TwoMultiCat$
to weak monoids in~$\TwoMultiCat'$.
In particular a monoidal category 
is a weak monoid in $\CAT$,
and the construction $\Pshf{(-)}$ takes 
it to a weak monoid in $\CocompCat$, which is
a cocomplete biclosed monoidal category. 
The Yoneda embedding 
$\WeakMonoid\to\Pshf\WeakMonoid$ 
preserves the weak monoid structure.
This is Day's convolution construction~\cite{day-closed,ik-univday}.

In particular, the value category $\VCat$ of our enriched model 
has products
and this exhibits it as a monoidal category.
It follows that $\Pshf\VCat$ is cartesian closed 
and that the Yoneda embedding ${\VCat\to\Pshf\VCat}$ preserves the 
product structure in $\VCat$.
Given a weak monoid $\WeakMonoid$ in a 2-multicategory $\TwoMultiCat$,
we consider the evident notion of weak action for $\WeakMonoid$:
an object $\CatA$ of $\TwoMultiCat$ and
a 1-cell $(\WeakMonoid,\CatA)\to\CatA$ satisfying the laws of 
monoid actions up-to coherent isomorphism.
A morphism of 2-multicategories takes weak monoid actions to weak monoid
actions.
In particular, given a monoidal category~$\WeakMonoid$,
an action of $\WeakMonoid$ on another category
$\CatA$ 
induces an enrichment of $\Pshf\CatA$ in $\Pshf\WeakMonoid$ 
with powers and copowers.
The Yoneda embedding $\CatA\to\Pshf\CatA$ 
preserves the monoidal action.
Moreover since it is 2-natural it preserves any enrichment or 
powers that already exist in $\CatA$.

In particular, in our enriched model,
$\VCat$ acts on $\CCat$ and so
$\Pshf\CCat$ is enriched in $\Pshf\VCat$ with powers and copowers,
and the Yoneda embedding $\CCat\to \Pshf\CCat$
is enriched in $\VCat$  and preserves copowers.
\end{proof}
The crux of the proof
is that the Yoneda embedding 
adds closed structure --- cartesian closed structure and 
powers --- while preserving the other structure.
Although the Yoneda embedding does not \emph{freely}
add the closed structure, it is considerably simpler that the free
closure. This is one reason why Yoneda
embeddings are a common technique in semantics.
For instance,
the enriched Yoneda embedding is used by Egger et al.~\cite{Mogelberg:CSL:09}
to show that Levy's call-by-push-value embeds in the enriched 
effect calculus (but the \emph{enriched} Yoneda embedding 
is not appropriate in our proof because it does not preserve copowers).

Theorem~\ref{thm:embedding:ECBV:EEC}
explains that EEC is conservative over ECBV,
but it neglects sum types.
Sum types play an important role in the study of 
generic effects and state access operations (\S\ref{sec:modelsoftheories}).
We now show that EEC is conservative over ECBV with sum types.
\begin{prop} 
\label{prop:embedding:ECBV:EEC:sums}
Every distributive enriched model (\S\ref{def:distrenrichedmodel})
embeds in a closed enriched model (Def.~\ref{def:eecmodel}).
\end{prop}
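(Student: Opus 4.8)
The plan is to revisit the proof of Theorem~\ref{thm:embedding:ECBV:EEC}, replacing the free cocompletion $\Pshf{(-)}$ by the \emph{free finite-coproduct-preserving cocompletion}, so that the embedding preserves the coproducts of $\VCat$ and the enriched coproducts of $\CCat$ in addition to products, enrichment and copowers. For a small category $\ACat$ with finite coproducts, let $\mathbf{Sh}(\ACat)$ be the full subcategory of $\Pshf\ACat$ of presheaves $\ACat^{\mathrm{op}}\to\Set$ that send the finite coproducts of $\ACat$ to products in $\Set$; equivalently, $\mathbf{Sh}(\ACat)$ is the category of sheaves for the coverage on $\ACat$ generated by finite coproduct cocones (together with the empty cocone over the initial object). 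Since a representable $\ACat(-,C)$ sends $A+B$ to $\ACat(A,C)\times\ACat(B,C)$, representables are sheaves, so this coverage is subcanonical and the corestriction $\ell_\ACat\co\ACat\to\mathbf{Sh}(\ACat)$ of $\Yon$ is full and faithful. Being a Grothendieck topos, $\mathbf{Sh}(\ACat)$ is complete, cocomplete, cartesian closed and distributive, and sheafification $\Pshf\ACat\to\mathbf{Sh}(\ACat)$ is left exact.

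First I would record the two preservation properties of $\ell_\ACat$ that the distributive case needs. Because sheafification is left exact and $\Yon$ preserves all limits, $\ell_\ACat$ preserves finite products; and because for any sheaf $Q$ one computes $\mathrm{Nat}(\Yon(A+B),Q)=Q(A+B)\cong Q(A)\times Q(B)=\mathrm{Nat}(\Yon A\amalg\Yon B,Q)$, the object $\Yon(A+B)$ enjoys the universal property of $\ell_\ACat(A)\amalg\ell_\ACat(B)$ in $\mathbf{Sh}(\ACat)$, so $\ell_\ACat$ also preserves finite coproducts, and hence the whole distributive structure. Note that, unlike the plain Yoneda embedding used in Theorem~\ref{thm:embedding:ECBV:EEC}, this corestricted embedding genuinely preserves the coproducts already present in $\ACat$, which is the entire point.

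Next I would rerun the 2-multicategorical argument of Theorem~\ref{thm:embedding:ECBV:EEC} with $\mathbf{Sh}(-)$ in place of $\Pshf{(-)}$, taking the source 2-multicategory to be $\CoprodCat$ of small categories with finite coproducts, functors preserving finite coproducts in each argument, and natural transformations; $\mathbf{Sh}(-)$ is then the free coproduct-preserving cocompletion, i.e.\ left biadjoint to the forgetful 2-functor $\CocompCat\to\CoprodCat$, with $\ell$ as unit. The data of a distributive enriched model live in $\CoprodCat$: the product functor $\VCat\times\VCat\to\VCat$ and the action $\ltensorname\co\VCat\times\CCat\to\CCat$ both preserve finite coproducts in each argument, in the value variable by the distributivity axioms of Definition~\ref{def:distrenrichedmodel} and in the computation variable because $\ltensor A{(-)}$ is a left adjoint. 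Applying $\mathbf{Sh}(-)$ transports the cartesian monoidal structure of $\VCat$ to a cartesian closed structure on the topos $\mathbf{Sh}(\VCat)$ (the Day convolution of a cartesian structure is again cartesian, and products of sheaves are sheaves), and transports the action to an action of $\mathbf{Sh}(\VCat)$ on $\mathbf{Sh}(\CCat)$ that has copowers and, by the adjoint functor theorem for cocontinuous functors between topoi, powers. As in the base theorem $\ell_\CCat$ is $\VCat$-enriched and preserves copowers; since it also preserves conical coproducts and $\mathbf{Sh}(\CCat)$ is cocomplete and tensored, those conical coproducts are enriched coproducts and $\ell_\CCat$ preserves the enriched coproducts of $\CCat$. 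Finally $\mathbf{Sh}(\CCat)$, being a complete and cocomplete tensored-and-cotensored $\mathbf{Sh}(\VCat)$-category, has all enriched products, coproducts, powers and copowers, so $(\mathbf{Sh}(\VCat),\mathbf{Sh}(\CCat))$ is a closed enriched model (Definition~\ref{def:eecmodel}) and $(\ell_\VCat,\ell_\CCat)$ is the required embedding.

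The main obstacle will be the same bookkeeping that made Theorem~\ref{thm:embedding:ECBV:EEC} delicate, now under the extra constraint of staying inside the sheaf subcategories. Concretely, one must verify that the free-cocompletion machinery really does restrict from $\Pshf{(-)}$ to $\mathbf{Sh}(-)$: that the extended monoidal product and action preserve the coproduct-sheaf condition in each argument (which I expect to follow from left-exactness of sheafification together with the distributivity axioms), and that the single unit $\ell$ simultaneously preserves products, coproducts, the enrichment and copowers. Once left-exactness of sheafification and subcanonicity of the coproduct coverage are in hand each of these checks is routine, but assembling them coherently inside the 2-multicategorical framework — and confirming that the universal property of $\mathbf{Sh}(-)$ as a coproduct-preserving cocompletion really does transport monoids and their actions just as $\Pshf{(-)}$ did for plain cocompletion — is where the genuine work lies.
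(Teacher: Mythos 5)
Your overall strategy is exactly the paper's: replace the free cocompletion $\Pshf{(-)}$ of Theorem~\ref{thm:embedding:ECBV:EEC} by the free finite-coproduct-preserving cocompletion (your $\mathbf{Sh}(\ACat)$, under its primary description as the presheaves $\opcat\ACat\to\Set$ sending finite coproducts to products, i.e.\ $\mathrm{FP}(\opcat\ACat,\Set)$), observe that the corestricted Yoneda embedding now preserves finite coproducts as well as finite products, and rerun the 2-multicategorical argument with $\CoprodCat$ in place of $\CAT$. Your computation $\mathrm{Nat}(\Yon(A+B),Q)\cong Q(A)\times Q(B)$ for $Q$ in the subcategory is precisely the key observation that makes the corestricted embedding coproduct-preserving.

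There is, however, a genuine error in the justifications, and it is load-bearing. Your ``equivalently'' identifying $\mathrm{FP}(\opcat\ACat,\Set)$ with a Grothendieck topos of sheaves for the coproduct coverage is false in general. The sheaf condition for the cover $\{A\to A+B,\,B\to A+B\}$ asks for a \emph{fibered} product over the values on overlaps, not a plain product, and the coproduct cocones need only form a coverage when the coproducts are extensive (pullbacks along injections exist and decompose the domain) --- but the hypotheses here give only distributivity for $\VCat$ and bare (enriched) coproducts for $\CCat$. Concretely, taking $\ACat=\opcat{\mathbb T}$ for a Lawvere theory $\mathbb T$, the category $\mathrm{FP}(\opcat\ACat,\Set)$ is the category of $\mathbb T$-algebras (e.g.\ groups), which is not a topos, not cartesian closed, and whose reflector from $\Pshf\ACat$ is not left exact. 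So you cannot obtain cartesian closedness and distributivity of $\mathbf{Sh}(\VCat)$, nor ``left-exact sheafification'', for free from topos theory; yet cartesian closedness of the value category is one of the things Definition~\ref{def:eecmodel} demands. These facts have to be earned the way the paper earns them: $\mathrm{FP}(\opcat\ACat,\Set)$ is locally presentable, hence complete and cocomplete; the extended product and action preserve colimits in each variable because they are images under the left biadjoint $\CoprodCat\to\CocompCat$ of functors preserving coproducts in each variable (this is exactly where distributivity of $\VCat$ and enrichedness of $\CCat$'s coproducts are used); and the internal homs, enrichment and powers then come from the adjoint functor theorem for cocontinuous functors. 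Note also that preservation of finite products by the embedding does not need a left-exact reflector: the reflective subcategory is closed under limits of $\Pshf\ACat$, and the Yoneda embedding preserves them. With the topos language excised and these points supplied, your argument becomes the paper's proof.
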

\newcommand{\Sind}[1]{\mathrm{FP}(\opcat{#1},\Set)}%
\newcommand{\Sindop}[1]{\mathrm{FP}(#1,\Set)}%
\begin{proof}
Our proof of Proposition~\ref{prop:embedding:ECBV:EEC:sums}
follows the same outline as our proof of Theorem~\ref{thm:embedding:ECBV:EEC}.
We must modify that proof because the Yoneda embedding
$\yoneda_{\ACat}:\ACat\to\Pshf\ACat$ does not preserve coproducts.
We use the following
variation on the Yoneda embedding. 
For any category $\ACat$ with finite coproducts,
let $\Sind\ACat$ be the category of finite-product-preserving functors
$\opcat\ACat\to\Set$ and natural transformations between them.
Assuming $\ACat$ has coproducts,
the category $\Sind\ACat$ has coproducts too.
The Yoneda embedding $A\mapsto\ACat(-,A)$ 
is a functor $\ACat\to \Sind\ACat$
which preserves coproducts.
In fact, the Yoneda embedding exhibits 
$\Sind\ACat$ as 
the cocompletion of $\ACat$ 
as a category with finite coproducts.
The category $\Sind\ACat$ is cocomplete 
(although not all colimits are computed pointwise),
and for any coproduct-preserving functor $F\colon \ACat\to\BCat$ 
into a cocomplete category~$\BCat$
there is a colimit-preserving functor $F_!:\Sind\ACat\to \BCat$
given by 
$F_!(P)\defeq \colim((\yoneda_\ACat\downarrow P)\xrightarrow \pi\ACat\xrightarrow F \BCat)$; this colimit-preserving functor is essentially 
unique such that $F\cong F_!\cdot \yoneda_\ACat$
(see e.g.~%
\cite[Thms~5.86,~6.11]{Kelly:Book},
\cite{PowerRobinson:Premonoidal},
\cite{Fiore:enrichment}).
Since a distributive enriched model has coproducts,
this is the right variation of the Yoneda embedding to use.

We now mimic the proof of Theorem~\ref{thm:embedding:ECBV:EEC},
replacing the cocompletion construction
$\Pshf{(-)}$ with 
the cocompletion $\Sind{(-)}$
of a category with coproducts.
Consider the 2-multicategory $\CoprodCat$:
the {0-cells} are small categories 
with finite coproducts;
the 1-cells 
${F\colon (\CatA_1,\dots,\CatA_n)\to \CatB}$ are
functors 
${F\colon\CatA_1\times \dots\times \CatA_n\to \CatB}$ 
that preserve coproducts in each argument;
the 2-cells are natural transformations.
The construction
$\Sind{(-)}$ extends to a morphism of 2-multicategories
from ${\CoprodCat}$ to ${\CocompCat}$.
By the special adjoint functor theorem,
a 
morphism ${(\Sind{\CatA_1}, \dots, \Sind{\CatA_n})\to \CatB}$ 
in $\CocompCat$ is a functor that has a right adjoint  in each argument.
\hide{A 1-cell 
$F:(\CatA_1,\dots,\CatA_n)\to \CatB$ in $\CoprodCat$
is extended to 
a functor
$F_!:\Sind{\CatA_1}\times\dots\times \Sind{\CatA_n}\to \Sind{\CatB}$.
As in the proof of Theorem~\ref{thm:embedding:ECBV:EEC},
it can be expressed in terms of coends in $\Set$:
\[
F_!(P_1,\dots,P_n)(B)\defeq\int^{A_1,\dots,A_n}P_1(A_1)\times\dots\times P_n(A_n)\times
\CatB(B,F(A_1,\dots,A_n))
\]
There are two things to check:
\begin{itemize}
\item The functor $F_!(P_1,\dots,P_n):\opcat\CatB\to\Set$ preserves products.

\item
Each functor
$F_!(P_1,\dots,-_i,\dots,P_n):\Pshf{\CatA_i}\to\Pshf\CatB$ 
preserves colimits as a functor
$\Sind{\CatA_i}\to\Sind{\CatB}$.
\end{itemize}
Recall the following consequence of the special adjoint functor theorem:
a morphism
$F:\Sind{\CatA_1},\dots,\Sind{\CatA_n}\to \CatB$
in $\CocompCat$ 
can be equivalently described as a functor 
that has a right adjoint 
in each argument,
i.e. a right adjoint for each functor 
$F(P_1,\dots,-_i,\dots,P_n):\Sind{\CatA_i}\to \CatB$. 
}

A weak monoid~$\WeakMonoid$ in $\CoprodCat$ 
is  a distributive monoidal category,
i.e.,~a monoidal category with coproducts
such that the tensor preserves coproducts in each argument.
The construction $\Sind-$ takes 
it to a weak monoid in $\CocompCat$, which is
a cocomplete biclosed monoidal category. 
The Yoneda embedding 
$\WeakMonoid\to\Sind\WeakMonoid$ 
preserves the weak monoid structure and coproducts.

In particular, if $(\VCat,\CCat)$ is a
distributive enriched model then $\VCat$ has distributive products
and this exhibits it as  a distributive monoidal category.
It follows that $\Sind\VCat$ is cartesian closed with coproducts
and that the Yoneda embedding ${\VCat\to\Sind\VCat}$ preserves the 
coproduct and product structure in $\VCat$.

An action of a weak monoid
in $\CoprodCat$ is the same thing as 
a distributive action in the sense of Section~\ref{sec:sums:enriched}.
Given a distributive monoidal category~$\WeakMonoid$,
a distributive action of $\WeakMonoid$ on a category
$\CatA$ with finite coproducts
induces an enrichment of $\Sind\CatA$ in $\Sind\WeakMonoid$ 
with powers and copowers.
The Yoneda embedding $\CatA\to\Sind\CatA$ 
preserves coproducts and the monoidal action.
Moreover since it is 2-natural it preserves any enrichment or 
powers that already exist in $\CatA$.

In particular, if $(\VCat,\CCat)$ is a distributive enriched model
then $\VCat$ acts on $\CCat$, and so
$\Sind\CCat$ is enriched in $\Sind\VCat$ with powers and copowers,
and the Yoneda embedding $\CCat\to \Sind\CCat$
is enriched in $\VCat$ and preserves coproducts and copowers.

\hide{
We now show that $\Sind\CCat$ is enriched in $\Sind\VCat$ 
with powers and copowers.
Recall the construction of Day~\cite{day-closed}, 
which induces a monoidal biclosed 
structure on $\hat \ACat$ (the category of functors $\opcat\ACat\to\Set$) 
for every monoidal 
structure on a category~$\ACat$.
We develop this in two ways.
First, the monoidal action of $\VCat$ on $\CCat$
induces a monoidal action 
${\hat\VCat\times\hat\CCat\to\hat\CCat}$
which has right adjoints in both arguments.
Secondly, the monoidal action of $\hat\VCat$ on $\hat\CCat$
restricts
to a monoidal action 
$\Sind\VCat\times\Sind\CCat\to\Sind\CCat$
and the right adjoints restrict too.
Thus $\Sind\CCat$ is enriched in $\Sind\VCat$ with 
copowers and powers.

Finally, 
we need an enriched adjunction between $\Sind\VCat$ and $\Sind\CCat$.
The Yoneda embedding takes
the chosen object $\stateobj$ of $\CCat$
to the presheaf $\CCat(-,\stateobj)$ in $\Sind\CCat$,
and this induces 
an enriched adjunction 
by
Proposition~\ref{prop:adjmodels}.
The left adjoint takes
a presheaf $P$ in $\Sind\VCat$ to the presheaf
${\int^A P(A)\times \CCat(-,\ltensor A \stateobj)}$
in $\Sind\CCat$.
The right adjoint takes a presheaf 
$Q$ in $\Sind\CCat$ to the presheaf
$Q(\ltensor -\stateobj)$ in $\Sind\VCat$ .
}
\end{proof}
\newcommand{\FinSet}{\mathbf{Set}_\mathrm{f}}
\newcommand{\Lawv}{\mathbb{T}}
The construction in this proof is related to the following
natural situation.
Let $\FinSet$ be the category of finite sets, and let $\Lawv$ 
be a Lawvere theory.
Then $(\FinSet,\opcat\Lawv)$ is almost an enriched model, except 
that the category $\opcat\Lawv$ is typically not $\FinSet$-enriched.
Nonetheless, our construction applied to $(\FinSet,\opcat\Lawv)$ yields 
the  basic motivating example of an EEC model:
$\Sind\FinSet$ is the category of sets
(since $\FinSet$ is the free category with finite coproducts on one generator)
and $\Sindop{\Lawv}$ is the category of algebras of the Lawvere theory.
(See also \cite[Thm.~38]{Power:GenericModels:06}.)

\bibliographystyle{plain} 
\bibliography{lin-state}

\appendix


\section{Categories of models}
\label{app:cats:of:models}


\subsection{The 2-category $\ENR$ of enriched models}
\label{app:CATENR}
We first define a notion of morphism of {\enrmodel} and
transformations between these. This gives a 2-category $\ENR$. Let
$(\VCat, \CCat)$ and $(\VCat', \CCat')$ be {\enrmodel}s
(Def.~\ref{def:enrichedmodel}). A morphism from $(\VCat, \CCat)$ to
$(\VCat', \CCat')$ is a triple $(F,G, \ltensoriso)$ such that $F \co
\VCat \to \VCat'$ and $G \co \CCat \to \CCat'$ are functors, and
$\ltensoriso$ is a natural family of isomorphisms
\[
\ltensoriso_{\SA,\algB} \co 
G(\ltensor\SA\algB) \iso 
\ltensor{F(\SA)}{G(\algB)}
\]
The following three conditions must be satisfied:
\begin{itemize}
\item $F$ preserves products (up to isomorphism).
\item The following two coherence diagrams commute
\begin{diagram}[LaTeXeqno] \label{eq:1:cell:coherence:cond:1}
G(\ltensor{1}{\algB}) & \rTo^{\ltensoriso} & \ltensor{F1}{G\algB} \\
\dTo^{\iso} && \dTo_{\ltensor{\bang{}}{\stateobj}} \\
{G\algB} & \lTo^{\iso} & \ltensor{1}{G\algB}
\end{diagram}
\begin{diagram}[LaTeXeqno] \label{eq:1:cell:coherence:cond:2}
G(\ltensor{(\SA\times \SB)}{\algC})  & \rTo^{{\iso}} & G(\ltensor{\SA}{\ltensor{\SB}{\algC}})  & \rTo^{\ltensoriso} & \ltensor{F\SA}{G(\ltensor{\SB}{\algC})} \\
\dTo^{\ltensoriso} &&&& \dTo_{\ltensor{F\SA}{\ltensoriso}} \\
\ltensor{F(\SA \times\SB)}{G\algC}  & \rTo^{\ltensor{\pair{F\pi_1}{F\pi_2}}{G\algC}} & \ltensor{(F\SA \times F\SB)}{G\algC}  & \rTo^{\iso} & \ltensor{F\SA}{\ltensor{F\SB}{G\algC}}
\end{diagram}
\item The mate of $\inv\ltensoriso$ is an isomorphism $F(\CHom{\algB}{\algC}) \iso \CHomp{G\algB}{G\algC}$
\end{itemize}
Recall that the mate of 
$\inv\ltensoriso$ is the adjoint correspondent of
\begin{diagram}
\ltensor{F(\CHom{\algB}{\algC})}{G\algB} & 
\rTo^{\inv\ltensoriso} & 
G(\ltensor{\CHom{\algB}{\algC}}{\algB}) & \rTo^{G(\ev)} & G(\algC)
\end{diagram}
where $\ev$ is the counit of the adjunction $\ltensor{(-)}{\algB} \dashv \CHom{\algB}{(-)}$.

The 2-cells between
morphisms
$(F,G,\ltensoriso),(F',G',\ltensoriso'):(\VCat,\CCat)\to(\VCat',\CCat')$
are natural isomorphisms $\VTwoCell \co F \iso F'$,  $\CTwoCell \co G \iso G'$ 
making the following diagram commute:
\begin{diagram}[LaTeXeqno] \label{eq:coherence:cond}
G(\ltensor{\SA}{\algB}) & \rTo^{\ltensoriso_{\SA,\algB}} & 
\ltensor{F\SA}{G\algB} \\ 
\dTo^{\CTwoCell} && \dTo_{\ltensor{\VTwoCell}{\CTwoCell}} \\ 
G'(\ltensor{\SA}{\algB}) & \rTo^{\ltensoriso'_{\SA,\algB}} & \ltensor{F'\SA}{G'\algB} 
\end{diagram}

%

The composition of 1-cells is defined as $(F',G',\ltensoriso') \circ (F,G,\ltensoriso) = (F'F, G'G, \ltensoriso'\circ G'\ltensoriso)$:
\begin{diagram}
G'G(\ltensor{\SA}{\algB}) & \rTo^{G'\ltensoriso} & G'(\ltensor{F\SA}{G\algB}) & \rTo^{\ltensoriso'} & \ltensor{F'F\SA}{G'G\algB}
\end{diagram}
The composition of 2-cells is simply pointwise $(\VTwoCell', \CTwoCell') \circ (\VTwoCell, \CTwoCell) = (\VTwoCell' \circ \VTwoCell, \CTwoCell'\circ \CTwoCell)$


\subsection{The 2-category $\CATECBV$}
\label{app:CATECBV}

The objects of the 2-category $\CATECBV$ 
are tuples $(\VCat,\CCat,\stateobj)$
where $(\VCat,\CCat)$ is an {\enrmodel} and
$\stateobj$ is a `state' object in $\CCat$.
A 1-cell $(\VCat,\CCat,\stateobj)\to(\VCat',\CCat', \stateobj')$ 
is a quadruple $(F,G, \ltensoriso, \stateiso)$ such that $(F,G, \ltensoriso) \co (\VCat,\CCat)\to(\VCat',\CCat')$ is a morphism of {\enrmodel}s and $\stateiso$ is an isomorphism
\[\stateiso  \co G\stateobj  \iso \stateobj' 
\]
A 2-cell from $(F,G, \ltensoriso, \stateiso)$ to $(F',G', \ltensoriso', \stateiso')$ is a 2-cell of {\enrmodel}s 
(\S\ref{app:CATENR})
\[(\VTwoCell,\CTwoCell) \co (F,G) \to (F',G')\] such that 
\begin{diagram}[LaTeXeqno] \label{eq:stateobj:coherence:cond}
G\stateobj & \rTo^{\stateiso} & \stateobj'  \\
\dTo^{\CTwoCell} & \ruTo^{\stateiso'} \\
G'\stateobj
\end{diagram}

Composition of 1-cells and 2-cells in $\CATECBV$ is defined as in
$\ENR$. The state object isomorphisms are composed as follows:
\begin{diagram}
G'G\stateobj & \rTo^{G'\stateiso} & G'\stateobj' & \rTo^{\stateiso'} & \stateobj''\text.
\end{diagram}



\subsection{The 2-category $\Freyd$}
\label{app:Kleisli:cat}

We introduce the 2-category $\Freyd$ whose objects are {\Klmodel}s
(\S\ref{def:monadmodel}). Morphisms from $(\VCat, \CCat, J)$ to
$(\VCat', \CCat', J')$ are morphisms of enriched models $(F,G,
\ltensoriso) \co (\VCat, \CCat) \to (\VCat', \CCat')$ such that
$J'\circ F = G \circ J$. Note in particular that this means that $F$
and $G$ agree on objects, since $J$ and $J'$ are required to be
identities on objects. A 2-cell $(F,G, \ltensoriso) \to (F',G',
\ltensoriso')$ is a 2-cell of enriched models $(\VTwoCell,\CTwoCell)
\co (F,G) \to (F',G')$ such that $\CTwoCell J = J' \VTwoCell \co GJ
\to J'F'$.


\subsection{Sums}
\label{app:sums}

The category $\dENR$ is defined to have {\denrmodel}s as objects, 1-cells as in $\ENR$ with the restriction that $F$ and $G$ must both 
preserve coproducts, and 2-cells as in $\ENR$. The definitions of the 2-categories $\CATECBV$ and $\Freyd$ extend to 2-categories 
$\dCATECBV$ and $\dFreyd$ of {\denrmodel}s with state objects and {\dKlmodel}s.

\subsection{Effect theories}
\label{app:def:cats:eff}
We now explain how to extend the 2-categories 
to models that support effect theories, 
expanding on Section~\ref{sec:fullcomplete-effects}.
We first define what it means for a functor to preserve a value theory.

\begin{defi} \label{def:pres:value:theory} Let $\VCat$ and
  $\VCat'$ be distributive categories with given interpretations of
  some fixed value theory (\S\ref{sec:modelvaluetheories}). 
  A functor $F \co \VCat \to \VCat'$
  preserving products and coproducts \emph{preserves the
    interpretation of the value theory} if $F\den{\alpha} =
  \den{\alpha}$ for all type constants $\alpha$ of the theory, and the
  diagram
\begin{diagram}
F(\den{\vec\alpha}) & \rTo^{F\den{f}} & F(\den{\beta}) \\
\dTo^{\tuple{F\pi_1}{F\pi_n}} & \ruTo^{\den{f}} \\
F\den{\alpha_1} \times \dots \times F\den{\alpha_n}
\end{diagram}
commutes for all term constants $f\co \vec \alpha \to \beta$ of the theory. 
\end{defi}

\subsubsection{The 2-category $\CATECBVtheory{E}$}

For any effect theory $E$, 
the objects of the 2-category $\CATECBVtheory{E}$ are 
{\denrmodel}s with state and a
chosen $E$-comodel structure on the state object (\S\ref{sec:eff-ecbvmodels}). 
Morphisms
(1-cells) are morphisms $(F,G, \ltensoriso, \stateiso)$ of $\dCATECBV$
such that
\begin{itemize}
\item $F$ preserves the value theory of $E$ as in Definition~\ref{def:pres:value:theory}.
\item The comodel structure is preserved, i.e., for each effect constant $\arityj e{\vec\beta}{\vec \alpha_1 + \ldots + \vec \alpha_n}$ of $E$, the following diagram commutes
\begin{diagram}[LaTeXeqno] \label{eq:pres:eff:const:ecbv}
G(\ltensor{\den{\vec \beta}}{\stateobj}) & \rTo^{G\den e} & G(\ltensor{(\den{\vec \alpha_1} + \dots + \den{\vec \alpha_n})}{\stateobj}) \\
\dTo && \dTo \\
\ltensor{\den{\vec \beta}}{\stateobj'} & \rTo^{\den e} & \ltensor{(\den{\vec \alpha_1} + \dots + \den{\vec \alpha_n})}{\stateobj'}
\end{diagram}
where 
the vertical maps are constructed using $\ltensoriso, \stateiso$ and the preservation of products and coproducts. (This makes sense 
because $F(\den\alpha) = \den\alpha$ for all type constants~$\alpha$.)
\end{itemize}
A 2-cell of $\CATECBVtheory{E}$ is a 2-cell $(\VTwoCell, \CTwoCell) \co (F,G,\stateiso) \to (F',G',\stateiso')$ of $\dCATECBV$ such that $\VTwoCell_{\den{\alpha}}$ is the identity for all type constants $\alpha$ of the theory $E$.


\subsubsection{The 2-category $\Freydtheory{E}$}

For any effect theory $E$,
the objects of the 2-category $\Freydtheory{E}$ are {\dKlmodel}s 
with given interpretations of $E$. 
Morphisms are morphisms of $\dFreyd$ such that 
\begin{itemize}
\item $F$ preserves the value theory of $E$ as in Definition~\ref{def:pres:value:theory}
\item The interpretation of effect constants is preserved, i.e., for each $\arityj e{\vec\beta}{\vec \alpha_1 + \ldots + \vec \alpha_n}$ of $E$, the following diagram commutes
\begin{diagram}[LaTeXeqno] \label{eq:pres:eff:const:freyd}
G(\den{\vec \beta}) & \rTo^{G\den e} & G(\den{\vec \alpha_1} + \dots + \den{\vec \alpha_n}) \\
\dTo^{\ituple{GJ(\pi_{i})}{i}} && \dTo \\
\den{\vec \beta} & \rTo^{\den e} & (\den{\vec \alpha_1} + \dots + \den{\vec \alpha_n})
\end{diagram}
where the left vertical map types because $G\den{\beta_{i}} = \den{\beta_{i}}$, and the right vertical map is constructed using the fact that $G$ preserves coproducts.
\end{itemize}
The 2-cells of $\Freydtheory{E}$ are the 2-cells $(\VTwoCell, \CTwoCell)$ of $\dFreyd$ with $\VTwoCell_{\den{\alpha}}$ the identity for all type constants $\alpha$ of the theory $E$.


\section{Bi-initiality of syntactic models}

We sketch a proof of Theorem~\ref{thm:ecbv:biinitial}:
the syntactic enriched model is bi-initial in the 2-category $\CATECBV$
of enriched models.
Bi-initiality of the syntactic monad model (Theorem~\ref{thm:fgcbv:biinitial}) 
can be proved in a similar manner.

\begin{lem} \label{lem:iso:interp:ecbv}
Suppose $(F,G, \stateiso) \co (\VCat,\CCat,\stateobj)\to(\VCat',\CCat', \stateobj')$  is a morphism in $\CATECBVtheory{E}$. 
Let $\den -$ be the interpretation of ECBV in $(\VCat,\CCat)$,
and let $\den-'$ be the interpretation of ECBV in $(\VCat',\CCat')$
(as in \S\ref{sec:adjmodels}, \S\ref{sec:eff-ecbvmodels}).
The family of isomorphisms given by 
\begin{align*}
\id & \co \den \alpha' \to F\den\alpha \\
\inv{\stateiso} & \co \den\states' = \stateobj' \to G\stateobj = G\den\states
\end{align*}
extends uniquely to a type-indexed family of isomorphisms 
$\den{\VA}' \iso F(\den{\VA})$ and $\den{\CB}' \iso G(\den{\CB})$ 
such that 
if $\tj{\Gamma}{t}{\VA}$ and $\aj{\Gamma}{\Delta}{u}{\CB}$ then
\begin{diagram}[LaTeXeqno] \label{eq:iso:interp:ecbv1}
\den{\Gamma}' & \rTo^{\den{t}'} & \den{\VA}' & \GAP \text{and} \GAP & \ltensor{\den{\Gamma}'}{\den{\Delta}'} & \rTo^{\den{u}'} & \den{\CB}' \\
 \dTo^{\iso} && \dTo^{\iso} && 
\dTo^{\iso} && \dTo^{\iso} \\
F\den{\Gamma} & \rTo^{F\den{t}} & F\den{\VA} && G(\ltensor{\den{\Gamma}}{\den{\Delta}}) & \rTo^{G\den{u}} & G\den{\CB}
\end{diagram}
Moreover, if $(\VTwoCell, \CTwoCell)$ is a 2-cell then 
\begin{diagram}[LaTeXeqno] \label{eq:iso:interp:ecbv2}
\den{\VA}' & \rTo^{\iso} & F\den{\VA}  & \GAP \text{and} \GAP & \den{\CB}' & \rTo^{\iso} & G\den{\CB}\\
& \rdTo^{\iso} & \dTo_{\VTwoCell} && & \rdTo^{\iso} & \dTo_{\CTwoCell}\\
&& F'\den{\VA} && && G'\den{\CB}
\end{diagram}
\end{lem}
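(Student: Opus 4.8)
The plan is to define the type-indexed family of isomorphisms by structural induction on types, and then to verify the two commuting-diagram conditions by separate inductions. First I would fix the components on type constants as prescribed: $\id\colon \den\alpha' \to F\den\alpha$ for each value type constant $\alpha$ (well-typed because $F$ preserves the value theory, Definition~\ref{def:pres:value:theory}, so $F\den\alpha = \den\alpha'$), and $\inv\stateiso\colon \den\states' \to G\den\states$ on the computation type constant $\states$. The inductive clauses are then forced by the structure-preservation data carried by the $1$-cell $(F,G,\ltensoriso,\stateiso)$: for $\valone$ and $\VA\prodtype\VB$ I would use the product-preservation isomorphisms of $F$ together with the isos for $\VA,\VB$; for $\ltensortype\VA\CB$ the action isomorphism $\ltensoriso$ together with the isos for $\VA$ and $\CB$; and for $\CA\lpop\CB$ the isomorphism $F(\CHom{\den\CA}{\den\CB})\iso\CHomp{G\den\CA}{G\den\CB}$ obtained as the mate of $\inv\ltensoriso$ (the third defining condition on $1$-cells, Appendix~\ref{app:CATENR}), pre- and post-composed with the isos for $\CA$ and $\CB$. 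In the presence of sums the clauses for $0$, $\VA+\VB$, $\algzero$ and $\CA\algplus\CB$ use that $F$ and $G$ preserve coproducts.

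With the family defined, I would establish diagram~(\ref{eq:iso:interp:ecbv1}) by induction on the derivations of $\tj\Gamma t\VA$ and $\aj\Gamma\Delta u\CB$, checking one commuting square per term-formation rule. The base cases (variables, $\algstar$) and the value product rules are immediate from the definition of the type isos; the linear-$\lambda$ and linear-application cases reduce to the defining property of the mate, namely its compatibility with the counit $\ev$; and the tensor introduction and elimination rules are handled using the naturality of $\ltensoriso$ together with the coherence diagrams~(\ref{eq:1:cell:coherence:cond:1}) and~(\ref{eq:1:cell:coherence:cond:2}). The elimination (pattern-matching) rule additionally requires the description of interpreted substitution from the soundness proof (Section~\ref{sec:adjmodels}), since its semantics is a composite applying $\den u$ after a substitution. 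Uniqueness of the family is then automatic: requiring~(\ref{eq:iso:interp:ecbv1}) to commute for the canonical introduction and elimination terms of each type constructor pins the corresponding component down in terms of the components at smaller types and the prescribed base cases, so no other extension can exist.

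Finally I would prove the $2$-cell compatibility~(\ref{eq:iso:interp:ecbv2}) by a parallel induction on types, given a $2$-cell $(\VTwoCell,\CTwoCell)\colon(F,G)\to(F',G')$. For a value type constant the triangle is trivial because $\VTwoCell_{\den\alpha}=\id$ by the definition of $2$-cells in $\CATECBVtheory{E}$, and for $\states$ it is exactly the state coherence condition~(\ref{eq:stateobj:coherence:cond}). The inductive steps follow from the naturality of $\VTwoCell$ and $\CTwoCell$ together with the $2$-cell coherence square~(\ref{eq:coherence:cond}) relating $\ltensoriso$ and $\ltensoriso'$, and the clause for $\CA\lpop\CB$ uses that the mate construction is natural in the $1$-cell data. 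I expect the main obstacle to lie in the function-type and tensor-elimination cases of~(\ref{eq:iso:interp:ecbv1}): the mate of $\inv\ltensoriso$ is defined only implicitly through an adjoint transpose, so showing that it meshes correctly with the interpretations of linear abstraction, application and pattern matching is where the bookkeeping is heaviest, and it is the step that most relies on the coherence conditions built into the definition of a $1$-cell.
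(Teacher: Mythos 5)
Your proposal matches the paper's own proof: the isomorphisms are defined by induction on types (the paper writes out exactly your composite for the case $\ltensortype{\VA}{\CB}$), the squares~(\ref{eq:iso:interp:ecbv1}) and~(\ref{eq:iso:interp:ecbv2}) are verified by induction on terms and types respectively using the coherence diagrams~(\ref{eq:1:cell:coherence:cond:1}) and~(\ref{eq:1:cell:coherence:cond:2}), and uniqueness is by induction on types. The only base case you omit from the term induction is that of effect constants, which the paper notes is handled exactly by condition~(\ref{eq:pres:eff:const:ecbv}).
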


\begin{proofnotes}
The isomorphisms are defined by induction on $\VA$ and $\CB$. For example, in the case of $\ltensortype{\VA}{\CB}$ we use the composite
\begin{diagram}
 \den{\ltensortype{\VA}{\CB}}'  = \ltensor{\den{\VA}'}{\den{\CB}'} & \rTo^{\iso} & \ltensor{(F\den{\VA})}{(G\den{\CB})} & \rTo^{\inv{\ltensoriso}} & G(\ltensor{\den{\VA}}{\den{\CB}}) =  G\den{\ltensortype{\VA}{\CB}}
\end{diagram}
Commutativity of (\ref{eq:iso:interp:ecbv1}) and
(\ref{eq:iso:interp:ecbv2}) is proved by induction on structure of
terms and types respectively. We omit the lengthy verifications
entirely, but remark that (\ref{eq:1:cell:coherence:cond:1}) is used
to prove the case of linear variable introduction, and diagram
(\ref{eq:1:cell:coherence:cond:2}) is used to prove the case of linear
function application and the elimination rule for
$\ltensortype{\VA}{\CB}$. The case of effect constants is exactly the
requirement (\ref{eq:pres:eff:const:ecbv}).

Uniqueness is proved by induction on types. 
\end{proofnotes}

Bi-initiality of the syntactic model
(Theorem~\ref{thm:ecbv:biinitial}) follows from Lemma~\ref{lem:iso:interp:ecbv} as follows.
Given any other model, the unique morphism is given by interpretation of the syntax. To prove uniqueness, suppose $(F,G, \stateiso)$ is a 1-cell from $\CATECBVtheory{E}$ to  $(\VCat, \CCat, \stateobj)$. Then Lemma~\ref{lem:iso:interp:ecbv} gives the natural isomorphism: since interpretation in the syntactic model of terms with one variable is simply the identity, diagrams  (\ref{eq:1:cell:coherence:cond:1}) prove naturality of the isomorphism. The required commutative diagrams (\ref{eq:coherence:cond}) for 2-cells follow directly from definitions. 

%

\section{The adjunction between Kleisli models and enriched models}
\label{app:adjunction}

\subsection{The 2-functor $\FreydToECBV \colon \Freyd \to \CATECBV$.} 
We define the 2-functor $\FreydToECBV \colon \Freyd \to \CATECBV$ by the following data:
\begin{align*}
\FreydToECBV(\VCat, \CCat, J) & \,\defeq (\VCat, \CCat, 1) \\
\FreydToECBV(F,G) & \,\defeq (F,G, J'(\bang{})) && \text{for } (F,G) \co (\VCat, \CCat, J) \to (\VCat', \CCat', J') \\
\FreydToECBV(\VTwoCell, \CTwoCell) & \,\defeq (\VTwoCell, \CTwoCell) 
\end{align*}
Note that $J'(\bang{})$ has the right type:
\[
G(1) = G(J(1))  =  J'(F(1)) \xrightarrow{J'(\bang{})} J'(1) = 1
\text.\]

\subsection{The 2-functor $\ECBVToFreyd \colon \CATECBV \to \Freyd$.} 
\label{app:ECBVToFreyd}

Recall that $\ECBVToFreyd$ is defined on objects in Section~\ref{sec:relating:models} as
\begin{align*}
\ECBVToFreyd(\VCat, \CCat, \stateobj) \defeq (\VCat, \KlCat{\VCat}{\CCat}{\stateobj}, J_{\stateobj})
\end{align*}
where the category $\KlCat{\VCat}{\CCat}{\stateobj}$ has the same objects as $\VCat$ and homsets
\[
\Homset{\KlCat{\VCat}{\CCat}{\stateobj}}{\SA}{\SB} \defeq \Homset{\CCat}{\ltensor{\SA}{\stateobj}}{\ltensor{\SB}{\stateobj}}
\]

On morphisms we define $\ECBVToFreyd(F,G,\ltensoriso) \defeq (F,\Kl{F,G}, \ltensorisoKl)$ where 
\[\Kl{F,G} \co \KlCat{\VCat}{\CCat}{\stateobj} \to \KlCat{\VCat'}{\CCat'}{\stateobj'}\]
is the functor that maps an object $\SA$ to $F\SA$ and a morphism $f
\co \ltensor{\SA}{\stateobj} \to \ltensor{\SB}{\stateobj}$ to the
following composite:
\begin{diagram}[LaTeXeqno] \label{eq:defn:Kl:functor}
\ltensor{F\SA}{\stateobj'} & \rTo^{\ltensor{\SA}{\stateiso}} & \ltensor{F\SA}{G\stateobj} & \rTo^{\inv{\ltensoriso}} & G(\ltensor{\SA}{\stateobj})  \\
\dTo^{\Kl{F,G}(f)} && && \dTo_{G(f)} \\
\ltensor{F\SB}{\stateobj'} & \lTo^{\ltensor{\SB}{\inv{\stateiso}}} & \ltensor{F\SB}{G\stateobj} & \lTo^{\ltensoriso} & G(\ltensor{\SB}{\stateobj})
\end{diagram}
The natural transformation 
\[\ltensorisoKl \co \Kl{F,G}(\Klltensor{\SA}{\SB}) \to \Klltensor{F(\SA)}{\Kl{F,G}(\SB)} \]
has components given by 
\[
\ltensor{\pair{F\pi_1}{F\pi_2}}{\stateobj'} \co \ltensor{F(\SA \times \SB)}{\stateobj'} \to \ltensor{(F(\SA) \times F(\SB))}{\stateobj'}
\]

On 2-cells we define $\ECBVToFreyd(\VTwoCell, \CTwoCell) = (\VTwoCell, \Kl{\VTwoCell,\CTwoCell})$ where $\Kl{\VTwoCell, \CTwoCell}$ is the natural transformation whose components are $\Kl{\VTwoCell,\CTwoCell}_{\SA} = \ltensor{\VTwoCell_{\SA}}{\stateobj'} \co \ltensor{FA}{\stateobj'} \to \ltensor{F'A}{\stateobj'}$. Note that this defines a morphism in $\KlCat{\VCat}{\CCat}{\stateobj'}$ from $\Kl{F,G}(\SA) = F\SA$ to $\Kl{(F',G')}(\SA) = F'\SA$ as required. 

\subsection{The unit}

If $(\VCat, \CCat, J)$ is a {\dKlmodel}, then 
\[\ECBVToFreyd(\FreydToECBV(\VCat, \CCat, J)) = (\VCat, \KlCat{\VCat}{\CCat}{1}, J_{1}) 
\]
where $\KlCat{\VCat}{\CCat}{1}$ has the same objects as $\VCat$ and $\Homset{\KlCat{\VCat}{\CCat}{1}}{\SA}{\SB} = \Homset{\CCat}{\SA \times 1}{\SB \times 1}$ and $J_{1}(\SA) = \SA$, $J_{1}(f) = J(f \times 1) = \ltensor{f}{1}$.

We define the unit $\eta_{(\VCat, \CCat, J)}$ as 
\[
\eta_{(\VCat, \CCat, J)} \defeq (\id, H, \id) \co (\VCat, \CCat, J) \to (\VCat, \KlCat{\VCat}{\CCat}{1}, J_{1}) 
\]
where $H(\SA) = \SA$, and $H(f \co \SA \to \SB)$ is the composition
\begin{diagram}
\SA \times 1 & \rTo^{J(\pi_{1})} & \SA & \rTo^{f} & \SB & \rTo^{J\pair{\id}{\bang{}}} & \SB\times 1
\end{diagram}
The third component of $\eta$ should be a natural transformation $H(\ltensor{\SA}{\SB}) \to \Klltensor{\SA}{H\SB}$. Since both sides are equal to $\SA \times \SB$ it makes sense to take the identity transformation.

\begin{lem} \label{lem:eta:iso}
Each $\eta_{(\VCat, \CCat, J)}$ is an isomorphism in $\Freyd$.
\end{lem}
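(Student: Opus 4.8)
The plan is to exhibit an explicit two-sided inverse to $\eta_{(\VCat,\CCat,J)} = (\id, H, \id)$ in $\Freyd$, thereby showing that it is a (strict) isomorphism of $1$-cells. Since its first component is the identity functor and its structure isomorphism is the identity, everything reduces to showing that the second component $H \co \CCat \to \KlCat{\VCat}{\CCat}{1}$ is an isomorphism of categories and then packaging its inverse into a morphism of Kleisli models.

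First I would record the shape of $H$ on morphisms. Writing $\iota_\SA \defeq J(\pair{\id}{\bang{}}) \co \SA \to \SA\times 1$ and $\rho_\SA \defeq J(\pi_1) \co \SA\times 1 \to \SA$ for the images under $J$ of the canonical isomorphisms $\SA \iso \SA\times 1$ of $\VCat$, functoriality of $J$ makes $\iota_\SA$ and $\rho_\SA$ mutually inverse isomorphisms of $\CCat$, and by definition $H(f) = \iota_\SB \circ f \circ \rho_\SA$ for $f \co \SA \to \SB$ in $\CCat$. Thus $H$ acts on morphisms by conjugation with an isomorphism, so it is invertible: I would define $K \co \KlCat{\VCat}{\CCat}{1} \to \CCat$ to be the identity on objects and to send $g \co \SA\times 1 \to \SB\times 1$ to $\rho_\SB \circ g \circ \iota_\SA$. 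A one-line computation using $\rho_\SA \circ \iota_\SA = \id$ and $\iota_\SA \circ \rho_\SA = \id$ then gives $K H = \id_\CCat$ and $H K = \id$ on the nose, so $K$ is functorial and strictly inverse to $H$.

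Next I would promote $K$ to a $1$-cell $(\id, K, \id) \co \ECBVToFreyd\FreydToECBV(\VCat,\CCat,J) \to (\VCat,\CCat,J)$ of $\Freyd$. Its structure isomorphism may again be taken to be the identity, since $K(\Klltensor{\SA}{\SB}) = \SA\times\SB = \ltensor{\SA}{K\SB}$ on objects. The Kleisli side-condition $K \circ J_{1} = J$ follows from the product laws $\pi_1\circ(f\times 1) = f\circ\pi_1$ and $\pi_1\circ\pair{\id}{\bang{}} = \id$ in $\VCat$: indeed
\[
K(J_{1}(f)) = J(\pi_1) \circ J(f \times 1) \circ J(\pair{\id}{\bang{}}) = J\bigl(\pi_1 \circ (f\times 1) \circ \pair{\id}{\bang{}}\bigr) = J(f).
\]
The coherence conditions making $(\id, K, \id)$ a morphism of enriched models (Appendix~\ref{app:CATENR}) hold because $F = \id$, the Kleisli action agrees with the product on objects, and the structure isomorphism is the identity; this collapses diagrams~(\ref{eq:1:cell:coherence:cond:1}) and~(\ref{eq:1:cell:coherence:cond:2}) to coherence already present in $(\VCat,\CCat)$, and the mate condition is immediate since $\ltensoriso = \id$.

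Finally I would compose using the composition rule for $1$-cells, obtaining $(\id, K, \id) \circ (\id, H, \id) = (\id, K H, \id) = (\id, \id_\CCat, \id)$ and $(\id, H, \id) \circ (\id, K, \id) = (\id, \id, \id)$, both identity $1$-cells, whence $\eta_{(\VCat,\CCat,J)}$ is an isomorphism in $\Freyd$. The main obstacle is not any single deep step but the bookkeeping of the enriched-model coherence conditions for the inverse $1$-cell; the crucial simplification is the observation that $H$ is conjugation by the $J$-image of the canonical isomorphism $\SA \iso \SA\times 1$, which makes its inverse transparent and forces all the structure isomorphisms in sight to be identities.
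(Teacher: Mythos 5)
Your proof is correct. The paper actually states Lemma~\ref{lem:eta:iso} without giving any proof, and your argument supplies exactly the evident justification it relies on: since $J$ is a functor and $1$ is terminal, $H$ is conjugation by the mutually inverse pair $J(\pair{\id}{\bang{}})$, $J(\pi_1)$, so it is an isomorphism of categories whose inverse $K$ packages (with identity structure maps, using that $J$ and $J_1$ preserve copowers strictly) into an inverse $1$-cell in $\Freyd$.
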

%

The next lemma states that the unit is a 2-natural transformation.

\begin{lem} \label{lem:eta:biadj} For each pair of {\dKlmodel}s
  $(\VCat, \CCat, J)$ and $(\VCat', \CCat', J')$, the 
  following diagram 
  of functors commutes.
\hide{\begin{diagram}
\Freyd((\VCat, \CCat, J), (\VCat', \CCat', J')) & \rTo^{\ECBVToFreyd\circ \FreydToECBV} & \Freyd((\VCat, \KlCat{\VCat}{\CCat}1, J_1), (\VCat', \KlCat{\VCat}{\CCat}1', J_1')) \\
\dTo^{(\eta_{(\VCat', \CCat', J')})_*} & \ldTo_{\eta_{(\VCat, \CCat, J)}^*} \\
\Freyd((\VCat, \CCat, J), (\VCat', \KlCat{\VCat}{\CCat}1', J_1'))
\end{diagram}}
\[\xymatrix@C+1cm{
\Freyd((\VCat, \CCat, J), (\VCat', \CCat', J')) 
\ar[r]^-{\ECBVToFreyd\circ \FreydToECBV} 
\ar@/_1pc/[dr]_-{(\eta_{(\VCat', \CCat', J')})_*} 
& \Freyd((\VCat, \KlCat{\VCat}{\CCat}1, J_1), (\VCat', {\KlCat{\VCat}{\CCat}1}', J_1')) 
\ar[d]^{(\eta_{(\VCat, \CCat, J)})^*}
\\
&\Freyd((\VCat, \CCat, J), (\VCat', {\KlCat{\VCat}{\CCat}1}', {J_1}'))
}\]
where $f^*$ and $g_*$ are the functors given by pre- and postcomposition by $f$ and $g$ respectively. 
\end{lem}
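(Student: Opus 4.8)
The plan is to recognise the triangle as the ordinary naturality square of the unit $\eta$ of the endo-2-functor $\ECBVToFreyd\circ\FreydToECBV$ on $\Freyd$, read off on hom-categories, and to verify it by comparing 1-cells and 2-cells of $\Freyd$ componentwise. A 1-cell of $\Freyd$ is exactly a triple $(F,G,\ltensoriso)$ satisfying $J'F=GJ$, and composition multiplies the functor parts and pastes the action-preservation isomorphisms; so for a 1-cell $\phi=(F,G,\ltensoriso)$ in the top-left hom-category it suffices to show that the two composite 1-cells $(\ECBVToFreyd\circ\FreydToECBV)(\phi)\circ\eta_{(\VCat,\CCat,J)}$ and $\eta_{(\VCat',\CCat',J')}\circ\phi$, namely
\[
(F,\ \Kl{F,G}\circ H,\ \ltensorisoKl)
\qquad\text{and}\qquad
(F,\ H'\circ G,\ H'(\ltensoriso)),
\]
agree in all three components. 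Here $H,H'$ are the computation-functor parts of the two units, and $\FreydToECBV(\phi)=(F,G,J'(\bang{}))$ is fed into the formula (\ref{eq:defn:Kl:functor}) so that both state objects are $1$ and $\stateiso=J'(\bang{})$. The value components are both $F$, so nothing is to prove there.

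First I would treat the computation functors, the heart of the lemma. Both send an object $\SA$ to $F\SA$, so only the action on a morphism $f\colon\SA\to\SB$ of $\CCat$ matters. The right-hand leg conjugates $Gf$ by the copower-unit isomorphism in $\CCat'$, giving $H'(Gf)=J'\pair{\id}{\bang{}}\circ Gf\circ J'(\pi_1)$. The left-hand leg substitutes $Hf=J\pair{\id}{\bang{}}\circ f\circ J(\pi_1)$ into (\ref{eq:defn:Kl:functor}); I would then push $G$ through this composite, use the Kleisli-model identity $GJ=J'F$ to rewrite $G(J\pi_1)=J'(F\pi_1)$ and $G(J\pair{\id}{\bang{}})=J'(F\pair{\id}{\bang{}})$, and slide $Gf$ past $\inv{\ltensoriso}$ and $\ltensoriso$ by naturality of $\ltensoriso$. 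What remains is a boundary built only from copower-unit isomorphisms (the $J\pi_1$-type maps and their images under $F$) together with the two occurrences of $\stateiso=J'(\bang{})$, and the unit coherence square (\ref{eq:1:cell:coherence:cond:1}) is exactly what makes these cancel, collapsing the left leg to $J'\pair{\id}{\bang{}}\circ Gf\circ J'(\pi_1)=H'(Gf)$. The cleanest way to present this is to observe that under the canonical identification $\Homset{{\KlCat{\VCat'}{\CCat'}1}'}{F\SA}{F\SB}\cong\Homset{\CCat'}{F\SA}{F\SB}$ induced by $\ltensor{(-)}{1}\iso\id$, both functors send $f$ to $Gf$.

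Next I would check that the action-preservation isomorphisms agree, i.e.\ $\ltensorisoKl=H'(\ltensoriso)$. Here $\ltensorisoKl$ has components $\ltensor{\pair{F\pi_1}{F\pi_2}}{1}$, while $H'(\ltensoriso_{\SA,\SB})$ is $\ltensoriso_{\SA,\SB}$ conjugated by copower-unit isomorphisms; on objects both are maps $F(\SA\times\SB)\to F\SA\times F\SB$. Since every object of $\CCat$ is a $J$-image and $GJ=J'F$, the product-coherence square (\ref{eq:1:cell:coherence:cond:2}) forces $\ltensoriso_{\SA,\SB}$ to coincide with $J'(\pair{F\pi_1}{F\pi_2})$ up to the canonical copower unit/associativity isomorphisms, which is precisely what conjugation by $H'$ records. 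Finally, for a 2-cell $(\VTwoCell,\CTwoCell)$ the two images are immediate from the pointwise definitions: $\FreydToECBV$ and both units act as the identity on 2-cell data, $\ECBVToFreyd(\VTwoCell,\CTwoCell)$ has components $\ltensor{\VTwoCell_{\SA}}{1}$, and whiskering by $\eta$ only inserts the same $\ltensor{(-)}{1}$; using the Kleisli 2-cell compatibility $\CTwoCell J=J'\VTwoCell$ and the strict preservation of copowers by $J'$ one checks that both computation components equal $\ltensor{\VTwoCell_{\SA}}{1}$, while the value components are $\VTwoCell$ on both sides.

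The main obstacle is entirely the coherence bookkeeping of the first two steps: four families of isomorphisms are simultaneously in play --- the copower-unit isomorphism $\SA\times 1\iso\SA$ (via $J\pi_1$ and its $\CCat'$-counterpart), the action-preservation isomorphism $\ltensoriso$, the product-preservation isomorphism $\pair{F\pi_1}{F\pi_2}$ of $F$, and the state isomorphism $\stateiso=J'(\bang{})$ --- and the desired equalities hold only after these are reassembled using (\ref{eq:1:cell:coherence:cond:1}), (\ref{eq:1:cell:coherence:cond:2}), naturality of $\ltensoriso$, and $GJ=J'F$. Conceptually the statement is nothing more than naturality of $\eta$, so once a consistent orientation is fixed for each isomorphism the chase is forced and contains no real choices; I would therefore present it as a small number of pasted commuting squares rather than a long chain of equalities, so that the coherence witnesses stay visible.
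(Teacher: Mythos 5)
The paper states this lemma without proof, so your argument can only be judged on its own terms. The componentwise strategy is the right one, and your handling of the 2-cells is correct: the condition $\CTwoCell J = J'\VTwoCell$ together with $J$ being identity-on-objects gives $\CTwoCell_{\SA}=J'(\VTwoCell_{\SA})$ at every object, from which both whiskerings visibly have computation components $\ltensor{\VTwoCell_{\SA}}{1}$.

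The gap is exactly where you write that the coherence square (\ref{eq:1:cell:coherence:cond:2}) ``forces'' $\ltensoriso_{\SA,\SB}$ to coincide with $J'(\pair{F\pi_1}{F\pi_2})$. Conditions (\ref{eq:1:cell:coherence:cond:1}) and (\ref{eq:1:cell:coherence:cond:2}) constrain $\ltensoriso$ only against the \emph{left} unitor $\ltensor{1}{\algB}\iso\algB$ and the associator; what your cancellation needs is compatibility with the right-unit comparison $J\pi_1\co\ltensor{\SA}{J1}\to J\SA$, which is not an instance of either axiom --- the action has no unit in its second argument, so the right-unit triangle that rigidifies strong monoidal functors between cartesian categories is simply unavailable. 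And the needed identity genuinely does not follow: take $\VCat=\CCat$ to be the category of actions of a non-trivial abelian group with $J=\id$, and let $\theta$ be the natural automorphism of the identity functor given by multiplication with a fixed non-identity central element. Any such $\theta$ automatically satisfies $\theta_1=\id$ and $\theta_{\SA\times\SB}=\theta_\SA\times\theta_\SB$, so the triple $(\id,\id,\ltensoriso)$ with $\ltensoriso_{\SA,\algB}\defeq\theta_\SA\times\id_{\algB}$ satisfies naturality, both coherence squares, the mate condition and $J'F=GJ$, hence is a 1-cell of $\Freyd$. For this 1-cell the two legs of your square do agree on their functor parts (naturality of $\theta$ rescues the conjugation --- note that even this step of your proof silently uses either the crux identity or such a naturality argument), but their third components are $\id$ and $H'(\theta_\SA\times\id_{\SB})\neq\id$ respectively, so they are distinct 1-cells and the diagram does not commute strictly. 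To make the lemma and your proof go through one must either add to the definition of a $\Freyd$-morphism the normalization $\ltensoriso_{\SA,J\SB}=J'\pair{F\pi_1}{F\pi_2}$ --- which is what your appeal to (\ref{eq:1:cell:coherence:cond:2}) implicitly assumes --- or weaken the conclusion to commutativity up to a canonical natural isomorphism, as is done for the counit in Lemma~\ref{lem:eps:biadj}; the latter is all that the proof of Theorem~\ref{thm:adj} actually requires.
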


\subsection{The counit}

\newcommand{\I}[1]{I_{#1}}
\newcommand{\epsnat}[1]{\epsilon_{#1}}

Let $(\VCat, \CCat, \stateobj)$ be a model of enriched call-by-value. Then
\[
\FreydToECBV(\ECBVToFreyd(\VCat, \CCat, \stateobj)) = (\VCat, \KlCat{\VCat}{\CCat}{\stateobj}, 1)
\]
Recall that $\KlCat{\VCat}{\CCat}{\stateobj}$ has the same objects as $\VCat$ and set of morphisms defined as 
\[
\Homset{\KlCat{\VCat}{\CCat}{\stateobj}}{\SA}{\SB} = \Homset{\CCat}{\ltensor{\SA}{\stateobj}}{\ltensor{\SB}{\stateobj}}
\]
Define the counit $\epsilon$ as 
\[
\epsilon_{(\VCat, \CCat, \stateobj)} \defeq (\id, \I{\stateobj}, \delta) \co (\VCat, \KlCat{\VCat}{\CCat}{\stateobj}, 1) \to (\VCat, \CCat, \stateobj)
\]
where $\I{\stateobj}$ is the functor defined by $\I{\stateobj}(\SA) = \ltensor{\SA}{\stateobj}$, $\I{\stateobj}(f) = f$, and $\delta$ is the isomorphism
\[\delta \co \I{\stateobj}(1) =  \ltensor 1{\stateobj} \to \stateobj
\]

The counit is not strictly natural, but it does satisfy the following naturality condition.

\begin{lem}\label{lem:eps:biadj}
Let $(\VCat, \CCat, \stateobj), (\VCat', \CCat', \stateobj')$ be two given objects of $\CATECBV$. 
The following diagram commutes up to natural isomorphism.
\hide{\begin{diagram}
\CATECBV((\VCat, \CCat, \stateobj), (\VCat', \CCat', \stateobj')) & \rTo^{\FreydToECBV \circ \ECBVToFreyd} & \CATECBV((\VCat, \KlCat{\VCat}{\CCat}{\stateobj}, 1), (\VCat', \KlCat{\VCat'}{\CCat'}{\stateobj'}, 1)) \\
\dTo^{\epsilon_{(\VCat, \CCat, \stateobj)}^*} & \ldTo_{(\epsilon_{(\VCat', \CCat', \stateobj')})_*} \\
\CATECBV((\VCat, \KlCat{\VCat}{\CCat}{\stateobj}, 1), (\VCat', \CCat', \stateobj'))
\end{diagram}}
\[\xymatrix@C+1cm{
\CATECBV((\VCat, \CCat, \stateobj), (\VCat', \CCat', \stateobj')) 
\ar[r]^{\FreydToECBV \circ \ECBVToFreyd} 
\ar@/_1pc/[dr]_-{(\epsilon_{(\VCat, \CCat, \stateobj)})^*} 
&\CATECBV((\VCat, \KlCat{\VCat}{\CCat}{\stateobj}, 1), (\VCat', \KlCat{\VCat'}{\CCat'}{\stateobj'}, 1)) 
\ar@{}[dl]|[.01]\cong
\ar[d]^-{(\epsilon_{(\VCat', \CCat', \stateobj')})_*} \\
&\CATECBV((\VCat, \KlCat{\VCat}{\CCat}{\stateobj}, 1), (\VCat', \CCat', \stateobj'))
}\]
\end{lem}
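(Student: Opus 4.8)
The plan is to make both legs of the triangle fully explicit as $1$-cells $(\VCat, \KlCat{\VCat}{\CCat}{\stateobj}, 1)\to(\VCat', \CCat', \stateobj')$ and then exhibit an invertible $2$-cell between them whose naturality is forced by the very definition of $\ECBVToFreyd$ on morphisms. Fix a $1$-cell $(F,G,\ltensoriso,\stateiso)\co(\VCat,\CCat,\stateobj)\to(\VCat',\CCat',\stateobj')$. The diagonal leg $(\epsilon_{(\VCat,\CCat,\stateobj)})^*$ precomposes with $\epsilon_{(\VCat,\CCat,\stateobj)}=(\id_\VCat,\I{\stateobj},\ldots)$, yielding a $1$-cell whose value-functor is $F$ and whose computation-functor is $G\circ\I{\stateobj}$, sending an object $\SA$ of $\KlCat{\VCat}{\CCat}{\stateobj}$ to $G(\ltensor{\SA}{\stateobj})$ and a morphism $f\co\ltensor{\SA}{\stateobj}\to\ltensor{\SB}{\stateobj}$ to $G(f)$. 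The right-then-down leg applies $\FreydToECBV\circ\ECBVToFreyd$ --- producing $(F,\Kl{F,G},\ltensorisoKl)$ --- and then postcomposes with $\epsilon_{(\VCat',\CCat',\stateobj')}=(\id_{\VCat'},\I{\stateobj'},\ldots)$; its value-functor is again $F$ and its computation-functor is $\I{\stateobj'}\circ\Kl{F,G}$, sending $\SA$ to $\ltensor{F\SA}{\stateobj'}$ and $f$ to the composite \eqref{eq:defn:Kl:functor}. Thus the two legs already agree in their value component, and the whole task reduces to comparing the two computation-functors.

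Next I would define the candidate $2$-cell as $(\id_F,\theta)$, where $\theta_\SA\co\ltensor{F\SA}{\stateobj'}\to G(\ltensor{\SA}{\stateobj})$ is built from the structure isomorphisms of the $1$-cell, namely $\ltensor{F\SA}{\stateobj'}\xrightarrow{\ltensor{F\SA}{\inv{\stateiso}}}\ltensor{F\SA}{G\stateobj}\xrightarrow{\inv{\ltensoriso_{\SA,\stateobj}}}G(\ltensor{\SA}{\stateobj})$. This is exactly the top edge of the square \eqref{eq:defn:Kl:functor}, and $\inv{\theta_\SB}$ is its bottom edge; since $\Kl{F,G}(f)$ is by definition the composite around that square, the naturality square $\theta_\SB\circ(\I{\stateobj'}\circ\Kl{F,G})(f)=(G\circ\I{\stateobj})(f)\circ\theta_\SA$ holds automatically. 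This is the conceptual heart of the lemma, and it is essentially free: the counit is pseudonatural precisely because the action of $\ECBVToFreyd$ on a morphism was defined by conjugating with these very isomorphisms.

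It then remains to check that $(\id_F,\theta)$ is a genuine $2$-cell of $\CATECBV$ and that the assignment is natural in the $1$-cell. For the former I would verify the copower-coherence square \eqref{eq:coherence:cond} --- relating the two induced copower isomorphisms through $\id_F$ and $\theta$ --- using the coherence conditions \eqref{eq:1:cell:coherence:cond:1} and \eqref{eq:1:cell:coherence:cond:2} satisfied by $\ltensoriso$, together with the explicit form of the copower isomorphism $\ltensorisoKl$ carried by $\Kl{F,G}$; and the state-object condition \eqref{eq:stateobj:coherence:cond}, which unwinds to a small diagram in $\ltensoriso$ and $\stateiso$. For naturality in the $1$-cell I would take a $2$-cell $(\VTwoCell,\CTwoCell)$ of $\CATECBV$ and check that the two resulting $2$-cells into $(\VCat',\CCat',\stateobj')$ agree, which reduces to the definitions of $\ECBVToFreyd$ and $\FreydToECBV$ on $2$-cells (recall the components of $\ECBVToFreyd(\VTwoCell,\CTwoCell)$ are $\ltensor{\VTwoCell_\SA}{\stateobj'}$) together with the compatibility square \eqref{eq:coherence:cond} for $(\VTwoCell,\CTwoCell)$. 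I expect the main obstacle to be bookkeeping rather than ideas: keeping the several structure isomorphisms $\ltensoriso$, $\stateiso$, $\ltensorisoKl$ and the unit and associativity constraints of the copowers correctly oriented through the coherence diagrams, and handling the notational conflation of $\stateiso$ with $\inv{\stateiso}$ in \eqref{eq:defn:Kl:functor}. None of these steps should present a genuine difficulty once the isomorphism $\theta$ is in place.
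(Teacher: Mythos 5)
Your proposal is correct, and it supplies exactly the argument the paper leaves implicit (Lemma~\ref{lem:eps:biadj} is stated in the appendix without proof): the key point is that $\Kl{F,G}(f)$ is \emph{defined} in~(\ref{eq:defn:Kl:functor}) as the conjugate of $G(f)$ by the composite $\inv{\ltensoriso_{\SA,\stateobj}}\circ(\ltensor{F\SA}{\inv{\stateiso}})$, so taking that composite as the component $\theta_\SA$ makes pseudonaturality of the counit hold by construction, with only the coherence checks~(\ref{eq:coherence:cond}) and~(\ref{eq:stateobj:coherence:cond}) left as bookkeeping. You also correctly identify the mislabelled arrow in~(\ref{eq:defn:Kl:functor}) (it should be $\ltensor{F\SA}{\inv{\stateiso}}$), which is the only trap in the verification.
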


\subsection{Triangle equalities}

One of the triangle equalities only holds up to isomorphism. 

\begin{prop}\label{prop:triangle:eq:1}
Let $(\VCat, \CCat, \stateobj)$ be a {\denrmodel} with state. The composite 1-cell 
\[\ECBVToFreyd(\epsilon_{(\VCat, \CCat, \stateobj)}) \circ \eta_{\ECBVToFreyd(\VCat, \CCat, \stateobj)}\] 
is the identity on $\ECBVToFreyd(\VCat, \CCat, \stateobj)$.
\end{prop}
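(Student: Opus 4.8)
The plan is to unfold both $1$-cells and check, componentwise, that their composite equals the identity $1$-cell on $\ECBVToFreyd(M)=(\VCat,\KlCat{\VCat}{\CCat}{\stateobj},J_{\stateobj})$, where I abbreviate $M\defeq(\VCat,\CCat,\stateobj)$. Recall that a $1$-cell of $\Freyd$ is a triple $(F,G,\ltensoriso)$. Since $\eta_{\ECBVToFreyd(M)}=(\id_{\VCat},H,\id)$ and $\ECBVToFreyd(\epsilon_{M})=(\id_{\VCat},\Kl{\id,\I{\stateobj}},\ltensorisoKl)$ both act as $\id_{\VCat}$ on the value component, the first component of the composite is $\id_{\VCat}$, as required. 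The work is therefore to identify the functor component $\Kl{\id,\I{\stateobj}}\circ H$ with $\id_{\KlCat{\VCat}{\CCat}{\stateobj}}$ and to check that the structural isomorphism of the composite is the identity.

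For the functor component, first I would spell out $H$: on a morphism $g\co\SA\to\SB$ of $\KlCat{\VCat}{\CCat}{\stateobj}$, i.e. a map $g\co\ltensor{\SA}{\stateobj}\to\ltensor{\SB}{\stateobj}$ in $\CCat$, the unit sends it to the $\CCat$-morphism
\[
H(g)\;=\;(\ltensor{\pair{\id}{\bang{}}}{\stateobj})\circ g\circ(\ltensor{\pi_1}{\stateobj})\co \ltensor{(\SA\times 1)}{\stateobj}\to\ltensor{(\SB\times 1)}{\stateobj},
\]
regarded as an arrow $\Klltensor{\SA}{1}\to\Klltensor{\SB}{1}$ of the iterated Kleisli category. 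Then I would feed $H(g)$ into the defining composite~(\ref{eq:defn:Kl:functor}) of $\Kl{\id,\I{\stateobj}}$, instantiating $F=\id$, $G=\I{\stateobj}$, the state isomorphism as $\delta\co\ltensor{1}{\stateobj}\iso\stateobj$, and the action isomorphism as the associativity isomorphism $a_{\SA,1,\stateobj}\co\ltensor{(\SA\times1)}{\stateobj}\iso\ltensor{\SA}{(\ltensor{1}{\stateobj})}$.

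The decisive simplification comes from the first coherence diagram for the action in Section~\ref{sec:adjmodels}, which yields
\[
(\ltensor{\SA}{\delta})\circ a_{\SA,1,\stateobj}\;=\;\ltensor{\pi_1}{\stateobj}\co \ltensor{(\SA\times 1)}{\stateobj}\to\ltensor{\SA}{\stateobj}.
\]
Applying this at both $\SA$ and $\SB$ collapses the outer isomorphisms of the long composite to $\ltensor{\pair{\id}{\bang{}}}{\stateobj}$ and $\ltensor{\pi_1}{\stateobj}$ respectively; then functoriality of $\ltensor{(-)}{\stateobj}$ together with $\pi_1\circ\pair{\id}{\bang{}}=\id$ cancels the two reindexing pairs, leaving exactly $g$. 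Hence $\Kl{\id,\I{\stateobj}}\circ H=\id_{\KlCat{\VCat}{\CCat}{\stateobj}}$ on morphisms, and trivially on objects.

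Finally, for the structural isomorphism I would use the composition rule $(F',G',\ltensoriso')\circ(F,G,\ltensoriso)=(F'F,G'G,\ltensoriso'\circ G'\ltensoriso)$: here $\ltensoriso$ for $\eta$ is the identity transformation (both sides being $\ltensor{\SA}{\SB}=\SA\times\SB$), so its image under the functor $\Kl{\id,\I{\stateobj}}$ is again the identity, and the composite reduces to $\ltensorisoKl$; but the components of $\ltensorisoKl$ for $\ECBVToFreyd(\epsilon_{M})$ are $\ltensor{\pair{\pi_1}{\pi_2}}{\stateobj}=\ltensor{\id}{\stateobj}=\id$ because $F=\id$. Thus all three components are identities and the composite is $\id_{\ECBVToFreyd(M)}$, confirming that this is the triangle equality that holds strictly (rather than merely up to isomorphism). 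I expect the only real obstacle to be bookkeeping: keeping the two levels of copower — the action of $\VCat$ on $\KlCat{\VCat}{\CCat}{\stateobj}$ versus that on $\CCat$ — carefully separated while tracking the unit and associativity isomorphisms, so that the single coherence identity above can be invoked cleanly.
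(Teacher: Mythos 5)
Your proof is correct, and it follows the only reasonable route here — the paper itself states Proposition~\ref{prop:triangle:eq:1} without proof, leaving exactly the direct componentwise unfolding you carry out. The key step, reducing $(\ltensor{\SA}{\delta})\circ\ltensoriso_{\SA,1}$ to $\ltensor{\pi_1}{\stateobj}$ via the unit coherence axiom for the action so that the conjugating isomorphisms cancel against the $J_{\stateobj}(\pi_1)$ and $J_{\stateobj}\pair{\id}{\bang{}}$ inserted by $H$, is precisely what makes this triangle identity hold strictly, and your treatment of the structural-isomorphism component (identity composed with $\ltensor{\pair{\pi_1}{\pi_2}}{\stateobj}=\id$) is also right.
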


\begin{prop} \label{prop:triangle:eq:2}
Let $(\VCat, \CCat, J)$ be a {\dKlmodel}. The composite 1-cell
\[\epsilon_{\FreydToECBV(\VCat, \CCat, J)} \circ \FreydToECBV(\eta_{(\VCat, \CCat, J)})
\]
is naturally isomorphic to the identity on $\FreydToECBV(\VCat, \CCat, J)$.
\end{prop}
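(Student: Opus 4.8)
The plan is to exhibit an invertible 2-cell in $\CATECBV$ from the composite $\epsilon_{\FreydToECBV(\VCat,\CCat,J)}\circ\FreydToECBV(\eta_{(\VCat,\CCat,J)})$ to the identity 1-cell on $\FreydToECBV(\VCat,\CCat,J)=(\VCat,\CCat,1)$; this is why the triangle law holds only up to isomorphism. First I would unwind the composite using the composition laws of $\CATECBV$ (\S\ref{app:CATECBV}). Since $\eta_{(\VCat,\CCat,J)}=(\id_\VCat,H,\id)$ and $\FreydToECBV$ leaves the underlying enriched-model morphism untouched while setting the state component to $J_1(\bang{})=\id$, the first factor is $(\id_\VCat,H,\id,\id)$; the second factor is $\epsilon_{(\VCat,\CCat,1)}=(\id_\VCat,\I{1},\ltensoriso^{\I{1}},\delta)$, whose action isomorphism $\ltensoriso^{\I{1}}$ is the copower associativity $\ltensor{(\SA\times\algB)}{1}\iso\ltensor{\SA}{(\ltensor{\algB}{1})}$ and whose $\delta\co\ltensor{1}{1}\to 1$ is the copower unit isomorphism. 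Composing, the value component is $\id_\VCat$, the computation component is $G\defeq \I{1}\circ H$ which sends $\algA\mapsto \ltensor{\algA}{1}$ and $f\co\algA\to\algB$ to $J\pair{\id}{\bang{}}\circ f\circ J(\pi_1)$; the composite action isomorphism reduces to $\ltensoriso^{\I{1}}$ (because $\ltensoriso^{H}=\id$), and the composite state isomorphism reduces to $\delta$.

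Next I would propose the 2-cell $(\VTwoCell,\CTwoCell)=(\id,\CTwoCell)$, where $\CTwoCell_{\algA}\defeq J(\pi_1)\co \ltensor{\algA}{1}\to\algA$. Three verifications remain. \emph{Naturality and invertibility of $\CTwoCell$}: from $\pi_1\circ\pair{\id}{\bang{}}=\id$ in $\VCat$ we get $J(\pi_1)\circ J\pair{\id}{\bang{}}=\id$ in $\CCat$, which makes the naturality square $J(\pi_1)\circ G(f)=f\circ J(\pi_1)$ collapse to an identity, and $J(\pi_1)$ is invertible since $\pi_1$ is a product isomorphism and $J$ is a functor. \emph{Action-coherence} (\ref{eq:coherence:cond}): with $\VTwoCell=\id$ and $\ltensoriso'=\id$ this becomes $\CTwoCell_{\ltensor{\SA}{\algB}}=(\ltensor{\SA}{\CTwoCell_{\algB}})\circ\ltensoriso^{\I{1}}_{\SA,\algB}$; using that $J$ strictly preserves copowers (Def.~\ref{def:monadmodel}), the associativity isomorphism is $J$ of the product associativity and $\ltensor{\SA}{J(\pi_1)}=J(\id_\SA\times\pi_1)$, so the right-hand side equals $J\big((\id_\SA\times\pi_1)\circ\text{assoc}\big)=J(\pi_1)$ in $\VCat$, matching the left-hand side. \emph{State-coherence} (\ref{eq:stateobj:coherence:cond}): this requires $\delta=\CTwoCell_{1}=J(\pi_1)\co\ltensor{1}{1}\to 1$, which holds because $\delta$ is $J$ of the product unit isomorphism $1\times 1\iso 1$, and since $1$ is terminal all maps $1\times 1\to 1$ coincide, so $\delta=J(\pi_2)=J(\pi_1)$.

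I expect the only real obstacle to be the careful bookkeeping of the composite $\ltensoriso$ and $\stateiso$ through the composition rules of $\CATECBV$, and the systematic translation between the copower structure maps in $\CCat$ and the products in $\VCat$ via the strict copower-preservation of $J$; once those identifications are in place, each coherence condition degenerates to an equation of $\VCat$-morphisms into a terminal object or a product, which is immediate. I would also note, as the statement demands, that $(\id,\CTwoCell)$ is a genuine 2-cell and not the identity, so the composite is merely \emph{naturally isomorphic} to the identity, in contrast to Proposition~\ref{prop:triangle:eq:1}.
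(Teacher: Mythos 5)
Your argument is correct: the paper states Proposition~\ref{prop:triangle:eq:2} without proof, and your proposal supplies exactly the computation one would expect — unwinding the composite to $(\id_\VCat,\I{1}\circ H,\ltensoriso^{\I1},\delta)$, taking $\CTwoCell_{\algA}=J(\pi_1)$ as the invertible 2-cell, and reducing each coherence condition via the strict copower-preservation of $J$ to a trivial equation about products and the terminal object (in particular $\pi_1=\pi_2\co 1\times 1\to 1$ for the state coherence). Nothing is missing.
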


\subsection{Proof of Theorem~\ref{thm:adj}}
We now prove Theorem~\ref{thm:adj}: $\eta$ and $\epsilon$ induce
an equivalence
of categories
  \[\CATECBV(\FreydToECBV(\VCat, \CCat, J),(\VCat', \CCat',
  \stateobj)) \ \simeq\ \Freyd((\VCat, \CCat, J),\ECBVToFreyd(\VCat',
  \CCat', \stateobj))\text.
\]

In the following we use $X$ to denote an object of $\CATECBV$ (rather than the 
much longer $(\VCat, \CCat, \stateobj)$) and $Y$ to denote an object of $\Freyd$.
The required equivalence of categories consists of the composite functors
\begin{diagram}
\CATECBV(\FreydToECBV(X),Y) & \rTo^{\ECBVToFreyd} & \Freyd(\ECBVToFreyd(\FreydToECBV(X)),\ECBVToFreyd(Y)) & \rTo^{(\eta_X)^*} & \Freyd(X,\ECBVToFreyd(Y)) \\
\Freyd(X,\ECBVToFreyd(Y)) & \rTo^{\FreydToECBV} & \CATECBV(\FreydToECBV(X),\FreydToECBV(\ECBVToFreyd(Y))) & \rTo^{(\epsilon_Y)_*} & \CATECBV(\FreydToECBV(X),Y) \\
\end{diagram}
We need to prove that the two composites of these are naturally isomorphic to identities. For the first of the composites, consider the following sequences of identities and natural isomorphisms
\begin{align*}
(\epsilon_Y)_* \circ \FreydToECBV \circ \eta_X^* \circ \ECBVToFreyd  
& = (\epsilon_Y)_* \circ (\FreydToECBV(\eta_X))^* \circ \FreydToECBV \circ \ECBVToFreyd \\
& = (\FreydToECBV(\eta_X))^*  \circ (\epsilon_Y)_*\circ \FreydToECBV \circ \ECBVToFreyd \\
& \iso (\FreydToECBV(\eta_X))^*  \circ (\epsilon_{\FreydToECBV(Y)})^*  & \text{Lemma~\ref{lem:eps:biadj}} \\ 
& = (\epsilon_{\FreydToECBV(Y)} \circ \FreydToECBV(\eta_X))^*  \\
& \iso \id^* & \text{Proposition~\ref{prop:triangle:eq:2}} \\
& = \id
\end{align*}
The other composite is isomorphic to the identity for a similar reason. \qed



\end{document}